\newif\ifready\readytrue
\crefname{LP}{LP}{LPs}
\crefname{condition}{condition}{conditions}
\Crefname{condition}{Condition}{Conditions}
\crefname{inequality}{inequality}{inequalities}
\Crefname{inequality}{Inequality}{Inequalities}
\crefname{part}{part}{parts}
\Crefname{part}{Part}{Parts}
\definecolor{mydarkblue}{rgb}{0,0.08,0.45}
\newcommand{\ie}{\emph{i.e.}}
\newcommand{\eg}{\emph{e.g.}}
\newcommand{\iid}{\emph{i.i.d.}}
\newcommand{\R}{\mathbb{R}}
\newcommand{\E}{\mathop{\mathbb{E}}}
\newcommand{\sset}{\subseteq}
\newcommand{\mcal}{\mathcal}
\newcommand{\mfrak}{\mathfrak}
\newcommand{\ones}{\mathds{1}}
\DeclarePairedDelimiter{\card}{\lvert}{\rvert}
\let\abs\relax
\DeclarePairedDelimiter{\abs}{\lvert}{\rvert}
\let\norm\relax
\DeclarePairedDelimiter{\norm}{\lVert}{\rVert}
\DeclarePairedDelimiter{\set}{\lbrace}{\rbrace}
\DeclarePairedDelimiter{\iprod}{\langle}{\rangle}
\DeclareMathOperator{\cut}{cut}
\DeclareMathOperator{\poly}{poly}
\DeclareMathOperator{\prox}{prox}
\DeclareMathOperator{\argmin}{argmin}
\DeclareMathOperator{\TV}{TV}
\newcommand{\OPT}{\mathrm{OPT}}
\newcommand{\OPTLP}{\mathrm{OPT}_{\mathrm{LP}}}
\newcommand{\expect}{\mathbb{E}}
\newcommand{\eps}{\varepsilon}
\DeclareMathOperator{\EMD}{EMD}
\DeclareMathOperator{\PoBin}{PoBin}
\NewDocumentEnvironment{pf}{o}
  {\IfNoValueTF{#1}{\begin{proof}}{\begin{proof}[Proof of #1.]}}
  {\IfNoValueTF{#1}{\end{proof}}{\end{proof}}}
\definecolor{amethyst}{rgb}{0.6, 0.4, 0.8}
\newtheorem{theorem}{Theorem}[section]
\newtheorem{lemma}[theorem]{Lemma}
\newtheorem{corollary}[theorem]{Corollary}
\newtheorem{remark}[theorem]{Remark}
\newtheorem{proposition}[theorem]{Proposition}
\theoremstyle{definition}
\newtheorem{definition}[theorem]{Definition}
\newtheorem{assumption}[theorem]{Assumption}
\title{Pointwise Lipschitz Continuous Graph Algorithms}
\date{}
\author{%
    \begin{tabular}{cc}	
        \begin{tabular}{c}
            Quanquan C. Liu\\
            Yale University\\
            \texttt{quanquan.liu@yale.edu}
        \end{tabular}
        & 	
        \begin{tabular}{c}
            Grigoris Velegkas\\
            Yale University\\
            \texttt{grigoris.velegkas@yale.edu}
        \end{tabular}
        \\
        \\
        \begin{tabular}{c}
            Yuichi Yoshida\\
            National Institute of Informatics\\
            \texttt{yyoshida@nii.ac.jp}
        \end{tabular}
         & 
         \begin{tabular}{c}
             Felix Zhou\\
            Yale University\\
            \texttt{felix.zhou@yale.edu}
        \end{tabular}
    \end{tabular}
}
\begin{document}
\pagenumbering{gobble}

\maketitle

\begin{abstract}
\noindent In many real-world applications, it is 
undesirable to drastically change the problem solution after a small
perturbation in the input
as unstable outputs can lead to costly transaction fees,
privacy and security concerns,
reduced user trust, and lack of replicability.
Despite the widespread application of graph algorithms,
many classical algorithms are not robust to small input disturbances.
Towards addressing this issue,
we study the \emph{{pointwise} Lipschitz continuity} of graph algorithms,
a notion of stability introduced by \textcite[FOCS'23]{kumabe2022lipschitz}
and further studied in {related} settings \cite[ICALP'24]{kumabe2024allocations}, \cite[SODA'25]{kumabe2023lipschitz,}, \cite[ESA'25]{gima2025courcelle}.

Our main result is a linear programming (LP) based minimum $S$-$T$ cut algorithm 
with a provably optimal Lipschitz constant,
as witnessed by an accompanying lower bound.
As a direct corollary,
we give the \emph{first} dynamic minimum $S$-$T$ cut algorithm with non-trivial recourse bound.
{At the core of our techniques is a novel framework for analyzing the Lipschitz constant of regularized LP relaxations.}
{Our framework} crucially unlocks the use of weighted regularizers,
which could not be analyzed through previous methods
and leads to polynomial improvements in the Lipschitz constant compared to what is achievable through previous techniques.
To demonstrate the flexibility of our methods,
we also design an LP-based $\bm b$-matching algorithm
that improves on the state-of-the-art~\cite{kumabe2022lipschitz} Lipschitz constant {in certain input regimes} when $\bm b\equiv 1$.
Moreover,
our algorithm cleanly {extends} to the general case when $\bm b\geq 1$,
whereas \cite{kumabe2022lipschitz} is specialized to the case of $\bm b\equiv 1$.
\end{abstract}

\clearpage
\setcounter{tocdepth}{2}
\tableofcontents

\clearpage
\pagenumbering{arabic}

\section{Introduction}\sloppy
The stability of algorithmic solutions
is an important property for many real-world applications,
where stability is generally
defined as how much the solution to a problem changes when 
small perturbations to the input occur. 
This topic has long been studied by the algorithmic community
in various domains including privacy~\cite{watson2020stability,wang2016learning}, replicability~\cite{impagliazzo2022reproducibility}, 
perturbation resilience~\cite{awasthi2012center,angelidakis2017algorithms,balcan2016clustering,bilu2012stable}, 
bounded recourse algorithms~\cite{davis2006online,gupta2017online,cohen2019fully,bernstein2025matching}, average sensitivity~\cite{Yoshida2021,Kumabe22,varma2021average,kumabe_et_al:LIPIcs.ESA.2022.75,Peng2020}, and
Lipschitz continuous graph algorithms~\cite{kumabe2022lipschitz,kumabe2023lipschitz,kumabe2024allocations,gima2025courcelle}. 

As a concrete example, 
consider the 
task of identifying bottlenecks in a traffic network,
where streets are modeled as edges
with the maximum amount of traffic as edge capacities,
and intersections are modeled as nodes.
Hence,
the underlying task is to find a \emph{minimum $s$-$t$ cut} between two locations $s, t$. 
In this setting,
stability of the solution is crucial because minor changes in capacity (edge weights),
due to \eg{}, weather conditions,
should not drastically alter the output cut 
{as this could disrupt operations and incur costs}.

In this work,
{our goal is to design graph algorithms that satisfy two desiderata: accuracy and stability. Notice that, without taking into account the quality of the
solution, one can provide trivial algorithms
that are perfectly stable under any perturbation of
the input.
{However, algorithms that achieve perfect stability 
through overly simplistic methods (\eg{}, always outputting the same value) often suffer from poor accuracy.}

Towards this end, 
we study \emph{pointwise Lipschitz continuity},
a notion of algorithmic stability introduced by \textcite{kumabe2022lipschitz}}.\footnote{To be more precise, \cite{kumabe2022lipschitz} introduced \emph{two} related notions of Lipschitz continuity, the \emph{weighted} and \emph{unweighted} one. They focused mostly on designing algorithms that satisfy the weighted version, whereas we extensively study the unweighted version; See \Cref{sec:related-work} for a detailed comparison.}
Informally speaking, the Lipschitz constant of an algorithm
is defined by considering the expected symmetric difference ($\ell_1$-distance) of the outputs of the algorithm 
under sufficiently small perturbations of edge-weights, 
where the expectation is taken with respect to the internal randomness of the algorithm. 
Specifically,
we design pointwise Lipschitz algorithms for the minimum $S$-$T$ cut and $\bm b$-matching problems.

Our main algorithmic tool is regularized linear programming relaxations with natural choices of \emph{weighted} regularizers
to ensure the stability of the optimal fractional solution.
Paired with an appropriate stable rounding scheme, 
this approach can yield \emph{optimal}
stability-approximation trade-offs, 
as illustrated by a tight lower bound we provide for the minimum $S$-$T$ cut problem. 
As a direct corollary of our minimum $S$-$T$ cut algorithm,
we achieve the first dynamic minimum $S$-$T$ cut algorithm with non-trivial recourse.

Prior to our work, \cite{kumabe2022lipschitz} also leveraged regularized LP-relaxations to design a pointwise Lipschitz algorithm for maximum bipartite matching using an \emph{unweighted} entropy regularizer~\cite{cuturi2013sinkhorn}.
Their stability analysis of the optimal fractional solution,
based on first-order optimality conditions for strongly convex functions,
heavily relies on the fact that the regularizer is independent of the edge weights.
Unfortunately,
the use of unweighted regularizers can lead to suboptimal stability-approximation tradeoffs by polynomial factors, 
as illustrated in \Cref{sec:min s-t cut prev work limits}.
Moreover,
this seems to be an inherent limitation of their technique and not just an artifact of the specific choice of unweighted regularizer (see again \Cref{sec:min s-t cut prev work limits}).

Similar to \cite{kumabe2022lipschitz},
we also add a smooth, strongly convex regularizer to the natural LP relaxation of the underlying graph problem.
However,
we depart from the direct application of first-order optimality conditions when analyzing the stability of optimal solutions
and instead analyze the stability of iterates generated by the proximal gradient method,
a first-order optimization method that is known to converge.
This leads to a general framework for analyzing the Lipschitz constant of regularized LP relaxations
we call the \emph{proximal gradient trajectory analysis} (PGTA).
{Importantly,} our bounds hold \textbf{regardless of the specific optimization algorithm employed to find the solution},
but we choose proximal gradient iterates as it is suitable for composite functions (\eg{}, an objective function plus a regularizer).
While the stability of solutions to strongly convex programs under small perturbations may be intuitive, 
establishing tight bounds on the $\ell_1$ distance between optimal solutions 
as a function of the input perturbation in $\ell_1$ distance,
is a challenging problem.
One of the reasons is that pointwise Lipschitz continuity requires us to measure both the input and output perturbations using the $\ell_1$-norm,
rather than the dual $\ell_0$-$\ell_1$ norm pairing (or vice versa).

Beyond leading to improved bounds for the minimum $S$-$T$ cut problem,
PGTA unifies the analysis of the Lipschitz constant of convex relaxations for many problems under one clean and easy-to-use framework.
For example,
we design a bipartite matching algorithm that improves the stability-approximation tradeoffs 
compared to the \cite{kumabe2022lipschitz} bipartite matching algorithm
for {certain input regimes}.
More importantly,
our algorithm and its analysis cleanly generalizes to the case of bipartite $\bm b$-matchings when $\bm b\geq 1$,
whereas the \cite{kumabe2022lipschitz} algorithm is hardwired for $1$-matchings (see \Cref{sec:mwm prev work limits}).
{To further demonstrate the flexibility of our approach,
we also design a pointwise Lipschitz algorithm for packing integer programs.}

We note that on top of applying PGTA to obtain stable fractional solutions,
designing end-to-end algorithms requires designing new stable rounding techniques that are problem-specific and technically involved.
Such problem-dependent rounding is necessary as rounding fractional LP solutions even without Lipschitz constraints is in general a difficult problem.

\subsection{Our Results}\label{sec:results}
Before we discuss our results, we give 
an informal description of
the pointwise Lipschitz constant of a graph algorithm \cite{kumabe2022lipschitz}. For the formal definition,
we refer the reader to \Cref{sec:pointwise-lipschitz}. 
We say that a randomized algorithm $\mcal A$ that operates
on an $n$-vertex $m$-edge graph $G=(V,E)$ with weights $\bm w \in \R_{\geq 0}^{E}$ is pointwise Lipschitz continuous with pointwise Lipschitz constant
$P_{G, \bm w}$ if the earth mover's distance (EMD) 
between the output distributions of $\mcal A$ 
on inputs $(G, \bm w)$ and $(G, \tilde{\bm w})$ is at most $P_{G, \bm w}\cdot \norm{\bm w - \tilde{\bm w}}_1.$ 
Here, $\tilde{\bm w}$ is assumed
to be in a sufficiently small neighborhood of $\bm w$. Crucially,
the pointwise Lipschitz constant is allowed to depend
on the structure of the graph $G$ and the particular weight vector $\bm w$.
The reason for this becomes more apparent in our
discussion of the formal definition {in~\cref{sec:def-remarks}}.

Our algorithms produce $(\zeta, \alpha)$-bicriteria 
approximations where $\zeta$ is the multiplicative
factor in the approximation and $\alpha$ is the additive error.
We write $\zeta$-approximation to indicate a $(\zeta, 0)$-approximation algorithm as a shorthand.

For graphs undergoing edge insertions/deletions,
the recourse of a dynamic algorithm is defined as the size of the symmetric difference of output sets between consecutive updates,
\eg{}, the change in the output between consecutive updates.

\subsubsection{Minimum Vertex \texorpdfstring{$S$-$T$}{S-T} Cut}
We begin by informally stating a pointwise Lipschitz algorithm for the minimum $S$-$T$ cut problem 
that achieves an asymptotically tight pointwise Lipschitz constant, 
{which we demonstrate via a matching lower bound.}

Let $S, T\sset V$ be disjoint vertex subsets.
We write $\partial A \coloneqq \set{uv\in E: u\in A, v\notin A}$.
The \emph{minimum vertex $S$-$T$ cut} problem asks for a vertex subset $A\supseteq S$ 
that minimizes the cut weight $\bm w(\partial A) = \sum_{e\in \partial A} w_e$
subject to the constraint $A\cap T = \varnothing$.
Note that this is a generalization of the minimum vertex $s$-$t$ cut problem.

\begin{theorem}[Informal; See \Cref{thm:s-t-min-cut-additive}, \Cref{thm:s-t-cut-additive-lower-bound}]\label{thm:s-t-cut-informal}
    There is an efficient $(1+O(\nicefrac1{\sqrt{n}}), O(\lambda_2))$-approximation algorithm for minimum $S$-$T$ cut
    with pointwise Lipschitz constant $O(\nicefrac{n}{\lambda_2})$,
    where $\lambda_2$ is the second smallest eigenvalue of the unnormalized Laplacian matrix of the input graph.
    Moreover,
    any $(O(1), O(\lambda_2))$-approximation algorithm must have pointwise Lipschitz constant $\Omega(\nicefrac{n}{\lambda_2})$.
\end{theorem}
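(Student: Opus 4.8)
The plan is to instantiate the paper's PGTA framework on the standard LP relaxation of minimum vertex $S$-$T$ cut and round it with a single shared random threshold. Introduce $x_v\in[0,1]$ for each $v\in V$, fix $x_v=1$ for $v\in S$ and $x_v=0$ for $v\in T$, and relax $\bm w(\partial A)$ to the total‑variation objective $g_{\bm w}(x)=\sum_{uv\in E}w_{uv}\abs{x_u-x_v}$ over the polytope $\mcal K=\set{x\in[0,1]^V: x_v=1\ \forall v\in S,\ x_v=0\ \forall v\in T}$; integral cuts are feasible, so $\min_{\mcal K}g_{\bm w}\le\OPT$ (in fact the relaxation is exact). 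Following the framework, regularize with the unnormalized weighted Laplacian quadratic form: let $x^*_{\bm w}$ be the minimizer over $\mcal K$ of
\[
 f_{\bm w}(x)\;=\;g_{\bm w}(x)\;+\;\tfrac{\mu}{2}\,x^{\top}L_{\bm w}x,
\]
with $\mu=\Theta(1/\sqrt n)$, where $L_{\bm w}$ is the unnormalized weighted Laplacian; a vanishing additional strongly‑convex term (or, equivalently, solving the program only to additive accuracy $\Theta(\lambda_2)$) accounts for the $O(\lambda_2)$ additive slack and for the degenerate direction of $L_{\bm w}$. Since $z^2\le\abs z$ for $\abs z\le1$, on $\mcal K$ we have $g_{\bm w}\le f_{\bm w}\le(1+\tfrac{\mu}{2})g_{\bm w}$, so $g_{\bm w}(x^*_{\bm w})\le f_{\bm w}(x^*_{\bm w})=\min_{\mcal K}f_{\bm w}\le(1+\tfrac{\mu}{2})\OPT$.

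For the rounding, draw $\theta\sim\mathrm{Unif}[0,1]$ and output $A_\theta=\set{v:x^*_{\bm w}(v)>\theta}$; then $S\sset A_\theta$, $A_\theta\cap T=\varnothing$, and each edge $uv$ is cut with probability $\abs{x^*_{\bm w}(u)-x^*_{\bm w}(v)}$, so $\E[\bm w(\partial A_\theta)]=g_{\bm w}(x^*_{\bm w})\le(1+\tfrac{\mu}{2})\OPT=(1+O(1/\sqrt n))\OPT$, which (together with the additive slack above) gives the approximation guarantee. For stability, couple the runs on $(G,\bm w)$ and $(G,\tilde{\bm w})$ by reusing the same $\theta$; writing $\tilde A_\theta$ for the analogously defined output, $\E\bigl[\abs{A_\theta\triangle\tilde A_\theta}\bigr]=\sum_v\abs{x^*_{\bm w}(v)-x^*_{\tilde{\bm w}}(v)}=\norm{x^*_{\bm w}-x^*_{\tilde{\bm w}}}_1$, so the EMD between outputs is at most the $\ell_1$ distance between the two regularized optima. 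Hence everything reduces to the \emph{fractional} pointwise Lipschitz bound $\norm{x^*_{\bm w}-x^*_{\tilde{\bm w}}}_1=O(n/\lambda_2)\cdot\norm{\bm w-\tilde{\bm w}}_1$.

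This fractional bound is the heart of the argument and is exactly what the proximal‑gradient‑trajectory analysis is designed to deliver. I would write the first‑order optimality (variational) inequalities for $x^*_{\bm w}$ and $x^*_{\tilde{\bm w}}$ over $\mcal K$, add them, and split the resulting subgradient difference into a monotone part (from convexity of the common objective, with the favorable sign) and a perturbation part $\sum_{uv}(w_{uv}-\tilde w_{uv})\sigma_{uv}+\mu(L_{\bm w}-L_{\tilde{\bm w}})x^*_{\tilde{\bm w}}$, where $\sigma_{uv}$ is a common subgradient selection of $\abs{\,\cdot\,}$ at $x^*_{\tilde{\bm w}}$; this is the step where the nonsmoothness of the TV objective must be handled carefully, via subdifferential monotonicity / the prox operator, as in PGM. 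With $d=x^*_{\bm w}-x^*_{\tilde{\bm w}}$ and $\abs{d_v}\le1$ this yields
\[
 \mu\,d^{\top}L_{\bm w}d\;\le\;O(1)\sum_v\abs{d_v}\Bigl(\textstyle\sum_{u\sim v}\abs{w_{uv}-\tilde w_{uv}}\Bigr)\;\le\;O(1)\,\norm{d}_2\,\norm{\bm w-\tilde{\bm w}}_1 .
\]
A Poincaré‑type inequality for the Laplacian, $d^{\top}L_{\bm w}d\ge\lambda_2\norm{d}_2^2$ on the relevant subspace — this is where $\lambda_2$ enters, and where the boundary constraints (and the auxiliary regularizer) are used to rule out the all‑ones direction — gives $\norm{d}_2=O\bigl(1/(\mu\lambda_2)\bigr)\norm{\bm w-\tilde{\bm w}}_1$, and then $\norm{d}_1\le\sqrt n\,\norm{d}_2$ with $\mu=\Theta(1/\sqrt n)$ produces the claimed $O(n/\lambda_2)$. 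The main obstacle is precisely this spectral step: controlling $d$ along the near‑null direction of $L_{\bm w}$ while keeping the cost overhead only additive.

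For the matching lower bound I would construct a one‑parameter family of instances with tunable $\lambda_2$ — e.g. a long weighted path, or a ``dumbbell'' of two expanders joined by a light bridge, with $S$ and $T$ at the two ends — on which there are $\Omega(n)$ near‑optimal $S$–$T$ cuts whose vertex sides are pairwise far in symmetric difference. Starting from weights $\bm w$ and gradually transferring mass to reach weights $\bm w'$ whose unique near‑optimal cut lies ``on the opposite side'', any $(O(1),O(\lambda_2))$‑approximation must, at each endpoint of this path, place almost all of its output mass on the corresponding family of cuts. Chaining the instances along the path, applying the triangle inequality for EMD, and a pigeonhole/averaging argument then force some consecutive pair $(\bm w_i,\bm w_{i+1})$ with $\norm{\bm w_i-\bm w_{i+1}}_1$ arbitrarily small to satisfy $\mathrm{EMD}\ge\Omega(n)\cdot\norm{\bm w_i-\bm w_{i+1}}_1$, i.e. pointwise Lipschitz constant $\Omega(n/\lambda_2)$; calibrating the bridge weight to the prescribed $\lambda_2$ closes the argument.
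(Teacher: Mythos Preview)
Your high-level plan matches the paper's: regularize the TV relaxation by the weighted Laplacian quadratic form, argue stability of the fractional optimum, and threshold-round. But the step you flag as ``the main obstacle'' is exactly where your sketch has a genuine gap, and the paper's resolution of it is the substantive content you are missing.

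\textbf{The spectral step does not go through as written.} You need $\mu\,d^\top L_{\bm w} d\ge \mu\lambda_2\norm{d}_2^2$, but $d=x^*_{\bm w}-x^*_{\tilde{\bm w}}$ is \emph{not} orthogonal to $\ones$: the boundary constraints only force $d_v=0$ for $v\in S\cup T$, which for small $S,T$ says almost nothing about the $\ones$-component of $d$. Your proposed fix --- an ``auxiliary'' $\nu\norm{x}_2^2$ term to handle the null direction --- does not balance: to keep the additive cost $O(\lambda_2)$ you need $\nu=O(\lambda_2/n)$, but then the strong-convexity constant in the $\ones$-direction is only $\Theta(\lambda_2/n)$, yielding $\norm{d}_2=O(\sqrt n\cdot n/\lambda_2)\norm{\bm w-\tilde{\bm w}}_1$ and hence $\norm{d}_1=O(n^2/\lambda_2)\norm{\bm w-\tilde{\bm w}}_1$, a factor $n$ too large. (This is essentially the ``na\"ive'' unweighted-regularizer bound the paper analyzes and rejects.)

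\textbf{How the paper closes this gap.} The paper \emph{forces} $\langle \ones, y\rangle=0$ as a hard constraint in the relaxation, so the Laplacian regularizer is genuinely $\eps\lambda_2$-strongly convex on the feasible set and PGTA yields $\norm{d}_2=O(1/(\eps\lambda_2))\norm{\bm w-\tilde{\bm w}}_1$. The price is that threshold rounding is now feasible only with probability $\approx 1/2$ (your ``$\theta\in[0,1]$ always gives a valid cut'' argument no longer applies once $y$ can be negative). Boosting this in a Lipschitz-stable way is nontrivial and is the second real idea: the paper solves $1/\gamma$ restricted programs (guessing $|A|\approx \gamma i n$) and selects among the resulting objective values via the exponential mechanism, whose TV-stability translates to an $O(n/\lambda_2)$ contribution. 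The combination of the $\ones^\perp$ constraint and the exponential-mechanism selection is what makes both the approximation and the Lipschitz bound go through simultaneously.

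\textbf{Lower bound.} Your chaining/pigeonhole outline is in the right spirit, but note the paper's concrete instance is neither a path nor a dumbbell: it is a complete bipartite graph $K_{|U|,|R|}$ with $|U|=\Theta(\lambda_2)$, where weights are $1$ everywhere except on edges incident to $s$ (resp.\ $t$), so that the unique $(O(1),\lambda_2)$-approximate cut is $\{s\}$ under $\bm w$ and $V\setminus\{t\}$ under $\tilde{\bm w}$. The EMD is then $\Omega(n)$, the total $\ell_1$ weight change is $\Theta(|U|)=\Theta(\lambda_2)$, and a path-integration lemma (\Cref{thm:finite perturbation}) extracts a point with Lipschitz constant $\Omega(n/\lambda_2)$. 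Your dumbbell idea would also need care to ensure $\lambda_2$ stays at the prescribed value along the entire weight path.
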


{
It is known that a $(\zeta, \alpha)$-approximate (pointwise) Lipschitz algorithm with (pointwise) Lipschitz constant $K$ can be black-box transformed to a dynamic $(\zeta, \alpha)$-approximation algorithm where the expected recourse is at most $K$~\cite[Theorem 1.6]{kumabe2022lipschitz}.
Thus, as a direct corollary of our minimum $s$-$t$ cut algorithm,
we obtain the first dynamic minimum $s$-$t$ cut algorithm with non-trivial recourse.
\begin{corollary}
    There is a dynamic $(1+O(\nicefrac1{\sqrt{n}}), O(\lambda_2))$-approximation algorithm for minimum $S$-$T$ cut
    with expected recourse $O(\nicefrac{n}{\lambda_2})$,
    where $\lambda_2$ is the second smallest eigenvalue of the unnormalized Laplacian matrix of the current input graph.
\end{corollary}
}

We also consider the minimum $\beta$-balanced vertex $S$-$T$ cut problem,
which is a variant of the minimum vertex $S$-$T$ cut problem
where we additionally restrict the feasible cut sets $A$ to have at least $\beta n$ vertices and at most $(1-\beta)n$ vertices.

\begin{theorem}[Informal; See \Cref{thm:s-t-cut-vertex-set}]%
    There is an efficient $O(\nicefrac1\beta)$-approximation algorithm to the $\beta$-balanced $S$-$T$ cut problem
    with pointwise Lipschitz constant $O\left( \frac{\sqrt{n}}{\beta^2 \lambda_2} \right)$,
    where $\lambda_2$ is the second smallest eigenvalue of the unnormalized Laplacian matrix of the input graph.
\end{theorem}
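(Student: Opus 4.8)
The plan is to instantiate this paper's PGTA framework for the $\beta$-balanced cut and then build a \emph{balance-aware} rounding scheme on top of it; the relaxation and stability analysis follow the unconstrained minimum $S$-$T$ cut case (\cref{thm:s-t-min-cut-additive}), while the balance constraint introduces the new difficulty. For the continuous relaxation I use the standard $\set{0,1}$-relaxation of the cut: variables $x\in[0,1]^V$ with $x_s=1$ for $s\in S$ and $x_t=0$ for $t\in T$, the balance constraint $\beta n\le\sum_v x_v\le(1-\beta)n$, and objective $\sum_{uv\in E}w_{uv}\card{x_u-x_v}$ (linearised with edge variables $z_{uv}\ge\pm(x_u-x_v)$). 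Adding the smooth, $\mu$-strongly convex regularizer $\tfrac{\mu}{2}\norm{x}_2^2$ (or a recentred variant $\tfrac{\mu}{2}\norm{x-c\ones}_2^2$) makes the program strongly convex in $x$, and I feed it to the PGTA machinery. The quantitative input is that the sensitivity analysis developed earlier bounds the \emph{fractional} pointwise Lipschitz constant of the unique minimizer $x^\star(\bm w)$ in $\ell_2$ by $O(1/\mu)$: an $\ell_1$-perturbation of the weights changes the (sub)gradient of the cut term by at most $O(1)\cdot\norm{\bm w-\tilde{\bm w}}_1$ in $\ell_2$ (each edge touches exactly two gradient coordinates), and strong convexity yields $\norm{x^\star(\bm w)-x^\star(\tilde{\bm w})}_2\le O(\tfrac1\mu)\norm{\bm w-\tilde{\bm w}}_1$.

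To fix $\mu$, I use the spectral lower bound $\OPT\ge\Omega(\beta(1-\beta)n\lambda_2)$: writing $\bm w(\partial A)=\ones_A^\top L\ones_A$ for the weighted Laplacian $L$ and projecting $\ones_A$ off the all-ones vector, $\bm w(\partial A)\ge\lambda_2\norm{\ones_A-\tfrac{\card{A}}{n}\ones}_2^2=\lambda_2\tfrac{\card{A}(n-\card{A})}{n}\ge\Omega(\beta n\lambda_2)$ for every $\beta$-balanced $A$. Since regularization costs at most $\tfrac{\mu}{2}\norm{x}_2^2\le\tfrac{\mu n}{2}$ additively, taking $\mu=\Theta(\lambda_2)$ loses only an $O(1/\beta)$ multiplicative factor. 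It remains to round $x^\star(\bm w)$ to an integral $\beta$-balanced $S$-$T$ cut. The base scheme is threshold (sweep) rounding: draw $\theta\sim\mathrm{Unif}[0,1]$ and output $A_\theta=S\cup\set{v\in V\setminus T: x^\star_v\ge\theta}$; then $\E[\bm w(\partial A_\theta)]=\sum_{uv}w_{uv}\card{x^\star_u-x^\star_v}\le O(\tfrac1\beta)\OPT$, and coupling the \emph{same} $\theta$ across $\bm w,\tilde{\bm w}$ gives $\E\norm{\ones_{A_\theta(\bm w)}-\ones_{A_\theta(\tilde{\bm w})}}_1\le\norm{x^\star(\bm w)-x^\star(\tilde{\bm w})}_1\le\sqrt n\,\norm{x^\star(\bm w)-x^\star(\tilde{\bm w})}_2$. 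The issue is that $A_\theta$ need not be $\beta$-balanced (e.g.\ if $x^\star$ is nearly constant the level-set-size function $N(\theta)=\card{\set{v:x^\star_v\ge\theta}}$ jumps across the whole window $[\beta n,(1-\beta)n]$), even though $\sum_v x^\star_v\in[\beta n,(1-\beta)n]$ forces the \emph{expected} size into range. I therefore post-process each $A_\theta$ deterministically: if $\card{A_\theta}<\beta n$, add the $\beta n-\card{A_\theta}$ vertices of $V\setminus(T\cup A_\theta)$ of largest $x^\star$-value (ties broken by a fixed order), and symmetrically if $\card{A_\theta}>(1-\beta)n$. Charging the extra boundary edges to the fractional mass $\sum_v x^\star_v\ge\beta n$ (and, if needed, restricting $\theta$ to the sub-range where the fix-up is cheap) keeps the expected cut at $O(\tfrac1\beta)\OPT$; since the identities of the added/removed vertices and the relevant $\theta$-range are $O(1/\beta)$-Lipschitz in $x^\star$ (in $\ell_1$), the rounding inflates the fractional pointwise Lipschitz constant by at most $O(1/\beta^2)$. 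With $\mu=\Theta(\lambda_2)$ this gives approximation $O(1/\beta)$ and pointwise Lipschitz constant $O(\sqrt n/(\beta^2\lambda_2))$.

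The main obstacle is the rounding step, and in particular reconciling its three requirements at once: the output must be exactly $\beta$-balanced on \emph{every} random outcome, its expected cut must stay within $O(1/\beta)$ of $\OPT$, and --- most delicately --- the randomized map $x^\star\mapsto A$ must be stable, so that a small $\ell_1$ change in $x^\star$ induces a small expected symmetric difference of outputs. The stability is subtle because both the usable range of thresholds and the deterministic fix-up depend on $\bm w$ through $x^\star$; one must show these auxiliary objects move slowly, and this is exactly where the $\beta^{-2}$ in the bound is incurred. A secondary, more routine point is checking that the spectral lower bound together with the regularization trade-off really does permit $\mu=\Theta(\lambda_2)$, which is what makes the unavoidable $\ell_1$-to-$\ell_2$ conversion loss of $\sqrt n$ produce $\sqrt n/\lambda_2$ rather than $n/\lambda_2$.
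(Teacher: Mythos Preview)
Your proposal differs from the paper's proof in three ways, the most significant being a misreading of the goal: the paper does \emph{not} require the output to be $\beta$-balanced. The formal \cref{thm:s-t-cut-vertex-set} outputs an arbitrary feasible $S$-$T$ cut whose expected weight is $O(\beta^{-1})\OPT_\beta$; only the benchmark $\OPT_\beta$ is restricted to $\beta$-balanced sets. Your self-identified ``main obstacle'' is therefore not part of the problem, and your post-processing step is unnecessary --- which is fortunate, because as written it does not work: adding the vertices of largest $x^\star$-value can increase the cut weight by their full weighted degrees, which bears no relation to $\OPT_\beta$ or to the fractional objective, and the charging sketch does not control this. (Your fractional step with $\mu=\Theta(\lambda_2)$ and the spectral lower bound $\OPT_\beta\ge\Omega(\beta n\lambda_2)$ is plausible, though setting $\mu$ this way requires a stable estimate of the weight-dependent quantity $\lambda_2$, which you do not address.)

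The paper's actual argument proceeds differently on both remaining steps. For the relaxation it uses the \emph{weighted Laplacian} regularizer $\frac\eps2\,\bm y^\top\mcal L_{\bm w}\bm y$ on the slice $\langle\ones,\bm y\rangle=0$, $\bm y\in[-1+\beta,1-\beta]^V$; since $(y_u-y_v)^2\le|y_u-y_v|$ there, this gives a $(1+\eps)$ multiplicative fractional guarantee and, via PGTA, fractional $\ell_2$-Lipschitz constant $O(\eps^{-1}\lambda_2^{-1})$, with no need to tune the regularizer to $\lambda_2$. For rounding, the paper runs threshold rounding $k=\Theta(\beta^{-2}\log\tfrac1\gamma)$ times independently over $[-1+\beta,1-\beta]$ (where each $A_i$ is a feasible $S$-$T$ cut with probability $\tfrac1{2(1-\beta)}>\tfrac12$) and returns $B_{r,k}=\bigcup_{\card I=r}\bigcap_{i\in I}A_i$ for uniformly random $r\in\{k/2,\dots,(1/2+\beta/4)k\}$. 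A Chernoff bound makes $B_{r,k}$ feasible with probability $1-\gamma$; the $k$-way submodularity inequality $\sum_r\cut(B_{r,k})\le\sum_i\cut(A_i)$ yields the $O(\beta^{-1})$ approximation factor (averaging over the $\Theta(\beta k)$ values of $r$); and the containment $B_{r,k}\triangle\tilde B_{r,k}\subseteq\bigcup_i(A_i\triangle\tilde A_i)$ yields the $k=O(\beta^{-2})$ blow-up in the Lipschitz constant.
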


We summarize our results for the minimum vertex $S$-$T$ cut
in \Cref{tab:cut-results}.
\begin{table}[hbtp!]
    \centering
    \renewcommand{\arraystretch}{1.75}
    \begin{tabular}{llll}
        \toprule
        Problem & Approximation  & \makecell{Pointwise Lipschitz\\Constant}  & Reference \\
        \midrule
        \makecell[l]{Minimum Vertex \\ $S$-$T$ Cut} & $\left( 1+o(1) \right) \cdot \mathrm{OPT}+ O(\lambda_2)$ & $O\left( \frac{n}{\lambda_2} \right)$ & \Cref{thm:s-t-min-cut-additive} \\
        & $O(1) \cdot \OPT + O(\lambda_2) $ & $\Omega\left(\frac{n}{\lambda_2}\right)$ & \Cref{thm:s-t-cut-additive-lower-bound} \\
        \makecell[l]{Minimum $\beta$-Balanced \\ Vertex $S$-$T$ Cut} & $O(\beta^{-1}) \cdot \mathrm{OPT}_\beta$ & $O\left( \frac{\sqrt{n}}{\beta^2 \lambda_2}\right)$ & \Cref{thm:s-t-cut-vertex-set} \\
        \bottomrule
    \end{tabular}
    \caption{Our results. 
    For graph problems, $n$ and $m$ denote the number of vertices and edges, respectively.
    For the minimum balanced $S$-$T$ cut problem, $\OPT_\beta$ denotes the minimum cut weight over vertex subsets of size at least $\beta n$ and at most $(1-\beta)n$.
    $\lambda_2$ denotes the second smallest eigenvalue of the input graph.
    }
    \label{tab:cut-results}
\end{table}

{The tight upper and lower bounds of the pointwise Lipschitz constant for minimum $s$-$t$ cut
suggest that algorithms achieving strong approximation guarantees for this problem must be unstable in general.
This is also supported by strong lower bounds from the differential privacy literature (see \Cref{sec:related-work}).
Our work contributes a fine-grained quantification of the particular instances for which this instability holds, parameterized by $\lambda_2$.}

\subsubsection{Bipartite $\bm b$-Matching}
Recall the maximum weight (bipartite) matching problem
where we are given a (bipartite) graph and want to output an edge set $M\sset E$
such that every vertex is adjacent to at most one edge of $M$
while maximizing the matching weight $\bm w(M) = \sum_{e\in M} w_e$.

\begin{theorem}[Informal; 
See \Cref{thm:mwm-bipartite}
]
    There is an efficient $(2+\eps)$-approximation algorithm for maximum weight bipartite matching
    with pointwise Lipschitz constant $O\left( \frac{\sqrt{m}}{\eps w_{\min}} \right)$.
    Here, $w_{\min}$ is the minimum input edge weight.
\end{theorem}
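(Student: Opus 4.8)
The plan is to follow the PGTA template: relax to the fractional matching polytope, add an $\ell_2$ regularizer, control the stability of the regularized optimum, and then round it to an integral matching by a scheme that is itself stable. Let $P\sset\R^E$ be the fractional matching polytope cut out by $\sum_{e\ni v}x_e\le 1$ for all $v$ and $x\ge 0$ (for bipartite $G$ this polytope is integral). Fix $\mu>0$ and set $x_\mu(w)\coloneqq\argmax_{x\in P}\bigl(\iprod{w,x}-\tfrac\mu2\norm{x}_2^2\bigr)$, which is well defined since the objective is $\mu$-strongly concave; in fact $x_\mu(w)=\argmin_{x\in P}\norm{x-w/\mu}_2^2$ is the Euclidean projection of $w/\mu$ onto $P$, so by nonexpansiveness of projection (equivalently, by $\mu$-strong concavity the solution map is $\tfrac1\mu$-Lipschitz in $\ell_2$) we get $\norm{x_\mu(w)-x_\mu(\tilde w)}_2\le\tfrac1\mu\norm{w-\tilde w}_2$. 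Since $x\in\R^E$ has $m$ coordinates, $\norm{x_\mu(w)-x_\mu(\tilde w)}_1\le\sqrt m\,\norm{x_\mu(w)-x_\mu(\tilde w)}_2\le\tfrac{\sqrt m}{\mu}\norm{w-\tilde w}_1$; more generally one invokes the paper's trajectory-analysis lemma for the regularized LP. I will take $\mu=\Theta(\eps w_{\min})$, so the fractional solution map has pointwise Lipschitz constant $O(\sqrt m/(\eps w_{\min}))$.

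Next, the approximation loss from regularization is small. Let $M^*$ be a maximum-weight integral matching, so $\ones_{M^*}\in P$, and by optimality of $x_\mu(w)$, $\iprod{w,x_\mu(w)}\ge\iprod{w,x_\mu(w)}-\tfrac\mu2\norm{x_\mu(w)}_2^2\ge\iprod{w,\ones_{M^*}}-\tfrac\mu2\norm{\ones_{M^*}}_2^2=\OPT-\tfrac\mu2\card{M^*}$; since each edge of $M^*$ has weight $\ge w_{\min}$ we have $\card{M^*}\le\OPT/w_{\min}$, hence $\iprod{w,x_\mu(w)}\ge(1-\tfrac{\mu}{2w_{\min}})\OPT=(1-\Theta(\eps))\OPT$. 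I then round $x\coloneqq x_\mu(w)$. In the bipartite case with sides $L,R$: independently for each $u\in L$, using a single uniform seed $Y_u$, select at most one incident edge, choosing $uv$ with probability $x_{uv}$ (feasible as these sum to $\le1$); this yields an edge set $M_1$ with every $L$-vertex of degree $\le1$ and $\expect[\bm w(M_1)]=\iprod{w,x}$. Then give each edge an independent uniform priority and, at every $v\in R$, keep only the highest-priority edge of $M_1$ at $v$; a short calculation shows each $e\in M_1$ survives with probability $\ge\nicefrac12$, so the output matching $M$ has $\expect[\bm w(M)]\ge\tfrac12\iprod{w,x}\ge\tfrac12(1-\Theta(\eps))\OPT$. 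In the general case: independently put each $e$ into a set $S$ with probability $x_e$, assign independent uniform priorities, and keep $e\in S$ iff it has strictly higher priority than every other edge of $S$ sharing an endpoint with it; each $e$ then survives with probability $\ge\nicefrac14$, so $\expect[\bm w(M)]\ge\tfrac14(1-\Theta(\eps))\OPT$. Reparametrizing $\eps$ gives the claimed $(2+\eps)$- and $(4+\eps)$-approximations in expectation.

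It remains to check that rounding preserves stability. Couple the runs on $w$ and $\tilde w$ by reusing the seeds $\{Y_u\}$ (resp. the uniforms $\{U_e\}$ defining $S$) and all priorities, so that $M(x)$ and $M(\tilde x)$ become deterministic functions of the shared randomness. For a fixed draw, the vertex $u\in L$ picks a different edge only when $Y_u$ lies in the symmetric difference of the two ``selection intervals'', which has length $O\bigl(\sum_{e\ni u}\abs{x_e-\tilde x_e}\bigr)$ (resp. $e$ changes its membership in $S$ only when $U_e$ lies between $x_e$ and $\tilde x_e$); summing over $u\in L$ (resp. over edges), the expected number of such ``toggles'' is $O(\norm{x-\tilde x}_1)$, using that each edge has a unique $L$-endpoint. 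Crucially a toggle propagates only locally: whether an edge $g$ lies in the output depends only on the selections / membership-in-$S$ of edges incident to $g$ and on the priorities, none of which changes unless the toggled edge is incident to $g$, and at each vertex at most one output edge flips out and one flips in; hence one toggle changes $M$ by $O(1)$ edges and no cascade occurs. Therefore $\expect\card{M(x)\,\triangle\,M(\tilde x)}\le O(\norm{x-\tilde x}_1)$, so $\EMD\bigl(\mcal A(G,w),\mcal A(G,\tilde w)\bigr)\le O(1)\cdot\norm{x_\mu(w)-x_\mu(\tilde w)}_1\le O\bigl(\tfrac{\sqrt m}{\eps w_{\min}}\bigr)\norm{w-\tilde w}_1$, the claimed pointwise Lipschitz bound.

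The main obstacle is the rounding: the conflict resolution must simultaneously lose only a constant factor in the objective \emph{and} be ``local'' in the sense above, so that perturbing one fractional coordinate changes only $O(1)$ edges of the output rather than triggering a cascade along alternating paths; carrying out this local-change analysis together with the survival-probability estimates ($\nicefrac12$ for the bipartite $R$-phase, $\nicefrac14$ for the general single phase) is the technically delicate part. A secondary point is calibrating $\mu$ so that the $\ell_2\to\ell_1$ blow-up (the source of the $\sqrt m$) is balanced against the $(1-\Theta(\eps))$ multiplicative loss from regularization.
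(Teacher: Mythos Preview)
Your high-level template---regularize, bound fractional stability, then round with a local scheme---matches the paper's, but you diverge in all three components, and one of the divergences contains a gap.

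\textbf{Fractional step.} The paper uses the \emph{weighted} regularizer $\tfrac\eps2\sum_e w_e x_e^2$ and appeals to the PGTA bound (\Cref{thm:PGM}); you use the \emph{unweighted} $\tfrac\mu2\norm{x}_2^2$ with $\mu=\Theta(\eps w_{\min})$ and observe that the regularized optimum is just the Euclidean projection of $w/\mu$ onto the polytope, so nonexpansiveness gives the $\ell_2$ bound immediately. This is correct and slicker than invoking the full trajectory analysis; your approximation step via $\card{M^*}\le\OPT/w_{\min}$ is exactly how $w_{\min}$ shows up.

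\textbf{General matching.} The paper does \emph{not} use your direct rounding; it randomly two-colors $V$, keeps the bichromatic edges, and runs the bipartite algorithm (\Cref{alg:general-matching}), losing a factor $2$ in expectation. Your local-priority scheme (include $e$ in $S$ with probability $x_e$, keep $e$ iff it beats every $S$-neighbor) also works: survival is $\E[1/(1+K)]\ge 1/(1+\E[K])\ge\tfrac13$ by Jensen, and your locality argument is sound because membership in the output is a function only of $S$ restricted to $e$ and its edge-neighbors, so one toggle flips at most three output edges.

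\textbf{Bipartite rounding---here is the gap.} In phase~1 you have each $u\in L$ sample a neighbor ``using a single uniform seed $Y_u$'' and assert that the selection changes with probability $O\bigl(\sum_{e\ni u}\abs{x_e-\tilde x_e}\bigr)$. With the natural cumulative-interval implementation this is false: if $u$ has neighbors $v_1,v_2,v_3$ with $x=(0.3,0.3,0.3)$ and $\tilde x=(0.3+\delta,0.3,0.3)$, then all three interval boundaries shift by $\delta$ and the selection changes with probability $3\delta$, while $\sum_{e\ni u}\abs{x_e-\tilde x_e}=\delta$. In general the shared-seed cumulative coupling loses a factor of the degree, yielding only $O(d^u_{\max}\,\norm{x-\tilde x}_1)$. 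The paper avoids this by coupling the phase-1 selection at each $u$ via the \emph{optimal} (total-variation--achieving) coupling of the two discrete distributions, so the disagreement probability at $u$ is $\tfrac12\sum_{e\ni u}\abs{x_e-\tilde x_e}$; see the proof of \Cref{lem:maximum bipartite b-matching bidding}(b). Once you swap in that coupling, your phase-2 ``one toggle changes $O(1)$ output edges'' argument is correct (this is exactly \Cref{lem:maximum bipartite b-matching accepting Lipschitz}), and the rest of your proof stands.
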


We note that \cite{kumabe2022lipschitz}
obtain a $(2+\eps)$-approximate maximum weight bipartite matching with pointwise Lipschitz constant 
$O\left(\eps^{-1} n^{3/2} \log(m)/\OPT\right)$. 
In cases where the graph is sparse, \ie{}, $m = \tilde{O}(n)$, 
the number of edges in the optimum matching is small, \eg{}, $o(n)$, 
and the edge weights are similar in magnitude, 
our algorithm improves on their pointwise Lipschitz bounds. 

We also study the maximum weight (bipartite) $\bm b$-matching problem,
a variation of the maximum weight (bipartite) matching problem, 
where each vertex $v$ can be adjacent to at most $b_v$ edges of the matching.
Note the case of $\bm b\equiv 1$ recovers the maximum weight (bipartite) matching problem.

\begin{theorem}[Informal; 
See \Cref{thm:maximum bipartite b-matching}
]
    There is an efficient $\left( \frac{2+\eps}{1-\nicefrac1e} \right)$-approximation algorithm for maximum weight bipartite $\bm b$-matching
    with pointwise Lipschitz constant $O\left( \frac{\sqrt{m}}{\eps w_{\min}} \right)$.
    Here, $w_{\min}$ is the minimum input edge weight.
\end{theorem}

{We remark that our bipartite matching and $\bm b$-matching algorithms
can be adapted to general graphs with the same Lipschitz constants 
through random bipartitions,
but at a $2$-factor cost in the approximation ratio.}

Finally,
we study packing integer programs (PIP),
where given an input matrix $A\in [0, 1]^{p\times m}$ and budget $\bm b\in \R_{\geq 1}^p$,
we ask for a binary vector $\bm y\in \set{0, 1}^m$
maximizing its weight $\bm w^\top \bm y$
subject to the constraints $A\bm y\leq \bm b$.
Note the case where $A$ is the edge-incidence matrix of a graph
and $\bm b$ is integral recovers the maximum weight $\bm b$-matching problem.
{It is important to note that while the generality of PIPs may present challenges in achieving the same tight 
bounds as problem-specific approaches, this result nevertheless highlights the versatility and broad applicability of our framework.}
\begin{theorem}[Informal; See \Cref{thm:PIP}]
    There is an efficient $O(p^{1/B})$-approximation algorithm for packing integer programs
    with pointwise Lipschitz constant $O\left( \frac{\sqrt{m}}{w_{\min} p^{1/B}} \right)$,
    where $w_{\min}$ denotes the minimum input weight
    and $B\coloneqq \min(\bm b)\geq 1$ is the minimum input budget.
\end{theorem}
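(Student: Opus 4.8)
The plan is to run our PGTA framework on the natural LP relaxation of the packing integer program and then convert the resulting stable fractional point into an integral one via a stability-aware version of the classical randomized rounding of Raghavan--Thompson. We work with the relaxation
\[
  \max\set{\bm w^\top \bm y : A\bm y \le \bm b,\ \bm 0 \le \bm y \le \bm 1},
\]
to which we add a smooth, strongly convex regularizer whose weight we take proportional to $w_{\min}$. Applying the general machinery of the framework (developed earlier in the paper), we obtain in polynomial time a fractional point $\hat{\bm y}$ with two properties: its objective value is $\bm w^\top \hat{\bm y} = \Omega(\OPTLP)$ --- here we use that $\OPTLP \ge \OPT \ge w_{\min}$, since each single item is feasible because $A \in [0,1]^{p\times m}$ and $\bm b \ge \bm 1$, so the regularizer costs only a constant factor --- and its \emph{fractional} pointwise Lipschitz constant is $O(\sqrt m / w_{\min})$.

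Next I would round $\hat{\bm y}$. Fix a scaling parameter $\gamma = \Theta(p^{1/B})$, set $\bm y' \coloneqq \hat{\bm y}/\gamma$, and round the coordinates independently using a \emph{shared} uniform threshold $\theta_i$ per coordinate, i.e.\ $y_i \coloneqq \ones[\theta_i \le y'_i]$. A Chernoff bound together with a union bound over the $p$ constraints --- crucially using $b_j \ge B$ for every $j$, which is exactly what calibrates $\gamma = \Theta(p^{1/B})$ --- shows that $A\bm y \le \bm b$ holds with at least constant probability; when it fails, a repair step removes a small number of items to restore feasibility (or, in the worst case, falls back to the all-zeros solution). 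The expected weight of the output is $\Omega(\bm w^\top \bm y') = \Omega(\OPTLP / p^{1/B})$, which gives the claimed $O(p^{1/B})$-approximation.

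For the pointwise Lipschitz bound I would couple the executions on $\bm w$ and on a nearby $\tilde{\bm w}$ by reusing the same thresholds $\theta_i$ (and the same internal randomness of the repair step). Before the repair, the two rounded vectors differ on coordinate $i$ only if $\theta_i$ lies between $\hat{\bm y}_i(\bm w)/\gamma$ and $\hat{\bm y}_i(\tilde{\bm w})/\gamma$, so the expected $\ell_1$ distance between them is at most $\tfrac1\gamma \norm{\hat{\bm y}(\bm w) - \hat{\bm y}(\tilde{\bm w})}_1 = O\!\left(\tfrac{\sqrt m}{\gamma\, w_{\min}}\right)\norm{\bm w - \tilde{\bm w}}_1$; in particular the two rounded vectors coincide except on this small set. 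The earth mover's distance between the output distributions is then bounded by this quantity once we verify that the repair step inflates it by only a constant factor, yielding $O\!\left(\tfrac{\sqrt m}{w_{\min}\, p^{1/B}}\right)\norm{\bm w - \tilde{\bm w}}_1$, as desired.

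I expect the main obstacle to be exactly this last point: designing the feasibility-restoring repair so that it is itself stable. Naively falling back to the all-zeros vector whenever the rounded solution is infeasible can change the output on up to $m$ coordinates, which would blow the Lipschitz constant up by a factor of $m$; the fix is to argue that, under the shared-threshold coupling, the event that the two executions disagree about feasibility is already contained in the (low-probability) event that their rounded vectors differ, and moreover that a carefully chosen local repair changes only $O(1)$ coordinates in expectation per violated constraint, so that its total effect is dominated by the rounding discrepancy itself. Secondary technical work includes fixing the precise regularizer and its weight so that the framework's fractional Lipschitz bound comes out as $O(\sqrt m / w_{\min})$ while preserving a constant-factor approximation, and carrying out the Chernoff/union-bound computation that pins down $\gamma = \Theta(p^{1/B})$.
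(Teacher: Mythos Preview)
Your high-level plan --- regularize the LP relaxation, invoke PGTA to get a fractional solution that is $O(\sqrt m/w_{\min})$-Lipschitz, then do scaled Raghavan--Thompson rounding with shared thresholds and a Chernoff/union-bound feasibility analysis that fixes $\gamma=\Theta(p^{1/B})$ --- is exactly what the paper does.

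Two points where you diverge from the paper are worth flagging. First, the ``main obstacle'' you identify (making a feasibility-restoring repair step stable) is not addressed in the paper because the paper does not repair at all: the formal version of the theorem only promises feasibility with probability $1-1/c$ for a parameter $c\ge 1$, and the algorithm simply returns the rounded vector $\bm y^*$ as is. With that relaxed guarantee the Lipschitz analysis is exactly the easy one you already wrote down, $\E\bigl[\lvert\bm y^*\triangle\tilde{\bm y}^*\rvert\bigr]\le\|\bm x^*/\gamma-\tilde{\bm x}^*/\gamma\|_1$, and no repair is needed. Your containment idea for the repair (``the two executions disagree about feasibility only if the rounded vectors already differ'') is correct but, as you note, a naive all-zeros fallback then pays $m$ on that event rather than the desired $O(1)$, and the paper does not attempt the more refined local repair you sketch.

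Second, the paper's regularizer is the \emph{weighted} $\ell_2$ term $\tfrac12\sum_i w_i x_i^2$, not a scalar multiple of $\|\bm x\|_2^2$. This choice is what simultaneously gives (i) strong convexity $\Omega(w_{\min})$ and smoothness $O(w_{\max})$ for PGTA, and (ii) only a factor-$2$ loss in the fractional objective, since $\sum_i w_i x_i^2\le\sum_i w_i x_i$ for $x_i\in[0,1]$. Your phrasing ``whose weight we take proportional to $w_{\min}$'' suggests something like $\tfrac{w_{\min}}2\|\bm x\|_2^2$, which would yield the right Lipschitz constant but an additive error of order $w_{\min} m$ rather than a multiplicative $O(1)$; the inequality $\OPT\ge w_{\min}$ that you invoke is not enough to absorb that. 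Using the weighted regularizer closes this gap cleanly.
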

We summarize our results for the maximum bipartite matching {problem} and its variants in \Cref{tab:matching-results}.
\begin{table}[htbp!]
    \centering
    \renewcommand{\arraystretch}{1.75}
    \begin{tabular}{llll}
        \toprule
        Problem & Approximation  & \makecell{Pointwise Lipschitz\\Constant}  & Reference \\
        \midrule
        Bipartite Matching & $\frac{1}{2+\eps}\cdot \OPT$ & $O\left(\frac{\sqrt{m}}{\eps w_{\min}}\right)$ & \Cref{thm:mwm-bipartite} \\
        Bipartite $\bm b$-Matching & $\frac{1}{3.164+\eps}\cdot \mathrm{OPT}$ & $O\left(\frac{\sqrt{m}}{\eps w_{\min}}\right)$ & \Cref{thm:maximum bipartite b-matching} \\
        \makecell[l]{Packing\\Integer Programs} & $O\left( \frac1{p^{1/B}} \right)\cdot \mathrm{OPT}$ & $O\left(\frac{\sqrt{m}}{w_{\min} p^{1/B}}\right)$ & \Cref{thm:PIP} \\
        \bottomrule
    \end{tabular}
    \caption{Our results for weighted ($\bm b$-)matching and packing integer programs. 
    {$\eps > 0$ denotes the approximation parameter.}
    For graph problems, $n$ and $m$ denote the number of vertices and edges, respectively.
    For the matching problems, $w_{\min}$ denotes the minimum weight of an edge.
    For the packing integer programs, $m$ is the number of variables, $p$ is the number of constraints, $B$ is the minimum bound of a packing constraint, and $w_{\min}$ is the minimum weight of a variable.}
    \label{tab:matching-results}
\end{table}

\section{Technical Overview}
In this section, we provide
an overview of the techniques we use
to obtain our results.
We first detail the proximal gradient trajectory analysis framework in \Cref{sec:overview:pgta}.
Then,
we describe our minimum $S$-$T$ cut algorithm
and the accompanying lower bound in \Cref{sec:overview:min s-t cut}.
Next,
we overview our 
($\bm b$-)matching algorithm in \Cref{sec:overview:bipartite b-matching}.
In \Cref{sec:overview:packing IP},
we describe our algorithm for the broad generalization of packing integer programs.
Finally,
we discuss how to implement our pointwise Lipschitz algorithms using shared randomness in \Cref{sec:overview:shared randomness}.

\subsection{PGTA Framework (\Cref{sec:framework})}\label{sec:overview:pgta}
Our first main contribution is a general framework bounding the Lipschitz constant of regularized convex programs. 
Our approach builds on the LP-based technique of \cite{kumabe2022lipschitz},
who used regularized LP relaxations to obtain stable fractional solutions as an important intermediate step.
{However,
our analysis departs from their direct analysis using first-order optimality conditions
and more subtly controls the Lipschitz constant by analyzing the iterates of a well-behaved process that converges to the optimal solution.}

{Before outlining our approach in \Cref{sec:pgta our approach},
we sketch the techniques from prior works that bound the Lipschitz constant of regularized convex programs in \Cref{sec:pgta prev work comparison}.}

\subsubsection{Prior Works \& Challenges}\label{sec:pgta prev work comparison}
Before proceeding further,
we briefly explain the approach of \cite{kumabe2022lipschitz} for controlling the Lipschitz constant.
Let $h(\bm x) = f(\bm x) + \Lambda g(\bm x)$ where, for simplicity, $g$ is a 1-strongly convex regularizer and $\Lambda > 0$ is the regularization parameter.
We use $\tilde h, \tilde f, \Lambda \tilde g$ to denote the perturbed objective functions.
We write $\bm x^\star, \tilde{\bm x}^\star$ to be the optimal solutions of the original and perturbed instances, respectively.
By directly using 
{first-order optimality conditions as well as}
the definition of strong convexity,
it can be shown that (see \eg{}, \cite[Section 5]{chen2023private})
\[
    \Lambda \norm{\bm x^\star - \tilde{\bm x}^\star}^2
    \leq h(\tilde{\bm x}^\star) - h(\bm x^\star) + \tilde h(\bm x^\star) - \tilde h(\tilde{\bm x}^\star).
\]
When $g = \tilde g$ and $f, \tilde f$ are linear,
\cite{kumabe2022lipschitz} observed that
\begin{align*}
    \Lambda \norm{\bm x^\star - \tilde{\bm x}^\star}^2
    &\leq f(\tilde{\bm x}^\star) - f(\bm x^\star) + \tilde f(\bm x^\star) - \tilde f(\tilde{\bm x}^\star)
    \leq \delta \norm{\bm x^\star - \tilde{\bm x}^\star}\,, \\
    \norm{\bm x^\star - \tilde{\bm x}^\star}
    &\leq \frac\delta\Lambda\,.
\end{align*}
Here $\delta$ is the magnitude of perturbation.
Crucially,
{in order to obtain a bound that scales \emph{linearly} with the weight perturbation $\delta$,}
this analysis relies on using the same regularizer $\tilde g = g$ on the perturbed instance. 
{Unfortunately},
as we show in \Cref{sec:min s-t cut prev work limits},
choosing such an unweighted regularizer yields sub-optimal Lipschitz constants for minimum vertex $S$-$T$ cut.
In comparison,
PGTA allows us to use the weighted $\ell_2$-regularizer which yields the optimal Lipschitz constant for minimum vertex $S$-$T$ cut,
as witnessed by our algorithm and matching 
bound in \Cref{sec:min-cut-through-pgm}.

Moreover, in \Cref{sec:mwm prev work limits},
we show that the coarse analysis above offers less flexibility than PGTA.
Specifically,
this analysis does not allow us to extend the matching algorithm of \cite{kumabe2022lipschitz},
which relies on the entropy regularizer, 
to the natural generalization of $\bm b$-matchings.
On the other hand,
PGTA allows us to use the weighted $\ell_2$-regularizer which does extend to $\bm b$-matchings.

\subsubsection{Our Approach}\label{sec:pgta our approach}
First, we consider an appropriate
convex relaxation of the underlying graph problem on a graph with edge weights $\bm w$.
Then, we add a \emph{$\sigma$-strongly convex} and \emph{$L$-smooth} regularizer to our objective.
We remark that even though the regularizer we add
depends on the problem, the structure of the objective
function and the approximation guarantees we are aiming
for often dictate its choice.
{Subsequently, we show that the optimal solution corresponding to any slightly perturbed edge weights $\tilde{\bm w}$
will be close to the optimal solution corresponding to our original edge weights $\bm w$.}

Our proof is inspired by the work of \textcite{hardt2016train}
and goes through the proximal gradient trajectory analysis (PGTA), 
where we show
that the {distance between the} trajectories of the proximal gradient method (PGM) under $\bm w$ and $\tilde{\bm w}$
{scales linearly with $\norm{\bm w-\tilde{\bm w}}_1$}. 
{While \cite{hardt2016train} also show that various first-order algorithms are stable,
their notion of stability {has a more discrete flavor} and is defined on neighboring datasets that differ in one entry,
whereas we require our algorithms to be continuously stable across instances parameterized by weights $\bm w$ and $\tilde{\bm w}$.
Moreover,
their definition is designed for deriving generalization bounds when using first-order algorithms for statistical learning tasks and requires executing the exact algorithms they analyze.}
On the other hand,
\textbf{PGM is used merely to establish the closeness of the optimal points} of
these two convex programs; 
solving them to a reasonable precision using any algorithm is enough
to guarantee the closeness of the solutions.

{
Roughly speaking,
for given initial points $x^{(0)}$,
PGM generates a sequence of points $x^{(t+1)} = G(x^{(t)})$
that converge to the optimal solution $x^{(t)}\to x^\star$
as $t\to \infty$,
where $G$ is an update operator.
As we will see,
$G, \tilde G$ are contractions
and are sufficiently well-behaved across perturbed instances so that
\begin{align*}
    \norm{G(x^{(t)}) - \tilde G(\tilde x^{(t)})}
    &\leq (1-\nicefrac{\sigma}{L}) \norm{x^{(t)} - \tilde x^{(t)}} + O(\norm{\bm w-\tilde{\bm w}}_1) \\
    &\leq (1-\nicefrac{\sigma}{L})^{t} \norm{x^{(0)} - \tilde x^{(0)}} + \sum_{\tau=1}^t (1-\nicefrac{\sigma}{L})^\tau\cdot  O(\norm{\bm w-\tilde{\bm w}}_1) \\
    &\leq O\left( \frac{\norm{\bm w-\tilde{\bm w}}_1}{\nicefrac{\sigma}{L}} \right)\,.
\end{align*}
Taking the limit as $t\to \infty$ yields a perturbation bound on the optimal fractional solutions $x^\star, \tilde x^\star$.
Crucially,
our analysis relies only on the $\sigma$-strong convexity and $L$-smoothness of the regularizer and does \emph{not} require the regularizer to be identical on perturbed instances.
This opens the door to {using} weighted regularizers,
which are the key to obtaining tight pointwise Lipschitz guarantees for minimum vertex $S$-$T$ cut
and extending our matching algorithm to the more general setting of $\bm b$-matchings.
}

Finally,
we need to round the continuous solution of the relaxed program
to a feasible discrete one in a way that (approximately)
maintains both the quality of the objective
and the pointwise Lipschitz guarantee. The rounding schemes we use are problem-specific
and are usually the most technically involved steps of our approach.

\subsection{Minimum Vertex $S$-$T$ Cut (\Cref{sec:min-cut-through-pgm})}\label{sec:overview:min s-t cut}
Let $G=(V,E)$ be an undirected graph with edge weights $\bm w \in \R_{\geq 0}^{E}$ and $S, T\sset V$ be disjoint vertex subsets.
For $A\sset V$,
we write $\partial A \coloneqq \set{uv\in E: u\in A, v\notin A}$.
The \emph{minimum vertex $S$-$T$ cut} problem asks for a vertex subset $A\supseteq S$ such that $A\cap T = \varnothing$ 
minimizing the cut weight $\bm w(\partial A) = \sum_{e\in \partial A} w_e$.
Note that this is a generalization of the minimum vertex $s$-$t$ cut problem.

{
Before diving into the details of our $S$-$T$ cut results,
we briefly outline existing techniques and their limitations in \Cref{sec:min s-t cut prev work limits}.
Then,
\Cref{sec:min s-t cut our approach} overviews our algorithmic approach.
Finally,
we describe a matching lower bound in \Cref{sec:s-t cut lower bound overview}.
}

\subsubsection{Existing Techniques \& Limitations (\Cref{sec:naive min s-t cut})}\label{sec:min s-t cut prev work limits}

Before summarizing our algorithm,
we work through the approach of prior works.
{As mentioned in \Cref{sec:pgta prev work comparison},
\cite{kumabe2022lipschitz} directly uses the definition of strong convexity to control the change in output
in response to $\ell_1$ input perturbations.
In their coarse analysis,
it is crucial when analyzing two regularized LPs that differ by a small weight perturbation
that the regularizer used is \emph{exactly} the same.
In particular,
their analysis would not apply for the weighted $\ell_2$-regularizer we use below.

Indeed,
if we were to directly apply the analysis of \cite{kumabe2022lipschitz} for the minimum vertex $S$-$T$ cut problem,
we would need to choose an unweighted regularizer,
of which the $\ell_2$-regularizer $\norm{\cdot}_2^2$ is the most canonical.
As shown in \Cref{sec:naive min s-t cut},
following the steps of previous work would yield a fractional Lipschitz algorithm
with $O(\Lambda n)$ additive error (no multiplicative error)
and a Lipschitz constant of $O(n/\Lambda)$ for some regularization parameter $\Lambda$.
{In order to match the additive error of $O(\lambda_2)$ in \Cref{thm:s-t-cut-informal},
we must set the regularization parameter to $\Lambda = \nicefrac{\lambda_2}{n}$.
However,
this yields a Lipschitz constant of $\nicefrac{n^{2}}{\lambda_2}$,
which is strictly worse than the guarantees of \Cref{thm:s-t-cut-informal}.}
The key to obtaining this improved bound is using a weighted regularizer, which cannot be directly analyzed using the coarse technique of \cite{kumabe2022lipschitz}.
In comparison, the analysis of PGTA holds in this setting.

{This bottleneck seems to be inherent to their analysis
rather than a specific limitation of the unweighted $\ell_2$-regularizer.
Indeed,
the analysis of \cite{kumabe2022lipschitz} suffers an additive error proportional to the maximum range of the chosen regularizer.
Meanwhile,
it is known that any $\Lambda$-strongly convex function over $[-1, 1]^n$
(even with respect to the $\ell_\infty$-norm)
must have range $\Omega(\Lambda n)$.
This phenomenon,
informally known as the ``$\ell_\infty$-barrier'',
was responsible for the previous stalled progress on several important optimization problems~\cite{sherman2017area,sidford2018coordinate}.
PGTA sidesteps this barrier as the derived fractional Lipschitz constant does not directly depend on the range of the regularizer.}

See {\Cref{sec:naive min s-t cut}} for a complete algorithm obtained by following the steps taken in \cite{kumabe2022lipschitz} using the unweighted $\ell_2$-regularizer.
}

\subsubsection{Our Approach}\label{sec:min s-t cut our approach}
Our algorithm first solves a standard minimum $S$-$T$ cut LP relaxation regularized by the quadratic form of the unnormalized Laplacian (\Cref{subsec:s-t-cut-fractional}).
We can then directly bound the Lipschitz constant of the optimal fractional solution by applying the PGTA framework.
Let $\lambda_2$ be the \emph{algebraic connectivity} of the input graph, that is, the second smallest eigenvalue of the unnormalized Laplacian of the graph.
Roughly speaking,
the optimal fractional solution is a $(1+\eps)$-approximation
with Lipschitz constant $O(\nicefrac1{\eps \lambda_2})$,
where $\eps > 0$ is a tunable parameter.
Given stable fractional solutions,
we can then apply the classical threshold rounding to obtain integral solutions.

One technical complication we need to handle,
is that in order to apply our framework,
we need to ensure that the objective function is strongly convex. 
Thus, due to the natural choice
of the Laplacian regularizer, we need
to optimize over a feasible region that is orthogonal to the all-ones vector.
{In particular,
the feasible region is a subset of $[-1, 1]^n$ rather than the standard $[0, 1]^n$.}
As a result, 
{we must perform threshold rounding} with a random threshold in $[-1, 1]$,
which returns a {non-trivial vertex} set with probability merely $\nicefrac12$
and we must take additional steps to boost the success probability.
{Notice that, in the absence
of the pointwise Lipschitz continuity requirement,
boosting the success probability
of such an algorithm is straightforward; 
we can just run multiple independent
instances of the algorithm and, with 
high probability, we will obtain
a feasible output. 
{The key difficulty is smoothly selecting the feasible output,
as it might be completely different
in two executions of the algorithm under
perturbed weights.}

We design two different stable boosting algorithms: 
the first utilizes
the exponential mechanism~\cite{varma2021average} (modified from~\cite{mcsherry2007mechanism})
to smoothly select a feasible output from independent executions
and the second
circumvents the selection issue by
taking advantage of the \emph{$k$-way submodularity property}~\cite{harvey2006capacity} of the objective
function
to \emph{combine} solutions.
We believe
that our boosting mechanisms may be of independent interest.
This leads to two algorithms for the minimum vertex $S$-$T$ cut problem.

\paragraph{Exponential Mechanism Boosting (\Cref{subsec:exp-mech-vertex-S-T-cut}).}
As mentioned,
vanilla threshold rounding gives a feasible solution with probability merely $\nicefrac12$.
To achieve a higher success probability, we first compute a fractional solution assuming that the optimal cut (as a vertex set) has size around $\gamma i\cdot n$ for each $i \in \{1,\ldots,1/\gamma\}$, where $\gamma>0$ is a small constant, and then apply the exponential mechanism to select the index $i^*$.
Then, we apply threshold rounding to the fractional solution corresponding to $i^*$, which leads to a success probability of $1-\gamma$.

The optimal choice of $\eps > 0$ above is $\eps = \nicefrac1{\sqrt{n}}$
when rounding using the exponential mechanism.
This leads to a polynomial-time $(1+O(\nicefrac1{\sqrt{n}}), O(\lambda_2))$-approximation algorithm with pointwise Lipschitz constant $O(\nicefrac{n}{\lambda_2})$.
We remark that the additive error is a product of the rounding scheme
and is not present in the fractional solution.
We show that this algorithm is tight in the sense that even an $(O(1),\lambda_2)$-approximation algorithm must have pointwise Lipschitz constant $\Omega(\nicefrac{n}{\lambda_2})$. 
{Thus, our PGTA framework can achieve tight pointwise Lipschitz bounds given a sufficiently stable rounding scheme.}

\paragraph{$k$-Way Submodularity Boosting (\Cref{subsec:vertex-S-T-cut}).}
The second algorithm is a polynomial-time algorithm that, given a graph $G=(V,E)$, a weight vector $\bm w \in \mathbb{R}_{\geq 0}^E$, and a parameter $\beta \in (0,\nicefrac12)$, outputs a vertex set of cut weight at most $O(\beta^{-1}) \cdot \OPT_\beta$ with pointwise Lipschitz constant $O(\beta^{-2}\lambda_2 \sqrt{n})$, where $\OPT_\beta$ is the minimum weight of a cut $A$ with $\beta n \leq |A|\leq (1-\beta) n$.

In contrast to the first boosting algorithm, 
the second algorithm has no additive error, but the multiplicative error is with respect to the $\beta$-balanced cuts. {The reason
why we consider $\beta$-balanced cuts instead of arbitrary cuts
is to ensure that the success probability
of the thresholding-based rounding 
is bounded away from $\nicefrac12$ by at least
$\beta.$}
In this algorithm, we first compute vertex sets $A_1,\ldots,A_k$ by independently applying the threshold rounding multiple times to the (single) fractional solution assuming that the optimal cut (as a vertex set) has size between $\beta n$ and $(1-\beta) n$.
Then, we utilize the $k$-way submodularity~\cite{harvey2006capacity} of the cut function to combine these vertex sets into a single vertex set with a small pointwise Lipschitz constant.
Using the fact that at least half of $A_1,\ldots,A_k$ are feasible (with high probability, because $0 < \beta < \nicefrac12$), we can show that the output set is feasible (with high probability).
The $k$-way submodularity {property} guarantees that the output set is a $O(\beta^{-1})$-approximation.

\subsubsection{Lower Bound (\Cref{subsec:s-t-cut-lower-bound})}\label{sec:s-t cut lower bound overview}
In order to show the optimality of
our framework, we show a matching 
lower bound for the minimum $S$-$T$ cut problem. 
Our approach consists of two main steps: first
we consider a family of graphs, parametrized by the number of vertices $n,$ and two weight vectors $\bm w, \bm w'$ that
are sufficiently far apart. Then, we show that any algorithm that achieves the desired approximation
ratio must output solutions that, with high probability, differ in a large number of vertices under $\bm w, \bm w'$. Lastly, we show this construction implies
a lower bound on the Lipschitz constant of this algorithm by applying a compactness argument on a path from $\bm w$ to $\bm w'$ (see \Cref{thm:finite perturbation}). We believe this compactness argument could find further applications in the design
of lower bounds for Lipschitz continuous algorithms.
We refer the reader to \Cref{subsec:s-t-cut-lower-bound} for the formal construction.

\subsection{Bipartite \texorpdfstring{$\bm b$}{b}-Matching \texorpdfstring{(\Cref{sec:b-matching})}{Section}}\label{sec:overview:bipartite b-matching}
Given a graph $G= (V, E)$ (which may be bipartite) and a weight vector 
$\bm w \in \mathbb{R}_{> 0}^E$, the maximum weight matching problem asks for a subset of edges $M \subseteq E$ 
which forms a matching in $G$ and maximizes the sum of the weights of the edges in $M$. 
The maximum weight $\bm b$-matching problem is defined similarly to the 
{maximum weight matching problem},
except 
each vertex $v$ is associated with a vertex capacity $b_v$ indicating the maximum number of neighbors that can be matched to $v$ in the $\bm b$-matching.
Note that a matching is a $\bm b$-matching for $\bm b \equiv 1$.

It is illustrative to first study the simpler case of $1$-matchings before tackling general $\bm b$-matchings

\subsubsection{Warmup for Bipartite Matching (\Cref{sec:bipartite})}\label{sec:overview:bipartite matching}
We design an algorithm that outputs a set of edges that gives a $2(1+\eps)$-approximate maximum weight bipartite matching with pointwise Lipschitz constant $O\left(\frac{\sqrt{m}}{\eps w_{\min}}\right)$. 
First,
we employ the classic LP for maximum bipartite matching regularized with a {weighted} $\ell_2$-regularizer {$\frac\eps2 \sum_{uv\in E} w_{uv} x_{uv}^2$},
{which is $w_{\min}$-strongly convex}. 
The optimal fractional solution is a $(1+\eps)$-approximation
and its Lipschitz constant of $O\left(\frac{\sqrt{m}}{\eps w_{\min}}\right)$
can be directly bounded using PGTA.
Then,
we use the stable rounding procedure from \cite{kumabe2022lipschitz} to obtain an integral solution.
{
Roughly speaking,
their rounding algorithm treats one side of the bipartition as buyers $u$ who bid on their neighboring sellers (vertices) $v$ with probability proportional to the value of the optimal fractional solution $x_{uv}^\star$.
Then,
each seller $v$ accepts a bid uniformly at random.
}
We note that the rounding scheme preserves the Lipschitz constant of the fractional solution up to constants
but incurs a factor of $2$ in the approximation ratio.
As mentioned in \Cref{sec:results},
our algorithm improves upon the pointwise Lipschitz algorithm of \cite{kumabe2022lipschitz} in certain input regimes.

{Before detailing how we extend these techniques to handle $\bm b$-matchings,
we briefly sketch previous approaches and their limitations in \Cref{sec:mwm prev work limits}.}

\subsubsection{Previous Work \& Limitations}\label{sec:mwm prev work limits}
{We first illustrate how the stability analysis of the unweighted entropy regularizer applied to bipartite matching from \cite{kumabe2022lipschitz}
does not straightforwardly extend to $\bm b$-matchings.
On the other hand,
PGTA allows us to employ the \emph{weighted} $\ell_2$-regularizer that extends directly to the $\bm b$-matching setting
without further analysis.

Let $G = (V, E)$ be a bipartite graph with bipartition $V = U\cup R$.
Rather than the weighted $\ell_2$-regularizer,
\cite{kumabe2022lipschitz} uses the unweighted entropy regularizer
\[
    h(x) 
    \coloneqq \sum_{uv\in E} x_{uv} \log x_{uv}
    = \sum_{u\in U} \sum_{v: uv\in E} x_{uv}\log x_{uv}.
\]
Then, they argue about the stability of the standard matching LP under this regularizer.
One crucial aspect of their analysis is that for any $u\in U$,
$\sum_{v: uv\in E} x_{uv} \leq 1$ so that we can think of a feasible solution $x$
as $\card U$ discrete probability distributions,
each over the edges incident to some $u\in U$.
The entropy regularizer has a small range over the simplex.

In the standard $\bm b$-matching LP however,
this property no longer holds and we can only guarantee that $\sum_{v: uv\in E} x_{uv} \leq b_u$.
Thus, the analysis of \cite{kumabe2022lipschitz} does not straightforwardly extend to $\bm b$-matchings.
On the other hand,
PGTA allows us to extend our analysis of the stability of the weighted $\ell_2$-regularizer from the matching polytope to the $\bm b$-matching polytope without further work.
This is possible as unlike the unweighted entropy regularizer,
the strong convexity property of the weighted $\ell_2$-regularizer does not depend on the matching-specific constraints of the matching polytope.

Another more subtle detail of the Lipschitz continuous bipartite matching algorithm of \cite{kumabe2022lipschitz}
is the need to smoothly estimate the value of $\OPT$
in order to correctly set the regularization parameter.
PGTA allows us to bypass this extra step since the weighted $\ell_2$-regularizer automatically adapts to the input weights.}

\subsubsection{Our Approach}\label{sec:mwbm our approach}
While PGTA applies directly to the weighted $\ell_2$-regularizer and $\bm b$-matching polytope,
we can no longer directly use the rounding scheme given in~\cite{kumabe2022lipschitz}
to obtain an integral solution.
{This is due to the fact that our $b_v$ constraints must be satisfied,
which the rounding scheme given in~\cite{kumabe2022lipschitz} does not necessarily
ensure.}
{Instead, we design a new multi-item auction-based rounding scheme where multiple vertices on the left (buyers) bid on multiple 
vertices on the right (sellers).}
This yields a $\frac{2(1+\eps)}{1-\nicefrac1e}$-approximate maximum weight bipartite $\bm b$-matching with pointwise Lipschitz constant $O\left(\frac{\sqrt{m}}{\eps w_{\min}}\right)$.
{Compared to our bipartite $1$-matching algorithm,
the Lipschitz constant is identical up to constants,
while the approximation ratio degrades by a factor of $(1-\nicefrac1e)$ due to additional complications in the rounding scheme we detail below.}

{
Let $x^\star$ denote the optimal fractional solution.
Loosely speaking,
we think of each seller $v$ now as having $b_v$ items to sell
and each buyer $u$ now selects $b_u$ sellers independently with probability $\frac{x_{uv}^\star}{b_u}$.
For each selected seller,
the buyer bids on one of the seller's items uniformly at random.
Each seller $v$ accumulates the bids for each of its $b_v$ items
and accepts a bid uniformly at random per item.
The main challenge compared to the matching case ($\bm b\equiv 1$) is that an edge $uv$ can be selected at most once in the $\bm b$-matching,
but the buyer $u$ may purchase multiple items from the seller $v$
in the described multi-item auction scheme.
In other words,
the probability of the edge $uv$ being selected in the rounding scheme
is equal to the probability of \emph{at least} one transaction occurring between $u, v$
but does not scale with the \emph{total} number of transactions.
This is the cause of the additional $(1-\nicefrac1e)$ factor in the approximation ratio compared to our bipartite matching algorithm.}

\subsection{Packing Integer Programs (\Cref{sec:packing IP})}\label{sec:overview:packing IP}
A \emph{packing integer program (PIP)} is specified by a matrix of constraints $A \in [0, 1]^{p \times m}$, 
weight vector $\bm w \in \mathbb{R}^{m}_{\geq 0}$, and $\bm b \in \mathbb{R}^p_{\geq 1}$.
The goal is to output an integral vector $y\in \set{0, 1}^m$
maximizing $\bm w^\top \bm y$ subject to $A\bm y\leq \bm b$.
Note that when $A$ is the edge-vertex incidence matrix and $\bm b$ is integral,
this is exactly the $\bm b$-matching problem.
For this problem,
we obtain a $O\left(p^{1/B}\right)$-approximation algorithm with pointwise
Lipschitz constant $O\left(\frac{\sqrt{m}}{w_{\min}p^{1/B}}\right)$ 
where $p$ is the number of constraints and $B \coloneqq \min_j b_j$, so $B \geq 1$.

Similar to our $\bm b$-matching algorithm,
we regularize the standard LP relaxation by adding a weighted
$\ell_2$ regularizer.
Then, we again apply PGTA to bound the Lipschitz constant of the fractional solution.
A stable integral solution can then be obtained by independently rounding each coordinate after down-scaling the fractional solution.

\subsection{Shared Randomness (\texorpdfstring{\Cref{sec:shared-randomness}}{Section})}\label{sec:overview:shared randomness}
Executing a pointwise Lipschitz continuous algorithm independently on two weight instances $\bm w, \tilde{\bm w}$ can still lead to very different results.
Hence, it is practical to consider the case where we require the coupling in the calculation of the earth mover's distance to be the one induced by using the same randomness, which we call the \emph{pointwise Lipschitz constant under shared randomness}.
Our algorithms can be straightforwardly made pointwise Lipschitz continuous even in the shared randomness setting. 
See \Cref{sec:shared-randomness} for details.

\section{Related Work}\label{sec:related-work}
\paragraph{{Weighted} Lipschitz Continuity vs. Pointwise Lipschitz Continuity.}
The definition of Lipschitz continuous
graph algorithms has been given in two settings.
{In the first setting, Lipschitz constants are measured by considering the \emph{weighted} $\ell_1$ 
distance of the characteristic vectors of the outputs under perturbations of the input.
In the second setting, pointwise Lipschitz constants are measured by considering the \emph{unweighted} $\ell_1$ distance of the characteristic vectors of the outputs.}
To make the distinction between the two concepts clear, we will {generally} refer to the former as the {\emph{weighted}} Lipschitz constant
and the latter as the {\emph{unweighted}} or \emph{pointwise} Lipschitz constant.

While weighted Lipschitz continuous graph algorithms with bounded weighted Lipschitz constants have been studied to a greater extent \cite{kumabe2022lipschitz,kumabe2023lipschitz,gima2025courcelle}, pointwise Lipschitz continuous
graph algorithms have largely been unexplored.\footnote{{In general,
there is no translation between the weighted and pointwise Lipschitz constant of an algorithm
due to their qualitative differences. 
See \Cref{sec:def-remarks} for more details.}}
{Indeed,
\cite{kumabe2022lipschitz} developed combinatorial algorithms for minimum spanning tree, shortest path, and maximum weight matching.
\cite{kumabe2023lipschitz} design combinatorial algorithms for vertex cover, set cover, and vertex feedback set
as well as a regularized LP-based algorithm for set cover,
all of which are in the weighted Lipschitz continuous setting.
Lastly,
\cite{gima2025courcelle} studied the weighted Lipschitz definition for the special case of bounded-treewidth graphs.}
On the other hand,
previous work \cite{kumabe2022lipschitz}
studied the pointwise Lipschitz constant for only two problems:\footnote{Although another work \cite{kumabe2024allocations} also studied the unweighted Lipschitz definition,
they do so in the context of fair allocations in cooperative game theory,
a different setting from what we consider.}
the minimum spanning tree problem and the maximum weight bipartite matching problem,
{for which they developed a combinatorial and a regularized LP-based method, respectively}.
{See \Cref{sec:overview:pgta} for more detailed comparison with previous regularized LP-based techniques.}

Although measuring the change in output using weighted distances may be appropriate when the input and output spaces are the same,
\eg{} in the case of weighted edge inputs and edge outputs,
many problems can have different input and output spaces.
One example is the minimum vertex $s$-$t$ cut problem,
where we have weighted edge inputs but vertex subset outputs.
Hence, it {does not} make sense to consider the weighted metric in these scenarios.
Moreover, in many applications, we want to
bound the unweighted distance of the outputs of an algorithm, as it determines the cost of updating the solution.
Thus, designing algorithms with small pointwise Lipschitz constants 
{can} be more desirable than bounding the weighted distance of their outputs.

\paragraph{Online/Dynamic Algorithms \& Recourse.}
Low recourse online/dynamic algorithms (see \eg{}, \cite{arar2018dynamic,henzinger2025contraction,megow2025online,gupta2020submodular,bhattacharya2022efficient,davis2006online,gupta2017online,lattanzi2017consistent} 
and references therein) are algorithms where updates to the graph are given in an online manner,
and the goal is to minimize the 
\emph{recourse} or the change in the output between consecutive days over $T$ total days. 
{For min $s$-$t$ cut,
there are dynamic cut sparsifiers with bounded recourse~\cite{goranci2021expander,chen2020fast}, which may lead to $s$-$t$ cut algorithms with small recourse. 
However, we are not aware of any such explicit recourse bounds.}
For matchings on the other hand,
there are black-box transformations of dynamic matching algorithms that obtain $O(1)$ recourse
while incurring an $O(1)$ increase in update time and approximation ratio~\cite{solomon2021reconfiguration,bernstein2025matching}.
{Interestingly, improving recourse has been shown to improve update time for various problems in the dynamic setting~\cite{goranci2021expander,roghani2022beating}.}

{It is known that}
re-running any $\alpha$-approximate (pointwise) Lipschitz algorithm with (pointwise) Lipschitz constant $K$ each day from scratch
immediately translates to an online $\alpha$-approximation algorithm where the total expected recourse is bounded by $KT$ (over all days)~\cite[Theorem 1.6]{kumabe2022lipschitz}.
{Thus, all of the 
the results we present in this paper also translate to the low-recourse online setting.}
{In fact,
we achieve a stronger guarantee:
the total expected recourse depends only on the \emph{net} change in edges
(note that this is at most $T$).
For example,
if an arbitrary sequence of $n$ edge updates yields a net change in only $1$ edge,
then the total expected recourse is $K$ rather than $Kn$.
It is not surprising that dynamic algorithms can in general achieve smaller recourse bounds than 
{re-running a pointwise Lipschitz algorithm each day}
as the latter satisfies this stronger net-change recourse property.}

{Related to the translation of \cite{kumabe2022lipschitz} is the lazy or deferred update framework
that has been widely applied towards dynamic matching algorithms (see \eg{} \cite{gupta2013fully}).
Among this line of work,
perhaps the most similar to ours is that of
\cite{jambulapati2022decremental,chen2025decremental},
who leverage a regularized convex relaxation to design dynamic matching algorithms that support edge deletions.
While the optimal fractional solutions of these programs are likely stable,
it is unclear if the intricate approximate solvers combined with the technically involved rounding algorithms 
have non-trivial Lipschitz constant.
}

\paragraph{Differential Privacy.} 
{Although Lipschitz constants are used in differential privacy (DP), for example, Lipschitz extensions, 
both the definition of the constants and the analyses we perform in our paper are
fundamentally distinct. 
Indeed, any DP algorithm for 
the problems we study have large known lower bounds in additive error, up to 
$\Omega(n)$, even in expectation. 
Specifically, \cite{dalirrooyfard2024nearly} proposed {a DP} algorithm for
the minimum $s$-$t$ cut problem with an additive error of {$O(n)$},
along with an (almost) matching lower bound.
For matchings,
a recent lower bound 
shows that any \emph{explicit}\footnote{{An explicit algorithm outputs edge subsets, whereas an implicit algorithm may output a different solution concept.}}
DP matching algorithm
requires $\Omega(n)$ additive error, even when restricting to bipartite instances
and allowing constant multiplicative error~\cite{dinitz2025differentially}.
In contrast, we obtain an additive error of $O(\lambda_2)$ for min $s$-$t$ cut (\cref{tab:cut-results}) and purely multiplicative error (\emph{no additive error})
for matchings (\cref{tab:matching-results}).

Lipschitz extensions have been used before in the context of DP graph algorithms (see \eg{},~\cite{kasiviswanathan2013analyzing,raskhodnikova2016lipschitz}).
However, these prior definitions rely on ``worst-case" Lipschitz constants.
That is, the Lipschitz constant is defined as the maximum ratio between the distance of two outputs and the distance of their corresponding inputs, considering all possible input pairs in the function's domain.\footnote{In differential privacy literature, the Lipschitz constant is sometimes referred to as the 
\emph{global sensitivity} of the function.}
}
Such a worst-case guarantee
is often necessary in order to ensure differential privacy since DP mechanisms often require 
worst-case sensitivity.\footnote{Sometimes \emph{coupled sensitivity}~\cite{BGM22} is used, where the sensitivity is the worst-case distance under a coupling.} As such, the notion of 
the pointwise Lipschitz constant we consider, which is defined in terms of the expectation, does not immediately translate to bounds for differential privacy. 

On the other hand,
differential privacy for graph algorithms 
{can be understood as bounding the R\'enyi divergence~\cite{mironov2017renyi} of the output distributions}
under one edge/vertex/unit weight change.
{Hence, it is unclear if there is a generic transformation from privacy (R\'enyi divergence) to Lipschitzness (earth mover's distance)
that does not introduce large errors.}

\paragraph{Perturbation Resilience.} 
Perturbation resilience~\cite{awasthi2012center,angelidakis2017algorithms,balcan2016clustering,bilu2012stable}
is defined (informally) as the property of a problem instance
where the (unique) optimal solution to the problem does not change when the specific instance is slightly perturbed.
Thus, perturbation-resilient input instances are restricted classes of the input space that allow for this property. 
It has been shown that many natural problems, such as various clustering and cut problems~\cite{awasthi2012center,balcan2020k,balcan2016clustering,BDLS13,bilu2012stable,makarychev2014bilu},
are easier to solve when the given instances are perturbation resilient while they are NP-hard to solve in general instances. 

{By definition,
the algorithm that computes the optimal solution on a perturbation resilient instance has a pointwise Lipschitz constant of 0.
However, it is not straightforward to efficiently verify that an instance is perturbation resilient.
We design pointwise Lipschitz continuous algorithms with guarantees over all input instances,
where the pointwise Lipschitz constant may depend on some natural,
efficiently computable properties of the input,
\eg{}, the second smallest eigenvalue of the graph Laplacian.
We emphasize that perturbation resilience is a property of the input instance
whereas pointwise Lipschitz continuity is a property of the algorithm itself.}

\paragraph{Differentiable Algorithms.} 
{The differentiable algorithms literature seeks to incorporate structured optimization problems as a differentiable operator
within deep neural networks~\cite{amos2019differentiable,djolonga2017differentiable,blondel2020fast,berthet2020learning,donti2021learning,agrawal2019differentiable}.
This usually necessitates that the output is fractional and deterministic,
unlike our setting where we also randomly round fractional solutions to discrete solutions.
Although we also use continuous relaxations of discrete problems as an intermediate technique,
our goal is to develop algorithms with approximation and stability guarantees rather than combining them within neural networks.}

\paragraph{Average Sensitivity.} 
\textcite{varma2021average} introduced the notion of {\emph{average}} 
sensitivity, which can be seen as an analogue of pointwise Lipschitz continuity for the unweighted setting.
Namely, the \emph{sensitivity} of a randomized algorithm $\mcal A$ that outputs a set $S$ given the input graph $G=(V,E)$ is defined to be the maximum earth mover's distance between $\mcal A(G)$ and $\mcal A(G-e)$ over $e \in E$, where $G-e$ is the graph obtained from $G$ by deleting $e$, and $\mcal A(G)$ and $\mcal A(G-e)$ represent the output distributions of $\mcal A$ on $G$ and $G-e$, respectively. In that definition, the Hamming distance is used as the underlying metric for calculating the earth mover's distance.
Since its introduction,
{the average} sensitivity of algorithms for various problems have been studied~\cite{li2025average,Peng2020,kumabe_et_al:LIPIcs.ESA.2022.75,Kumabe22,Yoshida2021}.

{All of our algorithms translate to the average sensitivity setting.}
Indeed,
the topology of an unweighted graph can be described by a $\{0,1\}$-valued vector on a complete graph, 
so we can bound the {average} sensitivity of an algorithm by taking the supremum of the pointwise Lipschitz constant over weight vectors between the two $\{0,1\}$-valued vectors corresponding to $G$ and $G-e$ (see \Cref{sec:finite bounds}).
It was shown in~\cite{varma2021average} that there is a $(1,O(n^{2/3}))$-approximation algorithm for the minimum vertex $s$-$t$ cut problem with sensitivity $O(n^{2/3})$.
The algorithm obtained from our pointwise Lipschitz continuous algorithm (Theorem~\ref{thm:s-t-min-cut-additive}) has additive error $O(\lambda_2)$ and {average}
sensitivity $O(\lambda_2^{-1}n)$, which outperforms the previous algorithm when $\lambda_2 \in (n^{1/3},n^{2/3})$.

For the maximum matching problem, although a polynomial-time $2$-approximation algorithm with {average} 
sensitivity $O(1)$ is known~\cite{varma2021average}, the algorithm heavily relies on the fact that the graph is unweighted and it is unclear whether we can use it to design pointwise Lipschitz continuous algorithms.

\paragraph{Replicability.} 
The notion of pointwise Lipschitz continuous algorithms
is related to the notion of replicable learning algorithms that operate on data
drawn from some distribution, 
as introduced by \textcite{impagliazzo2022reproducibility}.
This definition requires that when an algorithm $\mcal A$ is executed on two \iid{} $n$-sample datasets from some distribution $\mathcal{D}$,
under \emph{shared internal randomness}, the outputs of the algorithm
are exactly the same, with high probability.
We can interpret replicable learning algorithms as being stable under resampling of the dataset,
where stability is measured in terms of total variation distance.
Since inception,
replicability has been studied in a variety of settings~\cite{singh2025sensitivity,esfandiari2023replicable,bun2023stability,esfandiari2023replicableb,
kalavasis2023statistical,chase2023replicability,EatonHKS23,karbasi2023replicability,dixon2023list,chase2023local,komiyama2024replicability,kalavasis2024replicable,hopkins2024replicability,kalavasis2024computational}.

On the other hand,
the stability of pointwise Lipschitz algorithms is
measured in terms of the earth mover's distance.

\section{Preliminaries}

In this section, we define the notations and definitions we use in our paper. 

\subsection{Notation}
We first define the notation we use throughout our work.
\begin{itemize}[noitemsep]
    \item ${\bm w}^{(j)}, {\bm w}^j$ for a vector indexed by $j$.

    \item $w_i$ for the $i$-th coordinate of $\bm w$.

    \item $\tilde{\bm w}$ for a vector obtained by perturbing $\bm w$ by some perturbation vector {$\bm \delta$} with $\ell_1$-norm $\sum_i \abs{\delta_i} > 0$.

    \item $w_{\min}, w_{\max}$ for the min/max element of vector $\bm w$.

    \item $\mcal A$ for a (randomized) algorithm 
    and $\mcal A_\pi$ for the deterministic algorithm obtained from $\mcal A$ by fixing its internal randomness $\pi$.

    \item $A \Delta B \coloneqq (A\cup B)\setminus (A\cap B)$ for the symmetric difference.

    \item $\ones_S$ for the indicator variable of set $S$. 
\end{itemize}
Moreover, we say that $\mcal A$ gives an $(\zeta,\alpha)$-approximation guarantee to a minimization problem if its output has value at most $\zeta\cdot \OPT + \alpha$,
or for a maximization problem,
at least $\nicefrac1\zeta\cdot \OPT - \alpha$. 
When there is no additive error $\alpha$, we simply say that $\mcal A$ gives an $\zeta$-approximation guarantee.

\subsection{Graph-Theoretic Notions}
Let $G=(V,E)$ be a graph and $\bm w \in \mathbb{R}_{\geq 0}^E$ be a vector that represents the edge weights.
For a vertex $v \in V$, 
let $d_{G,\bm w}(v) \coloneqq \sum_{e \in E: v \in e}w_e$ be the \emph{weighted degree} of $v$ and $d_G^u(v) \coloneqq \card{\set{w\in V: uw\in E}}$ be the \emph{unweighted degree} of $v$.
For an edge set $F \subseteq E$, let $\bm w(F) \coloneqq \sum_{e \in F}w_e$.
The cut weight of $S \subseteq V$ is defined to be $\mathrm{cut}_{G,\bm w}(S) \coloneqq \bm w(\partial_G S)$,
where $\partial_G S \coloneqq E(S, V\setminus S)$ 
(\ie{}, the edges going between $S$ and $V \setminus S$).
We may drop subscripts if they are clear from the context.

\subsection{Pointwise Lipschitz Continuity}\label{sec:pointwise-lipschitz}

We write $\mcal A(G, \bm w)$ to denote the (random) output of a randomized algorithm $\mcal A$ when it is given as input a graph $G$ and a weight vector $\bm w \in \mathbb{R}_{\geq 0}^E$.
{Roughly speaking,
we say that $\mcal A$ has pointwise Lipschitz constant at most $c_{\bm w}$ if
\[
    \frac{D(\mcal A(G, \bm w), \mcal A(G, \tilde{\bm w}))}{\norm{\bm w-\tilde{\bm w}}_1}
    \leq c_{\bm w}
\]
for all $\tilde{\bm w}$ within a neighborhood of $\bm w$.
Here,
$D$ is some metric over output distributions.
The formal definition (\Cref{def:pointwise lipschitz}) follows from an appropriate choice of $D$.
}

Let $d: 2^V\times 2^V\to \R_+$ denote a metric on the output space of $\mcal A$.
For example,
if $\mcal A$ outputs subsets of vertices,
then one choice is to take $d$ to be the $\ell_1$-metric
\[
    d(V_1, V_2)
    = \card{V_1\Delta V_2}
    = \norm{\ones_{V_1} - \ones_{V_2}}_1\,.
\]
Unless otherwise stated, we take $d$ to be the $\ell_1$ metric above.
We write $(\pi, \tilde \pi)\sim \mcal D$ to denote a coupling\footnote{Recall $\mcal D$ is a joint distribution whose marginals on the first and second coordinates are $\pi, \tilde\pi$, respectively.}
between the random variables $\pi, \tilde \pi$.
Recall the \emph{earth mover's distance}\footnote{This metric is also known as the \emph{Wasserstein 1-distance} \cite{villani2009optimal},
\emph{Kantorovich-Rubinstein metric} \cite{kantorovich1960mathematical},
or \emph{Mallows' 1-distance} \cite{mallows1972note,levina2001earth}.}
\cite{rubner1998metric} on the output distributions of $\mcal A(G, \bm w)$ is defined to be
\[
    \EMD_d(\mcal A(G, \bm w), \mcal A(G, \tilde{\bm w}))
    \coloneqq \inf_{\mcal D} \E_{(\pi, \tilde \pi)\sim \mcal D} d\left( \mcal A_\pi(G, \bm w), \mcal A_{\tilde \pi}(G, \tilde{\bm w}) \right)\,.
\]
{We are now ready to define pointwise Lipschitz continuity formally.}

\begin{definition}[\cite{kumabe2022lipschitz}]\label{def:pointwise lipschitz}
    Suppose the (randomized) algorithm $\mcal A(G, \bm w)$ outputs subsets of some universe $\mathfrak S$.
    Let $d: 2^{\mfrak S}\times 2^{\mfrak S}\to \R_+$ be a metric on $2^{\mfrak S}$.
    We say that $\mcal A$ has \emph{pointwise Lipschitz constant $c_{\bm w}$ with respect to $d$} if
    \[
        \limsup_{\tilde{\bm w}\to \bm w}  \frac{\EMD_d(\mcal A(G, \bm w), \mcal A(G, \tilde{\bm w}))}{\norm{\bm w-\tilde{\bm w}}_1}
        \leq c_{\bm w}.
    \]
    Note here that we identify subsets $S\sset \mfrak S$ with their indicator variables $\ones_S\in \set{0, 1}^{\mfrak S}$.
\end{definition}

\subsubsection{Remarks on \texorpdfstring{\Cref{def:pointwise lipschitz}}{Definition}}\label{sec:def-remarks}
{As mentioned in \Cref{sec:related-work},
\textcite{kumabe2022lipschitz} defined two notions of Lipschitz continuity:
weighted Lipschitz continuity and unweighted pointwise Lipschitz continuity.
They mainly studied the weighted definition while we focus exclusively on the unweighted definition.
Thus from hereonforth,
\textbf{we write Lipschitz continuity/constant to mean pointwise Lipschitz continuity/constant} for simplicity.}

{Some remarks are due about the definition of Lipschitz continuity.
First, we note that an algorithm's Lipschitz constant cannot be scale-invariant,
\ie{}, it must depend on the input weight vector.
To see this, let $w,w'\in \mathbb{R}_{\geq 0}^E$ be arbitrary weight vectors
and suppose that for any constant $c>0$, we have
\[
    \frac{d(\mathcal{A}(G,w/c),\mathcal{A}(G,w'/c))}{
    \|w/c-w'/c\|_1} = \frac{c\cdot d(\mathcal{A}(G,w),\mathcal{A}(G,w))}{
    \|w-w'\|_1}.
\]
However,
this implies that the Lipschitz constant is unbounded!
This explains why the Lipschitz constants of our algorithms depend on the input weights (\eg{}, $\lambda_2$).}

{From the convex analysis perspective,
one might attempt to measure the change in solution using the dual $\ell_\infty$ norm under $\ell_1$-perturbations
rather than the $\ell_1$-norm.
Under this formulation,
the numerator of the ``Lipschitz constant'' obtained would be equivalent to the total variation distance between output distributions under perturbations.
It may be feasible to design algorithms under this notion using techniques from the differential privacy literature.
However,
as mentioned in \Cref{sec:related-work},
measuring the change in solution using the $\ell_1$ norm under $\ell_1$-perturbations
is more natural for combinatorial problems since this usually determines the cost of changing a solution.
Under this definition,
as witnessed by our tight upper and lower bounds for minimum vertex $S$-$T$ cut,
{our framework can obtain optimum bounds.}

{At first glance,
it may seem difficult to compute the Lipschitz constant {for perturbations of more than one entry,
as necessitated by \Cref{def:pointwise lipschitz}}.
\textcite{kumabe2022lipschitz} showed the following lemma,
which essentially says that it suffices to analyze the Lipschitz constant
under the perturbation of a single entry of the weight vector.}
\begin{lemma}[See also Lemma 7.1 of~\cite{kumabe2022lipschitz}]\label{lem:one perturbation}
    Let $\delta > 0$
    and $\tilde{\bm w}$ be obtained from $\bm w$ by choosing $\delta_{e_i}\in \R$ for each $i\in [m]$
    such that $\sum_{i\in [m]} \abs{\delta_{e_i}} = \delta$
    and setting $\tilde{w}_{e_i} \coloneqq w_{e_i} + \delta_{e_i}$.
    Let $\bm w^{(i)} \coloneqq \bm w + \sum_{j=1}^{i}\delta_{e_j}\ones_{e_j}$
    so that $\bm w^{(0)} = \bm w$ and $\bm w^{(m)} = \tilde{\bm w}$.
    Suppose for every $i\in [m]$,
    \[
        \EMD_d\left( \mcal A( G, \bm w^{(i-1)} ), \mcal A( G, \bm w^{(i)} ) \right)
        \leq c_{\bm w} \abs{\delta_{e_i}},
    \]
    where $c_{\bm w}$ is a function of only $\bm w$.
    Then
    \[
        \frac{\EMD_d(\mcal A(G, \bm w), \mcal A(G, \tilde{\bm w}))}{\norm{\bm w-\tilde{\bm w}}_1}
        \leq \sum_{i\in [m]} \frac{c_{\bm w} \abs{\delta_{e_i}}}{\norm{\bm w-\tilde{\bm w}}_1}
        \leq c_{\bm w}.
    \]
\end{lemma}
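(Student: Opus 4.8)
The plan is to prove \Cref{lem:one perturbation} by a hybrid/telescoping argument over the intermediate weight vectors $\bm w^{(0)}, \bm w^{(1)}, \dots, \bm w^{(m)}$, combined with the triangle inequality for the earth mover's distance. The key structural fact is that $\EMD_d$ is a genuine metric on distributions over $2^{\mfrak S}$ (this is standard since $d$ is a metric and the EMD inherits the triangle inequality), so I may freely chain it along the path from $\bm w$ to $\tilde{\bm w}$.

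First I would fix the data: given the perturbation decomposition $\bm \delta = \sum_{i} \delta_{e_i}\ones_{e_i}$ with $\sum_i \abs{\delta_{e_i}} = \delta = \norm{\bm w - \tilde{\bm w}}_1$, define the hybrids $\bm w^{(i)} \coloneqq \bm w + \sum_{j=1}^i \delta_{e_j}\ones_{e_j}$ exactly as in the statement, so consecutive hybrids $\bm w^{(i-1)}$ and $\bm w^{(i)}$ differ in the single coordinate $e_i$ by $\abs{\delta_{e_i}}$ in $\ell_1$. Then apply the triangle inequality for $\EMD_d$:
\[
    \EMD_d\bigl( \mcal A(G, \bm w), \mcal A(G, \tilde{\bm w}) \bigr)
    \leq \sum_{i\in [m]} \EMD_d\bigl( \mcal A(G, \bm w^{(i-1)}), \mcal A(G, \bm w^{(i)}) \bigr).
\]
Next I would invoke the hypothesis, which bounds each term on the right by $c_{\bm w}\abs{\delta_{e_i}}$, giving $\EMD_d(\mcal A(G,\bm w),\mcal A(G,\tilde{\bm w})) \leq c_{\bm w}\sum_i \abs{\delta_{e_i}} = c_{\bm w}\norm{\bm w - \tilde{\bm w}}_1$. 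Dividing through by $\norm{\bm w-\tilde{\bm w}}_1 > 0$ yields the claimed bound, and since $c_{\bm w}$ depends only on $\bm w$, taking $\limsup_{\tilde{\bm w}\to\bm w}$ (as in \Cref{def:pointwise lipschitz}) preserves it.

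There are two points that need a little care rather than being purely routine. The first is the application of the triangle inequality: the hypothesis is stated for a \emph{fixed} perturbation direction, so one must be slightly careful that the intermediate hypotheses are applied with $c_{\bm w}$ being the \emph{same} function of $\bm w$ each time and not, say, a function of the intermediate $\bm w^{(i-1)}$; re-reading the statement, $c_{\bm w}$ is explicitly assumed to depend only on $\bm w$, so this is fine, but it is worth a sentence. The second and more genuine subtlety — and I expect this to be the main obstacle, or at least the only place where the argument is not a one-liner — is establishing that $\EMD_d$ actually satisfies the triangle inequality over distributions on $2^{\mfrak S}$. This follows from a standard gluing-of-couplings argument: given a coupling $\mcal D_1$ of $(\pi, \sigma)$ and a coupling $\mcal D_2$ of $(\sigma, \tilde\pi)$ achieving (near-)optimal transport cost, one glues them along their common $\sigma$-marginal to produce a coupling of $(\pi, \tilde\pi)$ whose cost is at most the sum, using $d(\mcal A_\pi, \mcal A_{\tilde\pi}) \leq d(\mcal A_\pi, \mcal A_\sigma) + d(\mcal A_\sigma, \mcal A_{\tilde\pi})$. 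Since $\mcal A$ only takes finitely many internal randomness values on a finite instance (or one passes to an $\eps$-net), the infimum in the definition of $\EMD_d$ is attained and the gluing is legitimate; I would either cite this as folklore about Wasserstein-1 distances \cite{villani2009optimal} or include the short gluing argument explicitly.
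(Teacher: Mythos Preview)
Your proposal is correct and follows exactly the approach implicit in the paper: the paper does not give a separate proof of this lemma at all, instead embedding the entire argument in the displayed chain of inequalities in the conclusion (the middle expression $\sum_{i} c_{\bm w}\abs{\delta_{e_i}}/\norm{\bm w-\tilde{\bm w}}_1$ is precisely the result of applying the $\EMD_d$ triangle inequality along the hybrid path and then invoking the hypothesis term-by-term). Your write-up simply spells out these two steps and adds the standard justification that $\EMD_d$ is a metric, which is more than the paper itself provides.
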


As the intermediary step of our algorithms,
we consider algorithms with \emph{fractional} pointwise Lipschitz constant.
That is,
$\mcal A(G, \bm w)$ outputs a (deterministic) vector in $\R^{\mfrak S}$
such that the following quotient is bounded in the limit:
\[
    \frac{d(\mcal A(G, \bm w), \mcal A(G, \tilde{\bm w}))}{\norm{\bm w-\tilde{\bm w}}_1} \,.
\]
Here, $d: \R^{\mfrak S}\times \R^{\mfrak S}\to \R_+$ is a metric over the vector space $\R^{\mfrak S}$.

We can interpret such an output vector as a point in the convex hull of indicator vectors of $2^{\mfrak S}$.
Note that in the case of fractional Lipschitz algorithms,
it suffices to consider deterministic algorithms.

\section{Organization of the Rest of the Paper}
We organize the rest of the paper as follows.
\begin{itemize}
    \item \Cref{sec:framework} develops the PGTA framework,
    our main tool for analyzing the Lipschitz constant of regularized convex programs.
    \item \Cref{sec:min-cut-through-pgm} leverages PGTA along with two novel rounding techniques to design Lipschitz algorithms for minimum $S$-$T$ cut.
    \item \Cref{sec:b-matching} presents our Lipschitz algorithms for bipartite matching and $\bm b$-matching,
    both of which use PGTA,
    whereas the latter also requires a brand new rounding scheme.
    \item \Cref{sec:packing IP} discusses another application of PGTA alongside simple independent rounding towards designing a Lipschitz algorithm for packing integer programs.
    \item \Cref{sec:shared-randomness} explains how to extend our algorithms to the more practical shared-randomness setting.
\end{itemize}

\section{Proximal Gradient Trajectory Analysis}\label{sec:framework}

In this section, we consider a convex program
\[
    \min_{\bm x\in K} f_{\bm w}(\bm x),
\]
where $f_{\bm w}:\R^n \to \R$ and $K\sset \R^n$ are both convex
and $f_{\bm w}$ is parameterized by a vector $\bm w\in \R^m$. 
We make no further assumption on $f_{\bm w}(\bm x)$ other than it being real-valued and convex\footnote{We can alternatively assume that $f_{\bm w}$ is lower semi-continuous, proper, convex, and takes value in the extended reals.}.

Our goal is to establish a fractional pointwise Lipschitz-continuous algorithm for the problem.
We can think of $f_{\bm w}$ as the objective function for some convex relaxation of a combinatorial optimization problem such as the minimum $s$-$t$ cut problem, and $\bm w$ represents the weights on edges.

Note that, without any additional assumptions on the objective function $f_{\bm w}$, even small perturbations in the weights $\bm w$ can lead to drastically different optimal solutions.
In order to address this,
we introduce a smooth, strongly convex regularizer $g_{\bm w}(\bm x)$ and instead solve the program
\[
    \min_{\bm x\in K}h_{\bm w}(\bm x),
    \qquad \text{where $h_{\bm w}(\bm x) \coloneqq f_{\bm w}(\bm x) + g_{\bm w}(\bm x)$}.
\]
Here for $L,\sigma > 0$, we say that a differentiable function $g:K \to \mathbb{R}$ is \emph{$\sigma$-strongly convex} and \emph{$L$-smooth} if for any $x,y\in K$,
\begin{align*}
    g(y) & \geq g(x) + \langle \grad g(x), y-x\rangle + \frac{\sigma}{2}\|y-x\|^2, \\
    g(y) & \leq g(x) + \langle \grad g(x), y-x\rangle + \frac{L}{2}\|y-x\|^2
\end{align*}
hold, respectively.

Let $\bm w, \tilde{\bm w}$ be the original and perturbed weight parameters
and let $\bm x^*, \tilde{\bm x}^*$ be the unique optimal solutions with respect to $h_{\bm w}, h_{\tilde{\bm w}}$, respectively.
In order to analyze $\norm{\bm x^* - \tilde{\bm x}^*}_2$,
we can instead analyze a sequence of iterates $(\bm x^k)_{k\geq 0}, (\tilde{\bm x}^k)_{k\geq 0}$
such that $\bm x^k \to \bm x^*, \tilde{\bm x}^k\to \tilde{\bm x}^*$
and analyze
\[
    \lim_{k\to \infty} \norm{\bm x^k - \tilde{\bm x}^k}_2.
\]

To generate $(\bm x^k)$ and $(\tilde{\bm x}^k)$, we use the \emph{proximal gradient method} (PGM).
We emphasize that \emph{PGM is only used in the stability analysis}
and the algorithm designer is free to use their favorite optimization algorithm,
provided the produced solution is sufficiently close to optimality.
First, we recall the \emph{proximal operator} that
is used to minimize a composite function $h(\bm x) \coloneqq f(\bm x) + g(\bm x)$, which
is denoted by
$\prox_{L, f}:\R^n \to K$ and defined as
\[
    \prox_{L, f}(\bm x) \coloneqq \argmin_{\bm z\in K} \frac{L}2 \norm{\bm z-\bm x}_2^2 + f(\bm z).
\]
Then, given an initial point $\bm x^0 \in \R^n$, the PGM with step size $\nicefrac1L$ iteratively applies the following update rule:
\begin{align*}
    \bm x^{t+1}
    = \prox_{L, f}\left(\bm x^t - \frac1L \grad g(\bm x^t) \right).
\end{align*}
It is known that PGM converges at an exponential rate even if we only compute an inexact proximal step \cite{schmidt2011convergence}. 
Moreover,
it is known that the proximal operator is \emph{strictly non-expansive} \cite{rockafellar1976monotone}, 
meaning that
\[
    \norm{\prox_{L, f}(\bm y) - \prox_{L, f}(\bm z)}_2^2
    \leq \iprod{\prox_{L, f}(\bm y) - \prox_{L, f}(\bm z), \bm y-\bm z}\,.
\]
By an application of the Cauchy-Schwartz inequality,
this immediately implies that 
\begin{align}
    \norm{\prox_{L, f}(\bm y) - \prox_{L, f}(\bm z)}_2\leq \norm{\bm y-\bm z}_2. \label[inequality]{eq:pgm-contraction}
\end{align}
We underline again
that any algorithm which solves the convex
program to some sufficiently small error
enjoys the stability properties
we prove next
and PGM is only used to establish the stability of the
unique optimal solution through an analysis
of how its trajectory differs on two distinct
executions under $\bm w, \tilde{\bm{w}}.$

To bound the fractional pointwise Lipschitz constant, we require the following assumptions:
\begin{assumption}\label{asm:PGM}
    Let $0\neq \delta\in \R$ be a real number
    with sufficiently small absolute value and suppose that we obtain $\tilde{\bm w}$ from $\bm w$ by perturbing a single coordinate of $\bm w$ by $\delta$. 
    \begin{enumerate}
        \item $f_{\bm w}, f_{\tilde{\bm w}}$ are convex and continuous over $K$ \label[condition]{item:convex};
        \item Assuming $\abs{\delta}\leq \delta_{\bm w}$ for some $\delta_{\bm w}$ depending only on $\bm w$,
        both $g_{\bm w}$ and $g_{\tilde{\bm w}}$ are $\sigma_{\bm w}$-strongly convex over $K$ for the same $\sigma_{\bm w}$
        that depends only on $\bm w$; \label[condition]{item:strong-convex}
        \item Assuming $\abs{\delta}\leq \delta_{\bm w}$ for some $\delta_{\bm w}$ depending only on $\bm w$, 
        both $g_{\bm w}$ and $g_{\tilde{\bm w}}$ are $L_{\bm w}$-smooth over $K$ for the same $L_{\bm w} > \sigma_{\bm w}$
        that depends only on $\bm w$;
        \item for every $\bm x\in K$, $\frac1{L_{\bm w}} \norm{\grad g_{\bm w}(\bm x) - \grad g_{\tilde{\bm w}}(\bm x)}_2 \leq C_{\bm w} \abs{\delta}$ for some constant $C_{\bm w}$ that depends only on $\bm w$; \label[condition]{item:c_1} 
        \item for every $\bm x\in K$, $\norm{\prox_{L_{\bm w}, f_{\bm w}}(\bm x) - \prox_{L_{\bm w}, f_{\tilde{\bm w}}}(\bm x)}_2 \leq D_{\bm w} \abs{\delta}$ 
        for some $D_{\bm w}$ that depends only on $\bm w$. \label[condition]{item:c_2}
    \end{enumerate}    
\end{assumption}

{Some remarks are in order regarding \Cref{item:c_1} and \Cref{item:c_2},
which may not come across as natural at first glance.
Consider a regularizer of the form $g_{\bm w}(\bm x) = \bm w^\top F(\bm x)$ where $F: K\to \R^m$ is sufficiently ``well-behaved''.
All of our regularizers fall under this form.
Then $\grad g(\bm x) = \grad F(\bm x)^\top w \in \R^n$ so that
\[
    \norm{\grad g_{\bm w}(\bm x) - \grad g_{\tilde{\bm w}}(\bm x)}_2
    = \norm*{\grad F(x)^\top (\bm w-\tilde{\bm w})}_2
    \leq \norm*{\grad F(\bm x)^\top}\cdot \norm{\bm w-\tilde{\bm w}}_2
    = \norm*{\grad F(\bm x)^\top}\cdot \delta,
\]
where $\norm{\grad F(\bm x)^\top}$ denotes the operator norm of the matrix $\grad F(\bm x)^\top$ with respect to the Euclidean norm.
When $K$ is compact and $\grad F$ is continuous,
we can always obtain a trivial bound to satisfy \Cref{item:c_1}.
However,
an important detail in our applications is that this bound can be improved significantly using problem-specific structure to yield better guarantees.

On the other hand,
$\prox_{L_{\bm w}, f_{\bm w}}(\bm x)$ is the optimal solution of a strongly convex objective,
meaning we can translate the closeness of the objective value to a proximity in the variable space.
If $f_{\bm w}, f_{\tilde{\bm w}}$ are sufficiently close in function value,
we would thus expect their optimal solutions to be close.
In our applications,
extracting a bound for \Cref{item:c_2} that depends only on $\bm w$ requires exploiting the specific structure of $f_{\bm w}$.
}

\begin{theorem}\label{thm:PGM}
    Suppose Assumption~\ref{asm:PGM} holds.
    Then the optimal solutions $\bm x^*, \tilde{\bm x}^*$ corresponding to $h_{\bm w}$ and $h_{\tilde{\bm w}}$ satisfy
    \[
        \norm{\bm x^* - \tilde{\bm x}^*}_2
        \leq \frac{L_{\bm w}(C_{\bm w}+D_{\bm w})}{\sigma_{\bm w}} \abs{\delta}.
    \]
\end{theorem}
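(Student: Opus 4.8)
The plan is to run PGM simultaneously on the two regularized objectives $h_{\bm w}$ and $h_{\tilde{\bm w}}$ from a common starting point $\bm x^0$, track the distance $\norm{\bm x^k - \tilde{\bm x}^k}_2$ between the two trajectories, and show it converges to a geometric series that sums to the claimed bound. Since $g_{\bm w}$ is $L_{\bm w}$-smooth and $\sigma_{\bm w}$-strongly convex, PGM with step size $1/L_{\bm w}$ on $h_{\bm w} = f_{\bm w} + g_{\bm w}$ converges to the unique optimum $\bm x^*$ (and likewise $\tilde{\bm x}^k \to \tilde{\bm x}^*$). So it suffices to bound $\limsup_k \norm{\bm x^k - \tilde{\bm x}^k}_2$.

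The key step is a one-step contraction estimate. Write the updates as
\[
    \bm x^{k+1} = \prox_{L_{\bm w}, f_{\bm w}}\!\left(\bm x^k - \tfrac{1}{L_{\bm w}}\grad g_{\bm w}(\bm x^k)\right),
    \qquad
    \tilde{\bm x}^{k+1} = \prox_{L_{\bm w}, f_{\tilde{\bm w}}}\!\left(\tilde{\bm x}^k - \tfrac{1}{L_{\bm w}}\grad g_{\tilde{\bm w}}(\tilde{\bm x}^k)\right).
\]
I would insert an intermediate point $\bm y^{k+1} \coloneqq \prox_{L_{\bm w}, f_{\bm w}}\!\left(\tilde{\bm x}^k - \tfrac{1}{L_{\bm w}}\grad g_{\tilde{\bm w}}(\tilde{\bm x}^k)\right)$ — same prox operator as the $\bm x$-update but evaluated at the perturbed trajectory's argument — and apply the triangle inequality to split $\norm{\bm x^{k+1} - \tilde{\bm x}^{k+1}}_2 \leq \norm{\bm x^{k+1} - \bm y^{k+1}}_2 + \norm{\bm y^{k+1} - \tilde{\bm x}^{k+1}}_2$. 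For the first term, both points use $\prox_{L_{\bm w}, f_{\bm w}}$, so \Cref{eq:pgm-contraction} (nonexpansiveness) gives
\[
    \norm{\bm x^{k+1} - \bm y^{k+1}}_2
    \leq \norm*{\left(\bm x^k - \tfrac{1}{L_{\bm w}}\grad g_{\bm w}(\bm x^k)\right) - \left(\tilde{\bm x}^k - \tfrac{1}{L_{\bm w}}\grad g_{\tilde{\bm w}}(\tilde{\bm x}^k)\right)}_2,
\]
which I further split (triangle inequality again, adding and subtracting $\tfrac1{L_{\bm w}}\grad g_{\bm w}(\tilde{\bm x}^k)$): the gradient-map part $\norm{(\bm x^k - \tfrac1{L_{\bm w}}\grad g_{\bm w}(\bm x^k)) - (\tilde{\bm x}^k - \tfrac1{L_{\bm w}}\grad g_{\bm w}(\tilde{\bm x}^k))}_2 \leq (1 - \tfrac{\sigma_{\bm w}}{L_{\bm w}})\norm{\bm x^k - \tilde{\bm x}^k}_2$ by the standard contraction of the gradient map for an $L_{\bm w}$-smooth $\sigma_{\bm w}$-strongly convex function, plus the term $\tfrac1{L_{\bm w}}\norm{\grad g_{\bm w}(\tilde{\bm x}^k) - \grad g_{\tilde{\bm w}}(\tilde{\bm x}^k)}_2 \leq C_{\bm w}\abs{\delta}$ by \Cref{item:c_1}. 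For the second term $\norm{\bm y^{k+1} - \tilde{\bm x}^{k+1}}_2$, the two prox operators differ only in the function ($f_{\bm w}$ vs.\ $f_{\tilde{\bm w}}$) but are evaluated at the same point, so \Cref{item:c_2} gives $\leq D_{\bm w}\abs{\delta}$. Combining,
\[
    \norm{\bm x^{k+1} - \tilde{\bm x}^{k+1}}_2
    \leq \left(1 - \tfrac{\sigma_{\bm w}}{L_{\bm w}}\right)\norm{\bm x^k - \tilde{\bm x}^k}_2 + (C_{\bm w} + D_{\bm w})\abs{\delta}.
\]

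Finally, unrolling this recursion from $\bm x^0 = \tilde{\bm x}^0$ (so the base distance is $0$) yields $\norm{\bm x^k - \tilde{\bm x}^k}_2 \leq (C_{\bm w}+D_{\bm w})\abs{\delta}\sum_{j=0}^{k-1}(1-\tfrac{\sigma_{\bm w}}{L_{\bm w}})^j$, and taking $k\to\infty$ the geometric series sums to $\tfrac{L_{\bm w}}{\sigma_{\bm w}}$, giving $\norm{\bm x^* - \tilde{\bm x}^*}_2 \leq \tfrac{L_{\bm w}(C_{\bm w}+D_{\bm w})}{\sigma_{\bm w}}\abs{\delta}$. I would also remark that the argument does not depend on $\tilde{\bm w}$ being close to $\bm w$ except through \Cref{asm:PGM}, and that convergence of both sequences to their respective optima justifies passing to the limit. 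The main obstacle I anticipate is verifying cleanly that the gradient-map step $\bm z \mapsto \bm z - \tfrac1{L_{\bm w}}\grad g_{\bm w}(\bm z)$ is a $(1-\sigma_{\bm w}/L_{\bm w})$-contraction on $K$ — this is the one place requiring both smoothness and strong convexity of $g_{\bm w}$ together (via co-coercivity / the standard strongly-convex contraction lemma), and care is needed since $g_{\bm w}$ is only assumed strongly convex and smooth on $K$ rather than all of $\R^n$; one should confirm the inequalities in \Cref{asm:PGM} are enough to carry the argument, possibly by noting the iterates stay in $K$ because the prox operator maps into $K$ by definition.
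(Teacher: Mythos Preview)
Your proposal is correct and follows essentially the same argument as the paper: run PGM on both instances from a common start, split one step via the triangle inequality using the intermediate point $\prox_{L_{\bm w},f_{\bm w}}(\tilde G(\tilde{\bm x}^k))$, apply prox nonexpansiveness plus the $(1-\sigma_{\bm w}/L_{\bm w})$-contraction of the gradient map together with \Cref{item:c_1} and \Cref{item:c_2}, and sum the resulting geometric series. The gradient-map contraction you flag as the main obstacle is exactly the content of \Cref{lem:nonexpansive gradient}, which the paper states and proves via co-coercivity over $K$.
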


To prove \Cref{thm:PGM}, we use the following technical lemma, 
whose proof is deferred to \Cref{apx:nonexpansive gradient}. 
Before we state it formally, let us define
the notion of $\eta$-expansiveness. 
\begin{definition}\label{def:eta-expansive}
    Let $\eta> 0$, 
    $K$ be some set endowed with some norm $\norm{\cdot}$,
    and $\mcal T: K \to K$.
    We say that $\mcal T$ is an $\eta$-expansive operator with respect to $\norm{\cdot}$
    if $\forall \bm x,\bm y \in K$ it holds that
    $\norm{\mcal T(\bm x) - \mcal T(\bm y)} \leq \eta \cdot \norm{\bm x - \bm y}$.
\end{definition}

\begin{restatable}{lemma}{nonexpansiveGradient}\label{lem:nonexpansive gradient}
    The gradient step of a $L$-smooth function $g$
    that is $\sigma$-strongly convex over its convex feasible region $K$
    with step size $\eta \leq \nicefrac1{L}$ is $(1-\eta\sigma)$-expansive
    with respect to the Euclidean norm.
\end{restatable}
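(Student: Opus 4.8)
The plan is to analyze the gradient step as the operator $G_\eta(\bm x) \coloneqq \bm x - \eta\grad g(\bm x)$ acting on $\bm x \in K$ (if one insists on staying inside $K$, one composes with the metric projection $\Pi_K$, which is $1$-Lipschitz and so cannot enlarge any expansion factor; hence it suffices to treat $G_\eta$). The key move is to peel off the strongly convex part of $g$: write $g = \tfrac{\sigma}{2}\norm{\cdot}_2^2 + \phi$ with $\phi \coloneqq g - \tfrac{\sigma}{2}\norm{\cdot}_2^2$, so that $G_\eta(\bm x) = (1-\eta\sigma)\bm x - \eta\grad\phi(\bm x)$. The target factor $1-\eta\sigma$ already appears explicitly, and the whole argument reduces to controlling the $\grad\phi$ term.

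First I would verify that $\phi$ is convex on $K$ and $(L-\sigma)$-smooth on $K$. Subtracting $\tfrac{\sigma}{2}\norm{\bm y}_2^2$ from the $\sigma$-strong-convexity lower bound for $g$ and simplifying works out: the leftover terms $-\tfrac{\sigma}{2}\norm{\bm y}_2^2 + \tfrac{\sigma}{2}\norm{\bm x}_2^2 + \sigma\iprod{\bm x, \bm y-\bm x}$ collapse to $-\tfrac{\sigma}{2}\norm{\bm x-\bm y}_2^2$, which cancels the $+\tfrac{\sigma}{2}\norm{\bm y-\bm x}_2^2$ coming from strong convexity and leaves exactly the first-order convexity inequality for $\phi$; the identical manipulation applied to the $L$-smoothness upper bound for $g$ leaves the $(L-\sigma)$-quadratic upper bound for $\phi$. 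A convex function satisfying an $(L-\sigma)$-quadratic upper bound has a $\tfrac{1}{L-\sigma}$-co-coercive gradient (Baillon--Haddad; see e.g.\ Nesterov's textbook), i.e.\ $\iprod{\grad\phi(\bm x)-\grad\phi(\bm y), \bm x-\bm y} \ge \tfrac{1}{L-\sigma}\norm{\grad\phi(\bm x)-\grad\phi(\bm y)}_2^2$. (In all our applications $g$ is smooth on all of $\R^n$, so there is no subtlety about this holding only over $K$; in general one extends $\phi$.)

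Next, writing $\bm u \coloneqq \bm x-\bm y$ and $\bm v \coloneqq \grad\phi(\bm x)-\grad\phi(\bm y)$, I would expand
\[
    \norm{G_\eta(\bm x)-G_\eta(\bm y)}_2^2 = (1-\eta\sigma)^2\norm{\bm u}_2^2 - 2\eta(1-\eta\sigma)\iprod{\bm u,\bm v} + \eta^2\norm{\bm v}_2^2,
\]
note $\eta\sigma \le \sigma/L \le 1$ so that $1-\eta\sigma \ge 0$, and apply co-coercivity to the cross term to bound the last two summands by $\bigl(\eta^2 - \tfrac{2\eta(1-\eta\sigma)}{L-\sigma}\bigr)\norm{\bm v}_2^2$. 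This quantity is nonpositive exactly when $\eta(L+\sigma)\le 2$, which holds since $\eta \le 1/L$ and $\sigma \le L$ give $\eta(L+\sigma) \le 1 + \sigma/L \le 2$. Hence $\norm{G_\eta(\bm x)-G_\eta(\bm y)}_2^2 \le (1-\eta\sigma)^2\norm{\bm u}_2^2$, which is the claim; the degenerate case $L=\sigma$ is immediate since then $\phi$ is affine on $K$ and $\bm v = \bm{0}$.

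The main obstacle is obtaining the contraction factor \emph{exactly} $1-\eta\sigma$ rather than something weaker. Feeding the bare inequalities $\iprod{\grad g(\bm x)-\grad g(\bm y),\bm u}\ge\sigma\norm{\bm u}_2^2$ and $\norm{\grad g(\bm x)-\grad g(\bm y)}_2\le L\norm{\bm u}_2$ into the expansion of $\norm{\bm u - \eta(\grad g(\bm x)-\grad g(\bm y))}_2^2$, or instead invoking Nesterov's strong-convexity/smoothness interpolation inequality, yields only factors like $\sqrt{1 + \eta^2 L^2 - 2\eta\sigma}$ or $\sqrt{1 - 2\eta\sigma L/(\sigma+L)}$, both of which strictly exceed $1-\eta\sigma$ for small $\eta$. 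It is precisely the step of first subtracting $\tfrac{\sigma}{2}\norm{\cdot}_2^2$ and then using co-coercivity of $\grad\phi$ (not of $\grad g$) that makes the bookkeeping collapse to the sharp constant, whose tightness is witnessed by $g = \tfrac{\sigma}{2}\norm{\cdot}_2^2$ itself.
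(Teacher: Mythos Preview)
Your proof is correct. The paper uses the same two ingredients---showing $\phi \coloneqq g - \tfrac{\sigma}{2}\norm{\cdot}_2^2$ is convex and $(L-\sigma)$-smooth, and invoking Baillon--Haddad co-coercivity---but assembles them differently: it first packages them into the interpolation inequality
\[
    \iprod{\grad g(\bm x) - \grad g(\bm y), \bm x-\bm y} \;\geq\; \tfrac{1}{\sigma+L}\norm{\grad g(\bm x)-\grad g(\bm y)}_2^2 + \tfrac{\sigma L}{\sigma+L}\norm{\bm x-\bm y}_2^2
\]
for $\grad g$ itself, and only then expands $\norm{G_\eta(\bm x)-G_\eta(\bm y)}_2^2$ directly in terms of $\grad g$. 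This is exactly the ``Nesterov interpolation'' route you flag in your last paragraph, and indeed it produces only $\norm{G_\eta(\bm x)-G_\eta(\bm y)}_2^2 \leq (1 - \tfrac{2\eta\sigma L}{\sigma+L})\norm{\bm x-\bm y}_2^2$; the paper then bounds the coefficient by $1-\eta\sigma$ and declares $(1-\eta\sigma)$-expansiveness (strictly speaking that last step yields only $\sqrt{1-\eta\sigma}$-expansive, a harmless imprecision for the big-$O$ conclusions downstream). Your choice to substitute the decomposition \emph{before} expanding---writing $G_\eta(\bm x) = (1-\eta\sigma)\bm x - \eta\grad\phi(\bm x)$ and applying co-coercivity to $\grad\phi$ rather than to $\grad g$---is precisely what delivers the sharp $(1-\eta\sigma)^2$ bound on the squared norm.
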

For completeness, 
we include a proof of \Cref{lem:nonexpansive gradient} in \Cref{apx:nonexpansive gradient}.

{The proof of \Cref{thm:PGM} presented below takes inspiration from the notion of algorithmic stability introduced by \textcite{hardt2016train}. 
Specifically, the authors analyze the stochastic gradient descent trajectory of an arbitrary objective function $h$ that is $L$-smooth and $\sigma$-strongly convex
(which includes our composite objectives as a special case)
and derive an $\ell_2$ solution perturbation bound of roughly $O(\abs{\delta}\cdot \frac{B^2}{\sigma})$,
where $B$ is a bound on the maximum norm of $\grad h$.
Note that the bound depends on $B$ rather than $L$.
While \cite{hardt2016train} observed that the proximal operator is a ``stability-inducing'' operation,
they do not explicitly analyze composite objectives nor the PGM.
On the other hand,
analyzing the trajectory of PGM is a natural choice for composite objectives $h = f+g$ such as our own,
and we develop a more fine-grained bound of the form $O(\abs{\delta}\cdot \frac{L(C+D)}\sigma)$,
where $C, D$ bound deviations of $\grad g$ and $\prox_{L, f}$ (see \Cref{item:c_1,item:c_2}).
For our settings,
$\frac{B^2}\sigma$ can be $\Omega(n^2)$ while $\frac{L(C+D)}\sigma = o(n)$.
Thus, 
we develop stronger perturbation bounds for our particular setting by analyzing the trajectory of a more suitable process (PGM)
and measuring the perturbation using carefully chosen quantities of the objective (smoothness $L$, gradient/proximal deviations $C, D$).}

We now present the proof of \Cref{thm:PGM}.
\begin{proof}[Proof of \Cref{thm:PGM}]
    We assume that the neighborhood around $\bm w$ is small enough so that the conditions in \Cref{asm:PGM} hold.
    Let
    \begin{itemize}
        \item $\bm x^t, \tilde{\bm x}^t$ denote the adjacent runs of PGM,
        \item $L = L_{\bm w}$ the common smoothness parameter,
        \item $\sigma = \sigma_{\bm w}$ the common strong convexity parameter,
        \item and $G(\bm x^t) \coloneqq \bm x^t - \frac1L \grad g_{\bm w}(\bm x^t)$ the gradient update ($\tilde G(\tilde{\bm x}^t)$ similarly).
    \end{itemize}
    Consider a single iteration:
    \begin{align*}
        &\norm{\bm x^{t+1} - \tilde{\bm x}^{t+1}}_2 \\
        &= \norm{\prox_{L, f_{\bm w}}(G(\bm x^t)) - \prox_{L, f_{\tilde{\bm w}}}(\tilde G(\tilde{\bm x}^t))}_2 \\
        &\leq \norm{\prox_{L, f_{\bm w}}(G(\bm x^t)) - \prox_{L, f_{\bm w}}(\tilde G(\tilde{\bm x}^t))} _2 
        + \norm{\prox_{L, f_{\bm w}}(\tilde G(\tilde{\bm x}^t)) - \prox_{L, f_{\tilde{\bm w}}}(\tilde G(\tilde{\bm x}^t))}_2 \\
        &\leq \norm{G(\bm x^t) - \tilde G(\tilde{\bm x}^t)}_2
        + \norm{\prox_{L, f_{\bm w}}(\tilde G(\tilde{\bm x}^t)) 
        - \prox_{L, f_{\tilde{\bm w}}}(\tilde G(\tilde{\bm x}^t))}_2 \tag{by \Cref{eq:pgm-contraction}} \\
        &\leq \norm{G(\bm x^t) - G(\tilde{\bm x}^t)}_2 
        + \norm{\tilde G(\tilde{\bm x}^t) - G(\tilde{\bm x}^t)}_2 
        + \norm{\prox_{L, f_{\bm w}}(\tilde G(\tilde{\bm x}^t)) - \prox_{L, f_{\tilde{\bm w}}}(\tilde G(\tilde{\bm x}^t))}_2 \\
        &\leq \left( 1 - \frac{\sigma}L \right) \norm{\bm x^t - \tilde{\bm x}^t}_2
        + \frac1L \norm{\grad g_{\bm w}(\tilde{\bm x}^t) - \grad g_{\tilde{\bm w}}(\tilde{\bm x}^t)}_2
        + \norm{\prox_{L, f_{\bm w}}(\tilde G(\tilde{\bm x}^t)) - \prox_{L, f_{\tilde{\bm w}}}(\tilde G(\tilde{\bm x}^t))}_2 \tag{by \Cref{lem:nonexpansive gradient} }\\
        &\leq \left( 1 - \frac{\sigma}L \right) \norm{\bm x^t - \tilde{\bm x}^t}_2 + C_{\bm w} \abs{\delta} + D_{\bm w} \abs{\delta} \tag{by \Cref{item:c_1}, \Cref{item:c_2}} \\
        &= \left(1-\frac{\sigma}{L}\right)^{t+1} \norm{\bm x^0 - \tilde{\bm x}^0}_2 + \sum_{\tau=1}^{t+1} \left( 1 - \frac{\sigma}L \right)^{t+1-\tau} (C_{\bm w} + D_{\bm w}) \abs{\delta} \\
        & = \sum_{\tau=1}^{t+1} \left( 1 - \frac{\sigma}L \right)^{t+1-\tau} (C_{\bm w} + D_{\bm w}) \abs{\delta}.
    \end{align*}
    Taking the limit as $t\to \infty$ yields the following bound on $\ell_2$-distance between optimal solutions $\bm x^*, \tilde{\bm x}^*$ respectively.
    \[
        \norm{\bm x^* - \tilde{\bm x}^*}_2
        \leq \frac{L (C_{\bm w} + D_{\bm w})}{\sigma} \abs{\delta}.
        \qedhere
    \]
\end{proof}

\section{Minimum Vertex \texorpdfstring{$S$-$T$}{S-T} Cut}\label{sec:min-cut-through-pgm}
In this section, we provide two pointwise Lipschitz continuous algorithms for the minimum vertex $S$-$T$ cut problem. 

\begin{theorem}\label{thm:s-t-min-cut-additive}
    For any $\gamma > 0$,
    there is a randomized polynomial-time approximation algorithm for undirected minimum vertex $S$-$T$ cut
    that outputs a vertex subset $A\sset V$ with the following guarantees.
    \begin{enumerate}[(a)]
      \item $A$ is a $\left( 1+O\left( \frac1{\gamma\sqrt{n}} \right), O\left( \frac{\lambda_2\log(\nicefrac1\gamma)}{\gamma} \right) \right)$-approximate $S$-$T$ cut
        with probability at least $1-\gamma$.
      \item The algorithm has a pointwise Lipschitz constant of
        $
            O\left( \nicefrac{n}{\lambda_2} \right).
        $
    \end{enumerate}
    Here $\lambda_2$ denotes the algebraic connectivity of $(G,\bm w)$.
\end{theorem}

\begin{theorem}\label{thm:s-t-cut-vertex-set}
    For any $\beta\in (0, \nicefrac12)$ and $\gamma\in (0, 1)$, 
    there exists a randomized polynomial-time approximation algorithm for the minimum vertex $S$-$T$ cut problem
    that outputs a vertex subset $A\sset V$ with the following guarantees.
    \begin{enumerate}[(a)]
      \item $\E[\cut_{G,\bm w}(A)] = O(\beta^{-1}\OPT_\beta)$.
      \item $A$ induces an $S$-$T$ cut with probability $1-\gamma$.
      \item The algorithm has a pointwise Lipschitz constant of
        $
            O\left( \frac{ \sqrt{n} \log(\nicefrac1\gamma)}{\beta^2 \lambda_2} \right).
        $
    \end{enumerate}
    Here $\OPT_\beta$ is the minimum weight of a $\beta$-balanced cut 
    and $\lambda_2$ is the algebraic connectivity of the weighted graph $(G,\bm w)$.
\end{theorem}

We also provide a lower bound whichs shows that our Lipschitz-approximation tradeoff in \Cref{thm:s-t-min-cut-additive} is tight.
\begin{theorem}\label{thm:s-t-cut-additive-lower-bound}
    For any sufficiently large integer $n$ and any $\lambda < n/c$ for a sufficiently large constant $c>0$, there exists a weighted graph $(G,w)$ with algebraic connectivity at least $\lambda$ such that any algorithm that outputs a $(O(1),\lambda)$-approximate minimum $s$-$t$ cut on $(G,w)$ 
    with probability at least $0.99$
    has pointwise Lipschitz constant $\Omega(\nicefrac{n}\lambda)$.
\end{theorem}

In \Cref{subsec:s-t-cut-fractional},
we describe our approach to obtain fractional pointwise Lipschitz continuous solutions.
This part is identical between the two algorithms.
We make the distinction at the rounding steps (\Crefrange{subsec:exp-mech-vertex-S-T-cut}{subsec:vertex-S-T-cut}). 
Finally, we prove the lower bound in \Cref{subsec:s-t-cut-lower-bound}.

\subsection{Obtaining a Fractional Lipschitz Solution}\label{subsec:s-t-cut-fractional}
The first step in our algorithm is to obtain a fractional Lipschitz algorithm.
Fix any $s_0\in S, t_0\in T$, and consider the following convex optimization problem.
\begin{align}
    \begin{array}{llr}
        \text{minimize} & \displaystyle \sum_{uv\in E} w_{uv} \abs{y_u - y_v} & \\
        \text{subject to} & y_{t_0}-y_{s_0} = 1 & \\
        &y_s = y_{s_0} & \forall s\in S \\
        &y_t = y_{t_0} & \forall t\in T \\
        & \iprod{\ones, y} = 0 & \\
        & y\in [-1, 1]^V\cap [y_{t_0}, y_{s_0}]^V &
    \end{array}
    \label[LP]{eq:s-t-min-cut-lp}        
\end{align}
To see that this is a valid relaxation of the minimum $S$-$T$ cut problem,
let $\ones_A$ denote the indicator of any vertex set $A\sset V$ with $S\sset A$ and $A\cap T = \varnothing$.
Then $\ones_A - \frac{\card A}{n}\ones$ is a feasible solution to the program above
with the exact same objective as $\ones_A$.

The goal of this section is to show the following:
\begin{theorem}\label{thm:min-s-t-cut-fractional}
    For any $\eps > 0$, there exists a polynomial-time algorithm that, given a graph $G=(V,E)$ and a weight vector $\bm w \in \mathbb{R}_{\geq 0}^E$, outputs a feasible solution $\bm y$ to \Cref{eq:s-t-min-cut-lp} with the following properties:
    \begin{itemize}
        \item $f_{\bm w}(\bm y) \leq (1+\eps) f_{\bm w}(\bm y^*)$, where $\bm y^*$ is an optimal solution to \Cref{eq:s-t-min-cut-lp}.
        \item The fractional pointwise Lipschitz constant with respect to the $\ell_2$ norm is $O(\eps^{-1}\lambda_2^{-1})$. 
    \end{itemize}
\end{theorem}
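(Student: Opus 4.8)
The plan is to instantiate the PGTA framework of \Cref{sec:framework}. Let $f_{\bm w}(\bm y) = \sum_{uv\in E} w_{uv}\abs{y_u - y_v}$ be the objective of \Cref{eq:s-t-min-cut-lp} and let $K$ denote its feasible polytope. I add the \emph{weighted Laplacian} regularizer $g_{\bm w}(\bm y) = \eps\sum_{uv\in E} w_{uv}(y_u - y_v)^2$ --- that is, $\eps$ times the quadratic form of the weighted unnormalized Laplacian --- and set $h_{\bm w} = f_{\bm w}+g_{\bm w}$. The algorithm outputs the (unique) minimizer $\bm y$ of $h_{\bm w}$ over $K$; introducing one auxiliary variable per edge to linearize each $\abs{y_u - y_v}$ turns this into a convex quadratic program with linear constraints, solvable in polynomial time. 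Uniqueness holds because the constraint $\iprod{\ones, \bm y} = 0$ places $K$ inside $\ones^\perp$, on which the Laplacian quadratic form has smallest and largest eigenvalues $\lambda_2$ (the algebraic connectivity, which we may assume is positive, else the claimed bound is vacuous) and some $\lambda_{\max}$; hence $h_{\bm w}$ is $2\eps\lambda_2$-strongly convex on $K$. This is also why \Cref{eq:s-t-min-cut-lp} carries the constraint $\iprod{\ones,\bm y}=0$. Using the \emph{weighted} rather than the unweighted Laplacian is the key design choice: it is what simultaneously yields a purely multiplicative $(1+\eps)$ guarantee and the $\lambda_2^{-1}$ scaling --- precisely the flexibility the coarse analysis of \cite{kumabe2022lipschitz} lacks (cf.\ \Cref{sec:min s-t cut prev work limits}).

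For the approximation guarantee, the box constraint $\bm y\in[y_{t_0},y_{s_0}]^V$ together with $y_{s_0}-y_{t_0}=1$ forces $\abs{y_u-y_v}\le 1$ for every feasible $\bm y$ and all $u,v$. Hence, writing $\bm y^*$ for an LP optimum, $g_{\bm w}(\bm y^*) = \eps\sum_{uv}w_{uv}(y^*_u - y^*_v)^2 \le \eps\sum_{uv}w_{uv}\abs{y^*_u - y^*_v} = \eps f_{\bm w}(\bm y^*)$, so $f_{\bm w}(\bm y)\le h_{\bm w}(\bm y)\le h_{\bm w}(\bm y^*)= f_{\bm w}(\bm y^*)+g_{\bm w}(\bm y^*)\le (1+\eps)f_{\bm w}(\bm y^*)$.

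For the Lipschitz bound I would verify \Cref{asm:PGM} for a single-coordinate perturbation of an edge $uv$ by $\delta$ and apply \Cref{thm:PGM}. The convexity, strong convexity ($\sigma_{\bm w}=2\eps\lambda_2$), and smoothness ($L_{\bm w}=2\eps\lambda_{\max}$) conditions over $K\sset\ones^\perp$ are routine, and since the perturbation changes the Laplacian only by the rank-one term $\delta(\mathbf e_u-\mathbf e_v)(\mathbf e_u-\mathbf e_v)^\top$, the same constants serve for $\tilde{\bm w}$ up to lower-order corrections vanishing as $\delta\to0$. For \Cref{item:c_1}: $g_{\tilde{\bm w}}-g_{\bm w}=\eps\delta(y_u-y_v)^2$, so $\norm{\grad g_{\bm w}(\bm y)-\grad g_{\tilde{\bm w}}(\bm y)}_2 = 2\eps\abs{\delta}\abs{y_u-y_v}\norm{\mathbf e_u-\mathbf e_v}_2\le 2\sqrt2\,\eps\abs{\delta}$ on $K$, giving $C_{\bm w}=2\sqrt2\,\eps/L_{\bm w}$. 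For \Cref{item:c_2}: the proximal objective $\bm z\mapsto\tfrac{L_{\bm w}}{2}\norm{\bm z-\bm x}_2^2+f_{\bm w}(\bm z)$ is $L_{\bm w}$-strongly convex, and the perturbation of $f$ is $\bm z\mapsto\delta\abs{z_u-z_v}$, which is $\sqrt2\,\abs{\delta}$-Lipschitz; comparing the two minimizers by adding their quadratic-growth lower bounds yields $\norm{\prox_{L_{\bm w},f_{\bm w}}(\bm x)-\prox_{L_{\bm w},f_{\tilde{\bm w}}}(\bm x)}_2\le\sqrt2\,\abs{\delta}/L_{\bm w}$, giving $D_{\bm w}=\sqrt2/L_{\bm w}$. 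Then \Cref{thm:PGM} gives $\norm{\bm y-\tilde{\bm y}}_2\le\frac{L_{\bm w}(C_{\bm w}+D_{\bm w})}{\sigma_{\bm w}}\abs{\delta}=\frac{\sqrt2(2\eps+1)}{2\eps\lambda_2}\abs{\delta}=O(\eps^{-1}\lambda_2^{-1})\abs{\delta}$ for $\eps=O(1)$ (note the $L_{\bm w}$ cancels). A general perturbation reduces to chaining single-coordinate steps exactly as in \Cref{lem:one perturbation} --- deterministically, so $\EMD$ is just $\norm{\cdot}_2$ --- and since the intermediate weight vectors and their connectivities converge to $\bm w$ and $\lambda_2$ as $\tilde{\bm w}\to\bm w$, the $\limsup$ is $O(\eps^{-1}\lambda_2^{-1})$.

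The hard part is \Cref{item:c_2}. The straightforward estimate that the two proximal objectives differ pointwise by $O(\abs{\delta})$ only gives a $\sqrt{\abs{\delta}}$ bound on the displacement of the minimizer, which is too weak. The fix --- the crux of the argument --- is that the perturbation $\delta\abs{z_u-z_v}$ of $f$ is not merely bounded but globally $O(\abs{\delta})$-Lipschitz, and playing this against the $L_{\bm w}$-strong convexity of the proximal objective upgrades the estimate to linear in $\abs{\delta}$, as required by \Cref{thm:PGM}. Everything else --- the approximation bound, strong convexity, and \Cref{item:c_1} --- follows cleanly from the $\abs{y_u-y_v}\le1$ box constraint and the choice of the weighted-Laplacian regularizer.
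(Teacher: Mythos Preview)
Your proposal is correct and follows essentially the same approach as the paper: the weighted Laplacian regularizer $g_{\bm w}(\bm y)=\Theta(\eps)\,\bm y^\top\mcal L_{\bm w}\bm y$, the approximation bound via $(y_u-y_v)^2\le|y_u-y_v|$ on $K$, and the verification of \Cref{asm:PGM} with $C_{\bm w}=O(\eps/L_{\bm w})$ and $D_{\bm w}=O(1/L_{\bm w})$ leading to the $O(\eps^{-1}\lambda_2^{-1})$ bound after the $L_{\bm w}$ cancellation. Your \Cref{item:c_2} argument---noting that the perturbation $\bm z\mapsto\delta|z_u-z_v|$ is $\sqrt2|\delta|$-Lipschitz and adding the two quadratic-growth inequalities---is exactly the paper's \Cref{lem:min cut prox}, just phrased more abstractly; the paper also introduces explicit factor-of-two slack in $\sigma_{\bm w},L_{\bm w}$ so that the same constants apply throughout a neighborhood of $\bm w$, which you handle by the ``lower-order corrections vanishing as $\delta\to0$'' remark.
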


To guarantee fractional pointwise Lipschitz continuity, we add a smooth and strongly convex regularizer as follows.
\begin{align*}
    \begin{array}{llr}
        \text{minimize}& \displaystyle \sum_{uv\in E} w_{uv} \abs{y_u - y_v} + \frac\eps2 \sum_{uv\in E} w_{uv} (y_u - y_v)^2 & \\
        \text{subject to} & y_{t_0}-y_{s_0} = 1 & \\
        &y_s = y_{s_0} & \forall s\in S \\
        &y_t = y_{t_0} & \forall t\in T \\
        & \iprod{\ones, y} = 0 & \\
        & y\in [-1, 1]^V\cap [y_{t_0}, y_{s_0}]^V &
        \end{array}
\end{align*}
This ensures that the optimal objective has a multiplicative error of $1+\eps$ with respect to the unregularized problem.

Let $\mcal L_{\bm w}$ denote the unnormalized Laplacian matrix with respect to the edge weights $\bm w$, \ie{}, 
\[
    \mcal L_{\bm w} = BWB^\top,
\]
where $B\in \R^{V\times E}$ is the signed edge-vertex incidence matrix (under some arbitrary orientation of the edges) and $W \in \R^{E\times E}$ is the matrix of edge weights.
Let $f_{\bm w}(\bm y) = \sum_{uv\in E} w_{uv} \abs{y_u - y_v}$
and $g_{\bm w}(\bm y) \coloneqq \frac\eps2 \bm y^\top \mcal L_{\bm w} \bm y$.
Then we can rewrite the problem as
\begin{align*}
    \begin{array}{llr}
        \text{minimize} & f_{\bm w}(\bm y) + g_{\bm w}(\bm y) & \\
        \text{subject to} & y_{t_0}-y_{s_0} = 1 \\
        &y_s = y_{s_0} & \forall s\in S \\
        &y_t = y_{t_0} & \forall t\in T \\
        & \iprod{\ones, y} = 0 & \\
        & y\in [-1, 1]^V\cap [y_{t_0}, y_{s_0}]^V &
    \end{array}
\end{align*}

Let $\lambda_1 \leq \dots \leq \lambda_n$ denote the eigenvalues of $\mcal L_{\bm w}$.
It is known that the first eigenvalue $\lambda_1 = 0$ corresponds to the all-ones vector $\ones$,
which can break the strong convexity of the
objective function if we optimize over
vectors in $[-1, 1]^V.$
However,
since we restrict ourselves to work in the subspace $\iprod{\ones, \bm y}=0$,
we see that $g_{\bm w}(\bm y) = \frac\eps2 \bm y^\top\mcal L_{\bm w} \bm y$ is $\eps \lambda_{n}$-smooth and $\eps \lambda_2$-strongly convex.

We now prove \Cref{thm:min-s-t-cut-fractional}.

\begin{remark}
    We assume without loss of generality that
    $\lambda_2 > 0.$ Indeed, if $\lambda_2 = 0$
    then our bounds trivially hold under the convention $\nicefrac10 = \infty$.
\end{remark}

We wish to apply \Cref{thm:PGM} and \Cref{lem:one perturbation} over a small neighborhood of $\bm w$.
Recall by \Cref{lem:one perturbation} that it does not suffice to apply \Cref{thm:PGM} just for $\bm w, \tilde{\bm w}$.
We must perform the analysis for all $\hat{\bm w}$ in a neighborhood of $\bm w$
and $\tilde{\bm w}$ obtained from $\hat{\bm w}$ by perturbing a single coordinate.
To do so,
we first check that $f_{\bm w}$ and $g_{\bm w}$ satisfy \Cref{asm:PGM}.
Fix $\hat{\bm w}, \tilde{\bm w}$ in a small neighborhood of $w$.
Let $0\neq \delta\in \R$ be a real number arbitrarily in small absolute value
and suppose that we obtain $\tilde{\bm w}$ from $\hat{\bm w}$ by perturbing a single coordinate of $\hat{\bm w}$. 
We have the following properties:
\begin{enumerate}
    \item $f_{\hat{\bm w}}, f_{\tilde{\bm w}}$ are convex and continuous over $K$;
    \item $g_{\hat{\bm w}}, f_{\tilde{\bm w}}$ are $\sigma_{\bm w}$-strongly convex over $K$ for $\sigma_{\bm w} = \eps \lambda_2(\mcal L_{\bm w})/2$,
    that depends only on $\bm w$ (assuming $\hat{\bm w}, \tilde{\bm w}$ lies in the neighborhood of $\bm w$);
    \item for sufficiently small $\delta$,
    $g_{\hat{\bm w}}, g_{\tilde{\bm w}}$ is $L_{\bm w}$-smooth over $K$ for $L_{\bm w} = 2 \lambda_n(\mcal L_{\bm w})$ (assuming $\hat{\bm w}, \tilde{\bm w}$ lies in the neighborhood of $w$); 
    \item for every $x\in K$, $\frac1{L_{\bm w}} \norm{\grad g_{\hat{\bm w}}(\bm x) - \grad g_{\tilde{\bm w}}(\bm x)}_2 \leq C_{\bm w} \abs{\delta}$ 
    for $C_{\bm w} = O(\nicefrac\eps{L_{\bm w}})$; (\Cref{lem:min cut gradient})
    \item for every $\bm x\in K$, $\norm{\prox_{L_{\bm w}, f_{\hat{\bm w}}}(\bm x) - \prox_{L_{\bm w}, f_{\tilde{\bm w}}}(\bm x)}_2 \leq D_{\bm w} \abs{\delta}$ 
    for $D_{\bm w} = O(\nicefrac1{L_{\bm w}})$. (\Cref{lem:min cut prox})
\end{enumerate}    

\begin{lemma}\label{lem:min cut gradient}
    Suppose we obtain $\tilde{\bm w}$ from $\hat{\bm w}$ by perturbing the $uv$-th coordinate by some $\delta \neq 0$.
    Then
    \begin{align*}
        \frac1{L_{\bm w}} \norm{\grad g_{\hat{\bm w}}(\bm y) - \grad g_{\tilde{\bm w}} (\bm y)}_2
        = O\left( \frac{\abs{\delta}}{L_{\bm w}} \right).
    \end{align*}
\end{lemma}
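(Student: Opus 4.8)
The plan is to compute the gradient of $g_{\bm w}$ explicitly and observe that perturbing a single edge weight changes only a single term of the quadratic form. Recall $g_{\bm w}(\bm y) = \frac{\eps}{2}\bm y^\top \mcal L_{\bm w}\bm y = \frac{\eps}{2}\sum_{ab\in E} w_{ab}(y_a - y_b)^2$. Differentiating, the $c$-th coordinate of $\grad g_{\bm w}(\bm y)$ is $\eps\sum_{b: cb\in E} w_{cb}(y_c - y_b)$. Since $\tilde{\bm w}$ differs from $\hat{\bm w}$ only in the $uv$-th coordinate (by $\delta$), the vector $\grad g_{\hat{\bm w}}(\bm y) - \grad g_{\tilde{\bm w}}(\bm y)$ is supported on coordinates $u$ and $v$ only, with value $-\eps\delta(y_u - y_v)$ at coordinate $u$ and $+\eps\delta(y_u - y_v)$ at coordinate $v$ (up to signs fixed by the chosen orientation). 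Hence $\norm{\grad g_{\hat{\bm w}}(\bm y) - \grad g_{\tilde{\bm w}}(\bm y)}_2 = \sqrt{2}\,\eps\abs{\delta}\abs{y_u - y_v}$.

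Next I would bound $\abs{y_u - y_v}$ uniformly over the feasible region $K$. By the constraints of \Cref{eq:s-t-min-cut-lp}, every feasible $\bm y$ satisfies $y\in [-1,1]^V$, so $\abs{y_u - y_v}\leq 2$ for all $u,v$. Therefore $\norm{\grad g_{\hat{\bm w}}(\bm y) - \grad g_{\tilde{\bm w}}(\bm y)}_2 \leq 2\sqrt{2}\,\eps\abs{\delta} = O(\eps\abs{\delta})$, and dividing by $L_{\bm w}$ gives the claimed bound $\frac{1}{L_{\bm w}}\norm{\grad g_{\hat{\bm w}}(\bm y) - \grad g_{\tilde{\bm w}}(\bm y)}_2 = O\!\left(\frac{\abs{\delta}}{L_{\bm w}}\right)$, where the suppressed constant absorbs the factor of $\eps$ (consistent with the statement $C_{\bm w} = O(\eps/L_{\bm w})$ made just before the lemma). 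Note this also matches the remark after \Cref{asm:PGM}: here $F(\bm y)$ has $uv$-component $\frac{\eps}{2}(y_u-y_v)^2$, and the operator-norm bound on $\grad F(\bm y)^\top$ over the compact set $K$ is exactly what we are computing by hand.

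I do not anticipate a genuine obstacle here — the computation is elementary. The only point requiring minor care is bookkeeping the sign/orientation conventions in $B$ so that the difference of gradients is correctly identified as a two-sparse vector, and confirming that the relevant bound $\abs{y_u-y_v}\leq 2$ (rather than, say, the tighter $\abs{y_{s_0}-y_{t_0}}=1$ bound which only applies to the special pair) is the right uniform estimate over all of $K$. Since the final bound is stated with an $O(\cdot)$ that hides the constant $\eps$, there is no need to optimize constants. I would also remark that one could alternatively invoke the generic argument from the discussion following \Cref{asm:PGM}, but the direct computation is cleaner and makes the two-sparsity structure transparent.
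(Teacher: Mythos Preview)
Your proposal is correct and essentially identical to the paper's proof: both compute $\grad g_{\bm w}(\bm y)=\eps\mcal L_{\bm w}\bm y$, observe that perturbing a single edge weight makes the gradient difference two-sparse (supported on $u,v$), and bound the entries using $\bm y\in[-1,1]^V$. The only cosmetic difference is that the paper writes the norm as $\frac{\eps}{L_{\bm w}}\sqrt{(\delta y_u)^2+(\delta y_v)^2}$ and bounds by $\sqrt{2}\eps|\delta|/L_{\bm w}$ directly via $|y_u|,|y_v|\leq 1$, whereas you carry the factor $|y_u-y_v|$ and bound it by $2$; both give $O(\eps|\delta|/L_{\bm w})$.
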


\begin{proof}
    We have
    \begin{align*}
        \frac1{L_{\bm w}} \norm{\grad g_{\hat{\bm w}}(\bm y) - \grad g_{\tilde{\bm w}} (\bm y)}_2
        = \frac\eps{L_{\bm w}} \norm{(\mcal L_{\hat{w}} - \mcal L_{\tilde{w}}) y}_2 
        = \frac\eps {L_{\bm w}} \sqrt{(\delta y_u)^2 + (\delta y_v)^2} 
        \leq \frac{\sqrt2 \eps  \abs{\delta}}{L_{\bm w}},
    \end{align*}
    where we used the fact $\bm y\in [-1, 1]^V$ in the inequality.
\end{proof}

\begin{lemma}\label{lem:min cut prox}
    Suppose we obtain $\tilde{\bm w}$ from $\hat{\bm w}$ by perturbing the $uv$-th coordinate by some $\delta \neq 0$.
    Then for every $\bm x\in K$,
    \[
        \norm{\prox_{L_{\bm w}, f_{\hat{\bm w}}}(\bm x) - \prox_{L_{\bm w}, f_{\tilde{\bm w}}}(\bm x)}_2
        =  O\left( \frac{\abs{\delta}}{L_{\bm w}} \right).
    \]
\end{lemma}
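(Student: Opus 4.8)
\textbf{Proof proposal for \Cref{lem:min cut prox}.}
The plan is to use the standard stability argument for minimizers of strongly convex functions. Write $L = L_{\bm w}$ and let $\bm z^\star \coloneqq \prox_{L, f_{\hat{\bm w}}}(\bm x)$ and $\tilde{\bm z}^\star \coloneqq \prox_{L, f_{\tilde{\bm w}}}(\bm x)$, i.e.\ $\bm z^\star$ minimizes $\phi(\bm z) \coloneqq \frac{L}{2}\norm{\bm z - \bm x}_2^2 + f_{\hat{\bm w}}(\bm z)$ over $K$ and $\tilde{\bm z}^\star$ minimizes $\tilde\phi(\bm z) \coloneqq \frac{L}{2}\norm{\bm z - \bm x}_2^2 + f_{\tilde{\bm w}}(\bm z)$ over $K$. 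Since $f_{\hat{\bm w}}, f_{\tilde{\bm w}}$ are convex, both $\phi$ and $\tilde\phi$ are $L$-strongly convex (the quadratic term alone contributes $L$-strong convexity). Hence the first-order optimality conditions give $\phi(\tilde{\bm z}^\star) \geq \phi(\bm z^\star) + \frac{L}{2}\norm{\bm z^\star - \tilde{\bm z}^\star}_2^2$ and $\tilde\phi(\bm z^\star) \geq \tilde\phi(\tilde{\bm z}^\star) + \frac{L}{2}\norm{\bm z^\star - \tilde{\bm z}^\star}_2^2$. Adding these two inequalities and cancelling the common quadratic terms yields
\[
    L\norm{\bm z^\star - \tilde{\bm z}^\star}_2^2
    \leq \left( f_{\hat{\bm w}}(\tilde{\bm z}^\star) - f_{\tilde{\bm w}}(\tilde{\bm z}^\star) \right)
    + \left( f_{\tilde{\bm w}}(\bm z^\star) - f_{\hat{\bm w}}(\bm z^\star) \right).
\]

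Next I would exploit the fact that $\tilde{\bm w}$ differs from $\hat{\bm w}$ only in the $uv$-th coordinate by $\delta$, so that for any $\bm z$ we have $f_{\hat{\bm w}}(\bm z) - f_{\tilde{\bm w}}(\bm z) = -\delta\abs{z_u - z_v}$. Substituting this into the displayed bound gives
\[
    L\norm{\bm z^\star - \tilde{\bm z}^\star}_2^2
    \leq -\delta\abs{\tilde z_u^\star - \tilde z_v^\star} + \delta\abs{z_u^\star - z_v^\star}
    \leq \abs{\delta}\cdot \bigl| \abs{z_u^\star - z_v^\star} - \abs{\tilde z_u^\star - \tilde z_v^\star} \bigr|.
\]
By the reverse triangle inequality, $\bigl| \abs{z_u^\star - z_v^\star} - \abs{\tilde z_u^\star - \tilde z_v^\star} \bigr| \leq \abs{(z_u^\star - \tilde z_u^\star) - (z_v^\star - \tilde z_v^\star)} \leq \abs{z_u^\star - \tilde z_u^\star} + \abs{z_v^\star - \tilde z_v^\star} \leq \sqrt{2}\norm{\bm z^\star - \tilde{\bm z}^\star}_2$, the last step since only two coordinates are involved. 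Combining, $L\norm{\bm z^\star - \tilde{\bm z}^\star}_2^2 \leq \sqrt{2}\abs{\delta}\norm{\bm z^\star - \tilde{\bm z}^\star}_2$, and dividing through (the case $\bm z^\star = \tilde{\bm z}^\star$ is trivial) gives $\norm{\bm z^\star - \tilde{\bm z}^\star}_2 \leq \sqrt{2}\abs{\delta}/L = O(\abs{\delta}/L_{\bm w})$, as desired.

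I do not expect a serious obstacle here; this is the textbook perturbation bound for strongly convex minimizers, and the only mildly delicate points are (i) recognizing that the $L$-strong convexity of the prox objective comes entirely from its quadratic term so it holds regardless of the (merely convex, possibly nonsmooth) $f_{\bm w}$, and (ii) feeding the term $\bigl|\abs{z_u^\star - z_v^\star} - \abs{\tilde z_u^\star - \tilde z_v^\star}\bigr|$ back into $\norm{\bm z^\star - \tilde{\bm z}^\star}_2$ so that it can be absorbed on the left-hand side — which works precisely because the perturbation touches a single edge $uv$, giving the $\sqrt{2}$ factor rather than something scaling with $n$. One should also confirm that $\tilde{\bm z}^\star, \bm z^\star \in K$ so that the optimality inequalities are applicable with the competitor points chosen, which is immediate since both are outputs of the prox restricted to $K$.
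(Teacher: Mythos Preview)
Your proposal is correct and follows essentially the same approach as the paper: both use the $L$-strong convexity of the proximal objective to get two quadratic-growth inequalities at the respective minimizers, add them so the quadratic-in-$\bm x$ terms cancel, then exploit that $f_{\hat{\bm w}} - f_{\tilde{\bm w}}$ is supported on the single edge $uv$ and bound the resulting expression by $\sqrt{2}\,\abs{\delta}\,\norm{\bm z^\star - \tilde{\bm z}^\star}_2$ via the reverse triangle inequality. The only cosmetic difference is that the paper phrases the first step through the nonnegativity of the directional derivative (subgradient optimality) at the constrained minimizer, whereas you invoke the quadratic-growth form of strong convexity directly; both yield the same inequality and the same constant $\sqrt{2}/L_{\bm w}$.
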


\begin{proof}
    Let $\hat{\bm z} = \prox_{L_{\bm w}, f_{\hat{\bm w}}}(\bm y)$ and $\tilde{\bm z} = \prox_{L_{\bm w}, f_{\tilde{\bm w}}}(\bm y)$.
    By the $L_{\bm w}$-strong convexity of the proximal objective $\bar f_{\hat{\bm w}, \bm y}(\cdot) \coloneqq \frac{ L_{\bm w}}2 \norm{\cdot -\bm y}_2^2 + f_{\hat{\bm w}}(\cdot)$, 
    we see that
    \begin{align*}
        \frac{L_{\bm w}}2 \norm{\hat{\bm z}-\tilde{\bm z}}_2^2 
        &\leq f_{\hat{\bm w}}(\tilde{\bm z}) + \frac{L_{\bm w}}2 \norm{\tilde{\bm z}-\bm y}_2 - f_{\hat{\bm w}}(\hat{\bm z}) - \frac{L_{\bm w}}2 \norm{\hat{\bm z}-\bm y}_2 - \underbrace{\iprod{\grad \bar f_{\hat{\bm w}, y}(\hat{\bm z}), \tilde{\bm z}-\hat{\bm z}}}_{\geq 0} \\
        &\leq f_{\hat{\bm w}}(\tilde{\bm z}) + \frac{L_{\bm w}}2 \norm{\tilde{\bm z}-\bm y}_2 - f_{\hat{\bm w}}(\hat{\bm z}) - \frac{L_{\bm w}}2 \norm{\hat{\bm z}-\bm y}_2
    \end{align*}
    and vice versa for $\frac{L_{\bm w}}2 \norm{\cdot -\bm y}_2^2 + f_{\tilde{\bm w}}(\cdot)$.
    The second inequality here follows from the fact that the directional derivatives in every feasible direction is non-negative
    for any (local) minima of $\bar f_{\hat{\bm w}, \bm y}$.
    Adding these two inequalities leads to the regularization terms canceling out, leaving
    \begin{align*}
        \frac{L_{\bm w}+L_{\bm w}}2 \norm{\hat{\bm z} - \tilde{\bm z}}_2^2
        &\leq f_{\hat{\bm w}}(\tilde{\bm z}) - f_{\hat{\bm w}}(\hat{\bm z}) + f_{\tilde{\bm w}}(\hat{\bm z}) - f_{\tilde{\bm w}}(\tilde{\bm z}) \\
        &= [f_{\hat{\bm w}}(\tilde{\bm z}) - f_{\tilde{\bm w}}(\tilde{\bm z})] - [f_{\hat{\bm w}}(\hat{\bm z}) - f_{\tilde{\bm w}}(\hat{\bm z})] \\
        &= (\hat{w}_{uv} - \tilde{w}_{uv})\abs{\tilde{z}_u - \tilde{z}_v} -  (\hat{w}_{uv} - \tilde{w}_{uv})\abs{\hat{z}_u - \hat{z}_v} \\
        &\leq \abs{\delta}\cdot \abs*{\abs{\tilde{z}_u - \tilde{z}_v} - \abs{\hat{z}_u - \hat{z}_v}} \\
        &\leq \abs{\delta}\cdot \abs{\tilde{z}_u - \tilde{z}_v - \hat{z}_u + \hat{z}_v} \tag{reverse triangle inequality} \\
        &= \abs{\delta} \sqrt{(\tilde{z}_u - \tilde{z}_v - \hat{z}_u + \hat{z}_v)^2} \\
        &= \abs{\delta} \sqrt{[(\tilde{z}_u - \hat{z}_u) + (\hat{z}_v - \tilde{z}_v)]^2} \\
        &\leq \abs{\delta} \sqrt{2(\tilde{z}_u - \hat{z}_u)^2 + 2(\hat{z}_v - \tilde{z}_v)^2} \tag{convexity of $x^2$} \\
        &\leq \sqrt2 \delta \norm{\hat{\bm z} - \tilde{\bm z}}_2.
    \end{align*}
    It follows that
    \[
        \norm{\hat{\bm z}-\tilde{\bm z}}_2
        \leq \frac{\sqrt2 \abs{\delta}}{L_{\bm w}}.
        \qedhere
    \]
\end{proof}

Equipped with the previous results,
we see that we can apply \Cref{thm:PGM}.
This gives the following guarantee
regarding the distance of the optimal
solutions under two different executions.
\begin{corollary}\label{cor:pgm-min-st-cut-lipschitz-fractional}
    Let $G=(V,E)$ be a graph and $\bm w \in \R_{\geq 0}^E$ be a weight vector.
    Let $\hat{\bm w}, \tilde{\bm w} \in \R_{\geq 0}^E$ be two weight vectors in a sufficiently
    small neighborhood of $\bm w$, where $\tilde{\bm w} = \hat{\bm w} + \delta \ones_e,$ for some sufficiently 
    small $\delta > 0$ and some edge $e \in E$. Let 
    $\hat{\bm y}^*, \tilde{\bm y}^*$ be the optimal
    solutions of the regularized
    objectives $h_{\hat{\bm w}}, h_{\tilde{\bm w}},$ respectively. Then,
    \[
    \norm{\hat{\bm y}^* - \tilde{\bm y}^*}_2
    \leq \frac{L_{\bm w}(C_{\bm w}+D_{\bm w})}{\sigma_{\bm w}} \abs{\delta}
    = O\left( \frac{\abs{\delta} }{\eps \lambda_2} \right).
\]  
\end{corollary}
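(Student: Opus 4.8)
The plan is to invoke \Cref{thm:PGM} with the parameters identified in the enumerated list preceding the corollary, and then simply substitute the explicit values of $\sigma_{\bm w}, L_{\bm w}, C_{\bm w}, D_{\bm w}$ to obtain the claimed asymptotic bound. First I would verify that \Cref{asm:PGM} holds for $f, g$ on a sufficiently small neighborhood of $\bm w$: the convexity and continuity of $f_{\hat{\bm w}}, f_{\tilde{\bm w}}$ are immediate (they are weighted $\ell_1$-type functions); the $\sigma_{\bm w}$-strong convexity and $L_{\bm w}$-smoothness of $g_{\hat{\bm w}}, g_{\tilde{\bm w}}$ over $K$ follow from the fact that $g_{\bm w}(\bm y) = \frac{\eps}{2}\bm y^\top \mcal L_{\bm w}\bm y$ restricted to the subspace $\iprod{\ones, \bm y} = 0$ has eigenvalues bracketed by $\eps\lambda_2(\mcal L_{\bm w})$ and $\eps\lambda_n(\mcal L_{\bm w})$, and these eigenvalues vary continuously in the weights, so shrinking the neighborhood lets us pass to the fixed reference values $\sigma_{\bm w} = \eps\lambda_2(\mcal L_{\bm w})/2$ and $L_{\bm w} = 2\eps\lambda_n(\mcal L_{\bm w})$ (a factor-$2$ slack absorbs the perturbation). \Cref{item:c_1} is exactly \Cref{lem:min cut gradient}, which gives $C_{\bm w} = O(\eps/L_{\bm w})$, and \Cref{item:c_2} is exactly \Cref{lem:min cut prox}, which gives $D_{\bm w} = O(1/L_{\bm w})$.

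Second, with \Cref{asm:PGM} in hand, \Cref{thm:PGM} applied to $h_{\hat{\bm w}}$ and $h_{\tilde{\bm w}}$ (whose unique optimal solutions are $\hat{\bm y}^*, \tilde{\bm y}^*$) yields
\[
    \norm{\hat{\bm y}^* - \tilde{\bm y}^*}_2
    \leq \frac{L_{\bm w}(C_{\bm w} + D_{\bm w})}{\sigma_{\bm w}}\abs{\delta}.
\]
Third, I would plug in the values: $L_{\bm w}(C_{\bm w} + D_{\bm w}) = L_{\bm w}\cdot O(\eps/L_{\bm w} + 1/L_{\bm w}) = O(\eps)$ (using $\eps \leq$ a constant, or more carefully $O(\eps + 1) = O(1)$ times $\eps$ — in any case $O(\eps)$ suffices since $C_{\bm w}$ already carries the $\eps$ factor and $D_{\bm w}$ is $O(1/L_{\bm w})$ with $L_{\bm w} = \Theta(\eps\lambda_n)$, so $L_{\bm w}D_{\bm w} = O(1)$), and then dividing by $\sigma_{\bm w} = \eps\lambda_2/2$ gives $O(\eps)/(\eps\lambda_2) = O(1/(\eps\lambda_2))$ after bookkeeping, matching the stated $O\!\left(\frac{\abs{\delta}}{\eps\lambda_2}\right)$. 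I should double-check the $\eps$-dependence carefully: since $L_{\bm w} = \Theta(\eps\lambda_n)$, we get $L_{\bm w}C_{\bm w} = O(\eps)$ and $L_{\bm w}D_{\bm w} = O(1)$, so the numerator is $O(1)$, and dividing by $\sigma_{\bm w} = \Theta(\eps\lambda_2)$ gives $O(1/(\eps\lambda_2))$ — consistent.

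The main obstacle, such as it is, is bookkeeping rather than conceptual: ensuring that the strong-convexity and smoothness constants can genuinely be taken \emph{uniform} over the neighborhood and \emph{dependent only on $\bm w$}, as \Cref{asm:PGM} demands, rather than on $\hat{\bm w}$ or $\tilde{\bm w}$. This is handled by the continuity of the Laplacian eigenvalues in the edge weights together with the freedom to shrink the neighborhood and insert constant-factor slack (hence the $\frac12$ and the $2$ in the definitions of $\sigma_{\bm w}, L_{\bm w}$). A secondary point worth a sentence is the degenerate case $\lambda_2 = 0$, which the preceding remark dispatches via the convention $\nicefrac10 = \infty$, so the bound holds vacuously there. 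With those caveats noted, the corollary follows by direct substitution.
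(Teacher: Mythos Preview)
Your proposal is correct and follows essentially the same approach as the paper: verify \Cref{asm:PGM} via the enumerated list (with \Cref{lem:min cut gradient} and \Cref{lem:min cut prox} supplying \Cref{item:c_1} and \Cref{item:c_2}), apply \Cref{thm:PGM}, and substitute. The only cosmetic difference is that the paper's enumerated list records $L_{\bm w} = 2\lambda_n(\mcal L_{\bm w})$ rather than your $2\eps\lambda_n(\mcal L_{\bm w})$; either choice is a valid smoothness upper bound and yields the same final $O(\abs{\delta}/(\eps\lambda_2))$ after dividing by $\sigma_{\bm w} = \Theta(\eps\lambda_2)$.
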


By~\Cref{cor:pgm-min-st-cut-lipschitz-fractional} and~\Cref{lem:one perturbation}, we obtain~\Cref{thm:min-s-t-cut-fractional}.

To extract a cut from a fractional solution $\bm y$,
we must round the fractional solution to an integral one.
Consider the standard threshold rounding scheme described below:
Sample $\tau\sim U[-1,1]$ and take the threshold cut $A_\tau \coloneqq \set{v\in V: y_v\leq \tau}$.
It is not hard to see that the following holds:
\begin{lemma}[Threshold Rounding]\label{lem:threshold-rounding}
    We have $\E_\tau [\cut_{G,\bm w}(A_\tau)] = \frac12 \sum_{uv \in E} w_{uv} \abs{y_u - y_v} = \frac12 f_{\bm w}(\bm y)$.
    Moreover, we have $\E_{\tau} |A_\tau \triangle \tilde A_\tau| = O(\|\bm y - \tilde{\bm y}\|_1)$.
\end{lemma}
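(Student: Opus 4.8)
The plan is to verify both claims by a direct per-coordinate computation, exploiting the fact that every feasible solution of \Cref{eq:s-t-min-cut-lp} lies in $[-1,1]^V$, so that a uniform threshold $\tau\sim U[-1,1]$ (which has density $\tfrac12$ on $[-1,1]$) falls into any subinterval of $[-1,1]$ with probability exactly half its length. Throughout, the only quantities I need are these elementary hitting probabilities combined with linearity of expectation.

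For the first claim, I would fix an edge $uv\in E$ and note that $uv$ is cut by $A_\tau$ precisely when exactly one of the events $y_u\le\tau$, $y_v\le\tau$ occurs, i.e. when $\tau$ lies between $y_u$ and $y_v$. Since the interval with endpoints $y_u,y_v$ is contained in $[-1,1]$, this event has probability $\tfrac12|y_u-y_v|$; the choice of $\le$ versus $<$ in the definition of $A_\tau$ changes the event only on a $\tau$-null set and is irrelevant. Weighting by $w_{uv}$ and summing over $E$ by linearity of expectation gives $\E_\tau[\cut_{G,\bm w}(A_\tau)]=\tfrac12\sum_{uv\in E}w_{uv}|y_u-y_v|$, which is $\tfrac12 f_{\bm w}(\bm y)$ by the definition of $f_{\bm w}$.

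For the second claim, I would couple $A_\tau$ and $\tilde A_\tau$ through the same draw of $\tau$, where $\tilde{\bm y}$ is the fractional solution produced on the perturbed weights; being feasible for \Cref{eq:s-t-min-cut-lp}, it again lies in $[-1,1]^V$. A vertex $v$ belongs to $A_\tau\triangle\tilde A_\tau$ exactly when $\tau$ lies between $y_v$ and $\tilde y_v$, an event of probability $\tfrac12|y_v-\tilde y_v|$ by the same reasoning. Summing over $v\in V$ by linearity of expectation yields $\E_\tau|A_\tau\triangle\tilde A_\tau|=\tfrac12\|\bm y-\tilde{\bm y}\|_1=O(\|\bm y-\tilde{\bm y}\|_1)$. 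I do not expect any genuine obstacle in this proof: the only care needed is to confirm that every coordinate of both $\bm y$ and $\tilde{\bm y}$ lies in $[-1,1]$ so that the "interval length over $2$" probabilities are exact (and even without that, one always gets the inequality $\le\tfrac12|y_v-\tilde y_v|$, which already suffices for the stated bound).
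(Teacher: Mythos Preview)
Your proposal is correct and is exactly the standard threshold-rounding computation the paper has in mind; indeed, the paper does not even spell out a proof of this lemma (it writes ``It is not hard to see that the following holds''), and the identical per-coordinate shared-$\tau$ argument appears verbatim later in the paper in the proof of \Cref{thm:s-t-cut-edge-set}.
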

We say that a set $A \subseteq V$ is \emph{feasible} if $S \sset A \sset V\setminus T$ and \emph{infeasible} otherwise.
Then, the threshold rounding may output an empty cut or the whole vertex set, which are infeasible, with probability (at most) $\nicefrac12$.
We develop various techniques for overcoming this issue in the next subsections.

\subsection{Rounding via Exponential Mechanism}\label{subsec:exp-mech-vertex-S-T-cut}
We can round a fractional solution through the \emph{exponential mechanism} \cite{mcsherry2007mechanism}.
We use a slightly modified version of its guarantees due to \cite{varma2021average}.

\begin{theorem}[Exponential Mechanism; Lemma 2.1 in \cite{varma2021average}]\label{thm:exponential-mechanism}
    Let $\eta > 0$ and $\mcal A_\eta$ be the algorithm that, given a vector $\bm x\in \R^k$, samples an index $i\in [k]$ with probability proportional to $\exp(-\eta x_i)$.
    Then, we have
    \[
        \Pr_{i\sim \mcal A_\eta(x)} \left[ x_i \geq \min_{j\in [k]} x_j + \frac{\log k}\eta + \frac{t}\eta \right]
        \leq \exp(-t).
    \]
    Moreover for $\tilde{\bm x} \in \R^k$,
    \[
        \TV(\mcal A_\eta(\bm x), \mcal A_\eta(\tilde{\bm x}))
        = O(\eta \norm{\bm x-\tilde{\bm x}}_1).
    \]
\end{theorem}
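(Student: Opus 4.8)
The plan is to prove the two assertions separately, both by direct computation on the explicit sampling law $p_i \coloneqq \Pr_{i\sim\mcal A_\eta(\bm x)}[i] = \exp(-\eta x_i)/Z$, where $Z \coloneqq \sum_{j\in[k]}\exp(-\eta x_j)$; the argument will be entirely elementary, the only mildly delicate point being the behavior of the normalizing constant under perturbation.

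For the tail bound I would argue as follows. Let $m \coloneqq \min_{j\in[k]}x_j$ and fix a threshold $c$. Then
\[
    \Pr_{i\sim\mcal A_\eta(\bm x)}\bigl[x_i \ge c\bigr]
    = \frac{1}{Z}\sum_{j:\,x_j\ge c}\exp(-\eta x_j)
    \le \frac{k\exp(-\eta c)}{Z}
    \le \frac{k\exp(-\eta c)}{\exp(-\eta m)}
    = k\exp\bigl(-\eta(c-m)\bigr),
\]
where the first inequality uses that the numerator has at most $k$ terms, each at most $\exp(-\eta c)$, and the second keeps only the minimizing term in $Z$. Substituting $c = m + (\log k)/\eta + t/\eta$ makes the right-hand side exactly $k\exp(-\log k - t) = \exp(-t)$, which is the claim.

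For the total-variation bound, the plan is to first control the pointwise likelihood ratio and then invoke the overlap identity $\TV(p,\tilde p) = 1 - \sum_i \min(p_i,\tilde p_i)$. Write $\tilde p_i$, $\tilde Z$ for the quantities associated with $\tilde{\bm x}$, and set $\Delta \coloneqq \norm{\bm x - \tilde{\bm x}}_\infty$. Coordinatewise, $\exp(-\eta(x_i - \tilde x_i)) \ge \exp(-\eta\Delta)$, and since $\tilde Z = \sum_j \exp(-\eta x_j)\exp(-\eta(\tilde x_j - x_j)) \ge \exp(-\eta\Delta)\,Z$, we get
\[
    \frac{p_i}{\tilde p_i} = \exp\bigl(-\eta(x_i - \tilde x_i)\bigr)\cdot\frac{\tilde Z}{Z} \ge \exp(-2\eta\Delta)\qquad\text{for every } i\in[k].
\]
Hence $\min(p_i,\tilde p_i) \ge \exp(-2\eta\Delta)\,\tilde p_i$ for all $i$, so $\sum_i \min(p_i,\tilde p_i) \ge \exp(-2\eta\Delta)$, and therefore
\[
    \TV\bigl(\mcal A_\eta(\bm x),\mcal A_\eta(\tilde{\bm x})\bigr) = 1 - \sum_{i\in[k]}\min(p_i,\tilde p_i) \le 1 - \exp(-2\eta\Delta) \le 2\eta\Delta \le 2\eta\norm{\bm x - \tilde{\bm x}}_1,
\]
using $1 - e^{-u}\le u$ and $\norm{\cdot}_\infty \le \norm{\cdot}_1$. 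This yields the stated $O(\eta\norm{\bm x-\tilde{\bm x}}_1)$ bound (indeed with the sharper $\ell_\infty$-norm).

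The computations are routine; the one step that needs a moment of thought is the partition-function estimate $\tilde Z/Z \ge \exp(-\eta\Delta)$, which is precisely what a naive per-coordinate comparison of $p_i$ and $\tilde p_i$ misses. If one preferred to avoid the overlap identity, an equivalent fallback is a hybrid argument: change a single coordinate of $\bm x$ at a time, bound the TV incurred by a coordinate change of size $\delta_j$ by $O(\eta\abs{\delta_j})$ via the same normalizing-constant estimate, and sum using the triangle inequality for $\TV$; or, infinitesimally, parametrize the segment $\bm x(s) = \bm x + s(\tilde{\bm x}-\bm x)$, observe that $\tfrac{d}{ds}p_i(s) = -\eta\, p_i(s)\bigl(\delta_i - \E_{p(s)}[\delta]\bigr)$, and integrate the bound $\sum_i \abs{\tfrac{d}{ds}p_i(s)} \le 2\eta\norm{\delta}_\infty$ over $s\in[0,1]$.
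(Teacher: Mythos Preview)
Your proof is correct. The paper does not prove this theorem in the main body (it is cited from \cite{varma2021average}); however, \Cref{sec:shared-randomness} does prove the equivalent bound $\|\bm p-\tilde{\bm p}\|_1 \le O(\eta\|\bm x-\tilde{\bm x}\|_1)$, and that argument is precisely your ``fallback'' hybrid approach: reduce by the triangle inequality to a single-coordinate perturbation $\tilde x_{\hat\imath}=x_{\hat\imath}+\delta$, split $|p_i-\tilde p_i|$ into a direct term $\tfrac1Z|\exp(-\eta x_i)-\exp(-\eta\tilde x_i)|$ and a normalizing-constant term $|Z-\tilde Z|/Z$, and bound each by $O(\eta\delta)$ using $1-e^{-u}\le u$. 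Your primary route via the overlap identity and the uniform likelihood-ratio bound $p_i/\tilde p_i\ge e^{-2\eta\Delta}$ is shorter and yields the sharper $\ell_\infty$ dependence in one shot, which the paper's coordinate-by-coordinate computation does not directly give. The tail bound is not proved anywhere in the paper; your derivation of it is the standard one and is fine.
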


Let $\gamma\in (0, 1)$ be a parameter such that $1/\gamma$ is an integer. 
If we know that the optimal $S$-$T$ cut $A$ has cardinality $[\gamma(i-1)n,\gamma i n]$ for an integer $i \in [1/\gamma]$, then we have $\ones_A- \frac{|A|}{n} \ones \in [-\gamma i, 1-(i-1)\gamma ]^V$.
Hence, we consider the following (regularized) convex relaxation
of the problem.
\begin{align}
    \begin{array}{llr}
        \text{minimize} & f_{\bm w}(\bm y) + g_{\bm w}(\bm y) & \\
        \text{subject to} & y_{s_0}-y_{t_0} = 1 \\
        &y_s = y_{s_0} & \forall s\in S \\
        &y_t = y_{t_0} & \forall t\in T \\
        & \iprod{\ones, y} = 0 & \\
        & \bm y\in [-\gamma i, 1-(i-1)\gamma]^{V}\cap [y_{t_0}, y_{s_0}]^{V} &
    \end{array}
    \label[LP]{eq:lp-min-s-t-cut-restricted-domain}
\end{align}
We solve this LP for each $i \in [1/\gamma]$ and then apply the exponential mechanism to select a solution.
The algorithm is summarized as~\Cref{alg:min-s-t-cut-additive}.
\begin{algorithm}[htp!]
    \caption{Minimum Vertex $S$-$T$ Cut via Exponential Mechanism}\label{alg:min-s-t-cut-additive}
    \KwIn{A graph $G=(V,E)$, a weight vector $\bm w \in \mathbb{R}_{\geq 0}^E$, and $\gamma\in (0, 1)$}
    \For{$i \in [\gamma^{-1}]$}{
        Solve~\Cref{eq:lp-min-s-t-cut-restricted-domain} for $i$ using \Cref{thm:min-s-t-cut-fractional} with $\eps \coloneqq \nicefrac1{\sqrt{n}}$\;
        Let $\theta^{(i)} \in \mathbb{R}$ and $\bm y^{(i)} \in \mathbb{R}^V$ be the obtained objective value and solution, respectively.
    }
    Sample $\Lambda_2 \sim U[\nicefrac{\lambda_2}{2}, \lambda_2]$\;
    Apply the exponential mechanism on $\bm \theta=(\theta^{(1)},\ldots,\theta^{(1/\gamma)})$ with $\eta = \gamma  \eps^{-1}\Lambda_2^{-1} /\sqrt{n}$\;
    Let $i^* \in [n]$ be the chosen index\;
    Apply the threshold rounding on $\bm y^{(i^*)}$, where we sample the threshold $\tau$ uniformly from $[-\gamma i^*,1-(i^*-1)\gamma]$\;
\end{algorithm}

This leads to a proof of \Cref{thm:s-t-min-cut-additive},
our first main result in this section.
\begin{proof}[Proof of \Cref{thm:s-t-min-cut-additive}]
    We first analyze the solution quality.
    By combining \Cref{thm:min-s-t-cut-fractional} and \Cref{thm:exponential-mechanism} with $t=\log \nicefrac1\gamma$, 
    we see that with probability at least $1-\gamma$, 
    the selected fractional solution has objective at most
    \[
        (1+\eps) \OPT + O\left(\frac{\log \gamma^{-1}}{\eta}\right)
        = (1+\eps) \OPT + O\left(\frac{\eps \lambda_2 \sqrt{n} \log \gamma^{-1}}{\gamma}\right)
        = \left(1+\frac{1}{\sqrt{n}}\right) \OPT + O\left(\frac{\lambda_2 \log \gamma^{-1}}{\gamma}\right).
    \]
    Threshold rounding yields a feasible $S$-$T$ cut with probability at least $1-\gamma$
    and conditioned on feasibility,
    the expected cost is the same as the objective of the fractional solution.
    An application of Markov's inequality on the non-negative random variable $\cut_{G, \bm w}(A) - \OPT$ yields the desired approximation guarantees.

    Next, we analyze the pointwise Lipschitz constant.
    By Weyl's inequality (\Cref{thm:weyl-inequality}),
    we have $\abs{\lambda_2 - \tilde \lambda_2}\leq 2\delta$.
    But then we can verify with a straightforward computation that
    \[
        \TV(\Lambda_2, \tilde \Lambda_2)
        = \TV(U[\nicefrac{\lambda_2}2, \lambda_2], U[\nicefrac{\tilde \lambda_2}2, \tilde \lambda_2])
        = O(\nicefrac\delta{\lambda_2})
    \]
    for a sufficiently small $\delta$.
    Let $\bm \theta\in \R^{\gamma^{-1}}$ denote the vector consisting of optimal fractional values of the $\gamma^{-1}$ different convex relaxations we consider.
    Noting that \Cref{thm:min-s-t-cut-fractional} yields 
    \[
        \norm{\bm y - \tilde{\bm y}}_1
        = O\left( \frac{\sqrt{n}\cdot \abs{\delta}}{\eps \lambda_2} \right),
    \]
    and that for every index $i \in [1/\gamma]$
    we have that $\norm{\bm y^{(i)} - \tilde{\bm y}^{(i)}}_1  = O\left( \frac{\sqrt{n}\cdot \abs{\delta}}{\eps \lambda_2} \right)$,
    the pointwise Lipschitz constant is bounded as
    \begin{align*}
        & [1-\TV(\Lambda_2, \tilde \Lambda_2)-\TV(\mcal A(\bm \theta), \mcal A(\tilde{\bm \theta}))]\cdot O\left( \frac{\sqrt{n}}{\eps \lambda_2} \right) +  [\TV(\Lambda_2, \tilde \Lambda_2) + \TV(\mcal A(\bm \theta), \mcal A(\tilde{\bm \theta}))]\cdot n \\
        &= O\left( \frac{\sqrt{n}}{\eps \lambda_2} \right) + O\left(\frac{\eta}{\gamma} + \frac1{\lambda_2} \right)\cdot n 
        = O\left( \frac{n}{\lambda_2} \right) + O\left( \frac1{\lambda_2} \right)\cdot n
        = O\left( \frac{n}{\lambda_2} \right).
    \end{align*}
    The second last inequality follows from the choices $\eps = \nicefrac1{\sqrt{n}}$
    and $\eta =  \gamma \eps^{-1}\Lambda_2^{-1} /\sqrt{n} = \Theta(\nicefrac\gamma{\lambda_2})$.
\end{proof}

\subsection{Rounding via \texorpdfstring{$k$}{k}-Way Submodularity}\label{subsec:vertex-S-T-cut}

Recall that the cut function $\mathrm{cut}_{G, \bm w}: 2^V\to \R$
is a \emph{submodular} function,
\ie{}, for every $A, B\sset V$,
\[
    \mathrm{cut}_{G, \bm w}(A) + \mathrm{cut}_{G, \bm w}(B)
    \geq \mathrm{cut}_{G, \bm w}(A\cup B) + \mathrm{cut}_{G, \bm w}(A\cap B).
\]
To boost the probability of outputting feasible sets, we use the following result regarding submodular functions due to \cite{harvey2006capacity}.
\begin{theorem}[$k$-way submodularity; Theorem 2 in \cite{harvey2006capacity}]\label{thm:k-way submodularity}
    Let $f:2^{\mcal X}\to \R$ be a submodular function
    and $A_1, \dots, A_k\sset \mcal X$ be a collection of $k$ sets.
    For $r\in [k]$,
    define
    \[
        B_{r, k}
        \coloneqq \bigcup_{\set{i_1, \dots, i_r}\sset \binom{[k]}{r}} \bigcap_{j=1}^r A_{i_j}.
    \]
    Then, we have
    \[
        \sum_{r=1}^k f(A_r)
        \geq \sum_{r=1}^k f(B_{r, k}).
    \]
    Note that the case of $k=2$ is simply the definition of submodularity.
\end{theorem}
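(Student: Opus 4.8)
The plan is to prove the claimed inequality by induction on $k$. The base case $k=1$ is trivial since $B_{1,1}=A_1$, and the case $k=2$ is precisely the submodularity inequality defining $f$, as the theorem already notes. The workhorse for the inductive step is a combinatorial reformulation of the sets $B_{r,k}$: writing $\mu(x)\coloneqq\card{\set{i\in[k]:x\in A_i}}$ for the number of input sets containing an element $x$, one checks directly from the definition that $x\in B_{r,k}$ if and only if $\mu(x)\geq r$. Hence if $C_r\coloneqq B_{r,k-1}$ denotes the $r$-th level set of the first $k-1$ sets $A_1,\dots,A_{k-1}$ (with the natural conventions $C_0=\mcal X$ and $C_k=\varnothing$), then the $C_r$ form a decreasing chain $C_1\supseteq C_2\supseteq\cdots\supseteq C_{k-1}$, and splitting each element according to whether it lies in $A_k$ gives the identity $B_{r,k}=(C_r\setminus A_k)\cup(C_{r-1}\cap A_k)$ for every $r\in[k]$.

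First I would apply the inductive hypothesis to $A_1,\dots,A_{k-1}$, which yields $\sum_{i=1}^{k-1}f(A_i)\geq\sum_{r=1}^{k-1}f(C_r)$; it then remains to prove the auxiliary inequality $f(A_k)+\sum_{r=1}^{k-1}f(C_r)\geq\sum_{r=1}^{k}f(B_{r,k})$. I would establish this auxiliary inequality by a telescoping sequence of uncrossings of $A_k$ against the chain. The required set identities are $(A_k\cap C_{r-1})\cup C_r=B_{r,k}$ and $(A_k\cap C_{r-1})\cap C_r=A_k\cap C_r$, valid for all $r\in[k-1]$ because $C_r\sset C_{r-1}$, and degenerating correctly at the endpoints since $A_k\cap C_0=A_k$ and $A_k\cap C_{k-1}=B_{k,k}$. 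Submodularity applied to the pair $A_k\cap C_{r-1}$ and $C_r$ then gives $f(A_k\cap C_{r-1})+f(C_r)\geq f(B_{r,k})+f(A_k\cap C_r)$ for each $r=1,\dots,k-1$; summing these $k-1$ inequalities, the terms $f(A_k\cap C_1),\dots,f(A_k\cap C_{k-2})$ cancel across the two sides, and the surviving terms are exactly $f(A_k)$ on the left (the $r=1$ contribution) and $f(B_{k,k})$ on the right (the $r=k-1$ contribution), which is precisely the auxiliary inequality. Combining it with the inductive hypothesis closes the induction.

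The step I expect to be the main obstacle is the set-algebra bookkeeping: pinning down $B_{r,k}$ through the coverage function $\mu$, verifying the two set identities together with their degenerate cases, and carefully tracking which boundary terms survive the cancellation in the telescoping sum. An alternative route, which I would keep in mind as a cross-check, is to repeatedly uncross an arbitrary crossing pair among $A_1,\dots,A_k$ by replacing it with its union and intersection: each such step preserves the coverage function $\mu$ and therefore every $B_{r,k}$, while not increasing $\sum_i f(A_i)$ by submodularity; a potential argument --- $\sum_i\card{A_i}$ is invariant but $\sum_i\card{A_i}^2$ strictly increases --- then shows the process terminates at a chain, for which $B_{r,k}$ is simply the $r$-th largest set and the desired inequality becomes an equality. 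The subtlety there lies in making the termination argument fully precise, which is why I would favor the direct induction.
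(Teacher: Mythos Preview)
Your induction argument is correct: the reformulation $B_{r,k}=\set{x:\mu(x)\geq r}$ is valid, the two set identities $(A_k\cap C_{r-1})\cup C_r=B_{r,k}$ and $(A_k\cap C_{r-1})\cap C_r=A_k\cap C_r$ hold because $C_r\sset C_{r-1}$, and the telescoping sum yields exactly the auxiliary inequality you claim, with the boundary terms $A_k\cap C_0=A_k$ and $A_k\cap C_{k-1}=B_{k,k}$ landing in the right places.

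However, note that the paper does not give its own proof of this theorem at all: it is quoted verbatim as a black-box result from \cite{harvey2006capacity} and invoked as a tool in the rounding analysis of \Cref{subsec:vertex-S-T-cut}. So there is no ``paper's own proof'' to compare against. Your write-up (either the induction or the alternative uncrossing-until-chain argument you sketch) would serve as a self-contained proof if one were desired, and both are standard routes to this classical inequality.
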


First,
we consider the following convex relaxation for the \emph{$\beta$-balanced} $S$-$T$ cut problem
for some $\beta\in [0, \nicefrac12]$.
\begin{align}
    \begin{array}{llr}
        \text{minimize} & f_{\bm w}(\bm y) + g_{\bm w}(\bm y) & \\
        \text{subject to} & y_{s_0}-y_{t_0} = 1 \\
        &y_s = y_{s_0} & \forall s\in S \\
        &y_t = y_{t_0} & \forall t\in T \\
        & \iprod{\ones, y} = 0 & \\
        & y\in [-1+\beta, 1-\beta]^{V}\cap [y_{t_0}, y_{s_0}]^{V} &
    \end{array}
    \label[LP]{lp:min-balanced-s-t-cut}
\end{align}

\begin{algorithm}[htp!]
    \caption{Rounding for Minimum Vertex $S$-$T$ Cut via $k$-Way Submodularity}\label{alg:conditional-cut-rounding}
    \KwIn{A graph $G=(V,E)$, a vector $\bm y \in \mathbb{R}_{\geq 0}^V$, $\beta \in [0, \nicefrac12]$, and $\gamma\in (0, 1)$}
    \For{$i=1$ to $k \coloneqq \Theta(\beta^{-2}\log(\nicefrac1\gamma))$}{    
        Sample $\tau_i \in [-1+\beta, 1-\beta]$ uniformly at random\;
        $A_i \gets \{v \in V : y_v \leq \tau_i\}$\;
    }
    Sample an integer $r$ from $\{k/2,\ldots,(1/2+\beta/4)k\}$ uniformly at random\;
    $B_{r,k} \gets \bigcup_{\{i_1,\ldots,i_r\} \in \binom{[k]}{r}} \bigcap_{j=1}^r A_{i_j}$\;
    \Return $B_{r, k}$\;
\end{algorithm}

For $\beta \in [0,1/2]$, we say that a set $A\subseteq V$ is \emph{$\beta$-balanced}
if $\min\{|A|,|V\setminus A|\} \geq \beta n$, or in other words, $\beta n \leq |A| \leq (1 - \beta)n$.
Let $\OPT_\beta$ denote the minimum weight of a $\beta$-balanced $S$-$T$ cut.
Using $k$-way submodularity, we provide an algorithm with small pointwise Lipschitz constant that outputs a cut of weight comparable to $\OPT_\beta$.
Note that, for a $\beta$-balanced set $A$, we have $\ones_A - \frac{\card A}{n}\ones \in [-1+\beta,1-\beta]^V$.
Then, we consider \Cref{alg:conditional-cut-rounding}, which first applies threshold rounding multiple times, where the threshold is sampled from a smaller range than $[-1,1]$, and then applies $k$-way submodularity to aggregate the obtained sets.
\begin{lemma}\label{lem:conditional-cut-feasible}
    Suppose that a vector $\bm y \in \mathbb{R}_{\geq 0}^V$ is a feasible solution to \Cref{lp:min-balanced-s-t-cut}.
    Then, the set $B_{r,k}$ in~\Cref{alg:conditional-cut-rounding} is feasible for $r \in \{k/2,\ldots,(1/2+\beta/4)k\}$ with probability at least $1-\gamma$.
\end{lemma}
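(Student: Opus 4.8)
The plan is to show that with high probability each individual threshold set $A_i$ is $\beta/2$-balanced (in particular feasible), and then argue that a large majority of them are feasible; the $k$-way submodularity aggregation at a carefully chosen rank $r$ slightly above $k/2$ will then preserve feasibility. First I would analyze a single threshold cut $A_i = \set{v\in V: y_v\leq \tau_i}$ where $\tau_i\sim U[-1+\beta, 1-\beta]$. Since $\bm y$ is feasible to \Cref{lp:min-balanced-s-t-cut}, every coordinate $y_v$ lies in $[-1+\beta, 1-\beta]$, and the constraint $\iprod{\ones, \bm y}=0$ together with $\bm y\in[-1+\beta,1-\beta]^V$ forces a nontrivial spread of coordinates. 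Concretely, I would bound $\Pr[A_i\text{ is } \beta\text{-unbalanced}]$: $A_i$ fails to have $|A_i|\geq \beta n$ only if $\tau_i$ falls below the $\beta n$-th smallest coordinate of $\bm y$, and fails to have $|V\setminus A_i|\geq \beta n$ only if $\tau_i$ exceeds the $(1-\beta)n$-th smallest coordinate. Using the fact that $\bm y$ averages to $0$ and has entries in $[-1+\beta, 1-\beta]$, one shows the ``dangerous'' sub-intervals of $[-1+\beta, 1-\beta]$ have total length at most a constant fraction bounded away from $1$ — so $A_i$ is feasible with probability at least some absolute constant $p_0 > 1/2$; pushing the argument slightly, with the right constants $\Pr[A_i\text{ feasible}] \geq 1/2 + \beta$ (this is exactly why the range $[-1+\beta,1-\beta]$ rather than $[-1,1]$ was chosen, as remarked in the text).

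Next I would invoke a Chernoff/Hoeffding bound over the $k = \Theta(\beta^{-2}\log(1/\gamma))$ independent trials: the number $N$ of feasible $A_i$ concentrates around its mean $\geq (1/2+\beta)k$, so $\Pr[N \leq (1/2 + \beta/2)k] \leq \exp(-\Omega(\beta^2 k)) \leq \gamma$ for the stated choice of $k$. Condition on the event $N > (1/2+\beta/2)k$, i.e., strictly more than $(1/2+\beta/2)k$ of the sets $A_i$ are feasible. Now for the aggregation: recall $B_{r,k} = \bigcup_{\set{i_1,\dots,i_r}} \bigcap_{j=1}^r A_{i_j}$. For feasibility we need $S\sset B_{r,k}$ and $B_{r,k}\cap T = \varnothing$. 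For the first, every $A_i$ satisfies $S\sset A_i$ (this follows from the constraints $y_s = y_{s_0}$ and the fact that $s_0$ is below every threshold; more carefully $y_{s_0}$ is the maximum possible coordinate value so $S$ might actually be on the ``$>\tau$'' side — I need to double-check the orientation, but in any feasible setup either $S\sset A_i$ for all $i$ or $S\cap A_i=\varnothing$ for all $i$, and the definition of ``feasible'' in \Cref{lem:threshold-rounding}'s discussion fixes which; WLOG $S\sset A_i$ for all $i$, hence $S$ is in every intersection hence in $B_{r,k}$). For the second: $v\in B_{r,k}$ iff $v$ lies in at least $r$ of the sets $A_i$, i.e. $v$ is in $A_{i_1}\cap\cdots\cap A_{i_r}$ for some $r$-subset, equivalently $\card{\set{i: v\in A_i}}\geq r$. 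If $t\in T$ and $t\in A_i$, then $A_i$ is infeasible; since fewer than $(1/2-\beta/2)k$ sets are infeasible, $t$ lies in fewer than $(1/2-\beta/2)k \leq k/2 \leq r$ of the $A_i$ (using $r\geq k/2$), so $t\notin B_{r,k}$. Hence $B_{r,k}\cap T = \varnothing$, and $B_{r,k}$ is feasible, for every $r$ in the sampled range $\set{k/2, \dots, (1/2+\beta/4)k}$.

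The main obstacle I anticipate is the single-trial balance estimate $\Pr[A_i\text{ feasible}] \geq 1/2 + \beta$ with the right constants: one must carefully use the mean-zero constraint $\iprod{\ones,\bm y}=0$ together with the box constraint $\bm y\in[-1+\beta, 1-\beta]^V$ to argue that not too much probability mass of $\tau_i$ lands in the regions producing unbalanced cuts, and the $S,T$-fixing constraints must be handled so that $s_0,t_0$ don't accidentally ruin balance. A secondary subtlety is pinning down the orientation convention (whether $S$ is on the ``$\leq\tau$'' or ``$>\tau$'' side) so that the ``$v\in B_{r,k}$ iff $v$ is in $\geq r$ of the $A_i$'' characterization interacts correctly with the feasibility requirement $S\sset B_{r,k}\sset V\setminus T$; once that is fixed, the counting argument with $r\in[k/2,(1/2+\beta/4)k]$ and the concentration bound go through routinely. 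Everything else — the Hoeffding bound, the choice $k=\Theta(\beta^{-2}\log(1/\gamma))$, and the union bound giving overall failure probability $\leq\gamma$ — is bookkeeping.
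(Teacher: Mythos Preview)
Your overall architecture (single-trial feasibility $\Rightarrow$ Chernoff $\Rightarrow$ combinatorial argument for $B_{r,k}$) matches the paper's, but two steps are off.

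\textbf{The single-trial estimate is much simpler than you think.} You set yourself up to prove that each $A_i$ is $\beta$-balanced using the mean-zero constraint, and flag this as the main obstacle. It is not needed. The paper's argument is one line: since $y_{s_0}-y_{t_0}=1$ and $\tau_i$ is uniform on an interval of length $2(1-\beta)$, the threshold lands in $[y_{t_0},y_{s_0})$ (exactly the event that $A_i$ separates $S$ from $T$) with probability $p=\frac{1}{2(1-\beta)}$. A two-line calculation then shows $(1-\beta/3)p\geq \tfrac12+\tfrac\beta4$, which is all Chernoff needs. Balancedness of $A_i$ and the constraint $\iprod{\ones,\bm y}=0$ play no role here.

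\textbf{Your argument that $S\subseteq B_{r,k}$ has a genuine gap.} You write ``either $S\sset A_i$ for all $i$ or $S\cap A_i=\varnothing$ for all $i$'' and conclude WLOG $S\sset A_i$ for every $i$. This dichotomy is false: whether $A_i$ contains $S$ depends on whether $\tau_i$ lands on the $s_0$-side of $[-1+\beta,1-\beta]$, which varies with $i$. In particular an infeasible $A_i$ may well be empty. The paper's fix is to use only the feasible sets: on the good event there are more than $(1/2+\beta/4)k\geq r$ feasible $A_i$'s; the intersection of any $r$ of \emph{those} is itself feasible (intersections of feasible cuts are feasible), hence contains $S$, and this particular $r$-fold intersection is one of the terms in the union defining $B_{r,k}$. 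Your argument for $B_{r,k}\cap T=\varnothing$ (any $t\in T$ lies only in infeasible $A_i$'s, of which there are fewer than $r$) is correct and equivalent to the paper's pigeonhole ``every $r$-subset contains at least one feasible $A_{i_j}$''.
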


\begin{proof}
    Because $y_{s_0} - y_{t_0} = 1$, for each $i \in [k]$, the set $A_i$ is feasible with probability at least
    \[
      p
      \coloneqq \frac{1}{(1-\beta)-(-1+\beta)}
      = \frac{1}{2-2\beta} \,.
    \]
    Let $X$ be the number of $i$'s such that $A_i$ is feasible. 
    By a multiplicative Chernoff bound, we have
    \[
        \Pr\left[X \leq (1-\phi)p k \right] \leq \exp\left(- \frac{\phi^2 pk}{2}\right)
    \]
    for any $\phi\in (0, 1)$.
    By setting $\phi = \beta/3$, we have
    \begin{align*}
      (1-\phi)p
      &= \frac12 \left( 1-\frac\beta3 \right)\left( \frac1{1-\beta} \right) \\
      &\geq \frac12 \left( 1-\frac\beta3 \right)(1+\beta) \tag{$1+x\leq \frac1{1-x}, x\in [-1, 1)$} \\
      &= \frac12 \left( 1+\frac{2\beta}3 - \frac{\beta^2}3 \right) \\
      &\geq \frac12 + \frac\beta4. \tag{$\beta\in [0, \nicefrac12]$}
    \end{align*}
    It follows that
    \[
		    \Pr\left[X \leq \left(\frac{1}{2} + \frac{\beta}{4} \right)k \right]
        \leq \exp\left(- \Omega(\beta^2 k ) \right).
    \]
    By choosing the hidden constant in $k = \Theta(\beta^{-2}\log(\nicefrac1\gamma))$ to be large enough, we have
    \[
        \Pr\left[X \leq \left(\frac{1}{2} + \frac{\beta}{4} \right)k \right] \leq \gamma.
    \]

    Suppose the above event does not occur so that at least $(\nicefrac12 + \nicefrac\beta4)k + 1$ sets are feasible.
    Reorder the $A_i$'s so that $A_1,\ldots,A_{(1/2+\beta/4)k+1}$ are feasible, and let $k/2 \leq r \leq (1/2+\beta/4)k$.
    Then, the set $\bigcap_{j=1}^r A_j$ is feasible and in particular, non-empty.
    {This is because
    intersections of feasible cuts
    are feasible, and under the good event
    we have conditioned on, the sets $A_1,\ldots,A_r$
    are feasible.
    Similarly, it is not hard to see
    that every intersection  $\bigcap_{j=1}^r A_{i_j}$ contains at least one feasible set, so
    this intersection does not contain any elements
    from $T.$ Thus, these two properties
    show that by taking a union over these
    intersections $\bigcup_{\{i_1,\ldots,i_r\} \in \binom{[k]}{r}} \bigcap_{j=1}^r A_{i_j}$
    the final set will contain all elements
    of $S$ and no elements from $T.$
    }
\end{proof}

We first analyze the approximation guarantee of~\Cref{alg:conditional-cut-rounding}.
\begin{lemma}\label{lem:conditional-cut-approximation}
    The output $A \subseteq V$ of~\Cref{alg:conditional-cut-rounding} satisfies
    \[
	    \E[\mathrm{cut}_{G, \bm w}(A)] = O(\beta^{-1} f_{\bm w}(\bm y) ).
    \]
\end{lemma}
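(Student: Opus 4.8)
The plan is to combine the $k$-way submodularity of the cut function (\Cref{thm:k-way submodularity}) with a threshold-rounding estimate over the restricted window $[-1+\beta,1-\beta]$, averaged over the random index $r$.

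First I would fix a realization of the thresholds $\tau_1,\dots,\tau_k$, hence of the sets $A_1,\dots,A_k$. Since $\mathrm{cut}_{G,\bm w}$ is submodular, \Cref{thm:k-way submodularity} gives $\sum_{r=1}^k \mathrm{cut}_{G,\bm w}(B_{r,k}) \le \sum_{r=1}^k \mathrm{cut}_{G,\bm w}(A_r)$. The algorithm samples $r$ uniformly from $\{k/2,\dots,(1/2+\beta/4)k\}$, which contains $\Theta(\beta k)$ integers (here we use $k=\Theta(\beta^{-2}\log(\nicefrac1\gamma))$, so $\beta k=\Omega(\beta^{-1})=\Omega(1)$); since every term $\mathrm{cut}_{G,\bm w}(B_{r,k})$ is non-negative,
\[
    \E_r[\mathrm{cut}_{G,\bm w}(B_{r,k})]
    \le \frac{1}{\Theta(\beta k)}\sum_{r=1}^k \mathrm{cut}_{G,\bm w}(B_{r,k})
    \le \frac{1}{\Theta(\beta k)}\sum_{r=1}^k \mathrm{cut}_{G,\bm w}(A_r).
\]
Next I would take the expectation over $\tau_1,\dots,\tau_k$. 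Each $A_r$ is the threshold cut of $\bm y$ at $\tau_r\sim U[-1+\beta,1-\beta]$, and because $\bm y$ is feasible for \Cref{lp:min-balanced-s-t-cut} we have $[\min(y_u,y_v),\max(y_u,y_v)]\subseteq[-1+\beta,1-\beta]$ for every edge $uv\in E$; hence $uv$ is separated by $A_r$ with probability exactly $\abs{y_u-y_v}/(2-2\beta)$. Weighting by $w_{uv}$ and summing (the direct analogue of \Cref{lem:threshold-rounding} for an interval of length $2-2\beta$), together with $\beta\le\nicefrac12$,
\[
    \E_{\tau_r}[\mathrm{cut}_{G,\bm w}(A_r)]
    = \frac{1}{2-2\beta}\sum_{uv\in E} w_{uv}\abs{y_u-y_v}
    = \frac{f_{\bm w}(\bm y)}{2-2\beta}
    \le f_{\bm w}(\bm y).
\]
Combining the two displays by linearity of expectation,
\[
    \E[\mathrm{cut}_{G,\bm w}(A)]
    = \E_{\tau,r}[\mathrm{cut}_{G,\bm w}(B_{r,k})]
    \le \frac{1}{\Theta(\beta k)}\cdot k\cdot f_{\bm w}(\bm y)
    = O(\beta^{-1})\,f_{\bm w}(\bm y),
\]
which is the claim.

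The two computational pieces — the threshold-rounding probability and the count of integers in the $r$-range — are routine. The one point deserving care is that the averaging over $r$ must yield a gain of $\Theta(1/(\beta k))$ rather than merely $\Theta(1/k)$: this is precisely why the index $r$ ranges over an interval of width $\Theta(\beta k)$ rather than taking a single value, and it dovetails with the feasibility requirement $r\le(1/2+\beta/4)k$ from \Cref{lem:conditional-cut-feasible}. So the only obstacle is bookkeeping the two distinct roles played by $\beta$ — the balancedness needed for feasibility and the width of the $r$-range needed for the approximation — and ensuring they are consistent; there is no deeper difficulty.
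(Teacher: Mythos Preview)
Your proposal is correct and follows essentially the same argument as the paper: apply $k$-way submodularity to bound $\sum_r \mathrm{cut}(B_{r,k})$ by $\sum_r \mathrm{cut}(A_r)$, average over the $\Theta(\beta k)$ values of $r$, and use the threshold-rounding identity $\E_{\tau_r}[\mathrm{cut}(A_r)]=f_{\bm w}(\bm y)/(2-2\beta)$. You are a bit more explicit than the paper about computing the per-round expectation and about the size of the $r$-range, but the structure is identical.
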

\begin{proof}
    By the $k$-way submodularity of $\mathrm{cut}_{G, \bm w}$ (\Cref{thm:k-way submodularity}), we have
    \[
        \sum_{r=1}^k \mathrm{cut}_{G, \bm w}(A_r) 
        \geq \sum_{r=1}^k \mathrm{cut}_{G, \bm w}(B_{r,k})
        \geq \sum_{r=k/2}^{(1/2+\beta/4)k} \mathrm{cut}_{G, \bm w}(B_{r,k}).
    \]
    Hence, we have
    \[
    	\E[\mathrm{cut}_{G, \bm w}(A)] = \frac{4}{\beta k}\sum_{r=k/2}^{(1/2+\beta/4)k} \E[\mathrm{cut}_{G, \bm w}(B_{r,k})]
      \leq \frac{4}{\beta k} \sum_{r=1}^k \E[\mathrm{cut}_{G, \bm w}(A_r)] 
    	= O(\beta^{-1} f_{\bm w}(\bm y) ).
     \qedhere
    \]
\end{proof}

We use the following simple fact to analyze the pointwise Lipschitz constant of~\Cref{alg:conditional-cut-rounding}.
\begin{proposition}\label{pro:sets}
    Let $\mcal I$ be an arbitrary indexing set
    and $A \Delta B$ denote the symmetric difference between sets $A$ and $B$.
    Then for any collections of sets indexed by $\mcal I$, $\set{A_\alpha}_{\alpha\in \mcal I}, \set{B_\alpha}_{\alpha\in \mcal I}$,
    the following hold:
    \begin{enumerate}[(a)]
        \item $\left( \bigcup_{\alpha\in \mcal I} A_\alpha \right) \Delta \left( \bigcup_{\alpha\in \mcal I} B_\alpha \right) \sset \bigcup_{\alpha\in \mcal I} (A_\alpha\Delta B_\alpha)$ \label[part]{pro:sets-union}
        \item $\left( \bigcap_{\alpha\in \mcal I} A_\alpha \right) \Delta \left( \bigcap_{\alpha\in \mcal I} B_\alpha \right) \sset \bigcup_{\alpha\in \mcal I} (A_\alpha\Delta B_\alpha)$.
    \end{enumerate}
\end{proposition}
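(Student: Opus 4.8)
Both parts are elementary set-theoretic facts, so the plan is simply to verify the claimed containments by tracing an arbitrary element through the definitions. I will prove part (a) in detail and then indicate how (b) follows by a symmetric argument (swapping the roles of union and intersection). Recall that $A \Delta B = (A \setminus B) \cup (B \setminus A)$, so to show $x \in A \Delta B$ it suffices to exhibit one side for which $x$ belongs to one of $A, B$ but not the other.

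\textbf{Part (a).} Let $x \in \left( \bigcup_{\alpha \in \mcal I} A_\alpha \right) \Delta \left( \bigcup_{\alpha \in \mcal I} B_\alpha \right)$. By symmetry in the two collections, assume $x \in \bigcup_{\alpha} A_\alpha$ but $x \notin \bigcup_{\alpha} B_\alpha$. Then there is some index $\alpha_0 \in \mcal I$ with $x \in A_{\alpha_0}$, while $x \notin B_{\alpha_0}$ (indeed $x$ lies in no $B_\alpha$). Hence $x \in A_{\alpha_0} \setminus B_{\alpha_0} \sset A_{\alpha_0} \Delta B_{\alpha_0} \sset \bigcup_{\alpha} (A_\alpha \Delta B_\alpha)$, which proves the containment.

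\textbf{Part (b).} Let $x \in \left( \bigcap_{\alpha} A_\alpha \right) \Delta \left( \bigcap_{\alpha} B_\alpha \right)$. By symmetry, assume $x \in \bigcap_{\alpha} A_\alpha$ but $x \notin \bigcap_{\alpha} B_\alpha$. Then $x \in A_\alpha$ for every $\alpha$, and there is some $\alpha_0$ with $x \notin B_{\alpha_0}$. In particular $x \in A_{\alpha_0} \setminus B_{\alpha_0} \sset A_{\alpha_0} \Delta B_{\alpha_0} \sset \bigcup_{\alpha} (A_\alpha \Delta B_\alpha)$, completing the proof.

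There is no real obstacle here; the only thing to be careful about is handling the two "directions" of the symmetric difference without writing the argument out twice, which I do by invoking the symmetry between the two set collections. One could alternatively give a slick one-line proof of both parts simultaneously using indicator functions: for part (a), $\ones_{\bigcup A_\alpha} = \max_\alpha \ones_{A_\alpha}$ and $|\max_\alpha a_\alpha - \max_\alpha b_\alpha| \leq \max_\alpha |a_\alpha - b_\alpha|$ pointwise on $\{0,1\}$-valued sequences, and dually for intersections with $\min$; but the elementwise chase above is cleaner to state without introducing extra notation.
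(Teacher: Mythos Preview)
Your proof is correct. The paper does not actually prove this proposition---it is stated as a ``simple fact'' and left without proof---so your element-chasing argument (and the indicator-function alternative you mention) both constitute valid proofs of a result the authors considered routine enough to omit.
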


\begin{lemma}\label{lem:conditional-cut-lipschitz}
    Let $A$ and $\tilde A$ be the outputs of~\Cref{alg:conditional-cut-rounding} for $\bm y$ and $\tilde{\bm y}$, respectively.
    Then, we have
    \[
        \E[|A \triangle \tilde A|] \leq O(\beta^{-2}\log(\nicefrac1\gamma) \cdot \|\bm y - \tilde{\bm y}\|_1 ).
    \]
\end{lemma}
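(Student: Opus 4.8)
The plan is to exhibit a single coupling between the executions of \Cref{alg:conditional-cut-rounding} on $\bm y$ and on $\tilde{\bm y}$ that witnesses the claimed bound on the expected symmetric difference (which in turn upper bounds the earth mover's distance). The natural choice is the \emph{shared randomness} coupling: run both executions with the same thresholds $\tau_1,\dots,\tau_k$ and the same aggregation index $r$. Under this coupling each pair $(A_i,\tilde A_i)$ is generated from the same $\tau_i$ and the outputs $B_{r,k},\tilde B_{r,k}$ are generated from the same $r$, so it suffices to control $\E[|B_{r,k}\triangle \tilde B_{r,k}|]$.

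First I would reduce from the aggregated sets to the individual threshold cuts using \Cref{pro:sets}. Applying the containment for unions (\Cref{pro:sets-union}) to the outer union over $\binom{[k]}{r}$, and then the containment for intersections (\Cref{pro:sets}, part (b)) to each inner intersection, one obtains the deterministic inclusion
\[
    B_{r,k}\triangle \tilde B_{r,k} \;\subseteq\; \bigcup_{i\in[k]} (A_i\triangle \tilde A_i),
\]
which holds for every realization of the shared randomness and in particular for every value of $r$. Consequently $|B_{r,k}\triangle \tilde B_{r,k}| \leq \sum_{i=1}^k |A_i\triangle \tilde A_i|$ pointwise.

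Next I would bound $\E_{\tau_i}[|A_i\triangle \tilde A_i|]$ by a threshold-rounding computation in the spirit of \Cref{lem:threshold-rounding}. Since $\bm y,\tilde{\bm y}$ are feasible for \Cref{lp:min-balanced-s-t-cut}, every coordinate lies in $[-1+\beta,1-\beta]$; a vertex $v$ lies in $A_i\triangle \tilde A_i$ precisely when $\tau_i$ falls (strictly) between $y_v$ and $\tilde y_v$, and because $\tau_i$ is uniform on an interval of length $2-2\beta\geq 1$, this happens with probability $|y_v-\tilde y_v|/(2-2\beta)\leq |y_v-\tilde y_v|$. Summing over $v$ gives $\E_{\tau_i}[|A_i\triangle \tilde A_i|]\leq \|\bm y-\tilde{\bm y}\|_1$.

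Combining the two estimates and taking expectations over $\tau_1,\dots,\tau_k$ (the index $r$ being irrelevant by the pointwise bound above),
\[
    \E[|A\triangle \tilde A|] = \E[|B_{r,k}\triangle \tilde B_{r,k}|] \leq \sum_{i=1}^k \E_{\tau_i}[|A_i\triangle \tilde A_i|] \leq k\cdot \|\bm y-\tilde{\bm y}\|_1 = O\!\left(\beta^{-2}\log(\nicefrac1\gamma)\right)\cdot \|\bm y-\tilde{\bm y}\|_1,
\]
recalling $k=\Theta(\beta^{-2}\log(\nicefrac1\gamma))$. The argument is largely routine; the only points needing care are applying the inclusions of \Cref{pro:sets} in the correct order to peel off both the outer union and the inner intersections, and observing that restricting the threshold range to $[-1+\beta,1-\beta]$ only helps (since $\beta\leq \nicefrac12$, the density $\nicefrac1{2-2\beta}$ is at most $1$) rather than hurting the per-coordinate collision probability.
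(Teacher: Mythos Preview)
Your proposal is correct and follows essentially the same approach as the paper: share the randomness, use \Cref{pro:sets} (first the union bound, then the intersection bound) to get $B_{r,k}\triangle \tilde B_{r,k}\subseteq \bigcup_{i=1}^k(A_i\triangle \tilde A_i)$, bound each $\E[|A_i\triangle \tilde A_i|]$ by $\|\bm y-\tilde{\bm y}\|_1$ via threshold rounding, and sum. Your write-up is in fact slightly more explicit than the paper's about the coupling and about why the restricted threshold range $[-1+\beta,1-\beta]$ only improves the per-vertex collision probability.
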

\begin{proof}
    For any $r \in [k]$, we have
    \begin{align*}
        B_{r,k} \Delta \tilde B_{r,k} 
        & = \left( \bigcup_{I\in \binom{[k]}{r}} \bigcap_{i\in I} A_i \right)\Delta \left( \bigcup_{I\in \binom{[k]}{r}} \bigcap_{i\in I} \tilde A_i \right) \\
        &\sset \bigcup_{I\in \binom{[k]}{r}} \left( \left( \bigcap_{i\in I} A_i \right) \Delta \left( \bigcap_{i\in I} \tilde A_i \right) \right)  \tag{by \Cref{pro:sets} (a)} \\
        &\sset \bigcup_{I\in \binom{[k]}{r}} \bigcup_{i\in I} (A_i\Delta \tilde A_i)\tag{by \Cref{pro:sets} (b)}  \\
        &\sset \bigcup_{j=1}^k (A_j\Delta \tilde A_j).
    \end{align*}
    For each $i \in [k]$, we have $\E [|A_i \triangle \tilde A_i|] \leq \|\bm y^* - \tilde{\bm y}^* \|_1$.
    But then by the calculations above, we have $\E [|A \triangle \tilde A|] \leq k \cdot \|\bm y^* - \tilde{\bm y}^*\|_1 = O(\beta^{-2}\log(\nicefrac1\gamma) \cdot \|\bm y^* - \tilde{\bm y}^*\|_1 )$.
\end{proof}

We are now ready to prove \Cref{thm:s-t-cut-vertex-set}
\begin{proof}[Proof of \Cref{thm:s-t-cut-vertex-set}]
    The approximation guarantee is clear from \Cref{thm:min-s-t-cut-fractional,lem:conditional-cut-approximation}.
    \Cref{lem:conditional-cut-feasible} gives the feasibility guarantee.
    To analyze the pointwise Lipschitz constant, first note that \Cref{thm:min-s-t-cut-fractional} yields 
    \[
        \norm{\bm y - \tilde{\bm y}}_1
        = O\left( \frac{\sqrt{n}\cdot \abs{\delta}}{\lambda_2} \right).
    \]
    Combined with \Cref{lem:conditional-cut-approximation,lem:conditional-cut-lipschitz}, we obtain the claimed bound.
\end{proof}

\subsection{A Lower Bound}\label{subsec:s-t-cut-lower-bound}
In this section, we prove \Cref{thm:s-t-cut-additive-lower-bound},
a lower bound on the pointwise Lipschitz constant for the minimum vertex $S$-$T$ cut problem.
This shows that our Lipschitz-approximation tradeoff in \Cref{thm:s-t-min-cut-additive} is tight.

\begin{proof}[Proof of \Cref{thm:s-t-cut-additive-lower-bound}]
    Let $\mathcal{A}$ be an arbitrary $(C, f(n))$-approximation algorithm for the minimum $s$-$t$ cut problem for $C = O(1)$ and let $ f(n) < n/2(C+2)$ be an increasing function of $n$.
    Let $G=(U\cup R, E)$ be a complete bipartite graph, where $|U| =  (C+2) f(n)$ and $|R| = n - (C+2) f(n)$.
    Let $s,t$ be two distinct vertices in $R$.
    We define two weight vectors ${\bm w}, \tilde{\bm w}\in [0,1]^E$ as follows:
    \[
        w_e = \begin{cases}
            1/4(C+2) & \text{if } s \in e, \\
            1 & \text{otherwise},
        \end{cases}
        \qquad
        \tilde w_e = \begin{cases}
            1/4(C+2) & \text{if } t \in e, \\
            1 & \text{otherwise}.
        \end{cases}    
    \]
    The minimum $s$-$t$ cuts of $(G,\bm w)$ and $(G,\tilde{\bm w})$ are $\{s\}$ and $V \setminus \{t\}$, respectively, and both have cut weights of $\OPT = |U|/4(C+2) = f(n)/4$.
    Moreover, for both weighted graphs, any other $s$-$t$ cut has weight at least $|U| = (C+2)f(n) > C\cdot \OPT + f(n)$.
    Thus $\mcal A$ outputs $\set{s}$ with probability at least $0.99$ given $G, \bm w$
    and $V\setminus \set{t}$ with probability at least $0.99$ given $G, \tilde{\bm w}$.
    Hence, the earth mover's distance between the output distributions of $\mathcal{A}$ on $\bm w$ and that on $\tilde{\bm w}$ is at least 
    \[
        0.98 \cdot (n-2) = \Omega(n)
    \]
    
    Consider a transition of the weight vector from $\bm w$ to $\tilde{\bm w}$, where the edge weights never go below $1/4(C+2)$.
    It is well known that $\lambda_2(G,\ones) = \min\{|U|,|R|\} = |U| = (C+2) f(n)$ because $|U|=(C+2) f(n)< n/2$.
    Since $\lambda_2$ scales proportionally with weights, $\lambda_2(G,\ones/4(C+2)) = |U|/4(C+2) = f(n)/4$.
    As $\lambda_2$ is increasing in the edge weight of an any edge, the algebraic connectivity of every graph in the transition is at least $f(n)/4$.
    Hence by \Cref{thm:finite perturbation}, there exists $\bm w^*$ such that the pointwise Lipschitz constant at $\bm w^*$ is at least 
    \[
      \Omega\left( \frac{n}{\norm{\bm w - \tilde{\bm w}}_1} \right)
      = \Omega\left(\frac{n}{|U|}\right) 
      = \Omega\left(\frac{n}{f(n)}\right) 
      = \Omega\left(\frac{n}{\lambda_2(G,\bm w^*)}\right).
        \qedhere
    \]    
\end{proof}

\section{Maximum Weight Bipartite \texorpdfstring{$\bm b$}{b}-Matching}\label{sec:b-matching}
In this section,
we design a bipartite matching algorithm which cleanly extends to handle $\bm b$-matchings.
Our goal is to prove the following results.

\begin{theorem}\label{thm:mwm-bipartite}
    For any bipartite graph $G = (V = U \cup R, E)$, weight vector $\bm w \in \R_{> 0}$, and
    approximation parameter
    $\eps > 0,$ \Cref{alg:maximum bipartite matching} runs in polynomial time and
    returns a $2(1+\eps)$-approximate maximum weighted matching 
    in expectation,
    with pointwise Lipschitz constant $O\left(\frac{\sqrt{m}}{\eps w_{\min}}\right).$ 
\end{theorem}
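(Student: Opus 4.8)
The plan is to instantiate the PGTA framework of \Cref{sec:framework} on the classical bipartite matching LP regularized by a weighted $\ell_2$ term, and then round the resulting stable fractional solution. Let $K = \set{\bm x\in \R_{\geq 0}^E : \sum_{e\ni v} x_e\leq 1 \text{ for all } v\in V}$ be the bipartite matching polytope, let $f_{\bm w}(\bm x) = -\sum_{e\in E} w_e x_e$, and take the regularizer $g_{\bm w}(\bm x) = \nicefrac\Lambda2 \cdot \sum_{e\in E} w_e x_e^2$ with $\Lambda = \Theta(\eps)$. Since $w_e > 0$ for all $e$, the diagonal quadratic form $g_{\bm w}$ is $\Lambda w_{\min}$-strongly convex and $\Lambda w_{\max}$-smooth on all of $\R^E$; in particular, unlike the minimum vertex $S$-$T$ cut case (\Cref{subsec:s-t-cut-fractional}), there is no need to restrict to a subspace orthogonal to $\ones$, as the weighted $\ell_2$ regularizer has trivial null space. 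The algorithm (\Cref{alg:maximum bipartite matching}) minimizes $h_{\bm w} = f_{\bm w} + g_{\bm w}$ over $K$ — which has a unique optimum $\bm x^*$ since $h_{\bm w}$ is strongly convex — and then applies the rounding scheme of \cite{kumabe2022lipschitz}.

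The first step is to check \Cref{asm:PGM} with all constants depending only on $\bm w$, holding uniformly over a neighborhood of $\bm w$ (as needed to chain single-coordinate perturbations via \Cref{lem:one perturbation}). Restricting $\abs\delta \leq \delta_{\bm w} \coloneqq \nicefrac{w_{\min}}2$ ensures $\tilde w_{\min}\geq \nicefrac{w_{\min}}2$ and $\tilde w_{\max}\leq 2w_{\max}$, so we may take $\sigma_{\bm w} = \Theta(\Lambda w_{\min})$ and $L_{\bm w} = \Theta(\Lambda w_{\max})$ for both $g_{\bm w}$ and $g_{\tilde{\bm w}}$. For \Cref{item:c_1}, if the perturbed coordinate is $e$ then $\grad g_{\bm w}(\bm x) - \grad g_{\tilde{\bm w}}(\bm x) = \Lambda\delta\, x_e \ones_e$, and $\abs{x_e}\leq 1$ on $K$, so $\frac1{L_{\bm w}}\norm{\grad g_{\bm w}(\bm x) - \grad g_{\tilde{\bm w}}(\bm x)}_2 \leq \frac{\Lambda}{L_{\bm w}}\abs\delta$, i.e.\ $C_{\bm w} = O(\nicefrac1{w_{\max}})$. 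For \Cref{item:c_2}, since $f_{\bm w}(\bm z) - f_{\tilde{\bm w}}(\bm z) = \delta z_e$ affects only coordinate $e$, repeating the strong-convexity-of-the-proximal-objective argument of \Cref{lem:min cut prox} — now with a single coordinate, hence without the extra $\sqrt2$ — gives $\norm{\prox_{L_{\bm w}, f_{\bm w}}(\bm x) - \prox_{L_{\bm w}, f_{\tilde{\bm w}}}(\bm x)}_2 \leq \frac{\abs\delta}{L_{\bm w}}$, i.e.\ $D_{\bm w} = O(\nicefrac1{\Lambda w_{\max}})$. Plugging into \Cref{thm:PGM} yields $\norm{\bm x^* - \tilde{\bm x}^*}_2 \leq \frac{L_{\bm w}(C_{\bm w}+D_{\bm w})}{\sigma_{\bm w}}\abs\delta = O\!\left(\frac1{\Lambda w_{\min}}\right)\abs\delta = O\!\left(\frac{1}{\eps w_{\min}}\right)\abs\delta$. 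Using $\norm{\cdot}_1 \leq \sqrt m\,\norm{\cdot}_2$ together with \Cref{lem:one perturbation}, the map $\bm w \mapsto \bm x^*$ (and hence any sufficiently precise solver, since the per-iteration bound in the proof of \Cref{thm:PGM} already holds before the limit is taken) has fractional pointwise Lipschitz constant $O\!\left(\frac{\sqrt m}{\eps w_{\min}}\right)$ with respect to $\ell_1$.

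It remains to handle accuracy and rounding. Comparing $\bm x^*$ to the integral optimum $\bm x^{\mathrm{LP}}$ of the unregularized LP — which attains value $\OPT$ by total unimodularity of the bipartite incidence matrix — and using $x^{\mathrm{LP}}_e \in \set{0,1}$, we get $h_{\bm w}(\bm x^{\mathrm{LP}}) = -(1-\nicefrac\Lambda2)\OPT$; since the regularizer is nonnegative, $h_{\bm w}(\bm x^*) \leq h_{\bm w}(\bm x^{\mathrm{LP}})$ forces $\sum_e w_e x^*_e \geq (1-\nicefrac\Lambda2)\OPT$. Finally, the rounding scheme of \cite{kumabe2022lipschitz} applied to $\bm x^*$ returns an integral matching of expected weight at least $\frac12\sum_e w_e x^*_e$, which is at least $\frac1{2(1+\eps)}\OPT$ for $\Lambda$ a small enough multiple of $\eps$, and it maps fractional solutions at $\ell_1$-distance $\eta$ to output distributions at earth mover's distance $O(\eta)$; composing this with the fractional bound gives pointwise Lipschitz constant $O\!\left(\frac{\sqrt m}{\eps w_{\min}}\right)$. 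Polynomial running time is immediate since the regularized program is a convex quadratic program solvable to the required precision in polynomial time (any solver works, PGM being used only in the analysis) and the rounding is polynomial.

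\textbf{Main obstacle.} The delicate part is the uniform verification of \Cref{asm:PGM} — especially the proximal-stability bound \Cref{item:c_2}, which hinges on the linearity of $f_{\bm w}$ in the weights — with constants that depend only on $\bm w$ and remain valid over an entire neighborhood. By contrast, the approximation estimate is elementary and the Lipschitz property of the rounding is used as a black box from \cite{kumabe2022lipschitz}.
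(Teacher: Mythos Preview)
Your proposal is correct and follows essentially the same route as the paper: instantiate \Cref{asm:PGM} for the weighted $\ell_2$-regularized bipartite matching LP with $\sigma_{\bm w}=\Theta(\eps w_{\min})$, $L_{\bm w}=\Theta(\eps w_{\max})$, $C_{\bm w}=O(1/w_{\max})$, $D_{\bm w}=O(1/(\eps w_{\max}))$, apply \Cref{thm:PGM} and Cauchy--Schwarz to get the $O(\sqrt m/(\eps w_{\min}))$ fractional $\ell_1$-Lipschitz bound, deduce the $(1+\Theta(\eps))$-approximation from $x_e^2\leq x_e$ on $[0,1]$, and finish with the rounding of \cite{kumabe2022lipschitz} (\Cref{thm:round-bipartite}). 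The only cosmetic differences are your use of a generic $\Lambda=\Theta(\eps)$ and your comparison against the integral LP optimum rather than the fractional one; both yield the same bounds.
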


\begin{theorem}\label{thm:maximum bipartite b-matching}
    Fix $\eps > 0$.
    \Cref{alg:maximum b-matching} terminates in 
    $\poly(n)$ time
    and outputs a $2(1+\eps)e/(e-1)$-approximate maximum weight $\bm b$-matching in expectation
    for bipartite graphs.
    Moreover,
    its pointwise Lipschitz constant is
    $
        O\left(\frac{\sqrt{m}}{\eps w_{\min}}\right).
    $
\end{theorem}

In \Cref{sec:bipartite},
we design and analyze our bipartite matching algorithm.
Having established the general analysis of PGTA,
the analysis of the fractional Lipschitz solution is relatively painless.
We directly use the \cite{kumabe2022lipschitz} rounding algorithm for bipartite matchings.

In \Cref{sec:bmatching:fractional},
we describe how to obtain fractional Lipschitz solutions for bipartite $\bm b$-matchings.
We can no longer directly use the \cite{kumabe2022lipschitz} rounding scheme in this setting
and we design a more involved rounding scheme inspired by their rounding algorithm in \Cref{sec:bmatching:rounding}.

\subsection{Warmup for Bipartite $1$-Matching}\label{sec:bipartite}
The LP that obtains a bipartite maximum (unweighted and weighted) matching is as follows.
Let $U$ and $R$ be the sets of the vertices on the left hand side and right hand sides of the bipartite graph and $\bm w=\{w_{uv}\}_{uv \in E}$ is the 
vector representing the weights of every edge. 
We regularize the standard LP relaxation using the simple strongly convex regularizer of $\sum_{uv \in E} w_{uv} x_{uv}^2$.

\begin{align}\label[LP]{lp:relaxed-bipartite-matching}
    \begin{array}{lll}
        \text{minimize} & \displaystyle \sum_{uv\in E} -w_{uv}x_{uv} + \frac{\eps}{2} \cdot \sum_{uv \in E} w_{uv} x_{uv}^2\\
        \text{subject to} & \displaystyle \sum_{v \in R} x_{uv} \leq 1 & \forall u \in U\\
        & \displaystyle \sum_{u \in L} x_{uv} \leq 1 & \forall v \in R\\
        & 0 \leq x_{uv} \leq 1 & \forall uv\in E.
    \end{array}
\end{align}

We now proceed with the framework introduced in~\cref{sec:framework} and the notation used within that section.
{ The next result
shows that the conditions in \Cref{asm:PGM} are satisfied.

\begin{lemma}\label{lem:bipartite-matching-pgm}
    The convex program in \Cref{lp:relaxed-bipartite-matching}
    satisfies the conditions in \Cref{asm:PGM} with $L_{\bm w} = 2\eps w_{\max}, \sigma_{\bm w} = \eps w_{\min}/2, C_{\bm w} = 1/(2 w_{\max}), D_{\bm w} = 1/(2 \eps w_{\max})$
\end{lemma}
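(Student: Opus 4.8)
The plan is to check the five conditions of \Cref{asm:PGM} one by one, using throughout that $\abs\delta$ may be taken small; concretely I would set the radius $\delta_{\bm w} \coloneqq w_{\min}/2$, so that $\abs\delta \leq \delta_{\bm w}$ forces every perturbed weight $\tilde w_e$ to lie in $[w_{\min}/2,\,2w_{\max}]$ (the perturbed edge satisfies $\tilde w_e \geq w_e - \abs\delta \geq w_{\min}/2$, and $\tilde w_e \leq w_{\max}+\abs\delta \leq 2w_{\max}$).

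\Cref{item:convex} is immediate because $f_{\bm w}(\bm x) = -\sum_{uv\in E} w_{uv} x_{uv}$ is linear, hence convex and continuous on $\R^E \supseteq K$, and likewise for $f_{\tilde{\bm w}}$. For \Cref{item:strong-convex} and the smoothness condition I would compute the Hessian of the regularizer: $\grad^2 g_{\bm w} = \eps\,\mathrm{diag}(w_{uv})_{uv\in E}$, a positive semidefinite diagonal matrix whose eigenvalues are exactly $\{\eps w_{uv}\}_{uv\in E}$. Thus $g_{\bm w}$ is $\eps w_{\min}$-strongly convex and $\eps w_{\max}$-smooth on $K$; applying the same to $g_{\tilde{\bm w}}$ and invoking $\tilde w_{\min}\geq w_{\min}/2$ and $\tilde w_{\max}\leq 2w_{\max}$ shows that $\sigma_{\bm w}\coloneqq \eps w_{\min}/2$ and $L_{\bm w}\coloneqq 2\eps w_{\max}$ are common valid constants, with $L_{\bm w} > \sigma_{\bm w}$ since $w_{\max}\geq w_{\min}>0$. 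For \Cref{item:c_1}, write $\tilde{\bm w} = \bm w + \delta\,\ones_e$ for the perturbed edge $e$. Then $\grad g_{\bm w}(\bm x) - \grad g_{\tilde{\bm w}}(\bm x) = -\eps\delta\, x_e\,\ones_e$, whose Euclidean norm is $\eps\abs\delta\abs{x_e}\leq \eps\abs\delta$ because $x_e\in[0,1]$ on $K$; dividing by $L_{\bm w} = 2\eps w_{\max}$ yields the bound with $C_{\bm w} = 1/(2w_{\max})$.

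The substantive step is \Cref{item:c_2}, the displacement of the proximal point, which I would handle exactly as in the proof of \Cref{lem:min cut prox}. Set $\hat{\bm z} = \prox_{L_{\bm w}, f_{\bm w}}(\bm x)$ and $\tilde{\bm z} = \prox_{L_{\bm w}, f_{\tilde{\bm w}}}(\bm x)$, and let $\phi_{\bm w}(\bm z) \coloneqq \tfrac{L_{\bm w}}{2}\norm{\bm z - \bm x}_2^2 + f_{\bm w}(\bm z)$, which is $L_{\bm w}$-strongly convex on the convex set $K$ and minimized at $\hat{\bm z}$. Strong convexity together with the first-order optimality inequality $\iprod{\grad \phi_{\bm w}(\hat{\bm z}),\, \tilde{\bm z} - \hat{\bm z}}\geq 0$ (valid since $\phi_{\bm w}$ is smooth and $\tilde{\bm z}-\hat{\bm z}$ is a feasible direction on the polytope $K$) gives $\tfrac{L_{\bm w}}{2}\norm{\hat{\bm z}-\tilde{\bm z}}_2^2 \leq \phi_{\bm w}(\tilde{\bm z}) - \phi_{\bm w}(\hat{\bm z})$, and symmetrically $\tfrac{L_{\bm w}}{2}\norm{\hat{\bm z}-\tilde{\bm z}}_2^2 \leq \phi_{\tilde{\bm w}}(\hat{\bm z}) - \phi_{\tilde{\bm w}}(\tilde{\bm z})$. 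Adding these two inequalities, the quadratic terms cancel and the linear terms collapse to $L_{\bm w}\norm{\hat{\bm z}-\tilde{\bm z}}_2^2 \leq (f_{\bm w}-f_{\tilde{\bm w}})(\tilde{\bm z}) - (f_{\bm w}-f_{\tilde{\bm w}})(\hat{\bm z}) = \delta(\tilde z_e - \hat z_e)$, since $f_{\bm w}-f_{\tilde{\bm w}}$ is the linear functional $\bm z\mapsto \delta z_e$. Bounding $\abs{\tilde z_e - \hat z_e}\leq \norm{\hat{\bm z}-\tilde{\bm z}}_2$ and dividing through yields $\norm{\hat{\bm z}-\tilde{\bm z}}_2 \leq \abs\delta/L_{\bm w} = \abs\delta/(2\eps w_{\max})$, i.e.\ $D_{\bm w} = 1/(2\eps w_{\max})$. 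The only point needing care here is the justification of the optimality inequality on $K$, which is standard for a smooth objective over a polytope; all the remaining estimates are direct Hessian and norm computations. Collecting the four constants completes the proof.
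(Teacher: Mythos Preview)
Your proof is correct and follows essentially the same approach as the paper: you verify each condition of \Cref{asm:PGM} by direct computation of the Hessian of the weighted $\ell_2$-regularizer, and for \Cref{item:c_2} you reproduce the strong-convexity-plus-optimality argument of \Cref{lem:min cut prox}, arriving at the same constants. The only difference is that you are slightly more explicit about the choice of $\delta_{\bm w}=w_{\min}/2$ and the resulting common bounds on $\tilde w_{\min},\tilde w_{\max}$, whereas the paper leaves ``$\abs\delta$ sufficiently small'' implicit.
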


\begin{proof}
    Let $\bm w$ be a weight vector and let 
    $\tilde{\bm w} = \bm w + \delta \cdot \ones_e,$ for some $e \in E$ and 
    $0 \neq \delta \in \R$ that is sufficiently small.
    We set %
$f_{\bm w}(\bm x) = -\iprod{\bm w, \bm x}$
and $g_{\bm w}(\bm x) = \eps \cdot \iprod{\bm w, \bm x^{\circ 2}}$,
where $\bm x^{\circ 2}$ denotes the vector obtained from $\bm x$ 
by multiplying each entry with itself.
Let $L = 2\eps w_{\max}, \sigma = \eps w_{\min}/2$. Then,
for $|\delta|$ that is sufficiently small, 
we see that $g_{\bm w}(\cdot)$
and $g_{\tilde{\bm w}}(\cdot)$
are both $L$-smooth, $\sigma$-strongly convex.

Furthermore, for every $\bm y\in K$,
since $K\sset [0, 1]^E$,
\[
    \frac1L \norm{\grad g_{\bm w}(\bm y) - \grad g_{\tilde{\bm w}}(\bm y)}_2
    \leq \frac{\varepsilon}{L} |\delta|.
\]
Thus our functions satisfy \Cref{item:c_1} with $C_{\bm w} = \nicefrac{\varepsilon}{L}$. 
    
Finally, for every $\bm y\in K$, 
let $\bm z = \prox_{L, f_{\bm w}}(\bm y)$ and $\tilde{\bm z} = \prox_{L, f_{\tilde{\bm w}}}(\bm y)$. 
A similar calculation we made in the proof of~\cref{lem:min cut prox} using the $L$-strong convexity of the proximal objective value yields the following, where $e \in E$ is the edge that differs in weight,
\begin{align*}
    L \norm{\bm z - \tilde{\bm z}}_2^2
    &\leq f_{\bm w}(\tilde{\bm z}) - f_{\bm w}(\bm z) + f_{\tilde{\bm w}}(\bm z) - f_{\tilde{\bm w}}(\tilde{\bm z}) 
    = |\delta| (\tilde z_{e} - z_{e}) 
    \leq |\delta| \norm{\bm z-\tilde{\bm z}}_2.
\end{align*}
In other words,
\[
    \norm{\prox_{L, f_{\bm w}}(\bm y) - \prox_{L, f_{\tilde{\bm w}}}(\bm y)}_2 \leq \frac1{L} |\delta|
\]
and we satisfy \Cref{item:c_2} with $D_{\bm w} = \nicefrac1{L}$.
\end{proof}

Since the regularized objective satisfies the conditions of
\Cref{asm:PGM},
we get the following result as a corollary.

\begin{corollary}\label{cor:regularized-matching-distance-opt}
    Let $\bm x^*$ be the optimal solution of \Cref{lp:relaxed-bipartite-matching} under weight vector $\bm w$ and $\tilde{\bm x}^*$
    be the optimal solution of \Cref{lp:relaxed-bipartite-matching}
    under weight vector $\tilde{\bm w} = \bm w + \delta\cdot \ones_e,$ for some $e \in E$ and 
    $0 \neq \delta \in \R$ that has sufficiently small absolute value.
    Then, $\norm{\bm x^* - \tilde{\bm x}^*}_1 \leq \frac{2\sqrt{m}}{w_{\min}} (1 + \nicefrac{1}{\varepsilon}) |\delta|.$
    
\end{corollary}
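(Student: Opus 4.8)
The plan is to simply combine the two facts already established: \Cref{lem:bipartite-matching-pgm}, which verifies that \Cref{lp:relaxed-bipartite-matching} satisfies \Cref{asm:PGM} with the explicit constants $L_{\bm w} = 2\eps w_{\max}$, $\sigma_{\bm w} = \eps w_{\min}/2$, $C_{\bm w} = 1/(2 w_{\max})$, and $D_{\bm w} = 1/(2 \eps w_{\max})$; and \Cref{thm:PGM}, which bounds the $\ell_2$ distance between the two regularized optima by $\frac{L_{\bm w}(C_{\bm w}+D_{\bm w})}{\sigma_{\bm w}} \abs{\delta}$. The only real work is the arithmetic of substituting the constants and then converting from the Euclidean to the $\ell_1$ norm.

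Concretely, I would first compute $L_{\bm w}(C_{\bm w}+D_{\bm w}) = 2\eps w_{\max}\left( \frac1{2w_{\max}} + \frac1{2\eps w_{\max}} \right) = \eps + 1$, so that \Cref{thm:PGM} gives
\[
    \norm{\bm x^* - \tilde{\bm x}^*}_2
    \leq \frac{\eps+1}{\eps w_{\min}/2}\abs{\delta}
    = \frac2{w_{\min}}\left( 1 + \frac1\eps \right)\abs{\delta}.
\]
Then, since $\bm x^*, \tilde{\bm x}^* \in \R^E$ with $\card E = m$, the standard norm inequality $\norm{\bm v}_1 \leq \sqrt{m}\,\norm{\bm v}_2$ yields
\[
    \norm{\bm x^* - \tilde{\bm x}^*}_1
    \leq \sqrt{m}\,\norm{\bm x^* - \tilde{\bm x}^*}_2
    \leq \frac{2\sqrt{m}}{w_{\min}}\left( 1 + \frac1\eps \right)\abs{\delta},
\]
which is exactly the claimed bound.

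There is essentially no obstacle here: the corollary is a direct specialization of the general machinery, and uniqueness of $\bm x^*, \tilde{\bm x}^*$ (needed so that the statement is well-posed) follows from strong convexity of the regularized objective established in \Cref{lem:bipartite-matching-pgm}. The only point requiring a sentence of care is that the bound must be stated for $\tilde{\bm w} = \bm w + \delta \ones_e$ differing from $\bm w$ in a single coordinate, matching the hypothesis of \Cref{asm:PGM} and \Cref{thm:PGM}; the multi-coordinate case is handled separately via \Cref{lem:one perturbation} when this corollary is later used to extract the pointwise Lipschitz constant.
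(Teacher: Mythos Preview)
Your proposal is correct and follows exactly the same approach as the paper: invoke \Cref{lem:bipartite-matching-pgm} to obtain the constants, apply \Cref{thm:PGM} for the $\ell_2$ bound, and pass to $\ell_1$ via $\norm{\bm v}_1 \leq \sqrt m\,\norm{\bm v}_2$. The arithmetic you carry out matches the paper's computation line for line.
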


\begin{proof}
    \sloppy
    Since \Cref{lp:relaxed-bipartite-matching} satisfies
    \Cref{asm:PGM} with $L_{\bm w} = 2\eps w_{\max}, \sigma_{\bm w} = \eps w_{\min}/2, C_{\bm w} = 1/(2 w_{\max}), D_{\bm w} = 1/(2 \eps w_{\max})$, \Cref{thm:PGM} shows that 
    for the optimal solutions $\bm x^*, \tilde{\bm x}^*$,
\begin{align*}
    \norm{\bm x^* - \tilde{\bm x}^*}_1 &\leq \sqrt{m} \norm{\bm x^* - \tilde{\bm x}^*}_2\\
    &\leq \sqrt{m} \frac{L_{\bm w}(C_{\bm w} + D_{\bm w})}{\sigma_{\bm w}}|\delta|\\
    &= \frac{2\sqrt{m}}{w_{\min}} (1 + \nicefrac1{\varepsilon}) |\delta|.
    \qedhere
\end{align*}
    
\end{proof}

We use the above calculations to show the following lemma on the Lipschitz constant and approximation 
factor of our algorithm.

\begin{lemma}\label{lem:fractional-matching-approx-lipschitz}
   For any bipartite 
   graph $G=(V = U \cup R,E)$ and weight vector $\bm w \in R_{> 0}^E$
   the  
   optimal solution of \Cref{lp:relaxed-bipartite-matching} is a
   $(1+\nicefrac\eps2)$-approximation to the fractional maximum weight bipartite 
   matching of $G, \bm w.$
   Moreover, let $\bm x^*$ be the optimal solution of \Cref{lp:relaxed-bipartite-matching} under weight vector $\bm w$ and $\tilde{\bm x}^*$
    be the optimal solution of \Cref{lp:relaxed-bipartite-matching}
    under weight vector $\tilde{\bm w} = \bm w + \delta\cdot \ones_e,$ for some $e \in E$ and 
    $0 \neq \delta \in \R$ that has sufficiently small absolute value.
    Then, $\norm{\bm x^* - \tilde{\bm x}^*}_1 \leq \frac{2\sqrt{m}}{w_{\min}} (1 + \nicefrac{1}{\varepsilon}) |\delta|.$
\end{lemma}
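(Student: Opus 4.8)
The lemma has two halves. The second half — the bound $\norm{\bm x^* - \tilde{\bm x}^*}_1 \leq \frac{2\sqrt{m}}{w_{\min}}(1+\nicefrac1\eps)\abs\delta$ — is verbatim \Cref{cor:regularized-matching-distance-opt}, which was already obtained by checking \Cref{asm:PGM} in \Cref{lem:bipartite-matching-pgm} and invoking \Cref{thm:PGM} together with $\norm{\cdot}_1 \leq \sqrt m\norm{\cdot}_2$. So nothing new is needed there, and the real work is establishing the approximation guarantee for the fractional optimum of \Cref{lp:relaxed-bipartite-matching}.

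For the approximation claim, let $\bm x^{\LP}$ be an optimal solution of the unregularized fractional maximum weight bipartite matching LP (maximizing $\iprod{\bm w, \bm x}$ over the polytope of \Cref{lp:relaxed-bipartite-matching}) and write $\OPTLP \coloneqq \iprod{\bm w, \bm x^{\LP}}$; let $\bm x^*$ be the optimal solution of \Cref{lp:relaxed-bipartite-matching}. Since both programs are optimized over the \emph{same} feasible region, $\bm x^*$ is feasible for the unregularized LP, so trivially $\iprod{\bm w, \bm x^*}\leq \OPTLP$ (the regularized solution does not overshoot). For the matching lower bound, I would compare the regularized objective, written in maximization form as $h_{\bm w}(\bm x) = \iprod{\bm w, \bm x} - \tfrac\eps2\iprod{\bm w, \bm x^{\circ 2}}$, at the two points: optimality of $\bm x^*$ gives $h_{\bm w}(\bm x^*)\geq h_{\bm w}(\bm x^{\LP})$, which rearranges to
\[
    \iprod{\bm w, \bm x^*}
    \;\geq\; \OPTLP - \frac\eps2 \iprod{\bm w, (\bm x^{\LP})^{\circ 2}} + \frac\eps2 \iprod{\bm w, (\bm x^*)^{\circ 2}}
    \;\geq\; \Bigl( 1 - \frac\eps2 \Bigr) \OPTLP,
\]
where the second inequality drops the nonnegative term $\tfrac\eps2\iprod{\bm w,(\bm x^*)^{\circ 2}}$ and uses the one genuinely delicate (but elementary) observation: every feasible point lies in $[0,1]^E$, so $(x^{\LP}_{uv})^2\leq x^{\LP}_{uv}$ coordinatewise, and since $\bm w\geq 0$ this yields $\iprod{\bm w,(\bm x^{\LP})^{\circ 2}}\leq \iprod{\bm w,\bm x^{\LP}}=\OPTLP$. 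Thus $\OPTLP \leq (1-\tfrac\eps2)^{-1}\iprod{\bm w,\bm x^*}$, i.e.\ $\bm x^*$ is a $(1+\nicefrac\eps2)$-approximation to the fractional optimum (in the standard sense, treating $(1-\delta)^{-1}$ as $1+\delta$ in the small-$\delta$ regime; a constant rescaling of $\eps$ makes the ratio exact and does not affect the subsequent bounds).

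There is no serious obstacle here: the approximation argument is just the box inequality $x^2\leq x$ combined with the fact that the regularizer $\tfrac\eps2\iprod{\bm w,\bm x^{\circ 2}}$ shares the weight vector $\bm w$ with the objective, which is precisely what makes it at most an $\tfrac\eps2$-fraction of the objective on $[0,1]^E$; and the $\ell_1$ stability bound is inherited directly from \Cref{cor:regularized-matching-distance-opt}. Combining the two displays gives both conclusions of the lemma.
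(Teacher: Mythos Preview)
The proposal is correct and essentially identical to the paper's proof: both invoke \Cref{cor:regularized-matching-distance-opt} for the $\ell_1$ stability bound, and for the approximation guarantee both compare the regularized optimum to the LP optimum via optimality of the regularized objective, then use the box inequality $x_{uv}^2\leq x_{uv}$ on $[0,1]^E$ (with $\bm w\geq 0$) to bound the regularizer by $\tfrac\eps2$ times the objective, yielding the $(1-\tfrac\eps2)$ factor. Your explicit remark about $(1-\tfrac\eps2)^{-1}$ versus $1+\tfrac\eps2$ is a point the paper leaves implicit.
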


\begin{proof}
    The Lipschitz constant directly follows from \Cref{cor:regularized-matching-distance-opt}. To get our approximation factor, notice that 
    $\sum_{uv \in E} w_{uv} x_{uv}^2 \leq \sum_{uv \in E}w_{uv} x_{uv}$ for any $x_{uv} \in [0, 1]$ and $uv \in E$. Then, 
    it holds that 
    \begin{align*}
        \sum_{uv} w_{uv} x_{uv}
        &\geq \sum_{uv} w_{uv} x_{uv} - \frac\eps2 \sum_{uv} w_{uv} x_{uv}^2 \\
        &\geq \sum_{uv} w_{uv} x_{uv}^* - \frac\eps2 \sum_{uv} w_{uv} (x_{uv}^*)^2 \\
        &\geq \left(1-\frac\eps2\right) \sum_{uv} w_{uv} x_{uv}^*.
    \end{align*}
    which gives a $(1+\nicefrac\eps2)$-approximation of the maximum fractional matching.
\end{proof}

We use the rounding scheme given by~\cite{kumabe2022lipschitz} (see Algorithm~6, Lines 4-13 in \cite{kumabe2022lipschitz})
to round our fractional solution to an integral solution. Specifically, their paper implies
the following approximation and Lipschitz constant guarantees about rounding
a fractional matching solution to an integral matching. 
}
\begin{theorem}[Sections 7.2, 8.3.5 in \cite{kumabe2022lipschitz}]\label{thm:round-bipartite}
    Given any feasible fractional matching $\bm x = \{x_{ij}\}_{i \in U, j \in V}$ satisfying the
    bipartite matching LP, 
    there exists a rounding scheme $\mcal A_\pi$ that produces a matching $\mathbf{M}_\pi = \mcal A_\pi(\bm x)$ such that %
    \begin{align*}
        \mathbb{E}\left[\sum_{e \in \mathbf{M}_\pi} w_e\right] \geq \frac{1}{2}  \sum_{ij \in E} w_{i j} x_{ij},
    \end{align*}
    where $w_e$ is the weight of edge $e$. 
    {Furthermore, it holds that under the optimal coupling $\pi, \tilde \pi\sim \mcal D$,
    \begin{align*}
        \expect_{(\pi, \tilde \pi)\sim \mcal D} \left[\|\ones_{\mathbf{M}_\pi}-\ones_{\tilde{\mathbf{M}}_{\tilde \pi}}\|_1\right]
        = O\left( \|\bm x - \tilde{\bm x}\|_1 \right). 
    \end{align*}}
\end{theorem}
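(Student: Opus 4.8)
The plan is to realize the rounding scheme of \cite{kumabe2022lipschitz} as a two‑round randomized auction and analyze it edge by edge. Assign each left vertex $u\in U$ an independent uniform tag $r_u\in[0,1]$. In the first round, every $u$ independently samples a single bid $c_u\in R\cup\{\bot\}$ with $\Pr[c_u=v]=x_{uv}$ and $\Pr[c_u=\bot]=1-\sum_v x_{uv}$. In the second round, every right vertex $v$ that received at least one bid keeps the bid of the bidder with the smallest tag, i.e.\ of $\argmin_{u:\,c_u=v} r_u$, and the kept edges form the output matching $\mathbf M$. This is a matching since each $u$ contributes at most its unique bid and each $v$ keeps at most one edge; in auction terms each kept bid is a transaction $u\mapsto v$ of gain $w_{uv}$ and $w(\mathbf M)$ is the total gain.

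For the approximation guarantee I would bound $\Pr[uv\in\mathbf M]$ and finish by linearity of expectation. Fix an edge $uv$ and condition on $c_u=v$; then $uv$ is kept exactly when $r_u$ is smallest among the tags of $v$'s bidders, which, as those tags are i.i.d.\ uniform, happens with probability $\E[1/D_v\mid c_u=v]$, where $D_v\ge 1$ is the number of bidders of $v$. By Jensen's inequality this is at least $1/\E[D_v\mid c_u=v]$, and $\E[D_v\mid c_u=v]=1+\sum_{u'\ne u}x_{u'v}\le 2$ by the LP constraint $\sum_{u'}x_{u'v}\le 1$. Hence $\Pr[uv\in\mathbf M]\ge x_{uv}/2$, so $\E[w(\mathbf M)]=\sum_{uv}w_{uv}\Pr[uv\in\mathbf M]\ge \tfrac12\sum_{uv}w_{uv}x_{uv}$.

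For the $\EMD$ bound I would exhibit an explicit coupling $\mcal D$ of the two executions: use the \emph{same} tags $r_u$ in both runs, and at each left vertex $u$ couple $c_u$ with $\tilde c_u$ by the \emph{maximal coupling} of the sub-distributions $(x_{uv})_v$ and $(\tilde x_{uv})_v$, taken independently over $u$. Then $\Pr[c_u\ne\tilde c_u]=\TV\big((x_{uv})_v,(\tilde x_{uv})_v\big)\le\sum_v\abs{x_{uv}-\tilde x_{uv}}$, and these events are independent across $u$, so the expected number of left vertices that flip their bid is at most $\norm{\bm x-\tilde{\bm x}}_1$. The remaining observation is that $\abs{\mathbf M\triangle\tilde{\mathbf M}}=O(\abs{\set{u:c_u\ne\tilde c_u}})$: an edge of $\mathbf M\triangle\tilde{\mathbf M}$ is either incident to a left vertex that flipped its bid (at most two edges per flipped vertex), or incident to a right vertex $v$ whose winning bidder changed; the winner at $v$ changes only if the bidder set at $v$ changed, hence only if some incident left vertex flipped, and $\mathbf M$ contains at most one edge at $v$, so at most two edges of $\mathbf M\triangle\tilde{\mathbf M}$ are incident to $v$. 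Since each flipped left vertex is incident to at most two right vertices and bids are never re-sampled, there is no further propagation, giving $\E[\abs{\mathbf M\triangle\tilde{\mathbf M}}]=O\big(\E\,\abs{\set{u:c_u\ne\tilde c_u}}\big)=O(\norm{\bm x-\tilde{\bm x}}_1)$; as $\EMD$ is an infimum over couplings, this is the desired bound.

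The step I expect to be the main obstacle is the design of the second‑round coupling and the verification that it does not cascade: under a naive coupling (e.g.\ re‑randomizing the keep step independently in the two runs) a single flipped bid could in principle force almost the entire matching to change, so the argument hinges on keeping the per‑vertex keep randomness identical across runs, on the ``smallest‑tag wins'' rule being stable under insertion/deletion of a bidder, and on the structural fact that left vertices bid only once (which severs any chain of re‑keyings). Pinning this down and bounding the hidden constant is the delicate part; the approximation half is a routine Jensen computation.
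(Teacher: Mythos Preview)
The paper does not give its own proof of this statement---it is quoted from \cite{kumabe2022lipschitz}---so the relevant comparison is with the paper's \emph{extension} to $\bm b$-matchings in \Cref{sec:b-matching}, which exposes the underlying two-phase bid/accept scheme. In the $\bm b\equiv 1$ case that scheme specializes exactly to your first round (each $u$ samples one neighbor with probability $x_{uv}$), followed by each seller accepting a uniformly random bidder. Your argument is correct.

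The substantive difference is in the accept phase. The paper, following \cite{kumabe2022lipschitz}, keeps ``accept a uniformly random bidder'' abstract and imports two lemmas: a Poisson-binomial bound (\Cref{lem:maximum bipartite b-matching accepting approximation}) yielding $\Pr[\text{win}\mid\text{bid}]\ge\tfrac12$, and a black-box coupling lemma (\Cref{lem:maximum bipartite b-matching accepting Lipschitz}) yielding $\E[\norm{\bm q-\tilde{\bm q}}_1\mid \bm p,\tilde{\bm p}]\le 2\norm{\bm p-\tilde{\bm p}}_1$. You instead \emph{implement} the uniform choice concretely as ``smallest shared tag $r_u$ wins.'' Because each $u$ bids on at most one $v$, bidder sets are disjoint across sellers and this has the same marginals; the gain is that once tags are fixed the winner at $v$ is a deterministic function of its bidder set, so only the $\le 2|F|$ right vertices touched by a flipped bid can change---your no-cascade claim is then immediate, and the Poisson-binomial step collapses to the one-line Jensen bound $\E[1/D_v]\ge 1/\E[D_v]\ge\tfrac12$. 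What you buy is a self-contained argument that also works out of the box in the shared-randomness model (\Cref{sec:shared-randomness}); what the paper's modular route buys is that the two imported lemmas are stated at the item level and therefore reusable verbatim in the $\bm b$-matching generalization.
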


Our approach is summarized in \Cref{alg:maximum bipartite matching}.

\begin{algorithm}[htp!]
    \caption{Maximum Bipartite Matching}\label{alg:maximum bipartite matching}
    \KwIn{A graph $G=(V=U\cup R,E)$, a weight vector $\bm w \in \mathbb{R}_{> 0}^E$, an approximation parameter $\eps > 0.$}
    Solve \Cref{lp:relaxed-bipartite-matching} and let $\bm x^*\in \mathbb{R}^E$ be the obtained solution \\
    $\hat{\bm x} \gets $ rounding scheme from \cite{kumabe2022lipschitz} \Comment{See  Algorithm~6, Lines 4-13 in \cite{kumabe2022lipschitz} }\\
    Return $\hat{\bm x}$
\end{algorithm}

Combining the above theorem with~\cref{lem:fractional-matching-approx-lipschitz} yields the proof of \Cref{thm:mwm-bipartite}.

\subsection{Obtaining a Fractional Lipschitz Solution}\label{sec:bmatching:fractional}
We solve maximum weight bipartite $\bm b$-matching in a similar way to our solution for maximum weight bipartite matching, with the main difference being the rounding schemes
we use for these two problems.
The LP for maximum matching with
the corresponding regularizer is given below, where $b_v$ is the constraint on the maximum number of matched neighbors 
to vertex $v$.

\begin{align}
    \begin{array}{lll}
        \text{minimize} & \displaystyle \sum_{uv\in E} -w_{uv}x_{uv} + \frac\eps2 \cdot \sum_{uv \in E} w_{uv} x_{uv}^2 \\
        \text{subject to} & \displaystyle \sum_{v \in R} x_{uv} \leq b_u & \forall u \in U\\
        & \displaystyle \sum_{u \in U} x_{uv} \leq b_v & \forall v \in R \\
        & 0 \leq x_{uv} \leq 1 & \forall u,v \in E.
    \end{array}
    \label[LP]{lp:maximum bipartite b-matching}
\end{align}

As in the maximum weight bipartite matching case, it is classically known that 
the optimum solution to the $\bm b$-matching LP equals the maximum 
solution of the integer program. Since we use the same regularizer as in~\cref{sec:bipartite} 
{and we are still optimizing
over a subset of $[0,1]^E$},
we obtain the same Lipschitz constant using the PGM framework summarized below. The 
proof of the approximation factor also follows from the proof of~\cref{lem:fractional-matching-approx-lipschitz}.

\begin{lemma}\label{lem:fractional b-matching}
For any $\varepsilon > 0,$
    the optimal solution of \Cref{lp:maximum bipartite b-matching}
    gives a $(1+\nicefrac\eps2)$-approximate maximum fractional $\bm b$-matching.
    Moreover, let $\bm x^*, \tilde{\bm x}^*$ be the optimal solutions
    of \Cref{lp:maximum bipartite b-matching} under $\bm w, \tilde{\bm w} = \bm w + \delta \ones_e,$ for some $e \in E$ and $0 \neq \delta \in \R,$ with sufficiently small absolute value.
 Then, $\norm{\bm x^* - \tilde{\bm x}^*}_1 = O\left(\frac{\sqrt{m}}{\eps w_{\min}}\right)$.
\end{lemma}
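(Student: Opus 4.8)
The plan is to mirror the proof structure already developed for the matching LP in \Cref{sec:bipartite}, observing that the only change from \Cref{lp:relaxed-bipartite-matching} to \Cref{lp:maximum bipartite b-matching} is in the linear constraints (replacing the right-hand sides $1$ by $b_u$ and $b_v$), while the objective—hence the regularizer—is unchanged. Since the PGTA machinery of \Cref{thm:PGM} treats the feasible region $K$ as an arbitrary closed convex set and only uses convexity/smoothness properties of $f_{\bm w}$ and $g_{\bm w}$, none of the matching-specific structure of $K$ was ever invoked in \Cref{lem:bipartite-matching-pgm} or \Cref{cor:regularized-matching-distance-opt}. Concretely, I would first note that the $\bm b$-matching polytope is still a subset of $[0,1]^E$, that $f_{\bm w}(\bm x) = -\iprod{\bm w, \bm x}$ is linear (hence convex and continuous), and that $g_{\bm w}(\bm x) = \frac\eps2 \iprod{\bm w, \bm x^{\circ 2}}$ is $2\eps w_{\max}$-smooth and $\frac{\eps w_{\min}}{2}$-strongly convex over $K$ for sufficiently small perturbations, exactly as before. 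The gradient-difference bound (\Cref{item:c_1}) again uses only $K \sset [0,1]^E$, giving $C_{\bm w} = \nicefrac1{2w_{\max}}$, and the proximal-stability bound (\Cref{item:c_2}) again follows from strong convexity of the proximal objective and the fact that $f_{\bm w}$ and $f_{\tilde{\bm w}}$ differ only in the single coordinate $e$, yielding $D_{\bm w} = \nicefrac1{2\eps w_{\max}}$.

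With \Cref{asm:PGM} verified, I would invoke \Cref{thm:PGM} to get $\norm{\bm x^* - \tilde{\bm x}^*}_2 \leq \frac{L_{\bm w}(C_{\bm w}+D_{\bm w})}{\sigma_{\bm w}}\abs\delta = \frac{2}{w_{\min}}(1+\nicefrac1\eps)\abs\delta$, and then pass to the $\ell_1$ norm via $\norm{\bm x^* - \tilde{\bm x}^*}_1 \leq \sqrt m \norm{\bm x^* - \tilde{\bm x}^*}_2 = \frac{2\sqrt m}{w_{\min}}(1+\nicefrac1\eps)\abs\delta = O\!\left(\frac{\sqrt m}{\eps w_{\min}}\right)\abs\delta$, which combined with \Cref{lem:one perturbation} gives the fractional pointwise Lipschitz constant claimed. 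For the approximation factor, I would reuse verbatim the argument in the proof of \Cref{lem:fractional-matching-approx-lipschitz}: since $w_{uv} x_{uv}^2 \leq w_{uv} x_{uv}$ for $x_{uv}\in[0,1]$, the optimal value of the regularized program is within a $(1-\nicefrac\eps2)$ factor of the optimal fractional $\bm b$-matching value (the regularizer changes the objective by at most $\frac\eps2\sum_{uv} w_{uv} x_{uv}$), so rearranging gives a $(1+\nicefrac\eps2)$-approximation; crucially, the classical integrality of the bipartite $\bm b$-matching polytope ensures the fractional optimum equals the integral optimum.

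I do not anticipate a genuine obstacle here—this lemma is deliberately a "plug-and-play" instantiation of the framework, and the entire point of the preceding \Cref{sec:mwm prev work limits} discussion is that the weighted $\ell_2$-regularizer, unlike the entropy regularizer of \cite{kumabe2022lipschitz}, has a strong-convexity modulus that is insensitive to the matching constraints, so nothing in the stability analysis needs to change when the $1$'s become $b_u$'s. The one place requiring a sentence of care is confirming that the $\delta_{\bm w}$-neighborhood condition in \Cref{asm:PGM} is met: I would note that for $\abs\delta$ small enough, all perturbed weights stay bounded away from $0$ and $\infty$, so $\sigma_{\bm w}, L_{\bm w}$ can be taken uniform over the neighborhood, matching the hypotheses of \Cref{lem:one perturbation}. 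Everything else is a direct citation of \Cref{thm:PGM}, \Cref{lem:one perturbation}, and the computations already carried out in \Cref{lem:bipartite-matching-pgm} and \Cref{lem:fractional-matching-approx-lipschitz}, so the proof can be stated in a few lines, essentially as "the proof is identical to that of \Cref{lem:fractional-matching-approx-lipschitz}, noting that the feasible region is still contained in $[0,1]^E$ and the objective is unchanged."
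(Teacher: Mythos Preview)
Your proposal is correct and follows exactly the approach the paper takes: the paper's own ``proof'' of this lemma is simply the remark that since the regularizer is unchanged and the feasible region is still a subset of $[0,1]^E$, the same constants verified in \Cref{lem:bipartite-matching-pgm} and the approximation argument of \Cref{lem:fractional-matching-approx-lipschitz} carry over verbatim. Your write-up spells out precisely this observation, with the same parameters $L_{\bm w}, \sigma_{\bm w}, C_{\bm w}, D_{\bm w}$ and the same appeal to \Cref{thm:PGM} and \Cref{lem:one perturbation}.
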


\subsection{Rounding via Multi-Item Cooperative Auction}\label{sec:bmatching:rounding}

We {extend} the fractional bipartite matching rounding scheme from \cite[Section 7.2]{kumabe2022lipschitz} (\Cref{thm:round-bipartite})
for the case of fractional bipartite $\bm b$ matchings.
We can view the following rounding algorithm as a \emph{contention resolution scheme}~\cite{vondrak2011submodular}.

\begin{algorithm}[htp!]
    \caption{Maximum Bipartite $\bm b$-Matching}\label{alg:maximum b-matching}
    \KwIn{A graph $G=(V=U\cup R,E)$, a weight vector $\bm w \in \mathbb{R}_{> 0}^E$, 
    approximation error $\eps$}
    Solve \Cref{lp:maximum bipartite b-matching} and let $\bm x^*\in \mathbb{R}^E$ be the obtained solution \\
    $p(u, v_j) \gets 0$ for all $u\in U, v\in R$ and $j\in [b_v]$
    \Comment{Bidding Phase} \\
    \For{each buyer $u\in U$} {
        Sample sellers $v(u, 1), \dots, v(u, b_u)$ \iid{} from
        \[
            \begin{cases}
                v\in N(u)\cup \set{\perp}, &\text{with probability $\frac{x_{uv}^*}{b_u}$} \\
                \perp, &\text{with remaining probability}
            \end{cases}
        \] \\
        \For{each unique seller $\perp\neq v\in \set{v(u, i), \dots, v(u, b_u)}$} {
            $u$ bids on an item $v_j\in \set{v_1, \dots, v_{b_v}}$ uniformly at random \\
            $p(u, v_j) \gets 1$ \\
        }
    }
    $q(u, v_j) \gets 0$ for all $u\in U, v\in R$ and $j\in [b_v]$
    \Comment{Accepting Phase} \\
    \For{each seller $v\in R$ and $j\in [b_v]$} {
        $v$ accepts a bid for $v_j$ uniformly at randomly from $\set{u\in U: p(u, v_j)=1}$ \\
        $q(u, v_j) \gets 1$ \\
    }
    Return $M \gets \set{uv: \exists j\in [b_v], q(u, v_j) = 1}$
\end{algorithm}

We interpret the rounding algorithm as a cooperative auction between buyers $u\in U$ and sellers $v\in R$,
where each buyer $u$ can buy at most $b_u$ unique items
from its neighboring sellers $N(u)$
and each seller $v$ has $b_v$ identical items to sell.
Each transaction between $u, v$ yields a gain of $w_{uv}$ 
and buyers and sellers attempt to maximize the total gain.
In the bidding phase,
each buyer $u\in U$ selects a set of at most $b_u$ sellers
and bids on one item from each selected seller.
In the accepting phase,
each seller $v\in R$ accepts a bid uniformly
at random for each item $v_j, j\in [b_v]$
whose set of bidders is non-empty.

\begin{lemma}\label{lem:maximum bipartite b-matching bidding}
    Let edge weights $\tilde{\bm w}\in \R_{>0}^E$ be obtained by perturbing $\bm w\in \R_{>0}^E$ at a single coordinate by some sufficiently
    small $\delta > 0$.
    Let $\bm x^*, \tilde{\bm x}^*\in [0, 1]^E$ be the optimal solutions to \Cref{lp:maximum bipartite b-matching}
    resulting from $\bm w, \tilde{\bm w}$, respectively.
    Let $p(u, v_j) \coloneqq \ones\set{\text{$u$ bids on $v_j$}}$ be the indicator variable of bids obtained when \Cref{alg:maximum b-matching} is executed on $\bm w$
    and similarly $\tilde p(u, v_j)$ when it is executed on $\tilde{\bm w}$.
    Then the following holds:
    \begin{enumerate}[(a)]
        \item $\E\left[ \sum_{u\in U, v\in N(u), j\in [b_v]} w_{uv} p(u, v_j) \right] \geq (1-\nicefrac1e ) \OPTLP$
        \item $\EMD_d(\bm p, \tilde{\bm p}) = \norm{\bm x^* - \tilde{\bm x}^*}_1$
    \end{enumerate}
\end{lemma}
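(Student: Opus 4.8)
The plan is to prove the two parts by separate arguments: part~(a) follows directly from the coverage probability of the bidding phase, whereas part~(b) needs a carefully designed coupling of the two executions of \Cref{alg:maximum b-matching} and is where the real work lies.

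For part~(a), I would first observe that in \Cref{alg:maximum b-matching} buyer $u$ places a bid on exactly one item of each \emph{distinct} non-$\perp$ seller appearing among its $b_u$ i.i.d.\ draws, so $\sum_{j\in [b_v]} p(u, v_j) = \ones\set{v \text{ is drawn by } u \text{ at least once}}$. Each draw selects $v$ with probability $x^*_{uv}/b_u$ — a valid distribution because $\sum_{v\in N(u)} x^*_{uv}\leq b_u$ by feasibility of $\bm x^*$ in \Cref{lp:maximum bipartite b-matching}, with $\perp$ absorbing the remaining mass — so
\[
    \E\left[\sum_{j\in [b_v]} p(u, v_j)\right]
    = 1 - \left(1 - \frac{x^*_{uv}}{b_u}\right)^{b_u}
    \geq \left(1 - \frac1e\right) x^*_{uv}.
\]
Here the inequality uses that $x \mapsto 1-(1-x/b)^b$ is concave on $[0,1]$, vanishes at $x=0$, and at $x=1$ equals $1-(1-1/b)^b \geq 1-\nicefrac1e$ for every $b\geq 1$, together with $x^*_{uv}\in[0,1]$ from the box constraint. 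Weighting by $w_{uv}$ and summing over all edges then gives $\E\left[\sum_{u, v, j} w_{uv} p(u, v_j)\right] \geq (1-\nicefrac1e)\iprod{\bm w, \bm x^*} = (1-\nicefrac1e)\OPTLP$, where $\OPTLP$ is the value of the fractional solution being rounded (which \Cref{lem:fractional b-matching} relates to the true LP optimum up to the regularization gap).

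For part~(b), the starting point is that the two runs of \Cref{alg:maximum b-matching} on $\bm w$ and $\tilde{\bm w}$ differ only through $\bm x^*$ versus $\tilde{\bm x}^*$ and that the bidding phase is independent across buyers, so it suffices, for each buyer $u$, to couple its $b_u$ i.i.d.\ draws under the two solutions and to bound the expected symmetric difference between the corresponding blocks of $\bm p$ and $\tilde{\bm p}$. For a fixed $u$, I would use the coupling that reserves, for each seller $v$, a shared sub-interval of mass $\min(x^*_{uv}, \tilde x^*_{uv})/b_u$ on which both runs output $v$, a shared sub-interval on which both output $\perp$, and spreads each side's leftover (``excess'') mass over the remaining interval; the $\perp$-slack must be allotted so that the total excess region has length at most $\frac1{b_u}\sum_{v\in N(u)}\abs{x^*_{uv}-\tilde x^*_{uv}}$. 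Then a seller $v$ enters the symmetric difference $S_u\triangle\tilde S_u$ of the two distinct-seller sets only through an excess draw picking $v$ on one side, and bounding the expected number of such distinct sellers on each side separately by the expected number of non-$\perp$ excess draws of that side gives $\E[\abs{S_u\triangle\tilde S_u}] \leq \sum_{v\in N(u)}\abs{x^*_{uv}-\tilde x^*_{uv}}$. Finally I would couple the uniform item choices so that for $v\in S_u\cap\tilde S_u$ the same index $j$ is bid in both runs (contributing $0$ to the Hamming distance) while each $v\in S_u\triangle\tilde S_u$ contributes exactly $1$; summing over $u\in U$ yields $\EMD_d(\bm p, \tilde{\bm p}) \leq \sum_{u\in U}\sum_{v\in N(u)}\abs{x^*_{uv}-\tilde x^*_{uv}} = \norm{\bm x^*-\tilde{\bm x}^*}_1$, which is the bound needed downstream.

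The hard part will be the coupling in~(b): one must allocate the $\perp$-mass so that the two draw distributions' excess regions have \emph{combined} length only $\frac1{b_u}\sum_{v\in N(u)}\abs{x^*_{uv}-\tilde x^*_{uv}}$ and not twice that, and then argue that sampling with replacement does not amplify the discrepancy, i.e.\ each seller contributes at most $1$ in expectation to $\abs{S_u\triangle\tilde S_u}$ per unit of $\ell_1$-discrepancy — which is exactly why the two directions of the symmetric difference should be bounded separately, with the $\perp$-excess absorbing the overlap. The remaining ingredients (the concavity estimate in~(a), coupling the uniform item indices, and summing over the independent buyers) are routine. I also expect the argument to be completely insensitive to the capacities $\bm b$, unlike the entropy-regularizer analysis of \cite{kumabe2022lipschitz}, since the per-buyer draw distribution already carries the correct marginals $x^*_{uv}/b_u$.
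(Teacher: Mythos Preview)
Your proposal is correct and follows essentially the same approach as the paper: for (a) both use the same $1-(1-x/b)^b \geq (1-\nicefrac1e)x$ bound, and for (b) both couple the $b_u$ per-buyer draws via the TV-optimal (maximal) coupling and share the uniform item-index randomness, with your explicit interval/excess construction being a slightly more concrete rendition of the paper's ``multiset symmetric difference'' step. Note that, like the paper, you only establish the $\leq$ direction in (b), which is indeed all that is used downstream.
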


\begin{pf}[\Cref{lem:maximum bipartite b-matching bidding}]
    We prove each statement separately.

    \paragraph{Proof of (a).}
    It suffices to compute the marginal probability $\E[p(u, v_j)]$ of a buyer $u\in U$ bidding on the item $v_j$ from a seller $v\in N(u)$
    and lower bound it by $(1-\nicefrac1e) \frac{x_{uv}^*}{b_v}$.
    Then the result follows from the linearity of expectation.
    In fact,
    we show that the probability that $u$ bids on some item from $v\in N(v)$ is at least $(1-\nicefrac1e) x_{uv}^*$.
    Since the specific item is chosen uniformly at random,
    the result follows.
    
    Consider the equivalent experiment for generating a set of at most $b_u$ sellers as follows:
    For $i=1, 2, \dots, b_u$,
    we sample a seller $v\in N(u)$ with probability $\frac{x_{uv}^*}{b_v}$ and do nothing with the remaining probability.
    {Notice that since $\sum_{v \in N(u)} x_{uv}^* \leq b_u$, it is indeed the case
    that $\sum_{v\in N(u)} \frac{x_{uv}^*}{b_u} \leq 1.$}
    If we have not yet chosen $v$ in a prior iteration $i' < i$,
    then add $v$ to the set of sellers.
    The event that the neighbor $v\in N(u)$ is drawn at least once is the complement of the event that it is never picked after $b_u$ draws.
    Hence
    \begin{align*}
        \Pr[\text{$u$ selects $v$}]
        &= 1-\left( 1-\frac{x_{uv}^*}{b_u} \right)^{b_u} \\
        &\geq 1-e^{-x_{uv}^*} \tag{$1+z\leq e^z$} \\
        &\geq \left( 1 - \frac1e \right) x_{uv}^*. \tag{concavity}
    \end{align*}
    The last inequality can be verified by noting that we have equality at $x_{uv}^* = 0, 1$
    and that $1-e^{-x_{uv}^*}$ is a concave function of $x_{uv}^*$.

    \paragraph{Proof of (b).}
    It suffices to show that for each $u\in U$,
    under the optimal coupling $(\bm p, \tilde{\bm p})\sim \mcal D$,
    \[
        \E_{(\bm p, \tilde{\bm p})\sim \mcal D}\left[ \sum_{v\in N(u), j\in [b_v]} \abs{p(u, v_j) - \tilde p(u, v_j)} \right]
        \leq \sum_{v\in N(u)} \abs{x_{uv}^* - \tilde x_{uv}^*}.
    \]
    Then the conclusion follows by summing over all $u\in U$ and the linearity of expectation.

    First,
    we analyze
    \begin{align*}
        &\sum_{v\in N(u)} \abs*{\sum_{j\in [b_v]} p(u, v_j) - \sum_{j\in [b_v]} \tilde p(u, v_j)} \\
        &= \sum_{v\in N(u)} \abs*{\ones\set{\text{$u$ selects seller $v$ under weights $w$}} -  \ones\set{\text{$u$ selects seller $v$ under weights $\tilde{\bm w}$}}}.
    \end{align*}
    This is precisely the cardinality of the symmetric difference of the set of sellers selected by $u$.
    For the final set of sellers to differ at some $v\in N(u)$,
    the number of times $v$ was selected in the multiset of selected sellers must have differed.
    The total variation distance of a single selected seller is $\sum_{v\in N(u)}\abs*{\frac{x_{uv}^*}{b_u} - \frac{\tilde x_{uv}^*}{b_u}}$.
    Hence the expected cardinality of the symmetric difference of the multiset within the optimal coupling is $\sum_{v\in N(u)}\abs{x_{uv}^* - \tilde x_{uv}^*}$
    and we have the upper bound
    \begin{align*}
        \sum_{v\in N(u)} \abs*{\sum_{j\in [b_v]} p(u, v_j) - \sum_{j\in [b_v]} \tilde p(u, v_j)}
        &\leq \sum_{v\in N(u)}\abs{x_{uv}^* - \tilde x_{uv}^*}.
    \end{align*}
    
    Now, 
    {if we consider a coupling
    where we are sharing the internal randomness},
    $p(u, v_j), \tilde p(u, v_j)$ can only differ if $u$ was inconsistent in including $v$ in the set of sellers.
    In other words,
    for any $u\in U, v\in N(u)$,
    we have
    \[
        \sum_{j\in [b_v]} \abs{p(u, v_j) - \tilde p(u, v_j)}
        = \abs*{\sum_{j\in [b_v]} p(u, v_j) - \sum_{j\in b_v} \tilde p(u, v_j)}.
    \]
    It follows that
    \begin{align*}
        &\E_{(\bm p, \tilde{\bm p})\sim \mcal D}\left[ \sum_{v\in N(u), j\in [b_v]} \abs{p(u, v_j) - \tilde p(u, v_j)} \right] \\
        &= \E_{(\bm p, \tilde{\bm p}\sim \mcal D)}\left[ \sum_{v\in N(u)} \abs*{\sum_{j\in [b_v]} p(u, v_j) - \sum_{j\in [b_v]} \tilde p(u, v_j)} \right] \\
        &\leq \sum_{v\in N(u)} \abs{x_{uv}^* - \tilde x_{uv}^*}.
        \qedhere
    \end{align*}
\end{pf}

For the accepting phase,
we observe that it is identical to the case of bipartite matching
since each item is sold to its set of bidders uniformly at random.
The approximation and Lipschitz guarantees follow almost immediately
from the guarantees for bipartite matching \cite{kumabe2022lipschitz}.
We write $q(v_j, u)$ to be the indicator variable that item $v_j$ is sold to $u\in N(v)$.

For the approximation guarantees,
an important observation from \cite{kumabe2022lipschitz} is that the number of bids for an item in \Cref{alg:maximum b-matching}
follows a \emph{Poisson Binomial} distribution.
Recall $\PoBin(y_1, \dots, y_n)$ is the number of heads from flipping $n$ independent coins
with biases $y_1, \dots, y_n$.
The following lemma ensures that we lose at most half the bids in expectation within the acceptance phase.

\begin{lemma}[Lemmas 7.10, 7.11, 7.12 in \cite{kumabe2022lipschitz}]\label{lem:maximum bipartite b-matching accepting approximation}
    Suppose $u\in U$ bids on an item $v_j$ for $v\in N(u), j\in [b_v]$
    and the total number of other bids for $v_j$ follows a Poisson Binomial distribution $\PoBin(y_{u'v_j})_{u\neq u'\in N(v)}$
    whose parameters $y_{u'v_j}\in [0, 1], u\neq u' \in N(v)$ satisfy $\sum_{u\neq u'\in N(v)} y_{u'v_j} \leq 1$.
    Then, if $v_j$ is sold to a bidder uniformly at random,
    $v_j$ is sold to $u$ with probability at least $\nicefrac12$.
\end{lemma}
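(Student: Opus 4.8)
The plan is to reduce the claim to a short inequality about Poisson Binomial random variables. Condition on the event that buyer $u$ bids on item $v_j$, and let $Z$ denote the number of \emph{other} buyers that also bid on $v_j$, so that by hypothesis $Z\sim\PoBin\big((y_{u'v_j})_{u'\neq u\in N(v)}\big)$. The set of bidders for $v_j$ then has size exactly $Z+1$, and since the seller $v$ accepts one of these bidders uniformly at random, the probability that $v_j$ is sold to $u$ equals $\E\!\left[\tfrac{1}{Z+1}\right]$. Hence it suffices to prove $\E\!\left[\tfrac1{Z+1}\right]\geq \tfrac12$.

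By assumption $\E[Z]=\sum_{u'\neq u\in N(v)} y_{u'v_j}\leq 1$. Since $t\mapsto \tfrac1{1+t}$ is convex and non-increasing on $[0,\infty)$, Jensen's inequality gives
\[
    \E\!\left[\frac{1}{1+Z}\right]
    \;\geq\; \frac{1}{1+\E[Z]}
    \;\geq\; \frac{1}{1+1}
    \;=\; \frac12,
\]
as desired. (One can also avoid Jensen: writing $Z=\sum_{u'} B_{u'}$ as a sum of independent Bernoullis with parameters $y_{u'v_j}$, we have $\E[\tfrac1{1+Z}]=\int_0^1 \E[t^Z]\,dt=\int_0^1\prod_{u'}\big(1-y_{u'v_j}(1-t)\big)\,dt$, and the integrand is at least $1-(1-t)\sum_{u'} y_{u'v_j}\geq t$ by the Weierstrass product inequality together with the hypothesis $\sum_{u'} y_{u'v_j}\le 1$; integrating yields $\tfrac12$.)

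I do not anticipate a real obstacle, as the lemma merely repackages Lemmas 7.10--7.12 of \cite{kumabe2022lipschitz}; the substantive content is the two observations above. The only point needing care is not in this lemma itself but in its application inside \Cref{alg:maximum b-matching}: one must check that the hypothesis $\sum_{u'\neq u\in N(v)} y_{u'v_j}\leq 1$ genuinely holds for the bids generated in the bidding phase. This follows because each buyer $u'$ bids on a fixed item $v_j$ of seller $v$ with probability at most $\tfrac1{b_v}\cdot\Pr[u'\text{ selects }v]\leq \tfrac{x^*_{u'v}}{b_v}$ (here $\Pr[u'\text{ selects }v]=1-(1-x^*_{u'v}/b_{u'})^{b_{u'}}\leq x^*_{u'v}$ by Bernoulli's inequality, using $x^*_{u'v}\leq 1\leq b_{u'}$), while feasibility of $\bm x^*$ for \Cref{lp:maximum bipartite b-matching} gives $\sum_{u'\in N(v)} x^*_{u'v}\leq b_v$; summing shows the conditioning Poisson Binomial has parameters summing to at most $1$, so the hypothesis of the lemma is met.
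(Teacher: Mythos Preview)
Your proof is correct. The paper does not prove this lemma itself; it simply imports the statement from \cite{kumabe2022lipschitz} (their Lemmas 7.10--7.12) and uses it as a black box. Your Jensen-based argument is a clean, self-contained derivation: once you observe that the winning probability is $\E[1/(1+Z)]$ and that $1/(1+t)$ is convex and decreasing, the bound $\E[1/(1+Z)]\ge 1/(1+\E[Z])\ge 1/2$ is immediate. Your alternative via $\E[1/(1+Z)]=\int_0^1\E[t^Z]\,dt$ and the Weierstrass product inequality is also correct and recovers the same constant. The verification you sketch for why the hypothesis $\sum_{u'\neq u} y_{u'v_j}\le 1$ holds in \Cref{alg:maximum b-matching} matches what the paper does right after stating the lemma.
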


For $p(u, v_j) \coloneqq \ones\set{\text{$u$ bids on $v_j$}}$ 
and $q(u, v_j) \coloneqq \ones\set{\text{$v_j$ is sold to $u$}}$ from \Cref{alg:maximum b-matching},
\Cref{lem:maximum bipartite b-matching accepting approximation} states that
\[
    \Pr[q(u, v_j) = 1 \mid p(u, v_j) = 1]
    \geq \frac12.
\]
{This is because
for each item $v_j$ we flip $\card{N(v)}$
different independent coins in total 
and for every $u\in N(v)$,
the coin corresponding to $u$
has bias at most $x_{uv}/b_v$. Notice
that for any $u\in N(v)$,
\begin{align*}
  \sum_{u\neq u'\in N(v)} \frac{x_{uv}}{b_v} \leq 1,
\end{align*}
so the conditions of \Cref{lem:maximum bipartite b-matching accepting approximation} are satisfied.
}

For the Lipschitz guarantees,
the following lemma reduces bounding the Lipschitz constant of the final $\bm b$-matching
to bounding the Lipschitz constant of the bids themselves.
\begin{lemma}[Lemma 7.18 in \cite{kumabe2022lipschitz}]\label{lem:maximum bipartite b-matching accepting Lipschitz}
    Let $p(u, v_j), \tilde p(u, v_j), u\in U, v\in N(u), j\in [b_v]$ be two realizations of indicator variables
    that a buyer $u\in U$ 
    bids on some item $v_j, j\in [b_v]$
    from a seller $v\in N(u)$.
    Suppose each item is sold uniformly at random to one of its bidders
    (if there are any),
    and let $q(u, v_j), \tilde q(u, v_j), u\in U, v\in N(u), j\in [b_v]$ be the indicator variables that $v_j$ is sold to $u$
    conditioned on $\bm p, \tilde{\bm p}$,
    respectively.
    Then under the conditional optimal coupling $(\bm q, \tilde{\bm q})\sim \mcal D_{p, \tilde p}$,
    \[
        \E_{(\bm q, \tilde{\bm q})\sim \mcal D_{\bm p, \tilde{\bm p}}}\left[ \norm{\bm q-\tilde{\bm q}}_1 \mid \bm p, \tilde{\bm p} \right]
        \leq 2\norm{p-\tilde p}_1.
    \]
\end{lemma}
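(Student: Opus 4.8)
Let me think about how to prove this. We have two realizations of bid indicators $\bm p, \tilde{\bm p}$, and for each item $v_j$ we sell it uniformly at random among its bidders. I want to couple the acceptance decisions $\bm q, \tilde{\bm q}$ so that the expected $\ell_1$-distance is at most twice $\norm{\bm p - \tilde{\bm p}}_1$.

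Key structural observation: the acceptance for different items $v_j$ are independent (each item independently picks a uniformly random bidder among those who bid on it). So it suffices to bound, for a fixed item $v_j$, the expected number of coordinates $u$ on which $q(u,v_j) \neq \tilde q(u,v_j)$, under an optimal coupling of the two uniform draws, and show this is at most twice the number of $u$ on which $p(u,v_j) \neq \tilde p(u,v_j)$. Since the total $\ell_1$-distance decomposes as a sum over items, summing these per-item bounds gives the result.

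For a fixed item $v_j$: let $S = \{u : p(u,v_j)=1\}$ and $\tilde S = \{u : \tilde p(u,v_j)=1\}$ be the two bidder sets, with $|S \Delta \tilde S| = \norm{p(\cdot,v_j) - \tilde p(\cdot,v_j)}_1$. In one world we pick $u^\star$ uniformly from $S$; in the other we pick $\tilde u^\star$ uniformly from $\tilde S$; then $q(\cdot, v_j) = \ones_{\{u^\star\}}$ and $\tilde q(\cdot, v_j) = \ones_{\{\tilde u^\star\}}$, contributing $2$ to the coordinate-wise $\ell_1$-distance iff $u^\star \neq \tilde u^\star$ (and $0$ if they coincide; if one bidder set is empty the corresponding $q$ is all-zero). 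So I need a coupling of two uniform distributions on $S$ and $\tilde S$ that maximizes the probability $u^\star = \tilde u^\star$, i.e. that achieves $\Pr[u^\star \neq \tilde u^\star] = \TV(\mathrm{Unif}(S), \mathrm{Unif}(\tilde S))$. The per-item expected contribution is then $2\,\TV(\mathrm{Unif}(S),\mathrm{Unif}(\tilde S))$, and I claim $\TV(\mathrm{Unif}(S),\mathrm{Unif}(\tilde S)) \le |S \Delta \tilde S|$. Indeed, writing $a=|S|, b=|\tilde S|$, the total variation is $\frac12\sum_u |\ones_S(u)/a - \ones_{\tilde S}(u)/b|$; a short computation (WLOG $a \le b$) bounds this by $\frac{b-a}{b} + \frac{|S\setminus\tilde S|}{a}\cdot\frac{a}{b}$-type terms, all of which are at most $|S\Delta\tilde S|$ since $a,b\ge 1$ whenever the set is nonempty — and when $b-a$ is nonzero it is at most $|S\Delta \tilde S|$, while $\TV \le 1 \le |S\Delta\tilde S|$ whenever $S\neq\tilde S$. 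Thus the per-item contribution is at most $2|S\Delta\tilde S| = 2\norm{p(\cdot,v_j)-\tilde p(\cdot,v_j)}_1$, and summing over all items $v_j$ gives $\E[\norm{\bm q - \tilde{\bm q}}_1 \mid \bm p, \tilde{\bm p}] \le 2\norm{\bm p - \tilde{\bm p}}_1$.

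The main obstacle — and the place where I would be careful — is the clean bound $\TV(\mathrm{Unif}(S),\mathrm{Unif}(\tilde S)) \le |S \Delta \tilde S|$, since the naive triangle-inequality split produces a $|S\setminus \tilde S|/a$ term with a denominator $a=|S|$ that could in principle be large, making the bound look weak. The resolution is that $\TV$ is always at most $1$, so this only matters when $|S\Delta\tilde S|\ge 1$, i.e. the sets genuinely differ, in which case $1 \le |S\Delta\tilde S|$ and we are done; when $S = \tilde S$ both $\TV$ and $|S\Delta\tilde S|$ are $0$. Since this is stated as Lemma 7.18 of \cite{kumabe2022lipschitz} in the bipartite matching setting (where each seller has a single item), and our accepting phase is verbatim the same mechanism applied independently to each of the $b_v$ copies $v_j$, the whole argument reduces to invoking that lemma per item $v_j$ and summing — the only new content is the bookkeeping of the decomposition over $j \in [b_v]$, which is immediate from independence of the per-item uniform draws.
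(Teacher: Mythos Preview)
The paper does not give its own proof of this lemma; it is simply quoted as Lemma~7.18 of \cite{kumabe2022lipschitz} and used as a black box. Your argument is correct and is the natural one: decompose over items (the per-item acceptances are independent), observe that for a fixed item the contribution to $\norm{\bm q-\tilde{\bm q}}_1$ is $2\cdot\ones[u^\star\neq\tilde u^\star]$ when both bidder sets $S,\tilde S$ are nonempty (and at most $1$ if one is empty), and then use the optimal coupling of $\mathrm{Unif}(S)$ and $\mathrm{Unif}(\tilde S)$ together with the trivial bound $\TV\leq 1\leq |S\Delta\tilde S|$ whenever $S\neq\tilde S$. The one place your write-up is slightly loose is the paragraph where you start computing $\TV(\mathrm{Unif}(S),\mathrm{Unif}(\tilde S))$ explicitly with $a,b$ --- that calculation is unnecessary and the expression you wrote is not quite clean; you should simply drop it in favor of the ``$\TV\leq 1\leq |S\Delta\tilde S|$ when $S\neq\tilde S$'' argument you give immediately afterward, which is already a complete proof of the per-item bound.
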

Let edge weights $\tilde{\bm w}\in \R_{>0}^E$ be obtained by perturbing $\bm w\in \R_{>0}^E$ at a single coordinate by some small $\delta > 0$.
Let $p(u, v_j) \coloneqq \ones\set{\text{$u$ bids on $v_j$}}$ to be the indicator variable that $u$ bids on the item $v_j$ from seller $v\in N(u)$ when \Cref{alg:maximum b-matching} is executed on $\bm w$
and similarly $\tilde p(u, v_j)$ when executed on $\tilde{\bm w}$.
Define $q(u, v_j) \coloneqq \ones\set{\text{$v_j$ is sold to $u$}}$ to be the indicator variable that the item $v_j$ is sold to $u$
when \Cref{alg:maximum b-matching} is executed on $\bm w$
and similarly $\tilde q(u, v_j)$ when executed on $\tilde{\bm w}$.
Fix a realization $\bm p, \tilde{\bm p}$.
\Cref{lem:maximum bipartite b-matching accepting Lipschitz} states that under the conditional optimal coupling $(\bm q, \tilde{\bm q})\sim \mcal D_{\bm p, \tilde{\bm p}}$,
\[
    \E_{(\bm q, \tilde{\bm q})\sim \mcal D_{\bm p, \tilde{\bm p}}}\left[ \norm{\bm q-\tilde{\bm q}}_1 \mid \bm p, \tilde{\bm p} \right]
    \leq 2\norm{\bm p-\tilde{\bm p}}_1.
\]

We are now ready to prove \Cref{thm:maximum bipartite b-matching}.
\begin{pf}[\Cref{thm:maximum bipartite b-matching}]
    We show the approximation and Lipschitz guarantees separately.

    \paragraph{Approximation Guarantee.}
    Let $\bm x$ denote the optimal solution of \Cref{lp:maximum bipartite b-matching}
    and $M$ denote an output $\bm b$-matching.
    By the linearity of expectation,
    \begin{align*}
        &\E\left[ \sum_{e\in E} w_e \ones\set{e\in M} \right] \\
        &= \sum_{u\in U} \sum_{v\in N(v)} \sum_{j\in [b_v]} w_{uv} \Pr[q(u, v_j) = 1\mid p(u, v_j)=1]\cdot \Pr[p(u, v_j)=1] \\
        &\geq \frac12 \sum_{u\in U} \sum_{v\in N(v)} \sum_{j\in [b_v]} w_{uv} \Pr[p(u, v_j)=1] \tag{\Cref{lem:maximum bipartite b-matching accepting approximation}} \\
        &\geq \frac12 \left( 1-\frac1e \right) \OPTLP. \tag{\Cref{lem:maximum bipartite b-matching bidding}}
    \end{align*}
    Since $\OPTLP \geq (1-\varepsilon) \OPT$ by \Cref{lem:fractional b-matching},
    we have shown the approximation guarantees as desired.

    \paragraph{Pointwise Lipschitz Constant.}
    Let edge weights $\tilde{\bm w}\in \R_{>0}^E$ be obtained by perturbing $\bm w\in \R_{>0}^E$ at a single coordinate by some small $\delta > 0$.
    Let $p(u, v_j) \coloneqq \ones\set{\text{$u$ bids on $v_j$}}$, 
    $q(u, v_j) \coloneqq \ones\set{\text{$v_j$ is sold to $u$}}$
    be the indicator variables that $u$ bids on the item $v_j$ from seller $v\in N(u)$ 
    and the item $v_j$ is sold to $u$,
    respectively,
    when \Cref{alg:maximum b-matching} is executed on $\bm w$.
    Similarly,
    define $\tilde p(u, v_j)$, $\tilde q(u, v_j)$ when the algorithm is executed on $\tilde{\bm w}$.

    Consider the coupling $(\bm q, \tilde{\bm q})\sim \mcal D_{\bm p, \tilde{\bm p}}\circ \mcal D$ 
    obtained by sampling from the optimal coupling $(\bm p, \tilde{\bm p})\sim \mcal D$ (cf. \Cref{lem:maximum bipartite b-matching bidding})
    and then sampling from the conditional optimal coupling $(\bm q, \tilde{\bm q})\sim \mcal D_{\bm p, \tilde{\bm p}}$ (cf. \Cref{lem:maximum bipartite b-matching accepting Lipschitz}).
    We have
    \begin{align*}
        &\E_{(\bm q, \tilde{\bm q})\sim \mcal D_{\bm p, \tilde{\bm p}}\circ \mcal D} \left[ \norm{\bm q-\tilde{\bm q}}_1 \right] \\
        &= \E_{(\bm p, \tilde{\bm p})\sim \mcal D} \left[ \E_{(\bm q, \tilde{\bm q})\sim \mcal D_{\bm p, \tilde{\bm p}}} \left[ \norm{\bm q-\tilde{\bm q}}_1 \mid \bm p, \tilde{\bm p} \right] \right] \\
        &\leq \E_{(\bm p, \tilde{\bm p})\sim \mcal D} \left[ 2\norm{\bm p-\tilde{\bm p}}_1 \right] \tag{\Cref{lem:maximum bipartite b-matching accepting Lipschitz}} \\
        &= 2\norm{\bm x-\tilde{\bm x}}_1 \tag{\Cref{lem:maximum bipartite b-matching bidding}} \\
        &= O\left(\frac{\sqrt{m}}{\eps w_{\min}}\right). \tag{\Cref{lem:fractional b-matching}}
    \end{align*}
\end{pf}

\section{Packing Integer Programs}\label{sec:packing IP}
In this section,
we further demonstrate the applicability of our techniques by designing a Lipschitz algorithm for packing integer programs.
Let $\bm w\in \R_{> 0}^m$, 
$A\in [0, 1]^{p\times m}$,
and $\bm b\in \R_{\geq 1}^p$.
Consider a general \emph{packing integer program (PIP)} \cite{raghavan1988probabilistic} of the form
\begin{align*}
    \begin{array}{ll}
        \text{maximize} & \displaystyle \sum_{i\in [m]} w_i x_i \\
        \text{subject to} & A\bm x \leq \bm b \\
        & \bm x \in \set{0, 1}^m
    \end{array}
\end{align*}

This model generalizes many combinatorial optimization problems such as
maximum independent set,
set packing,
knapsack,
maximum $\bm b$-matching,
and \emph{maximum $\bm b$-matching on hypergraphs} \cite{tardos2005algorithm}:
Given a hypergraph $(V, E)$ where $E\sset 2^V$ is a collection of $m$ subsets of $V=[p]$, positive integers $b_j\geq 1$ for each $j\in [p]$, and a non-negative weight $w_i$ for each edge $e_i\in E\;(i\in [m])$,
the goal is to choose a maximum weight collection of edges
so that each element $j\in V$ is contained in at most $b_j$ of the chosen edges.
A popular PIP formulation is as follows.
\begin{align*}
    \begin{array}{lll}
        \text{maximize} & \bm w^\top \bm x \\
        \text{subject to} & \displaystyle \sum_{i: j\in e_i} x_i \leq b_j &\forall j\in [p] \\
        & x_i \in \set{0, 1}^m
        \end{array}
\end{align*}

The guarantees we 
get are summarized in \Cref{thm:PIP}, which is the main result 
of this section.

\begin{theorem}\label{thm:PIP}
    Let $\bm w\in \R_{> 0}^m$, 
    $A\in [0, 1]^{p\times m}$,
    and $\bm b\in \R_{\geq 1}^p$ be an instance of a PIP
    satisfying $B \coloneqq \min_j b_j \geq 1$.
    For any $c  \geq 1$ \Cref{alg:PIP}
    is an expected $O((cp)^{1/B})$-approximation algorithm
    that returns a feasible solution with probability at least $1-\nicefrac{1}{c}$.
    Moreover,
    its pointwise Lipschitz constant is 
    \[
        O\left( \frac{\sqrt{m}}{w_{\min}\cdot (cp)^{1/B}} \right).
    \]
\end{theorem}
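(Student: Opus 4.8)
The plan is to combine the fractional guarantees of \Cref{lem:fractional PIP} with a scaled independent randomized rounding analysis (a Chernoff bound for the feasibility probability) and a shared-randomness coupling together with \Cref{lem:one perturbation} for the pointwise Lipschitz constant. Three claims must be checked: that the expected objective value of $\bm y^*$ is $\Omega(\OPT/(cp)^{1/B})$; that $\bm y^*$ is feasible with probability at least $1-\nicefrac1c$; and that the pointwise Lipschitz constant is $O(\sqrt m/(w_{\min}(cp)^{1/B}))$. The approximation guarantee is immediate: by linearity of expectation $\E[\bm w^\top\bm y^*]=\sum_i w_i\cdot\nicefrac{x_i^*}{\gamma}=\nicefrac1\gamma\cdot\bm w^\top\bm x^*$, and the approximation half of \Cref{lem:fractional PIP}, applied with the competitor taken to be an optimal integral solution (which is feasible for \Cref{lp:regularized PIP}), gives $\bm w^\top\bm x^*\geq\nicefrac12\OPT$; since $\gamma=e(cp)^{1/B}$ we get $\E[\bm w^\top\bm y^*]\geq\nicefrac{\OPT}{2e(cp)^{1/B}}$. (This expectation is over all rounding outcomes, including the low-probability infeasible ones; it combines with the feasibility bound below to give the stated guarantee.)

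For feasibility, fix a constraint $j\in[p]$ and put $X_j\coloneqq\sum_i A_{ji}y_i^*$, a sum of independent random variables supported on $[0,1]$ (since $A_{ji}\in[0,1]$ and $y_i^*\in\{0,1\}$) with mean $\mu_j=\nicefrac1\gamma(A\bm x^*)_j\leq\nicefrac{b_j}{\gamma}$ by feasibility of $\bm x^*$. Assuming $\mu_j>0$ (the case $\mu_j=0$ being trivial since $b_j\geq1$), write $1+\delta\coloneqq\nicefrac{b_j}{\mu_j}\geq\gamma$; then the multiplicative Chernoff bound for bounded independent summands yields
\begin{align*}
    \Pr[X_j\geq b_j]
    &=\Pr\bigl[X_j\geq(1+\delta)\mu_j\bigr]\\
    &\leq\left(\frac{e}{1+\delta}\right)^{(1+\delta)\mu_j}
    =\left(\frac{e}{1+\delta}\right)^{b_j}\\
    &\leq\left(\frac{e}{\gamma}\right)^{b_j}
    =(cp)^{-b_j/B}
    \leq\frac{1}{cp},
\end{align*}
where the last step uses $b_j\geq B$ and $cp\geq1$. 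A union bound over the $p$ constraints then shows $\bm y^*$ is infeasible with probability at most $\nicefrac1c$.

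For the pointwise Lipschitz constant, perturb one coordinate of $\bm w$ by $\delta$ to get $\tilde{\bm w}$, and let $\bm x^*,\tilde{\bm x}^*$ be the corresponding optimal solutions of \Cref{lp:regularized PIP}. Couple the rounding by drawing one $U_i\sim U[0,1]$ per coordinate and setting $y_i^*=\ones[U_i\leq\nicefrac{x_i^*}{\gamma}]$ and $\tilde y_i^*=\ones[U_i\leq\nicefrac{\tilde x_i^*}{\gamma}]$; then $\E\abs{y_i^*-\tilde y_i^*}=\nicefrac1\gamma\abs{x_i^*-\tilde x_i^*}$, so $\EMD_d(\bm y^*,\tilde{\bm y}^*)\leq\nicefrac1\gamma\norm{\bm x^*-\tilde{\bm x}^*}_1\leq O\!\left(\sqrt m\,\abs{\delta}/(w_{\min}\gamma)\right)$ by \Cref{lem:fractional PIP}. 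Dividing by $\abs{\delta}$, using $\gamma=\Theta((cp)^{1/B})$, and invoking \Cref{lem:one perturbation} to pass from single-coordinate to arbitrary perturbations yields the stated bound. The main obstacle is the feasibility step: one must line up the Chernoff exponent precisely with the choice $\gamma=e(cp)^{1/B}$ so that each constraint fails with probability at most $\nicefrac1{cp}$, and verify that the multiplicative Chernoff bound applies to sums of independent $[0,1]$-valued (rather than just $\{0,1\}$-valued) variables, which follows from the convexity estimate $e^{tx}\leq1+(e^t-1)x$ on $[0,1]$; everything else is routine given \Cref{lem:fractional PIP} and \Cref{lem:one perturbation}.
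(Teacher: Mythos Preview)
Your proposal is correct and follows essentially the same approach as the paper: the approximation claim via linearity of expectation and \Cref{lem:fractional PIP}, the feasibility claim via the multiplicative Chernoff bound $\Pr[(A\bm y)_j\geq b_j]\leq(e/\gamma)^{b_j}\leq 1/(cp)$ plus a union bound, and the Lipschitz bound via the shared-threshold coupling combined with \Cref{lem:fractional PIP} and \Cref{lem:one perturbation}. You are slightly more explicit than the paper in justifying the Chernoff step for $[0,1]$-valued summands and in invoking \Cref{lem:one perturbation}, but the argument is the same.
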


{
Following our recipe from previous sections,
we first obtain a fractional Lipschitz solution in \Cref{sec:pip:fractional}.
This can be rounded using a simple independent rounding scheme we describe in \Cref{sec:pip:rounding}.
}

\subsection{Obtaining a Fractional Lipschitz Solution}\label{sec:pip:fractional}
Returning to the case of a general PIP,
we take the straightforward convex relaxation with an additional strongly convex regularizer.
\begin{align}\label[LP]{lp:regularized PIP}
    \begin{array}{ll}
        \text{minimize} &\displaystyle -\sum_{i=1}^m w_ix_i + \frac{1}2\sum_{i=1}^m w_i x_i^2 \\
        \text{subject to} & A\bm x \leq \bm b \\
        & \bm x \in [0, 1]^m
    \end{array}
\end{align}

Our approach to solve this problem in a pointwise Lipschitz continuous
manner is to first solve the regularized \Cref{lp:regularized PIP} and then round the solution $\bm x^* \in [0,1]$ to a solution $\bm y^* \in \{0,1\}$
as
follows: we fix some sufficiently large number $\gamma > 0$ and
for each $i \in [m]$ we set $y^*_i = 1$ with probability
$x^*_i/\gamma$ and $y^*_i = 0$ with probability $1-x^*_i/\gamma$.
This is described in \Cref{alg:PIP}.

\begin{algorithm}[htp!]
    \caption{Packing Integer Program}\label{alg:PIP}
    \KwIn{A weight vector $\bm w \in \R^m_{> 0},$ a constraint
    matrix $A \in [0,1]^{p \cross m},$ a constraint vector $\bm b \in \R^p_{\geq 1}$, an approximation parameter $c \geq 1.$}
    Solve \Cref{lp:regularized PIP} and let $\bm x^*\in [0,1]^m$ be the obtained solution \\
    $B \gets \min_{i \in [p]} b_i$ \\
    $\gamma \gets e (cp)^{1/B}$ \\
    \For{each variable $i \in [m]$} {
        $y^*_i \gets \mathrm{Bernoulli}\left(\nicefrac{x^*_i}{\gamma}\right)$
    }
    Return $\bm y^*$
\end{algorithm}

First, we prove the following guarantees about the fractional solutions
obtained by solving \Cref{lp:regularized PIP}.

\begin{lemma}\label{lem:fractional PIP}
    For any weight vector $\bm w \in \R^m_{\geq 0},$ constraint matrix
    $A \in [0,1]^{p \cross m},$ and constraint vector $\bm b \in \R^{p}_{\geq 1}$,
    the optimal solution $\bm x^*$ to \Cref{lp:regularized PIP} is a $2$-approximation
    to the optimal solution. Moreover, for $\tilde{\bm w} = \bm w + \delta \ones_j,$ for some $j \in [m]$ and $0 \neq \delta \in \R$ with sufficiently
    small absolute value it holds that $\norm{\bm x^* - \tilde{\bm x}^*}_1 \leq O\left( \frac{\sqrt{m}\abs{\delta}}{w_{\min}} \right),$ where $\tilde{\bm x}^*$ is the optimal solution of \Cref{lp:regularized PIP} under $\tilde{\bm w}, A, \bm b.$
\end{lemma}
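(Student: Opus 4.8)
The plan is to treat the two claims separately, each mirroring the corresponding step for bipartite matching in \Cref{sec:bipartite}. For the approximation guarantee, I would use that every feasible point lies in $[0,1]^m$, so $x_i^2 \le x_i$ and hence $\tfrac12\sum_i w_i x_i^2 \le \tfrac12 \sum_i w_i x_i$. Writing $\bm x^{\OPT}$ for an optimal solution of the unregularized relaxation $\max \bm w^\top \bm x$ over $K \coloneqq \set{\bm x \in [0,1]^m : A\bm x \le \bm b}$ and using optimality of $\bm x^*$ for the regularized objective,
\[
    \sum_i w_i x_i^* \;\ge\; \sum_i w_i x_i^* - \tfrac12 \sum_i w_i (x_i^*)^2 \;\ge\; \sum_i w_i x_i^{\OPT} - \tfrac12 \sum_i w_i (x_i^{\OPT})^2 \;\ge\; \tfrac12 \sum_i w_i x_i^{\OPT},
\]
which is the claimed $2$-approximation (this is the computation in \Cref{lem:fractional-matching-approx-lipschitz} with $\eps = 1$).

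For the stability bound, I would instantiate the PGTA framework with $f_{\bm w}(\bm x) = -\iprod{\bm w, \bm x}$, $g_{\bm w}(\bm x) = \tfrac12 \iprod{\bm w, \bm x^{\circ 2}}$, and the feasible region $K$ above, which is convex, compact, nonempty (it contains $\bm 0$ since $\bm b \ge \bm 1$), and crucially independent of $\bm w$. The bulk of the work is verifying \Cref{asm:PGM} for a single-coordinate perturbation $\tilde{\bm w} = \bm w + \delta \ones_j$ with $\abs\delta$ small: \Cref{item:convex} is immediate since $f$ is linear; $g_{\bm w}$ has Hessian $\mathrm{diag}(\bm w)$, so for $\abs\delta$ small enough both $g_{\bm w}, g_{\tilde{\bm w}}$ are $\sigma_{\bm w} = w_{\min}/2$-strongly convex and $L_{\bm w} = 2 w_{\max}$-smooth over $K$; for \Cref{item:c_1}, $\grad g_{\bm w}(\bm x) - \grad g_{\tilde{\bm w}}(\bm x) = -\delta x_j \bm e_j$ has Euclidean norm $\le \abs\delta$ since $x_j \in [0,1]$, giving $C_{\bm w} = 1/L_{\bm w}$; and \Cref{item:c_2} follows by exactly the strong-convexity-of-the-proximal-objective argument used in \Cref{lem:min cut prox}, since $f_{\bm w}$ and $f_{\tilde{\bm w}}$ differ only in coordinate $j$, yielding $\norm{\prox_{L_{\bm w}, f_{\bm w}}(\bm x) - \prox_{L_{\bm w}, f_{\tilde{\bm w}}}(\bm x)}_2 \le \abs\delta/L_{\bm w}$, i.e.\ $D_{\bm w} = 1/L_{\bm w}$. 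Then \Cref{thm:PGM} gives $\norm{\bm x^* - \tilde{\bm x}^*}_2 \le \frac{L_{\bm w}(C_{\bm w}+D_{\bm w})}{\sigma_{\bm w}}\abs\delta = \frac{4}{w_{\min}}\abs\delta$, and finally $\norm{\bm x^* - \tilde{\bm x}^*}_1 \le \sqrt{m}\,\norm{\bm x^* - \tilde{\bm x}^*}_2 = O\!\left(\frac{\sqrt{m}\abs\delta}{w_{\min}}\right)$.

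I do not expect a real obstacle here; this lemma is purely an instantiation of the general machinery. The only non-mechanical point is \Cref{item:c_2}, but it is a verbatim repeat of \Cref{lem:min cut prox} / \Cref{lem:bipartite-matching-pgm}: the proximal objective is $L_{\bm w}$-strongly convex, and adding the two first-order optimality inequalities cancels the quadratic regularizers, leaving $L_{\bm w}\norm{\bm z - \tilde{\bm z}}_2^2 \le \delta(\tilde z_j - z_j) \le \abs\delta\,\norm{\bm z - \tilde{\bm z}}_2$. The $\sqrt m$ factor in the final bound comes entirely from the $\ell_2 \to \ell_1$ conversion, which is unavoidable at the fractional level and is later corrected (up to the $p^{1/B}$ scaling) by the coordinatewise rounding in \Cref{alg:PIP}.
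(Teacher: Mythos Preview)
The proposal is correct and follows essentially the same approach as the paper: both verify \Cref{asm:PGM} with $f_{\bm w}(\bm x) = -\iprod{\bm w,\bm x}$, $g_{\bm w}(\bm x) = \tfrac12\iprod{\bm w,\bm x^{\circ 2}}$, $\sigma_{\bm w} = \Theta(w_{\min})$, $L_{\bm w} = \Theta(w_{\max})$, $C_{\bm w} = D_{\bm w} = 1/L_{\bm w}$ (the proximal bound via the same strong-convexity cancellation as in \Cref{lem:min cut prox}), then invoke \Cref{thm:PGM} and convert $\ell_2 \to \ell_1$; the $2$-approximation is also the same $x_i^2 \le x_i$ chain.
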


\begin{proof}
    {
    We will prove the two results separately. We start with the bound
    on $\norm{\bm x^* - \tilde{\bm x}^*}_2.$
    Our analysis will utilize the bound obtained through the PGM framework
    from \Cref{sec:framework} so 
    we need to check that \Cref{asm:PGM} is 
satisfied.
Let $\tilde{\bm w} = \bm w + \delta \ones_j,$ for 
some $j \in [m]$ and $0 \neq \delta \in \R$ with
sufficiently small absolute value. Let also $f_{\bm w}(\bm x) = -\sum_{i \in [m]} w_i x_i$ 
and similarly for $f_{\tilde{\bm w}}(\cdot)$. Moreover,
let $g_{\bm w}(\bm x) = \nicefrac{1}{2} \sum_{i \in [m]} w_i x_i^2$
and similarly for $g_{\tilde{\bm w}}(\cdot).$ It follows
immediately that $f_{{\bm w}}(\cdot),f_{\tilde{\bm w}}(\cdot)$ are both convex and continuous. Moreover,
it can be seen that for sufficiently small $|\delta|$,
both of them are $\Omega(w_{\min})$-strongly convex
and $O(w_{\max})$-smooth, where $w_{\min} = \min_{i \in [m]} w_i, w_{\max} = \max_{i \in [m]} w_i.$ Let $\sigma, L$
denote the strong convexity, smoothness parameters. Since
the feasible region is a subset of $[0,1]^m$ we get that
\[
    \frac{1}{L} \norm{\nabla g_{\bm w}(\bm x) - \nabla g_{\tilde{\bm w}(\bm x)}}_2 \leq O\left(\frac{|\delta|}{L}\right).
\]
Thus, $C_{\bm w} = 1/L$. Lastly, a similar calculation we made in the proof of~\cref{lem:min cut prox} using the $L$-strong convexity of the proximal objective value yields the following, where $j \in [m]$ is the element whose weight differs,
\begin{align*}
    L \norm{\bm z - \tilde{\bm z}}_2^2
    &\leq f_{\bm w}(\tilde{\bm z}) - f_{\bm w}(\bm z) + f_{\tilde{\bm w}}(\bm z) - f_{\tilde{\bm w}}(\tilde{\bm z}) 
    = |\delta| (\tilde z_{j} - z_{j}) 
    \leq |\delta| \norm{\bm z-\tilde{\bm z}}_2.
\end{align*}
In other words,
\[
    \norm{\prox_{L, f_{\bm w}}(\bm y) - \prox_{L, f_{\tilde{\bm w}}}(\bm y)}_2 \leq \frac1{L} |\delta|
\]
and we satisfy \Cref{item:c_2} with $D_{\bm w} = \nicefrac1{L}$.
}
Then, \Cref{thm:PGM} shows that the fractional optimal solutions $\bm x^*, \tilde{\bm x}^*$ satisfy
\[
    \norm{\bm x^*-\tilde{\bm x}^*}_2
    = O\left( \frac{\abs{\delta}}{w_{\min}} \right).
\]
The final bound follows from $\norm{\bm x^*-\tilde{\bm x}^*}_1 \leq \sqrt{m} \norm{\bm x^*-\tilde{\bm x}^*}_2,$ which is a direct
application of the Cauchy-Schwartz inequality.

Next, we shift our attention to the approximation guarantees
of \Cref{lp:regularized PIP}. First, notice that
since every feasible solution satisfies 
$x_i \in [0,1], \forall i \in [m],$ we have that $\sum_{i \in [m]} w_i x_j^2 \leq \sum_{i \in [m]} w_i x_i^2.$ Thus, we have that
\begin{align*}
    - \sum_{i \in [m]} w_i x^*_i &\leq - \sum_{i \in [m]} w_i x^*_i + \frac{1}{2} \sum_{i \in [m]} w_i (x^*_i)^2 \\
    &\leq  - \sum_{i \in [m]} w_i x_i + \frac{1}{2} \sum_{i \in [m]} w_i x^2_i \\
    &\leq  - \sum_{i \in [m]} w_i x_i + \frac{1}{2} \sum_{i \in [m]} w_i x_i \\
    &\leq -\frac{1}{2} \sum_{i \in [m]} w_i x_i .
\end{align*}

\end{proof}

\subsection{Independent Rounding}\label{sec:pip:rounding}
Having established the guarantees for the fractional solution,
we shift our attention to establishing
guarantees about the \emph{integer} solution of \Cref{alg:PIP}.
First, let us elaborate a bit on the idea behind 
the rounding scheme we use. 
Suppose we solve for the optimal solution $\bm x^*$ of \Cref{lp:regularized PIP} and perform the simple rounding scheme where we independently round each $x_i^*$ to $1$ with probability $x_i^*$ and $0$ with probability $1-x_i^*$.
We preserve the objective in expectation
and also satisfy each constraint in expectation.
However, this is not sufficient since we need to satisfy all
the constraints with high probability.
To do that,
we adapt the ideas from \cite{srinivasan1999approximation, raghavan1988probabilistic} to our setting.
Since all the inequalities are of the form $\leq$,
we can \emph{scale} down the variable $\bm x^*$ by some constant $\gamma > 1$ to obtain $x_i' = x_i^*/\gamma$,
and then perform independent rounding. 

We are now ready to prove \Cref{thm:PIP}
\begin{proof}[Proof of \Cref{thm:PIP}]
    We will prove the approximation guarantee, the feasibility guarantee, and
    the Lipschitz constant
    of our algorithm separately. We start with the approximation guarantee.
    Let $\bm x^*$ denote the solution to \Cref{lp:regularized PIP} under
    $\bm w, A, \bm b.$
    Let $B = \min_{i\in [p]} b_i,$ as defined in \Cref{alg:PIP} and notice
    that $B \geq 1.$ By definition, for each $i \in [m]$ it holds
    that $\E[y^*_i] = x^*_i/\gamma.$ Thus, by linearity of 
    expectation we have that $\E[\sum_{i\in[m]} w_i y_i^*] = \nicefrac{1}{\gamma}\E[\sum_{i\in[m]} w_i x_i^*].$ Finally, recall that $\gamma = e(cp)^{1/B}$ and that $\bm x^*$ gives a $2$-approximation to the objective
    function by \Cref{lem:fractional PIP}. This concludes the approximation guarantee of our algorithm.

    Let us now shift to the feasibility guarantee. Note that $\expect[{(A\bm y)_j}] \leq b_j/\gamma$.
By a multiplicative Chernoff bound,
\[
    \Pr[(A\bm y)_j \geq b_j]
    \leq \left( \frac{e}{\gamma} \right)^{b_j}
    \leq \left( \frac{e}{\gamma} \right)^B
    = \frac1{cp}.
\]
By a union bound over the $p$ constraints,
we satisfy all constraints with constant probability $1-\nicefrac1c$.

Lastly, we focus on the pointwise Lipschitz guarantee. Let 
$\tilde{\bm w} = \bm w + \delta \ones_j,$ for some $j \in [m]$ and $0 \neq \delta \in \R$ with sufficiently small absolute value. Let $\tilde{\bm x}^*$ be the solution
to \Cref{lp:regularized PIP} on $\tilde{\bm w}, A, \bm b.$ Then, 
we know by \Cref{lem:fractional PIP} that $\norm{\bm x^* - \tilde{\bm x}^*}_1 \leq O\left(\frac{\sqrt{m}\abs{\delta}}{w_{\min}}\right),$ which implies
that $\norm{\nicefrac{\bm x^*}{\gamma} - \nicefrac{\tilde{\bm x}^*}{\gamma}}_1 \leq O\left(\frac{\sqrt{m}\abs{\delta}}{\gamma w_{\min}}\right).$ Consider
the following coupling of the two executions of the algorithm that is based 
on shared randomness.
We can consider rounding of each variable as drawing an independent $\tau_i\sim U[0, 1]$ and rounding down if $\tau\leq x_i^*/\gamma$
and vice versa.
The coupling between the perturbed instances is simply the identity coupling on $\tau_i$'s. Under this coupling, we have that
\[
    \E\left[\abs{\bm y^* \Delta \tilde{\bm y}^*}\right] = \sum_{i \in [m]} \abs{\nicefrac{x^*_i}{\gamma} - \nicefrac{\tilde{x}_i^*}{\gamma}} = \norm{\nicefrac{\bm x^*}{\gamma} - \nicefrac{\tilde{\bm x}^*}{\gamma}}_1 \leq O\left(\frac{\sqrt{m}\abs{\delta}}{\gamma w_{\min}}\right).
\]
This concludes the proof.
\end{proof}

\section{Pointwise Lipschitz Continuity under Shared Randomness}\label{sec:shared-randomness}

In this section, we discuss pointwise Lipschitz continuity under shared randomness, which is formally defined below.
\begin{definition}[\cite{kumabe2022lipschitz}]\label{def:pointwise lipschitz shared randomness}
    Suppose the (randomized) algorithm $\mcal A(G, \bm w)$ outputs subsets of some universe $\mfrak S$.
    Let $d: 2^{\mfrak S}\times 2^{\mfrak S}\to \R_+$ be a metric on $2^{\mfrak S}$.
    We say that $A$ has \emph{pointwise Lipschitz constant $c_{\bm w}$ under shared randomness with respect to $d$} if
    \[
        \limsup_{\tilde{\bm w}\to \bm w} \E_\pi \left[ \frac{d(\mcal A_\pi(G, \bm w) - \mcal A_\pi(G, \tilde{\bm w}))}{\norm{\bm w-\tilde{\bm w}}_1} \right]
        \leq c_{\bm w}.
    \]
\end{definition}

\cite[Section 8]{kumabe2022lipschitz} demonstrated how to implement a pointwise Lipschitz algorithm using shared randomness at a constant factor blowup of the Lipschitz constant when the original algorithm employs internal randomness to sample from a uniform continuous distribution or a discrete distribution.
Specifically, Let \Call{Sample}{} be a sampling process that takes two real values $a,b \in \mathbb{R}$ and a vector $\bm p \in [0,1]^{\mathbb{Z}_{\geq 0}}$ and outputs a real value.
For $c \geq 1$, we say that \Call{Sample}{} is $c$-\emph{stable} for a pair  of functions $(l,r)$ with $l,r: \mathbb{R}_{\geq 0}^E \to \mathbb{R}$ if (i) for any $\bm w \in \mathbb{R}_{\geq 0}^E$, $\Call{Sample}{l(\bm w),r(\bm w),p}$ is uniformly distributed over $[l(\bm w),r(\bm w)]$ when $\bm p$ follows the uniform distribution over $[0,1]^{\mathbb{Z}_{\geq 0}}$, and (ii) for any $\bm w,\tilde{\bm w} \in \mathbb{R}_{\geq 0}^E$, we have
\begin{align*}
    &\E_{\bm p\sim \mathcal{U}\left([0,1]^{\mathbb{Z}_{\geq 0}}\right)}\left[\TV\left(\Call{Sample}{l(\bm w),r(\bm w),\bm p},\Call{Sample}{l(\tilde{\bm w}),r(\tilde{\bm w}),\bm p}\right)\right]\\
    &\leq c\cdot\TV\left(\mathcal{U}([l(\bm w),r(\bm w)]),\mathcal{U}([l(\tilde{\bm w}),r(\tilde{\bm w})])\right)
\end{align*}
holds.
Then, they gave the following useful sampling processes.
\begin{lemma}[\rm{\cite{kumabe2023lipschitz}}]\label{lem:shared_const}
Let $l$ and $r$ be constant functions. Then, there is a $1$-stable sampling process for $(l,r)$.
\end{lemma}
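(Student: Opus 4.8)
\textbf{Proof proposal for \Cref{lem:shared_const}.}

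The statement says: when $l$ and $r$ are constant functions (i.e. $l(\bm w) \equiv l$ and $r(\bm w) \equiv r$ for all $\bm w$), there is a $1$-stable sampling process for the pair $(l,r)$. The plan is to exhibit an explicit sampling process and verify the two defining properties directly. Since the target interval $[l(\bm w), r(\bm w)] = [l, r]$ does not depend on $\bm w$ at all, the natural choice is to simply ignore the weight-dependent arguments and use the randomness vector $\bm p$ to produce a single uniform draw from the fixed interval $[l, r]$.

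First I would define $\Call{Sample}{a, b, \bm p} \coloneqq a + (b - a)\cdot p_0$, where $p_0$ is the first coordinate of $\bm p \in [0,1]^{\mathbb{Z}_{\geq 0}}$ (any fixed coordinate works; one could also use a more elaborate deterministic function of $\bm p$, but a single coordinate suffices). Then property (i) is immediate: when $\bm p$ is uniform over $[0,1]^{\mathbb{Z}_{\geq 0}}$, the coordinate $p_0$ is uniform over $[0,1]$, so $l + (r-l)p_0$ is uniform over $[l, r]$, which equals $[l(\bm w), r(\bm w)]$ for every $\bm w$.

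For property (ii), observe that since $l, r$ are constant, for any $\bm w, \tilde{\bm w}$ and any fixed $\bm p$ we have $\Call{Sample}{l(\bm w), r(\bm w), \bm p} = l + (r-l)p_0 = \Call{Sample}{l(\tilde{\bm w}), r(\tilde{\bm w}), \bm p}$, i.e. the two outputs are literally the same deterministic value. Hence the total variation distance between the two (point-mass) output distributions is $0$, so the left-hand side of the $c$-stability inequality is $0$. Meanwhile the right-hand side is $c \cdot \TV(\mathcal{U}([l,r]), \mathcal{U}([l,r])) = c \cdot 0 = 0$ as well, so the inequality $0 \leq 1 \cdot 0$ holds trivially. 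This establishes $1$-stability.

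There is no real obstacle here: the content of the lemma is that a fixed sampling target requires no adaptivity whatsoever, so sharing a single uniform coordinate already gives a perfectly coupled sample. The only thing to be slightly careful about is matching the formal signature of \Call{Sample}{} (it takes $a, b \in \mathbb{R}$ and $\bm p \in [0,1]^{\mathbb{Z}_{\geq 0}}$), which the definition above respects, and noting that the inequality in (ii) is stated with an expectation over $\bm p$ on the left — but since the integrand is identically $0$ regardless of $\bm p$, the expectation is $0$ and the argument goes through unchanged.
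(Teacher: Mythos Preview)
Your proof is correct. The paper does not supply its own proof of this lemma---it is quoted from \cite{kumabe2023lipschitz} without argument---but the construction you give (inverse-transform sampling via a single coordinate $p_0$ of the randomness vector) is exactly the natural and essentially unique proof: with $l,r$ constant, both sides of the stability inequality vanish identically, and uniformity is immediate.
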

\begin{lemma}[\rm{\cite{kumabe2023lipschitz}}]\label{lem:shared_ratio}
Let $c > 1$ and suppose $r(w)=cl(w)$ holds for all $w$.
Then, there is a $(1+c)$-stable sampling process for $(l,r)$.
\end{lemma}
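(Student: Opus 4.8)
The plan is to exhibit one explicit sampling process and verify the two stability conditions; in fact the construction will turn out to be $(1+1/c)$-stable, which suffices since $1+1/c\le 1+c$ for $c>1$. Because $r(\bm w)=c\cdot l(\bm w)$, every target distribution is $\mathcal U([a,ca])$ for some $a=l(\bm w)\ge 0$, and on input $(a,b,\bm p)$ with $b=ca$ the process recovers the fixed ratio $c=b/a$ (the degenerate case $a=0$, where the target is the point mass at $0$, is handled by returning $0$); so assume $a>0$ below.

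The key idea is a ``lowest-point'' coupling realized through a Poisson point process. From $\bm p\in[0,1]^{\mathbb{Z}_{\ge 0}}$ one realizes, by the standard box-by-box construction (tile $(0,\infty)\times(0,\infty)$ by dyadic rectangles, draw a Poisson count per box from one coordinate of $\bm p$, and place the points uniformly using further coordinates), a homogeneous Poisson point process $\Pi$ on $(0,\infty)\times(0,\infty)$ of unit intensity. For $a>0$ let $\Sigma_a\coloneqq[a,ca]\times(0,\infty)$; almost surely $\Pi\cap\Sigma_a$ has a unique point of smallest second coordinate, and we set $\Call{Sample}{a,ca,\bm p}$ equal to its first coordinate. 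Since the coordinates of a product-intensity Poisson process are independent, this first coordinate is uniform on $[a,ca]$ (condition against any event about heights, then let the height cutoff tend to infinity), which is exactly condition (i).

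For condition (ii), fix $\bm w,\tilde{\bm w}$, write $a=l(\bm w)$ and $\tilde a=l(\tilde{\bm w})$, and assume $a\le\tilde a$ by symmetry; set $I=[a,ca]$ and $\tilde I=[\tilde a,c\tilde a]$. As the two outputs are atomless real random variables, the total variation between the corresponding point masses is just the disagreement indicator, so the left-hand side of (ii) equals $\Pr_{\bm p}\bigl[\Call{Sample}{a,ca,\bm p}\ne\Call{Sample}{\tilde a,c\tilde a,\bm p}\bigr]$. Two facts finish it: a direct computation gives $\TV(\mathcal U(I),\mathcal U(\tilde I))=|\tilde I\setminus I|/|\tilde I|$ (using $|I|=(c-1)a\le(c-1)\tilde a=|\tilde I|$, the disjoint case being immediate); and the two outputs agree exactly when the globally lowest point of $\Pi$ over $(I\cup\tilde I)\times(0,\infty)$ has first coordinate in $I\cap\tilde I$ — otherwise precisely one output is pulled to a point outside the overlap — which by the Poisson product structure happens with probability $|I\cap\tilde I|/|I\cup\tilde I|$ (equivalently, three independent exponential clocks race over $I\setminus\tilde I$, $I\cap\tilde I$, $\tilde I\setminus I$, and the middle one must ring first). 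Plugging in $|I\cap\tilde I|=ca-\tilde a$, $|I\cup\tilde I|=c\tilde a-a$, $|\tilde I\setminus I|=c(\tilde a-a)$ and simplifying, the ratio of the disagreement probability to $\TV$ equals $\frac{(c^2-1)\tilde a}{c(c\tilde a-a)}\le\frac{c+1}{c}\le 1+c$, where the first inequality uses $\tilde a\ge a$ together with $(c-1)^2\ge 0$. This yields (ii).

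The main obstacle is finding a construction that works uniformly over all pairs $(a,\tilde a)$. The obvious quantile (inverse-CDF) coupling fails badly here: since $[a,ca]$ is a scaling and not a translate of $[\tilde a,c\tilde a]$, feeding the same uniform into both inverse CDFs forces the outputs to differ whenever $a\ne\tilde a$, so the disagreement probability is $1$ no matter how small $\|\bm w-\tilde{\bm w}\|_1$ is. The resolution is exactly the ``lazy'' Poisson coupling, in which the sampled point moves only when the sweeping endpoint overtakes it or a strictly lower competitor enters the widening strip; the only points then needing care are (a) that the minimizer's first coordinate is genuinely uniform, which is the Poisson product structure, and (b) the identification of the agreement event with ``the global minimizer lands in the overlap,'' which collapses the whole estimate to the one-line length comparison above. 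Realizing $\Pi$ itself from the countable randomness $\bm p$ is routine measure theory and does not affect the statistical claims.
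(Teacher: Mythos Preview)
The paper does not provide its own proof of this lemma; it is stated in \Cref{sec:shared-randomness} as a result imported verbatim from \cite{kumabe2023lipschitz}, with no accompanying argument. So there is nothing in the present paper to compare your construction against.

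That said, your proof is correct. The Poisson ``lowest-point'' coupling is an elegant way to realize condition (i), and your identification of the agreement event with ``the global minimizer over $(I\cup\tilde I)\times(0,\infty)$ lands in $I\cap\tilde I$'' is exactly right, yielding disagreement probability $|I\Delta\tilde I|/|I\cup\tilde I|$. Your length computations and the final bound $\frac{(c^2-1)\tilde a}{c(c\tilde a-a)}\le 1+\tfrac1c$ are all correct, and you in fact obtain a strictly better constant than the stated $1+c$. One small imprecision: in the parenthetical ``otherwise precisely one output is pulled to a point outside the overlap,'' it is possible that \emph{both} outputs land outside $I\cap\tilde I$ (e.g., the lowest point over $I$ sits in $I\setminus\tilde I$ while the lowest over $\tilde I$ sits in $\tilde I\setminus I$); the correct statement is that at least one does, which still forces disagreement since the two outputs then lie in disjoint regions or are distinct Poisson points. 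This does not affect the argument, since the main equivalence (agreement $\Leftrightarrow$ global minimizer in the overlap) is what you actually use.
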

Using these sampling processes, it is easy to see that most of the sampling processes used in our rounding schemes can be transformed for the shared randomness setting such that the pointwise Lipschitz constant blows up by a constant factor.
See~\cite{kumabe2023lipschitz,kumabe2022lipschitz} for more details.

The only exception is the exponential mechanism used in \Cref{alg:min-s-t-cut-additive}.
Recall that the exponential mechanism takes a vector $\bm x\in \mathbb{R}^n$, generates a vector $\bm p \in \mathbb{R}^n$ such that $p_i = \exp(-\eta x_i) / \sum_{j \in [n]} \exp(-\eta x_j)$ for $i \in [n]$, and sample an index $i$ with probability $p_i$.
If we can bound $\|\bm p - \tilde{\bm p}\|_1$ by using $\|\bm x - \tilde{\bm x}\|_1$, then we can bound the pointwise Lipschitz constant under shared randomness by using \cite[Lemma 8.4]{kumabe2023lipschitz}.
\begin{lemma}
    We have
    \[
        \|\bm p - \tilde{\bm p}\|_1     
        \leq O(\eta \cdot \|\bm x - \tilde{\bm x}\|_1).     
    \]
\end{lemma}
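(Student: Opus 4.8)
The plan is to recognize that $\bm p$ and $\tilde{\bm p}$ are exactly the probability mass functions of the output distributions $\mcal A_\eta(\bm x)$ and $\mcal A_\eta(\tilde{\bm x})$ of the exponential mechanism from \Cref{thm:exponential-mechanism}. Since for any two discrete distributions the total variation distance equals one half the $\ell_1$ distance between their mass functions, we have $\norm{\bm p - \tilde{\bm p}}_1 = 2\,\TV(\mcal A_\eta(\bm x), \mcal A_\eta(\tilde{\bm x}))$, and the claimed bound is then immediate from the second part of \Cref{thm:exponential-mechanism}. This is the shortest route and the one I would present.

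For a self-contained argument I would instead bound the $\ell_1$-sensitivity of the softmax map directly along the segment $\bm x(t) \coloneqq (1-t)\bm x + t\tilde{\bm x}$, $t\in[0,1]$. Writing $\bm p(t)$ for the associated softmax vector and $\bm v \coloneqq \tilde{\bm x} - \bm x$, a short computation gives $\frac{d}{dt} p_i(t) = -\eta\, p_i(t)\bigl(v_i - \sum_j p_j(t) v_j\bigr)$, so that
\[
    \norm*{\tfrac{d}{dt}\bm p(t)}_1 = \eta \sum_i p_i(t)\,\abs*{v_i - \textstyle\sum_j p_j(t)v_j} \leq 2\eta\,\norm{\bm v}_\infty \leq 2\eta\,\norm{\bm v}_1,
\]
where the first inequality uses $\sum_i p_i(t) = 1$ together with the triangle inequality. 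Integrating over $t\in[0,1]$ then yields $\norm{\bm p - \tilde{\bm p}}_1 \leq \int_0^1 \norm*{\tfrac{d}{dt}\bm p(t)}_1\,dt \leq 2\eta\,\norm{\bm x - \tilde{\bm x}}_1$, which is the desired $O(\eta\norm{\bm x - \tilde{\bm x}}_1)$ bound.

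There is essentially no obstacle here: both routes are routine. The only minor care needed in the second argument is to justify differentiating and integrating along the path (valid since $\bm p(t)$ is smooth in $t$), and to note that the crude step $\norm{\bm v}_\infty \leq \norm{\bm v}_1$ is harmless since we only need the bound up to constant factors. I would likely state the one-line derivation from \Cref{thm:exponential-mechanism} in the main text and relegate the direct computation to a remark, if at all.
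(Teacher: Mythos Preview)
Your proposal is correct, and both routes you sketch are valid. The first route is indeed a one-liner: since $\bm p,\tilde{\bm p}$ are the output distributions of $\mcal A_\eta$, the identity $\norm{\bm p-\tilde{\bm p}}_1 = 2\TV(\mcal A_\eta(\bm x),\mcal A_\eta(\tilde{\bm x}))$ together with the second clause of \Cref{thm:exponential-mechanism} gives the bound immediately. Your second route via the softmax Jacobian along the segment is also fine and gives the explicit constant $2$.

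The paper takes a different, more elementary path: it first reduces (by triangle inequality) to the case where $\tilde{\bm x}$ differs from $\bm x$ in a single coordinate $\hat\imath$ by $\delta>0$, and then bounds $\norm{\bm p-\tilde{\bm p}}_1$ by hand. Writing $Z,\tilde Z$ for the two partition functions, it splits each summand via $\frac{\exp(-\eta\tilde x_i)}{\tilde Z} = \bigl(1-\frac{\tilde Z-Z}{\tilde Z}\bigr)\frac{\exp(-\eta\tilde x_i)}{Z}$, obtaining two terms each controlled by $\frac1Z\sum_i\bigl(\exp(-\eta x_i)-\exp(-\eta\tilde x_i)\bigr)$; this last sum collapses to the single coordinate $\hat\imath$ and is bounded by $\exp(-\eta x_{\hat\imath})\cdot\eta\delta$ using $1-e^{-x}\leq x$. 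Your calculus argument is slicker and yields the bound for arbitrary perturbations in one shot, while the paper's argument avoids differentiation entirely and mirrors the single-coordinate reduction pattern used elsewhere in the paper (cf.\ \Cref{lem:one perturbation}). Either is acceptable; your first route is the most economical given that \Cref{thm:exponential-mechanism} is already stated.
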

\begin{proof}
    We assume that there exists $\hat \imath$ and $\delta \neq 0$ such that
    \[
        \tilde x_i = \begin{cases}
            x_i + \delta & \text{if } i = \hat \imath \\
            x_i & \text{otherwise}.
        \end{cases}
    \]
    If we can show the claim with this assumption, then we can show the claim for general $\bm x$ and $\tilde{\bm x}$ by triangle inequality.
    Without loss of generality, we can assume that $\delta > 0$.

    Let $Z = \sum_{i \in [n]} \exp(-\eta x_i)$ and $\tilde Z = \sum_{i \in [n]} \exp(-\eta \tilde x_i)$.
    We note that $Z \geq \tilde{Z}$.
    Then, we have
    \begin{align*}
        & \|\bm p - \tilde{\bm p}\|_1
        = 
        \sum_{i=1}^n |p_i - \tilde p_i|
        = 
        \sum_{i=1}^n \left|\frac{\exp(-\eta x_i)}{Z} - \frac{\exp(-\eta \tilde x_i)}{\tilde Z}\right| \\
        & = 
        \sum_{i=1}^n \left|\frac{\exp(-\eta x_i)}{Z} - \left(1 - \frac{\tilde Z - Z}{\tilde Z}\right) \frac{\exp(-\eta \tilde x_i)}{Z}\right| \\
        & \leq 
        \sum_{i=1}^n \left|\frac{\exp(-\eta x_i)}{Z} - \frac{\exp(-\eta \tilde x_i)}{Z}\right| + \sum_{i=1}^n \left|\left(\frac{\tilde Z - Z}{\tilde Z}\right) \frac{\exp(-\eta \tilde x_i)}{Z}\right| \\
        & \leq 
        \frac{1}{Z} \sum_{i=1}^n \left|\exp(-\eta x_i) - \exp(-\eta \tilde x_i)\right| + \frac{|\tilde Z - Z|}{Z} \\
        & =
        \frac{1}{Z} \sum_{i=1}^n \left(\exp(-\eta x_i) - \exp(-\eta \tilde x_i)\right) + \frac{Z - \tilde{Z}}{Z}.
    \end{align*}
    As the second term is smaller than the first term, we focus on bounding the first term.
    We have
    \begin{align*}
       & \sum_{i=1}^n \left(\exp(-\eta x_i) - \exp(-\eta \tilde x_i)\right)
       = \exp(-\eta x_{\hat \imath}) - \exp(-\eta (x_{\hat \imath}+\delta))
       = \exp(-\eta x_{\hat \imath}) \left(1 - \exp(-\eta \delta)\right) \\
       & \leq \exp(-\eta x_{\hat \imath}) \cdot \eta \delta. \tag{by $1-\exp(-x) \leq x$.}
    \end{align*}
    Hence, we have $\|\bm p - \tilde{\bm p}\|_1 = O(\eta \|\bm x - \tilde{\bm x}\|_1)$.
\end{proof}

\section{Conclusion}\label{sec:conclusion}
In this work, we developed a principled way to design
pointwise Lipschitz continuous graph algorithms.
We demonstrated that our framework,
paired with stable rounding schemes, can give
algorithms with tight trade-offs between their Lipschitz
constant and approximation guarantees, as demonstrated
by our set of results regarding minimum vertex $S$-$T$ cut.
{In addition to its broad applicability,
our framework yields better bounds than possible compared to previous techniques.}

One very interesting future
direction is to obtain tight bounds for the trade-offs 
between Lipschitz constants and approximation guarantees
for the rest of the problems we consider in our work,
including matchings and $\bm b$-matchings.
{Understanding the tight Lipschitz constant for specific problems yields a better comprehension of which problems have (approximately) optimal solutions
that can be tracked in a stable way}.

\section*{Acknowledgments}
We thank Adrian Vladu and Marco Pirazzini for the insightful discussions.
Felix Zhou acknowledges the support of the Natural Sciences and Engineering Research Council of Canada (NSERC).
Quanquan C. Liu and Felix Zhou are supported by a Google
Academic Research Award and NSF Grant \#CCF-2453323.

\begingroup
\sloppy
\printbibliography
\endgroup

\appendix
\addtocontents{toc}{\protect\setcounter{tocdepth}{1}} %

\clearpage
\section{Additional Tools}
\subsection{Nonexpansiveness of Strongly Convex Gradient}\label{apx:nonexpansive gradient}

In this section, we prove \Cref{lem:nonexpansive gradient},
which is restated below for convenience.
\nonexpansiveGradient*
We note that the case of $K=\R^n$ is stated and proven in \cite[Lemma 3.7.3]{hardt2016train}
and the proof also extends to a convex feasible region,
as we show below.

We start with the following simple fact:
\begin{proposition}
 Let $g: K \to \R$ be a $\sigma$-strongly convex function
 over $K$ that is twice continuously differentiable.
 Then, $g - \frac\sigma2 \norm{\cdot}_2^2$ is convex over $K$.
\end{proposition}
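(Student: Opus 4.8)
The plan is to reduce the statement to the standard second-order characterizations of strong convexity and convexity. Write $h \coloneqq g - \frac{\sigma}{2}\norm{\cdot}_2^2$; the goal is to show that $h$ is convex over the (convex) set $K$.

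First I would extract a pointwise Hessian bound from the definition of $\sigma$-strong convexity quoted above. Fix $x \in K$ and a direction $v$ with $x + tv \in K$ for all sufficiently small $t > 0$. Plugging $y = x + tv$ into $g(y) \geq g(x) + \langle \grad g(x), y-x\rangle + \frac{\sigma}{2}\norm{y-x}_2^2$, using the second-order Taylor expansion of the twice continuously differentiable function $g$ around $x$, dividing by $t^2$, and letting $t \to 0^+$ yields $v^\top \grad^2 g(x)\, v \geq \sigma \norm{v}_2^2$. Since this holds for all such $v$, we conclude $\grad^2 g(x) \succeq \sigma I$ for every $x \in K$.

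Then I would simply compute $\grad^2 h(x) = \grad^2 g(x) - \sigma I \succeq 0$ for all $x \in K$ and invoke the standard fact that a twice continuously differentiable function whose Hessian is positive semidefinite everywhere on a convex set is convex on that set.

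There is essentially no obstacle here; the only items requiring minor care are that $K$ is convex (which holds throughout by assumption) and the handling of boundary points, for which the directional argument above suffices. As a fully self-contained alternative avoiding Hessians entirely, one can directly verify the first-order convexity inequality $h(y) - h(x) - \langle \grad h(x), y-x\rangle \geq 0$: expanding $\grad h(x) = \grad g(x) - \sigma x$ and $\norm{y-x}_2^2 = \norm{y}_2^2 - 2\langle x, y\rangle + \norm{x}_2^2$, the quadratic-in-$y$ terms cancel exactly, and what remains is precisely the $\sigma$-strong convexity inequality for $g$.
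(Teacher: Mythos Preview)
Your proposal is correct. Your second route---directly verifying the first-order inequality $h(y)-h(x)-\langle\nabla h(x),y-x\rangle\geq 0$ by expanding and observing that the quadratic terms collapse to $-\tfrac{\sigma}{2}\norm{y-x}_2^2$, leaving exactly the strong-convexity inequality for $g$---is essentially the paper's argument, except that the paper takes a small detour: it first rewrites $g(y)-g(x)$ via Taylor's theorem with the Hessian at an intermediate point, invokes $H\succeq\sigma I$ on $K$, and only then carries out the same algebra. Your first route (extract $\nabla^2 g\succeq\sigma I$, hence $\nabla^2 h\succeq 0$, hence $h$ convex) is a standard and slightly cleaner alternative that the paper does not use explicitly, though it relies on the same Hessian lower bound the paper also asserts.
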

\begin{proof}
    For $\bm x, \bm y\in K$,
    \begin{align*}
        &g(\bm y) - g(\bm x) - \frac\sigma2 (\norm{\bm y}_2^2 - \norm{\bm x}_2^2) \\
        &= \iprod{\grad g(\bm x), \bm y-\bm x} + \frac12 (\bm y-\bm x)^\top H(\bm \xi) (\bm y-\bm x) - \frac\sigma2 (\norm{\bm y}_2^2 - \norm{\bm x}_2^2) \tag{$\bm \xi\in [\bm x, \bm y]$} \\
        &\geq \iprod{\grad g(\bm x), \bm y-\bm x} + \frac\sigma2 \norm{\bm y-\bm x}_2^2 - \frac\sigma2 (\norm{\bm y}_2^2 - \norm{\bm x}_2^2) \tag{$H|_K\succeq \sigma I|_K$} \\
        &= \iprod{\grad g(\bm x), \bm y-\bm x} - \sigma \iprod{\bm y, \bm x} + \sigma \norm{\bm x}_2^2 \\
        &= \iprod{\grad g(\bm x) - \sigma \bm x, \bm y-\bm x} \\
        &= \iprod{\grad [g(\bm x) - \frac\sigma2\norm{\bm x}_2^2], \bm y-\bm x}.
    \end{align*}
    Hence $g - \frac\sigma2 \norm{\cdot}_2^2$ is convex over $K$.     
\end{proof}

Recall that convexity and $L$-smoothness implies the \emph{co-coercivity} of gradient:
\[
    \iprod{\grad g(\bm y) - \grad g(\bm x), \bm y-\bm x}
    \geq \frac1L \norm{\grad g(\bm y) - \grad g(\bm x)}_2^2.
\]
Then, we have the following:
\begin{proposition}\label{pro:co-coercivity-of-g}
    Let $g: K\to \R$ be $\sigma$-strongly convex, $L$-smooth, and twice continuously differentiable.
    For any $\bm x\neq \bm y\in K$,
    we have
    \[
        \iprod{\grad g(\bm  y) - \grad g(\bm x), \bm y-\bm x} 
        \geq \frac1{\sigma + L} \norm{\grad g(\bm y) - \grad g(\bm x)}_2^2
        + \frac{\sigma L}{\sigma + L} \norm{\bm y-\bm x}_2^2    
    \]
\end{proposition}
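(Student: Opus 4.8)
The plan is to reduce to the standard co-coercivity inequality for convex $L$-smooth functions by subtracting off the quadratic $\frac{\sigma}{2}\norm{\cdot}_2^2$. Concretely, I would set $\phi \coloneqq g - \frac{\sigma}{2}\norm{\cdot}_2^2$. By the preceding proposition, $\phi$ is convex over $K$; moreover, since $\nabla^2 \phi = \nabla^2 g - \sigma I$ while $\sigma I \preceq \nabla^2 g \preceq L I$ on $K$, the function $\phi$ is $(L-\sigma)$-smooth over $K$. Applying the co-coercivity of the gradient (recalled just above the statement) to $\phi$ gives, for all $\bm x, \bm y \in K$,
\[
    \iprod{\grad\phi(\bm y) - \grad\phi(\bm x), \bm y - \bm x} \geq \frac{1}{L-\sigma}\norm{\grad\phi(\bm y) - \grad\phi(\bm x)}_2^2 .
\]

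Next I would substitute $\grad\phi(\bm x) = \grad g(\bm x) - \sigma \bm x$, so that $\grad\phi(\bm y) - \grad\phi(\bm x) = \big(\grad g(\bm y) - \grad g(\bm x)\big) - \sigma(\bm y - \bm x)$. Abbreviating $D \coloneqq \iprod{\grad g(\bm y) - \grad g(\bm x), \bm y - \bm x}$, $N \coloneqq \norm{\grad g(\bm y) - \grad g(\bm x)}_2^2$, and $M \coloneqq \norm{\bm y - \bm x}_2^2$, expanding both sides of the displayed inequality turns it into
\[
    D - \sigma M \geq \frac{1}{L-\sigma}\big(N - 2\sigma D + \sigma^2 M\big).
\]
Multiplying through by $L - \sigma > 0$ and collecting the terms involving $D$, $N$, and $M$ yields $(L+\sigma)D \geq N + \sigma L M$, which is exactly the claimed bound after dividing by $L+\sigma$.

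It remains to address the borderline case $L = \sigma$, which I would dispatch with a one-line remark: then $\nabla^2 g \equiv \sigma I$ on $K$, so $\grad g(\bm y) - \grad g(\bm x) = \sigma(\bm y - \bm x)$, and both sides of the asserted inequality equal $\sigma\norm{\bm y - \bm x}_2^2$, giving equality. The one genuinely delicate point — the one I would verify carefully rather than gloss over — is that the Hessian comparisons and the co-coercivity estimate are all invoked relative to the convex feasible region $K$ (i.e., along directions tangent to $K$), exactly matching the convention already used in the proof of the previous proposition; once that bookkeeping is handled consistently, the remaining computation is routine algebra.
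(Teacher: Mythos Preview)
Your proposal is correct and follows essentially the same approach as the paper: both subtract the quadratic $\frac{\sigma}{2}\norm{\cdot}_2^2$ to obtain a convex, $(L-\sigma)$-smooth function, apply the standard co-coercivity inequality, then expand and rearrange. Your treatment is actually slightly more thorough in explicitly handling the $L=\sigma$ degenerate case, which the paper's proof omits.
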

\begin{proof}
    $g - \frac\sigma2 \norm{\cdot}_2^2$ is convex and remains $(L-\sigma)$-smooth 
    so co-coercivity implies 
    \begin{align*}
        &\iprod{\grad g(\bm y) - \grad g(\bm x), \bm y- \bm x} - \sigma \iprod{\bm y- \bm x, \bm y-\bm x} \\
        &\geq \frac1{L-\sigma} \norm{\grad g(\bm y) - \grad g(\bm x) - \sigma (\bm y - \bm x)}_2^2 \\
        &= \frac1{L-\sigma} \norm{\grad g(\bm y) - \grad g(\bm x)}_2^2 
        - 2\frac{\sigma}{L-\sigma} \iprod{\grad g(\bm y) - \grad g(\bm x), \bm y-\bm x}
        + \frac{\sigma^2}{L-\sigma} \norm{\bm y - \bm x}_2^2.
    \end{align*}
    Rearranging yields the desired inequality.
\end{proof}

We are now ready to prove \Cref{lem:nonexpansive gradient}.
\begin{pf}[\Cref{lem:nonexpansive gradient}]
   Let $G_{g,\eta}(\bm x) \coloneqq \bm x - \eta \grad g(\bm x)$ be the gradient update . For $\bm x, \bm y\in K$,
    \begin{align*}
        &\norm{G_{g, \eta}(\bm x) - G_{g, \eta}(\bm y)}_2^2 \\
        &= \norm{\bm x - \bm y}_2^2 - 2\eta \iprod{\grad g(\bm x) - \grad g(\bm y), \bm x- \bm y} + \eta^2 \norm{\grad g(\bm x) - \grad g(\bm y)}_2^2 \\
        &\leq \left( 1 - \frac{2\eta\sigma L}{\sigma + L} \right) \norm{\bm y-\bm x}_2^2 + \eta\left( \eta-\frac2{\sigma + L} \right) \norm{\grad g(\bm y) - \grad g(\bm x)}_2^2 &\tag{by \Cref{pro:co-coercivity-of-g}} \\
        &\leq \left( 1 - \frac{2\eta\sigma L}{\sigma + L} \right) \norm{\bm y-\bm x}_2^2. & \tag{by $\eta\leq \frac2{\sigma+L}$}
    \end{align*}
    Note that $\sigma \leq L$ so that $\nicefrac{2L}{(\sigma+L)}\geq 1$.
    Hence choosing $\eta \leq \nicefrac1L$ implies that $G_{g, \eta}$ is $(1-\eta\sigma)$-expansive.
\end{pf}

\subsection{Weyl's Inequality}
\begin{theorem}[Weyl; Corollary 4.3.15 in \cite{horn2012matrix}]\label{thm:weyl-inequality}
    Let $A, B$ be $n\times n$ Hermitian (real symmetric) matrices.
    Then
    \[
        \lambda_i(A) + \lambda_1(B)
        \leq \lambda_i(A+B)
        \leq \lambda_i(A) + \lambda_n(B),
        \qquad i\in [n].
    \]
\end{theorem}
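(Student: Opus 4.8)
The plan is to derive both inequalities from the Courant--Fischer min--max characterization of the eigenvalues of a Hermitian matrix, used once in its ``min--max'' form and once in its ``max--min'' form. Throughout, eigenvalues are listed in increasing order $\lambda_1(M)\le\cdots\le\lambda_n(M)$, matching the statement. I would invoke, as a known fact (e.g.\ from a standard reference such as \cite{horn2012matrix}), that for any $n\times n$ Hermitian $M$ and any $i\in[n]$,
\[
\lambda_i(M)=\min_{\dim S=i}\ \max_{0\ne x\in S}\frac{x^*Mx}{x^*x}
=\max_{\dim S=n-i+1}\ \min_{0\ne x\in S}\frac{x^*Mx}{x^*x},
\]
where $S$ ranges over all subspaces of $\mathbb{C}^n$ of the indicated dimension, together with the elementary Rayleigh bounds $\lambda_1(B)\le x^*Bx/(x^*x)\le\lambda_n(B)$ for every nonzero $x$, which are immediate from diagonalizing $B$.

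For the upper bound $\lambda_i(A+B)\le\lambda_i(A)+\lambda_n(B)$, I would fix an $i$-dimensional subspace $S^\star$ attaining the outer minimum in the min--max formula for $\lambda_i(A)$, and then estimate
\[
\lambda_i(A+B)\le\max_{0\ne x\in S^\star}\frac{x^*(A+B)x}{x^*x}
\le\max_{0\ne x\in S^\star}\frac{x^*Ax}{x^*x}+\sup_{0\ne x}\frac{x^*Bx}{x^*x}
=\lambda_i(A)+\lambda_n(B).
\]
Here the first step uses that $S^\star$ is a feasible competitor for $\lambda_i(A+B)$, the second uses sub-additivity of the maximum over a common domain, and the last uses the Rayleigh bound for $B$.

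For the lower bound $\lambda_i(A)+\lambda_1(B)\le\lambda_i(A+B)$, the quickest route is to apply the upper bound just proved to the pair $(A+B,-B)$: this gives $\lambda_i(A)=\lambda_i\big((A+B)+(-B)\big)\le\lambda_i(A+B)+\lambda_n(-B)=\lambda_i(A+B)-\lambda_1(B)$, and rearranging yields the claim. Alternatively one can mirror the argument above using the max--min form, choosing an $(n-i+1)$-dimensional subspace attaining the outer maximum for $\lambda_i(A)$ and using super-additivity of the minimum together with $x^*Bx/(x^*x)\ge\lambda_1(B)$.

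There is no genuine obstacle here; the only points needing care are (i) fixing the ordering convention so that the correct one of the two Courant--Fischer formulas is paired with each inequality, and (ii) if a fully self-contained treatment is wanted, supplying the short dimension-counting proof of Courant--Fischer itself (any $i$-dimensional subspace meets the span of the top $n-i+1$ eigenvectors of $M$ nontrivially, forcing its maximal Rayleigh quotient to be at least $\lambda_i(M)$, with equality attained on the span of the bottom $i$ eigenvectors). For the present purposes citing Courant--Fischer suffices.
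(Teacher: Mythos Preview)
Your proof is correct and follows the standard Courant--Fischer approach. The paper does not actually prove this theorem; it simply states it with a citation to \cite{horn2012matrix} and uses it as a black box (specifically to bound the perturbation of $\lambda_2$ of the graph Laplacian under edge-weight changes). Your argument is exactly the textbook one and would be a suitable self-contained proof if one were desired.
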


In particular,
if $A = \mcal L(G, \bm w)$ is the weighted Laplacian matrix of a graph $G$ with edge weights $\bm w$
and $B = \mcal L(uv, \delta)$ is the rank 1 update matrix that increases the edge weight of $uv$ by $\delta > 0$ for some $uv\in E(G)$,
then
\[
    \lambda_i(\mcal L(G, \bm w))
    \leq \lambda_i(\mcal L(G, \bm w + \delta \ones_{uv}))
    \leq \lambda_i(\mcal L(G, \bm w)) + \lambda_n(B)
    \leq \lambda_i(\mcal L(G, \bm w)) + 2\delta.
\]

\subsection{Finite Perturbation Bounds for Pointwise Lipschitz Continuity}\label{sec:finite bounds}
Although the definition of a pointwise Lipschitz algorithm holds under a limit,
we can still obtain finite perturbation bounds, which will be useful to show lower bounds.

\begin{theorem}\label{thm:finite perturbation}
    Let $\bm w, \tilde{\bm w}\in \R_{\geq 0}^E$ be two weight vectors for a graph $G=(V,E)$.
    Let $w: [0, 1]\to \R_{\geq 0}^E$ be a continuous path from $\bm w(0) = \bm w$ to $\bm w(1) = \tilde{\bm w}$ that is {component-wise} monotone,
    \ie{}, $r \leq \tilde r$ implies $\abs{w_i(r) - w_i(0)} \leq \abs{w_i(\tilde r) - w_i(0)}$ for all $i$.
    If for all $r\in [0, 1]$,
    $\mcal A(G, \bm w(r))$ has pointwise Lipschitz constant of $c_r$ with respect to some metric $d$,
    then we have
    \[
        \frac{\EMD_d(\mcal A(G, \bm w), \mcal A(G, \tilde{\bm w}))}{\norm{\bm w-\tilde{\bm w}}_1}
        \leq \sup_{r\in [0, 1]} c_r
        =: c_{\sup}.
    \]
\end{theorem}

Recall that a subset of $S\sset \R^n$ is said to be \emph{compact}
if for any cover of $S$,
say $ \bigcup_{\iota\in I} C_\iota\supseteq S$,
by open subsets $C_\iota\sset \R^n$,
there exists a finite subcover
$\bigcup_{i\in [n]} C_{\iota_i} \supseteq S$ for some $n < \infty$.
Recall also the Heine-Borel theorem \cite[Theorem 2.41]{rudin1964principles} from elementary analysis
that states a subset of a finite-dimensional vector space is compact
if and only if it is closed and bounded.

The intuition of \Cref{thm:finite perturbation} is that there is a local neighborhood at each point along the path from $\bm w\to \tilde{\bm w}$ where the pointwise Lipschitz constant (approximately) holds.
Then by taking small steps forward on the path,
we can translate the pointwise Lipschitz bound to a bound for $\bm w, \tilde{\bm w}$.
However,
in order to ensure a finite number of steps suffice,
we crucially rely on compactness and continuity.
\begin{pf}[\Cref{thm:finite perturbation}]
    Fix $\varepsilon > 0$.
    By the definition of a limit superior within the definition of the pointwise Lipschitz constant (\Cref{def:pointwise lipschitz}),
    for each $\bm w(r), r\in [0, 1]$,
    there is some open ball $B_r\ni \bm w(r)$ such that for every $\bm w(r)\neq \bm w\in B_r$,
    \[
        \frac{\EMD_d( \mcal A(G, \bm w(r)), \mcal A(G, \bm w) )}{\norm{\bm w(r) - \bm w}_1}
        \leq c_r + \varepsilon.
    \]

    As the continuous pre-image of open sets is open \cite[Theorem 4.8]{rudin1964principles},
    $w^{-1}(B_r)$ contains an open interval about $r$.
    Define $I_r \sset w^{-1}(B_r)$ to be such an open interval.
    Now,
    $\bigcup_{r\in [0, 1]} I_r\supseteq [0, 1]$ is an open cover of $[0, 1]$,
    a compact (closed and bounded) set.
    By the definition of compactness,
    we can extract a finite subcover,
    say indexed by the points $0 = r_0 < r_1 < \dots < r_N = 1$ for some $N < \infty$.
    By deleting and then adding points if necessary,
    we may assume without loss of generality that $r_{i+1}\in I_{r_i}$ for all $i=0, 1, \dots, N-1$.

    It follows that
    \begin{align*}
        &\EMD_d( \mcal A(G, \bm w(0)), \mcal A(G, \bm w(1)) ) \\
        &\leq \sum_{i=1}^{N} \EMD_d( \mcal A(G, \bm w(r_{i})), \mcal A(G, \bm w(r_{i-1})) ) \\
        &\leq (c_{\sup} + \varepsilon) \sum_{i=1}^N \norm{\bm w(r_i) - \bm w(r_{i-1})}_1 \\
        &= (c_{\sup} + \varepsilon) \norm{\bm w(0) - \bm w(1)}_1.
    \end{align*}
    Note that the last equality crucially uses the fact that our path is monotone.
    By the arbitrary choice of $\varepsilon > 0$,
    we must have
    \[
        \EMD_d(\mcal A(G, \bm w), \mcal A(G, \tilde{\bm w}))
        \leq c_{\sup} \norm{\bm w - \tilde{\bm w}}_1,
    \]
    concluding the proof of the theorem.
\end{pf}

\section{Na\"ive Algorithm for Minimum Vertex \texorpdfstring{$S$-$T$}{S-T} Cut}\label{sec:naive min s-t cut}
In this section,
we describe a simple warmup for {the} minimum $S$-$T$ cut algorithm using {unweighted regularization} {techniques} from prior works~\cite{kumabe2022lipschitz}
that yields sub-optimal guarantees.
In particular,
we prove the following theorem.
\begin{theorem}\label{thm:naive-min-s-t-cut}
    Let $\Lambda > 0$.
    There is a randomized polynomial-time approximation algorithm for undirected minimum vertex $S$-$T$ cut
    that outputs a vertex subset $A\sset V$ with the following guarantees.
    \begin{enumerate}[(a)]
      \item $A$ is a $(1, O(\Lambda n))$-approximate $S$-$T$ cut in expectation.
      \item The algorithm has a pointwise Lipschitz constant of $O\left( \frac{n}{\Lambda} \right)$.
    \end{enumerate}
\end{theorem}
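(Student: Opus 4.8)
The plan is to mimic the recipe of \textcite{kumabe2022lipschitz} with the most canonical \emph{unweighted} (i.e.\ $\bm w$-independent) regularizer on the relaxation \Cref{eq:s-t-min-cut-lp}, namely $\frac{\Lambda}{2}\norm{\bm y}_2^2$, solve the resulting strongly convex program, apply threshold rounding (\Cref{lem:threshold-rounding}), and boost feasibility exactly as in \Cref{alg:min-s-t-cut-additive}. Concretely, set $f_{\bm w}(\bm y)=\sum_{uv\in E}w_{uv}\abs{y_u-y_v}$ and $g(\bm y)=\frac{\Lambda}{2}\norm{\bm y}_2^2$ and minimize $f_{\bm w}(\bm y)+g(\bm y)$ over the feasible region $K$ of \Cref{eq:s-t-min-cut-lp}. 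Since $g$ has constant Hessian $\Lambda I$, it is $\Lambda$-strongly convex and (say) $2\Lambda$-smooth over $K$, and, crucially, it does not depend on $\bm w$, so it is \emph{literally the same} regularizer on a perturbed instance.

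For the fractional stability I would invoke \Cref{thm:PGM} (equivalently, the coarse subgradient argument of \textcite{kumabe2022lipschitz}). \Cref{asm:PGM} holds with $\sigma_{\bm w}=\Lambda$, $L_{\bm w}=2\Lambda$, $C_{\bm w}=0$ — because $\grad g_{\bm w}=\grad g_{\tilde{\bm w}}$, so \Cref{item:c_1} is satisfied with constant $0$ — and $D_{\bm w}=O(1/\Lambda)$ by the verbatim computation of \Cref{lem:min cut prox} (the two proximal objectives differ only in the single term $w_{uv}\abs{y_u-y_v}$, and the reverse triangle inequality together with the $L$-strong convexity of the proximal objective gives $\norm{\prox_{L,f_{\hat{\bm w}}}(\bm y)-\prox_{L,f_{\tilde{\bm w}}}(\bm y)}_2=O(\abs\delta/\Lambda)$). \Cref{thm:PGM} then yields $\norm{\bm y^\star-\tilde{\bm y}^\star}_2=O(\abs\delta/\Lambda)$ for single-coordinate perturbations, Cauchy--Schwarz upgrades this to $\norm{\bm y^\star-\tilde{\bm y}^\star}_1\le\sqrt n\,\norm{\bm y^\star-\tilde{\bm y}^\star}_2=O(\sqrt n\,\abs\delta/\Lambda)$, and \Cref{lem:one perturbation} extends it to arbitrary perturbations, giving a fractional pointwise Lipschitz constant $O(\sqrt n/\Lambda)$.

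For the approximation guarantee I would observe that over $K$ we have $\norm{\bm y}_\infty\le 1$, hence $g(\bm y)=\frac{\Lambda}{2}\norm{\bm y}_2^2\le\frac{\Lambda n}{2}$; comparing the regularized optimum $\bm y^\star$ with the unregularized LP optimum $\bm y^\circ$ (which in turn lower-bounds the combinatorial optimum $\OPT$, since $\ones_A-\frac{\abs A}{n}\ones$ is feasible with objective $\bm w(\partial A)$) gives $f_{\bm w}(\bm y^\star)\le f_{\bm w}(\bm y^\star)+g(\bm y^\star)=h_{\bm w}(\bm y^\star)\le h_{\bm w}(\bm y^\circ)\le\OPT+\frac{\Lambda n}{2}$. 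Threshold rounding (\Cref{lem:threshold-rounding}) then gives $\E[\cut_{G,\bm w}(A_\tau)]=\tfrac12 f_{\bm w}(\bm y^\star)\le\OPT+O(\Lambda n)$, i.e.\ a $(1,O(\Lambda n))$-approximation in expectation, while $\E\abs{A_\tau\triangle\tilde A_\tau}=O(\norm{\bm y^\star-\tilde{\bm y}^\star}_1)=O(\sqrt n\,\abs\delta/\Lambda)$ transfers the fractional Lipschitz bound to the integral output.

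The only genuinely fiddly point is feasibility: bare threshold rounding returns an infeasible cut with probability up to $\tfrac12$, so one must apply the same boosting as in \Cref{alg:min-s-t-cut-additive} — solve the restricted-domain relaxations \Cref{eq:lp-min-s-t-cut-restricted-domain} and select an index via the exponential mechanism (\Cref{thm:exponential-mechanism}) — with its parameters re-tuned to the present fractional Lipschitz scale $O(\sqrt n/\Lambda)$; a routine check (using that each optimal objective value moves by at most $\abs\delta$ under a single-coordinate perturbation) confirms the additive error stays $O(\Lambda n)$ and the Lipschitz constant stays $O(\sqrt n/\Lambda)$. I do not expect a real obstacle here — the content of the theorem is the contrast it draws with \Cref{thm:min-s-t-cut-fractional}: the unweighted regularizer buys $C_{\bm w}=0$ at the price of the crude bound $g\le\Lambda n/2$, so matching the $O(\eps^{-1}\lambda_2^{-1})$ fractional Lipschitz constant of the weighted-Laplacian approach would force $\Lambda=\sqrt n\,\eps$ and hence the much larger additive error $O(\eps n^{1.5})$.
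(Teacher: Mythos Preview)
Your proposal is correct and yields the stated bounds, but it is more elaborate than the paper's own argument in two places. First, the paper works with the simpler uncentered relaxation $y\in[0,1]^V$ (no $\iprod{\ones,y}=0$ constraint), because the unweighted $\ell_2$ regularizer is already strongly convex on all of $\R^n$ and there is no need to quotient out the all-ones direction. Second, and more to the point, the paper does \emph{not} invoke PGTA at all: it uses exactly the coarse cancellation argument of \textcite{kumabe2022lipschitz}, forming $D=h_{\bm w}(\tilde{\bm y})-h_{\bm w}(\bm y)+h_{\tilde{\bm w}}(\bm y)-h_{\tilde{\bm w}}(\tilde{\bm y})$, noting that the (identical) regularizer cancels, and combining strong convexity with the reverse triangle inequality to get $\norm{\bm y-\tilde{\bm y}}_1\le O(\sqrt n\,\delta/\Lambda)$ directly. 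This is deliberate: the whole purpose of this appendix is to record what the \emph{old} technique gives, so applying the new framework here somewhat obscures the comparison the section is meant to draw. Finally, the paper does no feasibility boosting whatsoever---it applies plain threshold rounding (\Cref{lem:threshold-rounding}) and stops, since the theorem does not claim a feasibility probability and the section is only a baseline; your exponential-mechanism addendum is correct but unnecessary for the statement as written.
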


In comparison with \Cref{thm:s-t-min-cut-additive},
we must set the regularization parameter $\Lambda = \nicefrac{\lambda_2}{n}$ to match the additive error of $O(\lambda_2)$.
However,
this yields a Lipschitz constant of $\nicefrac{n^{2}}{\lambda_2}$,
which is strictly worse than the guarantees of \Cref{thm:s-t-min-cut-additive}.

One technical detail is that in order to correctly set the parameter $\Lambda = \nicefrac{\lambda_2}{n}$,
we must estimate the weight-dependent parameter $\lambda_2$ in a stable manner.
While \cite{kumabe2022lipschitz} demonstrate how to achieve this via a sampling technique,
another advantage of our algorithm obtained through PGTA is avoiding the need to smoothly estimate weight-dependent parameters to obtain a fractional Lipschitz solution.

\subsection{Obtaining a Fractional Lipschitz Solution}
Let
\[
    \mathrm{cut}_{G, w}(y)
    \coloneqq \sum_{uv\in E} w_{uv} \abs{y_u - y_v}
\]
denote the fractional cut objective.
Recall the LP relaxation \Cref{eq:s-t-min-cut-lp} of minimum $S$-$T$ cut used in \Cref{subsec:s-t-cut-fractional}.
We restate a simplified version below.
\begin{align}
    \begin{array}{llr}
        \text{minimize} & \displaystyle \sum_{uv\in E} w_{uv} \abs{y_u - y_v} & \\
        \text{subject to} & y_{s_0} = 0 & \\
        &y_{t_0} = 1 & \\
        &y_s = y_{s_0} & \forall s\in S \\
        &y_t = y_{t_0} & \forall t\in T \\
        &y_v \in [0, 1] & \forall v\in V
    \end{array}
    \label[LP]{eq:naive-min-s-t-cut-lp}        
\end{align}

We show the following:
\begin{lemma}\label{lem:naive-min-s-t-cut-fractional}
    There is an algorithm that outputs a feasible solution to \Cref{eq:naive-min-s-t-cut-lp}
    such that
    \mbox{$\mathrm{cut}_G(x) \leq \OPTLP + \frac{\Lambda n}{2}$}
    with Lipschitz constant $O(n/\Lambda)$.
\end{lemma}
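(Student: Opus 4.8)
The plan is to follow the same recipe used for the main minimum $S$-$T$ cut algorithm, but with the \emph{unweighted} $\ell_2$ regularizer, which makes everything go through PGTA with even less effort since the regularizer no longer depends on $\bm w$. Concretely, I would add $\tfrac{\Lambda}{2}\norm{\bm y}_2^2$ to the objective of \Cref{eq:naive-min-s-t-cut-lp} and solve
\[
    \min_{\bm y\in K} h_{\bm w}(\bm y), \qquad h_{\bm w}(\bm y) \coloneqq \mathrm{cut}_{G,\bm w}(\bm y) + \tfrac{\Lambda}{2}\norm{\bm y}_2^2,
\]
where $K$ is the (convex, compact) feasible region of \Cref{eq:naive-min-s-t-cut-lp}. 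Unlike the Laplacian regularizer of \Cref{subsec:s-t-cut-fractional}, $\tfrac{\Lambda}{2}\norm{\cdot}_2^2$ is strongly convex in every direction, so there is no need to intersect $K$ with the hyperplane $\iprod{\ones,\bm y}=0$. The algorithm outputs the optimal solution $\bm y^*$ of this program; any sufficiently accurate solver also works by the remarks in \Cref{sec:framework}.

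For the approximation bound I would use that every $\bm y\in K\sset[0,1]^V$ satisfies $\norm{\bm y}_2^2\le n$, so if $\bm y^\star$ denotes an optimal solution of \Cref{eq:naive-min-s-t-cut-lp} then
\[
    \mathrm{cut}_{G,\bm w}(\bm y^*)\le h_{\bm w}(\bm y^*)\le h_{\bm w}(\bm y^\star) = \OPTLP + \tfrac{\Lambda}{2}\norm{\bm y^\star}_2^2 \le \OPTLP + \tfrac{\Lambda n}{2}.
\]

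For the Lipschitz bound I would verify \Cref{asm:PGM} with $f_{\bm w}(\bm y)=\mathrm{cut}_{G,\bm w}(\bm y)$ and $g_{\bm w}(\bm y)=g(\bm y)=\tfrac{\Lambda}{2}\norm{\bm y}_2^2$: $f$ is convex and continuous, $g$ is $\Lambda$-strongly convex and $2\Lambda$-smooth over $K$ (so take $\sigma_{\bm w}=\Lambda$ and $L_{\bm w}=2\Lambda$ to respect $L_{\bm w}>\sigma_{\bm w}$), and since $g$ is weight-independent we have $\grad g_{\bm w}=\grad g_{\tilde{\bm w}}$, hence $C_{\bm w}=0$. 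The only point requiring work is \Cref{item:c_2}, i.e. bounding $\norm{\prox_{L_{\bm w},f_{\bm w}}(\bm y)-\prox_{L_{\bm w},f_{\tilde{\bm w}}}(\bm y)}_2$ when $\tilde{\bm w}$ differs from $\bm w$ only in edge $uv$ by $\delta$; this is essentially a repeat of \Cref{lem:min cut prox}. Using $L_{\bm w}$-strong convexity of each proximal objective and adding the two optimality inequalities cancels the quadratic terms and leaves $L_{\bm w}\norm{\hat{\bm z}-\tilde{\bm z}}_2^2 \le \abs{\delta}\,\bigl|\,\abs{\hat z_u-\hat z_v}-\abs{\tilde z_u-\tilde z_v}\,\bigr|\le \sqrt2\,\abs{\delta}\,\norm{\hat{\bm z}-\tilde{\bm z}}_2$ by the reverse triangle inequality and convexity of $x\mapsto x^2$, so $D_{\bm w}=O(1/\Lambda)$.

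Plugging these constants into \Cref{thm:PGM} gives $\norm{\bm y^*-\tilde{\bm y}^*}_2 \le \tfrac{L_{\bm w}(C_{\bm w}+D_{\bm w})}{\sigma_{\bm w}}\abs{\delta} = O(\abs{\delta}/\Lambda)$ for single-coordinate perturbations; Cauchy--Schwarz upgrades this to $\norm{\bm y^*-\tilde{\bm y}^*}_1 = O(\sqrt n\,\abs{\delta}/\Lambda)$, and \Cref{lem:one perturbation} then yields the claimed fractional pointwise Lipschitz constant $O(\sqrt n/\Lambda)$ with respect to the $\ell_1$ norm. I do not expect a real obstacle here: the argument is strictly easier than the weighted-regularizer case because strong convexity holds globally on $K$ and $C_{\bm w}$ vanishes. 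The only things to be careful about are choosing $L_{\bm w}=2\Lambda$ so that $L_{\bm w}>\sigma_{\bm w}$ holds, and the routine prox-stability estimate reused from \Cref{lem:min cut prox}.
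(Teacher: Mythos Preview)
Your proposal is correct, but the paper takes a different route for the Lipschitz bound, and the difference is deliberate: the purpose of \Cref{sec:naive min s-t cut} is to record what the \emph{prior} technique of \cite{kumabe2022lipschitz} achieves, as a baseline against PGTA. Accordingly the paper does not invoke \Cref{thm:PGM} here at all. It instead observes that $\tfrac{\Lambda}{2}\norm{\cdot}_2^2$ is $(\Lambda/\sqrt n)$-strongly convex with respect to the $\ell_1$ norm and bounds the sum $D \coloneqq h_{\bm w}(\bm y') - h_{\bm w}(\bm y) + h_{\bm w'}(\bm y) - h_{\bm w'}(\bm y')$ directly; because the regularizer is weight-independent it cancels in $D$, leaving $D=\delta(\abs{y_a-y_b}-\abs{y_a'-y_b'})$, and combining the two strong-convexity inequalities with the reverse triangle inequality yields $\norm{\bm y-\bm y'}_1\le \tfrac{\delta\sqrt n}{2\Lambda}$ in one stroke---no proximal analysis, no $\ell_2\!\to\!\ell_1$ conversion. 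The paper even remarks after the proof that this cancellation is essential for the older method and that PGTA is needed precisely when it fails (weighted regularizers). Your PGTA argument is valid and lands on the same bound (the approximation step is identical to the paper's), so there is no gap; the tradeoff is that the paper's direct argument is shorter and self-contained, and it serves the expository goal of isolating exactly what the old technique buys, whereas your route is more uniform with the rest of the paper but somewhat obscures the point that this lemma does not require the new machinery.
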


Following the technique of \cite{kumabe2022lipschitz},
we choose an unweighted $\ell_2$-regularizer
with regularization parameter $\Lambda > 0$.
\begin{align}
	\begin{array}{lll}
		\text{minimize} & \displaystyle \mathrm{cut}_G(y) + \frac\Lambda2 \|y\|_2^2 \\
            \text{subject to} & y_{s_0} = 0 & \\
            &y_{t_0} = 1 & \\
            &y_s = y_{s_0} & \forall s\in S \\
            &y_t = y_{t_0} & \forall t\in T \\
            &y_v \in [0, 1] & \forall v\in V
	\end{array}
	\label[LP]{eq:regularized-lp-min-cut}
\end{align}
This is $(\Lambda/n)$-strongly convex with respect to the $\ell_1$-norm.
The proof of \Cref{lem:naive-min-s-t-cut-fractional} follows from the next two lemmas.

\begin{lemma}\label{lem:warmup:regularized-fractional-approx}
    Let $y\in \mathbb{R}^V$ be the minimizer of \Cref{eq:regularized-lp-min-cut}.
    Then, we have \mbox{$\mathrm{cut}_G(y) \leq \OPTLP + \frac\Lambda2 n$}.
\end{lemma}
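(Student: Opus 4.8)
The statement to prove is that if $y$ is the minimizer of the regularized LP~\eqref{eq:regularized-lp-min-cut}, namely the program minimizing $\mathrm{cut}_G(y) + \frac{\Lambda}{2}\|y\|_2^2$ over the feasible polytope, then $\mathrm{cut}_G(y) \le \OPTLP + \frac{\Lambda}{2} n$. The plan is a standard ``the regularizer is small on the optimal point'' argument. Let $y^\star$ denote an optimal solution of the unregularized LP~\eqref{eq:naive-min-s-t-cut-lp}, so $\mathrm{cut}_G(y^\star) = \OPTLP$. Crucially, $y^\star$ is still feasible for the regularized program, since the feasible region is identical. Therefore, by optimality of $y$ for the regularized objective,
\[
    \mathrm{cut}_G(y) + \frac{\Lambda}{2}\|y\|_2^2
    \le \mathrm{cut}_G(y^\star) + \frac{\Lambda}{2}\|y^\star\|_2^2.
\]

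The next step is to bound $\|y^\star\|_2^2 \le n$. This follows because every feasible solution has all coordinates in $[0,1]$ (the constraint $y_v \in [0,1]$ for all $v \in V$), hence $\|y^\star\|_2^2 = \sum_{v} (y^\star_v)^2 \le \sum_v 1 = n$. Combining this with the displayed inequality and dropping the nonnegative term $\frac{\Lambda}{2}\|y\|_2^2$ on the left-hand side gives
\[
    \mathrm{cut}_G(y)
    \le \mathrm{cut}_G(y^\star) + \frac{\Lambda}{2}\|y^\star\|_2^2
    \le \OPTLP + \frac{\Lambda}{2} n,
\]
which is exactly the claimed bound.

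There is essentially no obstacle here — the argument is a textbook regularization-error estimate — so the only thing worth being careful about is making sure the feasible regions of the two programs genuinely coincide (they do: both impose the same boundary constraints $y_{s_0}=1$, $y_{t_0}=0$, $y_s = y_{s_0}$, $y_t = y_{t_0}$, and $y_v\in[0,1]$), and that $y^\star$ is compared against $y$ under the \emph{same} objective, which is why we evaluate the regularized objective at $y^\star$ rather than the unregularized one. If one wanted to be slightly sharper one could note $\|y^\star\|_2^2 \le \|y^\star\|_\infty \cdot \|y^\star\|_1 \le \|y^\star\|_1 \le n$, but the crude bound $\|y^\star\|_2^2 \le n$ already suffices. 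This lemma, together with the companion Lipschitz-constant lemma, yields \Cref{lem:naive-min-s-t-cut-fractional} and ultimately \Cref{thm:naive-min-s-t-cut}.
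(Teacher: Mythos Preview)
Your proof is correct and follows essentially the same approach as the paper's: compare the regularized objective at $y$ versus at an LP-optimal $y^\star$, bound $\|y^\star\|_2^2 \le n$ using the box constraint $y_v\in[0,1]$, and drop the nonnegative regularizer term on the left. If anything, your write-up is cleaner than the paper's, which contains minor typos (writing $\mathrm{cut}_G(x)$ for $\mathrm{cut}_G(y)$ and $n^2$ for $n$).
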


\begin{proof}
    Let $y^\star \in \mathbb{R}^V$ be the minimizer of \Cref{eq:naive-min-s-t-cut-lp}.
    Then, we have
    \begin{align*}
        \mathrm{cut}_G(x) + \frac\Lambda2 \|y\|_2^2
        \leq \mathrm{cut}_G(y^\star) + \frac\Lambda2 \|y^*\|_2^2
        \leq \mathrm{cut}_G(y^\star) + \frac\Lambda2 n.
    \end{align*}
\end{proof}

{As mentioned in \Cref{sec:overview:pgta},
\cite{kumabe2022lipschitz} incur an additive error proportional to the maximum range of the regularizer
in order to obtain stable fractional solutions.
Moreover,
any $\Lambda$-strongly convex regularizer over $[0, 1]^n$ has range $\Omega(\Lambda n)$.
Thus, this seems like an inherent limitation of the \cite{kumabe2022lipschitz} analysis rather than the choice of regularizer.
As we see in \Cref{sec:min-cut-through-pgm},
PGTA allows us to sidestep this limitation by unlocking the use of weighted regularizers.
On the other hand,
as we demonstrate below,
the \cite{kumabe2022lipschitz} analysis requires unweighted regularizers.}

\begin{lemma}\label{lem:naive-min-s-t-cut-fractional-lipschitz}
	The Lipschitz constant of the regularized LP is $O(n/\Lambda)$.
\end{lemma}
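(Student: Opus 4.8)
The plan is to apply the coarse stability argument of \cite{kumabe2022lipschitz} directly, which is possible here precisely because the regularizer $\frac{\Lambda}{2}\norm{\cdot}_2^2$ is \emph{unweighted} and hence identical on the original and perturbed instances. First I would invoke \Cref{lem:one perturbation} to reduce to the case where $\tilde{\bm w}$ differs from $\bm w$ in a single coordinate, say the edge $e = uv$, with $\tilde w_{uv} = w_{uv} + \delta$ for sufficiently small $\abs{\delta}$. Write $h_{\bm w}(\bm y) = \mathrm{cut}_{G,\bm w}(\bm y) + \frac{\Lambda}{2}\norm{\bm y}_2^2$ and $h_{\tilde{\bm w}}$ analogously, and let $\bm y^\star, \tilde{\bm y}^\star$ be their respective minimizers over the (fixed, weight-independent) feasible polytope $K$ of \Cref{eq:regularized-lp-min-cut}. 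Since the objective restricted to $K$ is $(\Lambda/\sqrt{n})$-strongly convex with respect to $\norm{\cdot}_1$ (the regularizer $\frac{\Lambda}{2}\norm{\cdot}_2^2$ is $\Lambda$-strongly convex in $\norm{\cdot}_2$, and on the cube $\norm{\bm z}_2^2 \ge \norm{\bm z}_1^2/n$), first-order optimality at $\bm y^\star$ for $h_{\bm w}$ and at $\tilde{\bm y}^\star$ for $h_{\tilde{\bm w}}$ gives
\[
    \frac{\Lambda}{\sqrt{n}}\,\norm{\bm y^\star - \tilde{\bm y}^\star}_1^2
    \leq h_{\bm w}(\tilde{\bm y}^\star) - h_{\bm w}(\bm y^\star) + h_{\tilde{\bm w}}(\bm y^\star) - h_{\tilde{\bm w}}(\tilde{\bm y}^\star).
\]

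Next I would exploit that the two objectives differ \emph{only} in the cut term and only in the single summand for edge $uv$: the $\frac{\Lambda}{2}\norm{\cdot}_2^2$ terms cancel exactly when we add the two bracketed differences, leaving
\[
    \frac{\Lambda}{\sqrt{n}}\,\norm{\bm y^\star - \tilde{\bm y}^\star}_1^2
    \leq \bigl(\mathrm{cut}_{G,\bm w} - \mathrm{cut}_{G,\tilde{\bm w}}\bigr)(\tilde{\bm y}^\star) - \bigl(\mathrm{cut}_{G,\bm w} - \mathrm{cut}_{G,\tilde{\bm w}}\bigr)(\bm y^\star)
    = -\delta\bigl(\abs{\tilde y^\star_u - \tilde y^\star_v} - \abs{y^\star_u - y^\star_v}\bigr).
\]
By the reverse triangle inequality the right-hand side is at most $\abs{\delta}\cdot\abs{(\tilde y^\star_u - \tilde y^\star_v) - (y^\star_u - y^\star_v)} \le \abs{\delta}(\abs{\tilde y^\star_u - y^\star_u} + \abs{\tilde y^\star_v - y^\star_v}) \le \abs{\delta}\,\norm{\bm y^\star - \tilde{\bm y}^\star}_1$. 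Dividing through by $\norm{\bm y^\star - \tilde{\bm y}^\star}_1$ yields $\norm{\bm y^\star - \tilde{\bm y}^\star}_1 \le \frac{\sqrt{n}}{\Lambda}\abs{\delta}$, and then \Cref{lem:one perturbation} lifts this to $\norm{\bm y^\star - \tilde{\bm y}^\star}_1 \le \frac{\sqrt{n}}{\Lambda}\norm{\bm w - \tilde{\bm w}}_1$ for arbitrary perturbations, so the fractional pointwise Lipschitz constant with respect to $\norm{\cdot}_1$ is $O(\sqrt{n}/\Lambda)$. Finally, since threshold rounding (\Cref{lem:threshold-rounding}) contributes only a constant-factor blowup to the $\ell_1$ Lipschitz constant when passing to the integral output, the overall algorithm inherits Lipschitz constant $O(\sqrt{n}/\Lambda)$.

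The main obstacle — though a mild one — is justifying the strong-convexity constant $\Lambda/\sqrt{n}$ with respect to $\norm{\cdot}_1$ rather than $\norm{\cdot}_2$, i.e.\ the norm-conversion $\norm{\bm z}_2^2 \ge \norm{\bm z}_1^2/n$ and checking it interacts correctly with the first-order optimality inequality; one must also be slightly careful that the perturbation is small enough that the constrained minimizers still exist and are unique, which holds for any $\Lambda > 0$ since the feasible region is compact and the objective is strictly convex. Everything else is the same bookkeeping as in the cut-prox lemma (\Cref{lem:min cut prox}) and should go through without surprises.
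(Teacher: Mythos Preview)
Your proposal is essentially identical to the paper's proof: reduce to a single-edge perturbation via \Cref{lem:one perturbation}, use strong convexity together with first-order optimality of $\bm y^\star,\tilde{\bm y}^\star$ to bound $\norm{\bm y^\star-\tilde{\bm y}^\star}_1^2$ by the sum $h_{\bm w}(\tilde{\bm y}^\star)-h_{\bm w}(\bm y^\star)+h_{\tilde{\bm w}}(\bm y^\star)-h_{\tilde{\bm w}}(\tilde{\bm y}^\star)$, observe that the unweighted regularizer cancels so only the single perturbed cut term survives, and finish with the reverse triangle inequality. Your closing remark about threshold rounding is extraneous to this lemma (which is purely about the fractional solution) but harmless.
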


\begin{proof}
	Let $w \in \mathbb{R}^E$ be a weight vector and $w' \in \mathbb{R}^E$ be the vector obtained from $w$ by increasing the $\hat e$-th coordinate by $\delta$, where $\hat e = ab \in E$.
	By \Cref{lem:one perturbation},
        it suffices to analyze this case in order to bound the Lipschitz constant.

	Let $h_w(y) = h_{w, G}(y) \coloneqq \cut_{G, w}(y) + \frac\Lambda2\norm{y}_2^2$ be the objective function of~\eqref{eq:regularized-lp-min-cut}.
	Let $y$ and $y'$ be the minimizers of $h_w$ and $h_{w'}$, respectively.
	Let $D \coloneqq h_w(y') - h_w(y) + h_{y'}(x) - h_{w'}(y')$.
    As the regularization terms cancel out in $D$, we have
	\begin{align*}
		& D = \sum_{uv \in E} w_{uv} \abs{y'_u-y'_v} 
            - \sum_{uv \in E} w_{uv} \abs{y_u - y_v} 
            + \sum_{uv \in E} w'_{uv} \abs{y_u-y_v} 
            - \sum_{uv \in E} w'_{uv} \abs{y_u'-y_v'} \\
		& = (w'_{\hat{e}} - w_{\hat{e}})(\abs{y_a-y_b}- \abs{y_a'-y_b'}) \\
		& = \delta (\abs{y_a-y_b} - \abs{y_a'-y_b'}).
	\end{align*}

	By the $(\Lambda/n)$-strong convexity of $x\mapsto \frac\Lambda2 \|\cdot\|_2^2$ with respect to the $\ell_1$ norm
    and the optimality of $y, y'$ with respect to $h_w, h_{w'}$, 
    respectively,
    we have
	\begin{align*}
		& h_w(y') - h_w(y) 
		\geq \frac\Lambda{n}\left\|y'-y\right\|_1^2. \\
		& h_{w'}(y) - h_{w'}(y') 
		\geq \frac\Lambda{n}\left\|y'-y\right\|_1^2. 
	\end{align*}
	Summing them up, we get 
	\[
		\frac{2\Lambda}{n} \norm{y'-y}_1^2
            \leq D
            = \delta (\abs{y_a-y_b}- \abs{y_a'-y_b'}).
	\]
        An application of the reverse triangle inequality yields
        \begin{align*}
            \frac{2\Lambda}{n} \norm{y'-y}_1^2
            &\leq \delta (\abs{y_a-y_a'} + \abs{y_b-y_b'})
            \leq \delta \norm{y-y'}_1 \\
            \norm{y'-y}_1
            &\leq \frac{\delta n}{2\Lambda}. \qedhere
        \end{align*}
\end{proof}

We remark that in the proof of \Cref{lem:naive-min-s-t-cut-fractional-lipschitz},
the regularization terms must cancel out in the expression $D$.
The analysis of \cite{kumabe2022lipschitz} does not extend beyond this case.
However,
PGTA is able to handle the case of a weighted regularizer where this cancellation does not happen,
by reducing to the case of an unweighted regularizer at each iteration of PGM.

\subsection{Threshold Rounding}
The rounding algorithm is the classic threshold rounding (\Cref{lem:threshold-rounding}),
where we draw a uniform random threshold $\tau\sim [0, 1]$
and output the vertices with fractional solution $S_\tau = \set{v\in V: y_v\leq \tau}$.
Applied to our setting,
it preserves the fractional objective function in expectation
as well as the fractional Lipschitz constant.
Since we study threshold rounding in further detail in \Cref{sec:min-cut-through-pgm} and obtain stronger results,
we omit a formal analysis of the rounding step here.

This concludes the proof of \Cref{thm:naive-min-s-t-cut}.

\end{document}